\newcommand{\hori}{\mathbf{H}}
\newcommand{\dblRel}{\mathbb{R}\mathbf{el}}
\newcommand{\dblDist}{\mathbb{D}\mathbf{ist}}
\newcommand{\dblPRel}{\mathbb{P}\mathbf{RRel}}
\newcommand{\dblCG}{\mathbb{C}\mathbf{G}}
\newcommand{\dblTCG}{\mathbb{T}\mathbf{CG}}
\newcommand{\companion}[1]{\widehat{#1}}
\newcommand{\conjoint}[1]{\widecheck{#1}}
\newcommand{\dblVis}{\mathbb{V}\mathbf{is}}
\DeclareFontFamily{OMX}{MnSymbolE}{}
\DeclareSymbolFont{MnLargeSymbols}{OMX}{MnSymbolE}{m}{n}
\DeclareFontShape{OMX}{MnSymbolE}{m}{n}{
    <-6>  MnSymbolE5
   <6-7>  MnSymbolE6
   <7-8>  MnSymbolE7
   <8-9>  MnSymbolE8
   <9-10> MnSymbolE9
  <10-12> MnSymbolE10
  <12->   MnSymbolE12
}{}
\DeclareFontShape{OMX}{MnSymbolE}{b}{n}{
    <-6>  MnSymbolE-Bold5
   <6-7>  MnSymbolE-Bold6
   <7-8>  MnSymbolE-Bold7
   <8-9>  MnSymbolE-Bold8
   <9-10> MnSymbolE-Bold9
  <10-12> MnSymbolE-Bold10
  <12->   MnSymbolE-Bold12
}{}
\let\llangle\@undefined
\let\rrangle\@undefined
\DeclareMathDelimiter{\llangle}{\mathopen}%
                     {MnLargeSymbols}{'164}{MnLargeSymbols}{'164}
\DeclareMathDelimiter{\rrangle}{\mathclose}%
                     {MnLargeSymbols}{'171}{MnLargeSymbols}{'171}
\tikzstyle{game-causality}=[dotted, thick]
\tikzstyle{strat-causality}=[->, thick,  -open triangle 60]
\tikzset{curve/.style={settings={#1},to path={(\tikztostart)
    .. controls ($(\tikztostart)!\pv{pos}!(\tikztotarget)!\pv{height}!270:(\tikztotarget)$)
    and ($(\tikztostart)!1-\pv{pos}!(\tikztotarget)!\pv{height}!270:(\tikztotarget)$)
    .. (\tikztotarget)\tikztonodes}},
    settings/.code={\tikzset{quiver/.cd,#1}
        \def\pv##1{\pgfkeysvalueof{/tikz/quiver/##1}}},
    quiver/.cd,pos/.initial=0.35,height/.initial=0}
\tikzset{
  prof/.style = {decoration = {markings, mark = at position 0.5 with { \node[transform shape, yscale=.4] {$|$}; } }, postaction = {decorate} },
}
\tikzstyle{conflict}=[decorate, decoration={snake,amplitude=.3mm,segment length=2mm},-]
\tikzstyle{neutralnode}=[fill=gray!25, draw, thick, inner sep=2pt,
\tikzstyle{posnode}=[fill=blue!25, draw, thick, inner sep=2pt, minimum
\tikzstyle{negnode}=[fill=red!25, draw, thick, inner sep=2pt, minimum size=10pt, rounded corners=2]
\tikzstyle{background rectangle}=  [fill=gray!10]
\tikzstyle{background rectangle 2}=  [fill=olive!10]
\tikzstyle{background rectangle 3}=  [fill=teal!8]
\newcommand{\Rel}{\mathbf{Rel}}
\newcommand{\Set}{\mathbf{Set}}
\newcommand{\PRel}{\mathbf{PRRel}}
\newcommand{\Sym}{\mathbf{Sym}}
\newcommand{\Dist}{\mathbf{Dist}}
\newcommand{\ES}{\mathbf{ES}}
\newcommand{\CG}{\mathbf{CG}}
\newcommand{\TCG}{\mathbf{TCG}}
\newcommand{\ESS}{\mathbf{ESS}}
\newcommand{\Esp}{\mathbf{Esp}}
\newcommand{\Vis}{\mathbf{Vis}}
\newcommand{\WVis}{\mathbf{Wis}}
\newcommand{\intr}[1]{\llbracket #1 \rrbracket}
\newcommand{\lin}{\multimap}
\newcommand{\Mf}{\mathcal{M}}
\newcommand{\iso}{\cong}
\newcommand{\bij}{\simeq}
\newcommand{\C}{\mathcal{C}}
\newcommand{\D}{\mathcal{D}}
\newcommand{\display}{\partial}
\newcommand{\op}{\mathrm{op}}
\newcommand{\ot}{\leftarrow}
\newcommand{\id}{\mathrm{id}}
\newcommand{\Fam}{\mathbf{Fam}}
\newcommand{\N}{\mathbb{N}}
\newcommand{\U}{\mathbf{U}}
\newcommand{\tensor}{\otimes}
\newcommand{\der}{\mathrm{der}}
\newcommand{\dder}{i}
\DeclarePairedDelimiter\tuple\langle\rangle
\def\profto{\mathrel{\mkern11mu\raisebox{.25mm}{$\shortmid$}\mkern-16mu{\longrightarrow}}}
\newcommand{\sym}{\cong}
\newcommand{\pair}[1]{\tuple{#1}}
\newcommand{\tto}{\Rightarrow}
\newcommand{\evm}{\mathbf{ev}}
\renewcommand{\cc}{\mathbf{\hspace{.5ex}c\!\!\!\!c\,}}
\newcommand{\cleq}{\vartriangleleft}
\newcommand{\CC}{\mathbf{CC}}
\DeclarePairedDelimiter\coll\|\|
\newcommand{\collexp}[1]{\coll{#1}_{\oc}}
\renewcommand{\tilde}[1]{\mathscr{S}(#1)}
\newcommand{\ntilde}[1]{\mathscr{S}_-(#1)}
\newcommand{\ptilde}[1]{\mathscr{S}_+(#1)}
\newcommand{\tildep}[1]{\mathscr{S}^+(#1)}
\newcommand{\dom}{\mathrm{dom}}
\newcommand{\cod}{\mathrm{cod}}
\newcommand{\pid}{\mathsf{pid}}
\newcommand{\pcomp}{\mathsf{pcomp}}
\newcommand{\comp}[1]{\mathscr{G}(#1)}
\newcommand{\join}{\mathsf{join}}
\newcommand{\runit}{\mathsf{runit}}
\newcommand{\lunit}{\mathsf{lunit}}
\newcommand{\w}{\mathsf{w}}
\renewcommand{\v}{\mathsf{v}}
\newcommand{\inj}{\mathsf{i}}
\newcommand{\q}{\mathsf{q}}
\newcommand{\unf}{\mathsf{unf}}
\newcommand{\fld}{\mathsf{fld}}
\newcommand{\e}{\mathsf{e}}
\newcommand{\pder}{\mathsf{pder}}
\newcommand{\pprom}{\mathsf{pprom}}
\newcommand{\just}{\mathsf{j}}
\newcommand{\init}{\mathsf{init}}
\newcommand{\card}[1]{|#1|}
\newcommand{\seely}{\mathsf{see}}
\newcommand{\ev}[1]{|#1|}
\newcommand{\conflict}{\mathrel{\#}}
\newcommand{\conf}[1]{\mathscr{C}(#1)}
\newcommand{\confp}[1]{\mathscr{C}^+(#1)}
\newcommand{\pol}{\mathrm{pol}}
\renewcommand{\qu}{\q}
\newcommand{\pr}{\partial}
\newcommand{\enb}{\vdash}
\newcommand{\imc}{\rightarrowtriangle}
\newcommand{\done}{\checkmark}
\newcommand{\gcc}{\mathsf{gcc}}
\newcommand{\mconflict}{\!\!\xymatrix@C=15pt{\, \ar@{~}[r]&\,}\!\!\!\!\!}
\newcommand{\nconf}[1]{\mathscr{C}^0(#1)}
\def\pb#1{\save[]+<16 pt,0 pt>:a(#1)\ar@{pb{}}[]\restore}
\newcommand{\tbool}{\mathbb{B}}
\newcommand{\tnat}{\mathbb{N}}
\newcommand{\tunit}{\mathbb{U}}
\newcommand*{\ty}{a}
\newcommand*{\seq}[1]{\langle #1 \rangle}
\newcommand*{\prom}[1]{\llangle #1 \rrangle}
\newcommand*{\seqdots}[3]{\seq{#1_{#2}, \dots, #1_{#3} } }
\newcommand*{\tyf}[1]{\textcolor{blue}{#1}}
\newcommand*{\tyl}{\vec{a}}
\newcommand*{\morpCone}{\eta}
\newcommand{\ttrue}{\mathbf{t\!t}}
\newcommand{\tfalse}{\mathbf{f\!f}}
\newcommand{\choice}{\mathbf{choice}}
\newcommand{\PCF}{\mathsf{PCF}}
\begin{document}
\title[From Thin Concurrent Games to Generalized Species of Structures]{From Thin Concurrent Games\\
to Generalized Species of Structures\\
(extended version)}

\author[P. Clairambault]{Pierre Clairambault\lmcsorcid{0000-0002-3285-6028}}[a]
\author[F. Olimpieri]{Federico Olimpieri\lmcsorcid{0000-0003-1485-5360}}[a]
\author[H. Paquet]{Hugo Paquet\lmcsorcid{0000-0002-8192-0321}}[b]

\address{Aix Marseille Univ, CNRS, LIS, Marseille, France}
\email{pierre.clairambault@lis-lab.fr, federico.olimpieri@lis-lab.fr}
\address{Inria, École Normale Supérieure – PSL, CNRS, Paris, France}
\email{Hugo.Paquet@inria.fr}

\begin{abstract}
  Two families of denotational models have emerged from the semantic
  analysis of linear logic: \emph{dynamic} models, typically presented
  as game semantics, and \emph{static} models, typically based on a
  category of relations. In this paper we introduce a formal bridge
  between a dynamic model and a static model: the model of \emph{thin
    concurrent games and strategies}, based on event structures, and
  the model of \emph{generalized species of structures}, based on
  distributors. A special focus of this paper is the two-dimensional nature of the
  dynamic-static relationship, which we formalize with double
  categories and bicategories.
  
  In the first part of the paper, we construct a symmetric monoidal
  oplax functor from linear concurrent strategies to distributors. We
  highlight two fundamental differences between the two models: the
  composition mechanism, and the representation of resource
  symmetries.

In the second part of the paper, we adapt established methods from game
semantics (visible strategies, payoff structure) to enforce a tighter
connection between the two models. We obtain a cartesian closed
pseudofunctor, which we exploit to shed new light on recent results in
the theory of the $\lambda$-calculus.  
\end{abstract}

\maketitle

\section{Introduction}

The discovery of linear logic has had a deep influence on programming
language semantics. The linear analysis of resources provides a refined
perspective that leads, for instance, to important notions of program
approximation \cite{DBLP:journals/pacmpl/BarbarossaM20} and
differentiation \cite{DBLP:journals/tcs/EhrhardR03}.
Denotational models for higher-order programming languages can be
constructed from this resource-aware perspective, exploiting the fact
that every model of linear logic is also a model of the simply-typed
$\lambda$-calculus. 

In this paper, we clarify the relationship between two such
denotational models:
\begin{itemize}
  \item  \emph{Thin concurrent games}, a framework for game semantics
    introduced by Castellan, Clairambault, and Winskel 
\cite{DBLP:journals/lmcs/CastellanCW19}, in which 
    programs are modeled as concurrent strategies.
  \item \emph{Generalized species of structures}, a 
    combinatorial model developed by Fiore, Gambino, Hyland, and
    Winskel \cite{fiore2008cartesian}, in which programs are interpreted as
    categorical distributors (or profunctors) over groupoids. 
\end{itemize}
\noindent 
We carry out this comparison in a two-dimensional setting also
including morphisms between strategies
and morphisms between distributors. In the language of bicategory
theory, our first key contribution is a symmetric monoidal, \emph{oplax} functor of
bicategories
\begin{equation}
  \label{eq:functor-informal}
  \begin{tikzcd}
    \parbox{3.5cm}{\centering games, strategies and maps} 
\qquad\arrow{r}{} & \qquad
    \parbox{5cm}{\centering groupoids,
    distributors and natural transformations}
\end{tikzcd}
\end{equation}
showing, in particular, that symmetries of strategies can be
explained via groupoid actions.

\subsection{Static and dynamic models}

This work fits in a long line of research on the relationship between
\emph{static} and \emph{dynamic} denotational models arising from
linear logic. 

In a static model, programs are represented by
their input/output behavior, or by collecting representations of completed
executions. The simplest example is given by the category of sets and
relations: this is the relational model of linear logic
(Section~\ref{subsec:relational_model} or \cite{DBLP:journals/tcs/Girard87}). 
In a dynamic model, programs are represented by their
interactive behavior with respect to every possible execution
environment. This includes game semantics
(\cite{DBLP:journals/iandc/HylandO00,DBLP:journals/iandc/AbramskyJM00}), which has
proved incredibly proficient at modeling various computational features
or effects
(\cite{DBLP:conf/lics/AbramskyHM98,DBLP:conf/lics/Laird97,DBLP:journals/apal/GhicaM08}). 

To illustrate the difference, 
the identity 
at type $1 \multimap 1$ in game semantics is a \emph{strategy}
gathering \emph{plays} which represent chronologically
an exchange of information and control flow between the {\color{blue}program} and its
{\color{red}environment}: 
\[
  \begin{tikzpicture}
    \node (a) at (0.2, 3) {$\ \ \ 1$};
    \node (b) at (1, 3) {$\multimap$};
    \node (c) at (1.7, 3) {$1$};
    \node (d) at (3, 3) {$\vdash$};
    \node (e) at (4.3, 3) {$1$};
    \node (f) at (5, 3) {$\multimap$};
    \node (g) at (5.8, 3) {$1\ \ \ $};    
    \node[negnode] (1) at (5.8, 2.5) {};
    \node[posnode] (2) at (1.8, 2.4) {};
    \node[negnode] (3) at (0.4, 1.7) {};
    \node[posnode] (4) at (4.2, 1.6) {};
    \draw [bend right=5, strat-causality] (1) to (2);
    \draw [bend right=15, strat-causality] (2) to (3);
    \draw [bend left= 5, strat-causality] (3) to (4);
    \begin{pgfonlayer}{background}
    \draw[background rectangle ,rounded corners=3pt, draw=none]
(a.west) -- (a.north west) -- (g.north east) -- (g.south east) --
(a.south west) -- (a.west);
     \end{pgfonlayer}
  \end{tikzpicture}
\]

In contrast, in the relational model, $1 \multimap 1$ is
a one-element set containing a single input/output pair. The identity
relation over it can be seen as a 
collapsed version of the strategy above:
\newcommand{\onenode}{\raisebox{-0.3em}{\tikz\node[neutralnode] (f) at (0, 0) {};}}
  \[
  \begin{tikzpicture}
    \node (a) at (0.2, 3) {$\ \ \ 1$};
    \node (b) at (1, 3) {$\multimap$};
    \node (c) at (1.7, 3) {$1$};
    \node (d) at (3, 3) {$\vdash$};
    \node (e) at (4.3, 3) {$1$};
    \node (f) at (5, 3) {$\multimap$};
    \node (g) at (5.8, 3) {$1\ \ \ $};    
    \begin{pgfonlayer}{background}
    \draw[background rectangle ,rounded corners=3pt, draw=none]
(a.west) -- (a.north west) -- (g.north east) -- (g.south east) --
(a.south west) -- (a.west);
     \end{pgfonlayer}
     \node (first) at (1, 2.2) {$\biggl(\ \onenode\ ,\  \  \onenode\ \biggr)$};
     \node (last) at (5, 2.2) {$\biggl(\ \onenode\ ,\ \  \onenode\ \biggr)$};
       \draw[thick] (first) -- (last);
   \end{tikzpicture} 
\]

This suggests a simple equation
\[
\text{\emph{game semantics}} = \text{\emph{relational model}} + 
\text{\emph{time}}\,,
\]
so that going from game semantics to the relational model is a
simple matter of forgetting the temporal order of execution, an
intuition first explored in \cite{DBLP:conf/csl/BaillotDER97}.

But this naive intuition hides a fundamental difference
between the composition mechanisms in static and dynamic models:
strategies may \emph{deadlock}, while relations cannot. More precisely,
in game semantics, two strategies can synchronize by performing the
same actions \emph{in the same order}, whereas in the collapsed
version, only the actions matter and not the
order. Thus, as was quickly established
\cite{DBLP:conf/csl/BaillotDER97}, one cannot simply forget time in a
functorial way, and composition is usually only preserved in an oplax manner, as
in \eqref{eq:functor-informal}.

Static collapses of game semantics
require an adequate notion
of \emph{position} for a game. This is somewhat difficult to
define in traditional game semantics, but very natural in alternative
causal versions such as concurrent games, because we can look
at configurations of the underlying event
structure (Section~\ref{subsec:basic-cg}, see also
\cite{DBLP:journals/tcs/Mellies06, DBLP:conf/lics/Mellies05}).

The subtle relationship between static and dynamic models was refined
by many authors over almost three decades
(see \emph{e.g.} \cite{DBLP:conf/csl/BaillotDER97,
DBLP:conf/lics/Mellies05, DBLP:conf/tlca/Boudes09,
DBLP:journals/entcs/CalderonM10, DBLP:conf/lics/TsukadaO16}), to
identify settings in which functoriality can be restored -- in
particular, our work inherits insights from Melliès' line of work on
asynchronous games \cite{DBLP:conf/lics/Mellies05}. Leveraging this, we
show (in Section~\ref{sec:cc_pseudofunctor}) how the oplax functor
\eqref{eq:functor-informal} can be strictified to a
\emph{pseudofunctor}, that preserves composition up to isomorphism.

\subsection{Proof-relevant models and symmetries}

Our aim here is to take the static-dynamic relationship to a new level
that takes into account the symmetries implicit in the resource usage.
Symmetry plays an important role in game semantics
(Section~\ref{sec:symmetry}, or
\cite{DBLP:journals/iandc/AbramskyJM00,DBLP:journals/tcs/Mellies06,DBLP:conf/csl/CastellanCW14,mfps22}), but so far
connections only exist with static models whose symmetries are implicit
or quotiented, like the relational model.
We argue that generalized species, which  
represent combinatorial structures in terms of their
symmetries, provide a
convenient target for a static collapse of thin concurrent
games. 

The two models we consider are ``proof-relevant'' \cite{ol:proof}, in
the sense that the interpretation of a program provides, for each
possible execution, a set of proofs or \emph{witnesses} that this
execution can be realized.  This high degree of intensionality is
useful for modelling languages with non-deterministic features
\cite{DBLP:conf/lics/TsukadaO15,DBLP:conf/lics/CastellanCPW18}. In a
proof-relevant model, symmetries arise naturally in the linear
duplication of witnesses. In Section~\ref{subsec:prrel} we discuss the
limitations of a proof-relevant model without symmetries.

\subsection{Two-dimensional categories: bicategories and double categories.}
Proof-relevant models are naturally organized as two-dimensional
categories, because programs are interpreted as structured objects
(e.g.~strategies or distributors) between which there is a natural
notion of morphism. There are various kinds of 2-dimensional
categories. This line of research is primarily about semantics in
\emph{bicategories}, and indeed the final sections of this paper
(Section~\ref{sec:cc_pseudofunctor}, Section~\ref{sec:lambda}) are
about bicategories with structure and bicategorical functors. However,
in order to properly relate static and dynamic models, we find it
conceptually important and technically useful to regard them as richer
structures: \emph{pseudo double categories}. We do not assume any
prior knowledge of double categories, and we give more precise
motivation in Section~\ref{sec:double-categ-stat}.

\subsection{Bicategorical models of the $\lambda$-calculus}

As further motivation for this work, we note that the two-dimensional
and proof-relevant aspects are significant on the syntactic side.  The
interpretation of $\lambda$-terms in generalized species has a
presentation in terms of an \emph{intersection type system}
\cite{DBLP:conf/lics/TsukadaAO17,ol:intdist}, that takes into account the symmetries and can be
exploited to characterize computational properties and equational
theories of the $\lambda$-calculus.  More generally,
the structural 2-cells in a cartesian closed bicategory have a
syntactic interpretation as $\beta\eta$-rewriting steps in the
simply-typed $\lambda$-calculus (\cite{philip-marcelo:tt, seely:two}).

In Section~\ref{sec:cc_pseudofunctor} we connect to this line of work
by constructing a cartesian closed pseudofunctor, which preserves
the semantics of $\lambda$-terms in both typed and untyped settings.

\subsection{Outline of the paper and key contributions}
The paper is organized as follows.

\subsubsection*{Review of static models} In Section~\ref{sec:static} we recall
static semantics, including a
bicategory $\PRel$ of proof-relevant relations, and a bicategory
$\Dist$ of distributors.
One can view $\PRel$ as the sub-model of
$\Dist$ with no symmetries, so there is an embedding $\PRel \hookrightarrow
\Dist$. The bicategory $\Esp$ of generalized species is defined in
terms of $\Dist$.

In Section~\ref{sec:double-categ-stat}, we explain that these bicategories
arise from double categories whose structure can be exploited for our
purposes. We define the symmetric monoidal double categories
$\dblPRel$ and $\dblDist$ which embed $\PRel$ and $\Dist$. 

\subsubsection*{Collapsing concurrent games to static models}
In Section~\ref{sec:collapse_prel} we introduce the double category $\dblCG$ of ``plain'' concurrent
games without symmetry, and we show that a collapse operation gives
an oplax functor $\dblCG \to \dblPRel$
(Theorem~\ref{th:collPR}).

Then in Section~\ref{sec:symmetry} we add symmetry: we define
the double category $\dblTCG$ of thin concurrent games, with $\dblCG
\hookrightarrow \dblTCG$ as the sub-model with no symmetries. We show that every
strategy has a distributor of \emph{positive witnesses} (Proposition~\ref{prop:strategy-to-distributor}), and
that this extends to an oplax functor $\dblTCG \to \dblDist$
(Theorem~\ref{th:oplax_tcg}).
In summary, we have the following
situation involving embedding functors and collapse oplax functors, all
of which are additionally shown to preserve symmetric monoidal
structure:
\[
\begin{tikzpicture}
  \matrix (m) [matrix of math nodes, row sep=1em, column sep=1em]
  {
  &[-1.8em] \substack{\text{no}\\\text{symmetries}}  &{\substack{\text{symmetries}}}  \\
\small\text{dynamic} \phantom{\bigl(\bigr)} &  \dblCG &\  \dblTCG\  \!\!\!\! \!\!\phantom{\bigl(\bigr)} \\
\small\text{static} \ \ \  \phantom{\bigl(\bigr)}  &  \dblPRel
\phantom{\bigl(\bigr)} \!\!\!\!\! &\  \dblDist\   \!\!\!\!\phantom{\bigl(\bigr)} \\
  };
  \draw[right hook->] (m-2-2) to  (m-2-3);
  \draw[->] (m-2-2) to (m-3-2);
   \draw[right hook->] (m-3-2) to[left] (m-3-3);
   \draw[->] (m-2-3) to  (m-3-3);
    \begin{pgfonlayer}{background}
    \draw[background rectangle 3,rounded corners=3pt, draw=none]
(m-2-1.west) -- (m-2-1.north west) -- (m-2-3.north east) -- (m-2-3.south east) --
(m-2-1.south west) -- (m-2-1.west);
     \end{pgfonlayer}
    \begin{pgfonlayer}{background}
    \draw[background rectangle 3,rounded corners=3pt, draw=none]
(m-3-1.west) -- (m-3-1.north west) -- (m-3-3.north east) -- (m-3-3.south east) --
(m-3-1.south west) -- (m-3-1.west);
     \end{pgfonlayer}
    \begin{pgfonlayer}{background}
    \draw[opacity=0.6, background rectangle 2,rounded corners=3pt, draw=none]
(m-1-2.north) -- (m-1-2.north east) -- (m-3-2.south east) -- (m-3-2.south) --
(m-3-2.south west) -- (m-1-2.north west) -- (m-1-2.north);
     \end{pgfonlayer}
    \begin{pgfonlayer}{background}
    \draw[opacity=0.6,background rectangle 2,rounded corners=3pt, draw=none]
(m-1-3.north) -- (m-1-3.north east) -- (m-3-3.south east) -- (m-3-3.south) --
(m-3-3.south west) -- (m-1-3.north west) -- (m-1-3.north);
     \end{pgfonlayer}
   \end{tikzpicture}
 \]

In Section~\ref{sec:collapse_linear} we include the mechanism of
\emph{visibility} \cite{DBLP:journals/corr/abs-2103-15453} to eliminate
deadlocks; this refines the above into \emph{pseudo-functors} between
symmetric monoidal double categories; yielding symmetric monoidal
pseudofunctors between the induced symmetric monoidal bicategories.

\subsubsection*{Cartesian closed structure and the $\lambda$-calculus}
Finally, in Section~\ref{sec:cc_pseudofunctor} we add the exponential
modality and focus on the cartesian closed structure. To ensure
preservation of the exponential modality, we introduce a refine
version of $\TCG$ called $\WVis$, using a \emph{payoff} mechanism from
game semantics (\cite{DBLP:conf/lics/Mellies05,
DBLP:journals/pacmpl/ClairambaultV20}).
Thus we obtain a pseudofunctor
$\WVis \longrightarrow \Dist$ (Theorem~\ref{th:pseudo_vws_dist})
and by also refining our categorical structure with a \emph{relative}
pseudo-comonad, we derive a pseudofunctor $\WVis_\oc \to \Esp$
(Theorem~\ref{th:main}) which we show is cartesian closed. 

Finally in Section~\ref{sec:lambda} we apply this result to the semantics of
untyped $\lambda $-calculus: we build a reflexive object in $\WVis$ and
show that, under our pseudofunctor, this is sent to an extensional
\emph{categorifed graph model} $\D^\ast$ in $ \Esp$
\cite{ol:proof}. 

\subsubsection*{Note on this long version.} This paper is an extension
of our earlier paper \cite{DBLP:conf/lics/ClairambaultOP23}, presented
at the LICS 2023 conference. The extended version includes all proofs
of technical results, and some entirely new contributions. In
particular, all sections up to Section~\ref{sec:cc_pseudofunctor} are
developed in a double-categorical setting which refines and
generalizes the bicategorical setting of
\cite{DBLP:conf/lics/ClairambaultOP23}, and provides simple proofs of
new structural results. See Section~\ref{sec:double-categ-stat} for
motivation. Finally, this long version also allows for further
illustration: we showcase the induced correspondence between game
semantics and generalized species over an example $\lambda$-term.

\section{A Tour of Static Semantics}
\label{sec:static}

In this section we present three static models: the basic relational
model $\Rel$ (Section~\ref{subsec:relational_model}), a proof-relevant
version of it which we call $\PRel$ (Section~\ref{subsec:prrel}), and the
model $\Dist$ of groupoids and distributors
(Section~\ref{subsec:dist}). 

\subsection{The relational model of linear logic}
\label{subsec:relational_model}

We start with the \emph{relational model}, which gives a denotational
interpretation in the category $\Rel$ of sets and relations. A type $A$
is interpreted as a \emph{set} $\intr{A}$, and a program $\vdash M : A$
is interpreted as a subset $\intr{M} \subseteq \intr{A}$. The set
$\intr{A}$ is often called the \emph{web} of $A$,
and we think of its elements as representations of
completed program executions. The subset $\intr{M}$ contains the
 executions that the program $M$ can realize.

\begin{exa}
The ground type for booleans is interpreted as $\intr{\tbool} =
\{\ttrue, \tfalse\}$, and the constant $\vdash \ttrue : \tbool$ as
$\intr{\ttrue} = \{\ttrue\}$.
\end{exa}

The interpretation of a program $M$ is computed compositionally,
following the methodology of denotational semantics, using
the categorical structure we describe below.

\subsubsection{Basic categorical structure}
The category $\Rel$ is defined to have sets as objects, and as
morphisms the \emph{relations} from $A$ to $B$, \emph{i.e.} the subsets $R
\subseteq A \times B$, with the usual notions of identity and
composition for relations.
Its monoidal product is 
the cartesian product
of sets. If $R_i \in \Rel[A_i, B_i]$ for $i = 1, 2$, then the relation
$R_1 \times R_2 \in
\Rel[A_1 \times A_2, B_1 \times B_2]$ is defined to contain the pairs $((a_1,
a_2),
(b_1, b_2))$ with $(a_i, b_i) \in R_i$. The unit $I$ is a fixed
singleton set, say $\{*\}$. This monoidal structure is closed, with
 linear arrow  $A \lin B = A \times B$.

Moreover $\Rel$ has finite cartesian products: the
product of sets $A$ and $B$ is given by the disjoint union
$A + B = \{1\} \times A \uplus \{2\} \times B$, and the empty set is a
terminal object. 

\subsubsection{The exponential modality} The exponential modality of
$\Rel$ is based on \emph{finite multisets}. If $A$ is a set, we write
$\Mf(A)$ for the set of finite multisets on $A$. To denote specific
multisets we use a list-like notation, as in \emph{e.g.} $[0, 1, 1] \in
\Mf(\mathbb{N}).$

Given a set $A$, its \textbf{bang} $\oc A$ is the set $\Mf(A)$.
This extends to a comonad $\oc$ on $\Rel$, satisfying the required
conditions for a model of intuitionistic linear logic: the \emph{Seely isomorphisms}
\[
\Mf(A + B) \iso \Mf(A) \times \Mf(B) \qquad \Mf(\emptyset) \cong I
\]
make $\oc$ a symmetric monoidal functor from $(\Rel, +, \emptyset)$
to $(\Rel, \times, I)$, satisfying a coherence axiom  \cite[\S 7.3]{panorama}.
From this, we obtain that the
Kleisli category $\Rel_{\oc}$ is cartesian closed and thus a model of 
the simply-typed $\lambda$-calculus.

\subsubsection{Conditionals and non-determinism} The category $\Rel_\oc$ also supports
further primitives in a call-by-name setting. 

\begin{exa}\label{ex:prelim_term}
Considering the term $\vdash M : \tbool \to \tbool$ of $\PCF$
\[
\begin{array}{l}
\vdash \lambda
x^\tbool.\,\mathtt{if}\,x\,\mathtt{then}\,x\\
\hspace{50pt}\mathtt{else}\,\mathtt{if}\,x\,\mathtt{then}\,\tfalse\,\mathtt{else}\,\ttrue
: \tbool
\to \tbool\,,
\end{array}
\]
then $\intr{M} = \{([\ttrue, \ttrue], \ttrue), ([\ttrue,
\tfalse], \tfalse), ([\tfalse, \tfalse], \ttrue)\}$, representing the
possible executions given a multiset of values for $x$.
\end{exa}

As a further example, $\Rel$ supports the interpretation of
non-deterministic computation: consider $\vdash \choice : \tbool$ a
non-deterministic primitive that may evaluate to either $\ttrue$ or
$\tfalse$. Then we can set  $\intr{\choice} = \{\ttrue, \tfalse\}$
so that we have
\begin{equation}
\label{eq:nondet}
  \intr{\mathtt{if}\,\choice\,\mathtt{then}\,\ttrue\,\mathtt{else}\,\ttrue}
  = \{\ttrue\}\,.
\end{equation}

The relational model is a cornerstone of static semantics, and 
the foundation of many recent developments in 
denotational semantics \cite{er:point, carv:ex, DBLP:conf/lics/LairdMMP13}.
In this paper we are concerned with its \emph{proof-relevant}
extensions. Roughly speaking, one motivation is to
keep separate different execution paths that lead to the same value, as
with value $\ttrue$ in \eqref{eq:nondet}.

\subsection{Proof-relevant relations}
\label{subsec:prrel}

To showcase this, we consider a notion of proof-relevant relation between sets
(e.g.~\cite{normal-functors, gambino-joyal}).
The idea is to record not only the executions that a program may achieve, but also
the distinct ways in which each execution is realized. We replace relations $\intr{M} \subseteq
A\times B$ with \emph{proof-relevant} relations
\[
\intr{M} : A\times B \to \Set\,, 
\]
so that each point of the web has an associated set of \emph{witnesses}. 
In this model, for instance,
$\intr{\mathtt{if}\,\choice\,\mathtt{then}\,\ttrue\,\mathtt{else}\,\ttrue}(\ttrue)
$ from \eqref{eq:nondet} should be a set $\{*_1, *_2\}$ containing two witnesses,
because there are two possible
paths to the value $\ttrue$. 

Formally, the model is organized as a categorical structure with sets as objects,
functors $\alpha : A \times B \to \Set$ (with $A\times B$ viewed as a
discrete category
) as morphisms, composed with
\begin{eqnarray}
(\beta\circ \alpha)(a, c) &=& \sum_{b \in B} \alpha(a, b) \times \beta(b, c)
\label{eq:comp_prel}
\end{eqnarray}
and with identity morphisms given by $\id_A(a, a') = \{*\}$ if
$a = a'$ and empty otherwise\footnote{A more
standard presentation of this model is via the bicategory of \emph{spans} of
sets, with sets as objects and spans $A \ot S \to B$ as morphisms.}. An important observation is that this
does not form a category. Categorical laws are only
\emph{isomorphisms}, with for instance $(\id_B \circ \alpha)(a, b) =
\alpha(a, b) \times \{*\} \cong \alpha(a, b)$.
We obtain a \emph{bicategory}: a two-dimensional structure incorporating
\emph{$2$-cells} -- morphisms between morphisms -- with 
categorical laws holding only up to coherent invertible $2$-cells.

We call this bicategory $\PRel$. This model shares (in a
bicategorical sense) much of the structure of $\Rel$, and may be used
to interpret \emph{e.g.} the linear $\lambda$-calculus.
We use $\PRel$ as a static collapse of a basic dynamic
model in Section~\ref{subsubsec:plain_oplax}. 

\paragraph*{Limitations of a proof-relevant model without symmetry}
Unfortunately, the finite multiset functor on $\Rel$ does not seem to
extend to $\PRel$. Intuitively, the objective of keeping track of
individual execution witnesses is in tension with the quotient involved
in constructing finite multisets, which blurs out the identity of
individual resource accesses\footnote{In technical terms, the functor $\Mf(-)$ on $\Set$ is not
cartesian -- does not preserve pullbacks -- and so does not preserve
the composition of spans.}. Proof-relevant models that do support an
exponential modality do so by replacing finite multisets with a
categorification, such as finite lists related by explicit permutations
-- \emph{symmetries}.

\subsection{Distributors and generalized species of structures}
\label{subsec:dist}

This categorification yields, among other possible approches, the
bicategory of \emph{distributors}.
\emph{Distributors} are symmetry-aware proof-relevant relations -- here
we consider distributors on 
\emph{groupoids},
\emph{i.e.} small categories in which every morphism is invertible.

\subsubsection{The bicategory of groupoids and distributors}

If $A$ and $B$ are groupoids, a \textbf{distributor} from $A$ to $B$ (also
known as a \emph{profunctor} or \emph{bimodule}) is a functor
\[
\alpha : A^{\op} \times B \to \Set\,.
\]

Thus, for every $a \in A$ and $b \in B$ we have a set $\alpha(a, b)$
of witnesses,
but unlike in $\PRel$ we also have symmetries, in the form of an
action by morphisms in $A$ and $B$.
If $x \in \alpha(a, b)$ and $g \in B(b, b')$, we
write $g \cdot x$ for the functorial action $\alpha(\id, g)(x) \in \alpha(a,
b')$. Similarly, if $f \in A(a', a)$, we write $x\cdot f \in \alpha(a', b)$ for $\alpha(f, \id)$. The actions
must commute, so we can write $g \cdot x \cdot f$ for $(g \cdot x)
\cdot f = g \cdot (x \cdot f) \in \alpha(a', b')$.

We define a bicategory $\Dist$ with groupoids as objects,
distributors as morphisms, and
natural transformations as 2-cells (\cite{yoneda1960ext,benabou1973distributeurs}). 
The \textbf{identity distributor} on $A$ is 
\[
\id_A = A[-, -] : A^\op \times A \to \Set\,,
\]
the hom-set functor.
The \textbf{composition} of two distributors $\alpha : A^\op
\times B \to \Set$ and $\beta : B^\op \times C \to \Set$ is obtained as
a categorified version of \eqref{eq:comp_prel}, defined in terms of a
coend:
\begin{eqnarray}
(\beta \bullet \alpha)(a, c) &=& \int^{b \in B} \alpha(a, b) \times \beta(b,
c)\,.\label{eq:coend}
\end{eqnarray}

Concretely, $(\beta \bullet \alpha)(a, c)$ consists in pairs $(x,
y)$, where $x \in \alpha(a, b)$ and $y \in \beta(b, c)$ for some $b \in B$,
quotiented by $(g \cdot x, y) \sim (x, y \cdot g)$ for $x
\in \alpha(a, b)$, $g \in B(b, b')$ and $y \in \beta(b', c)$ -- by
convention, we use $y \bullet x \in (\beta \bullet \alpha)(a, c)$ as a
notation for the (equivalence class of) the pair $(x, y)$.

The bicategory $\Dist$ has a symmetric monoidal structure given by the cartesian
product $A \times B$ of groupoids, extended pointwise to
distributors. There is a closed structure given by $A \lin B = A^\op
\times B$. 
Finally, $\Dist$ has cartesian products given by the
disjoint union $A + B$ of groupoids. 

\subsubsection{The exponential modality}
In this model with explicit symmetries, the exponential modality is not given
by finite multisets, but instead by finite \emph{lists} with explicit permutations.
\begin{defi}
For a groupoid $A$, there is a groupoid $\Sym(A)$ with as objects the finite
lists $(a_1 \dots a_n)$ of objects of $A$, and as morphisms
$(a_1 \dots a_n) \longrightarrow (a'_1 \dots a'_m)$ the pairs $(\pi, (f_i)_{1\leq
  i \leq n})$, where $\pi : \{1, \dots, n\} \cong \{1, \dots, m\}$ is a
bijection and
$f_i \in A(a_i, a'_{\pi(i)})$ for each $i= 1, \dots, n$.
\end{defi}

More abstractly, $\Sym(A)$ is the \emph{free symmetric (strict)
monoidal category} over $A$.  
This extends to a pseudo-comonad on $\Dist$, where \emph{pseudo} means
that the comonad laws only hold up to coherent invertible 2-cells
\cite{lack2000coherent, fiore2008cartesian}.

\subsubsection{Generalized species of structures}
The Kleisli bicategory $\Dist_\Sym$ is denoted $\Esp$, and the
morphisms in $\Esp$ are called \emph{generalized species of
  structures} \cite{fiore2008cartesian}. Concretely, $\Esp$ has the same objects as
$\Dist$; morphisms are generalized species\footnote{If $A = B = 1$,
  then a generalized species
corresponds to a combinatorial species in the
classical sense of Joyal \cite{joyal1981theorie}. This can be
further generalized to arbitrary small categories $A$ and $B$
\cite{fiore2008cartesian}, but we
do not need this generality.},
defined as distributors from $\Sym(A)$ to $B$; and 2-cells
are natural transformations. Equipped with 
\[
\Sym(A + B) \simeq \Sym(A) \times \Sym(B)  
\qquad\qquad 
\Sym(\emptyset) \simeq I
\]
the Seely \emph{equivalences}, $\Dist$
is a bicategorical model of linear logic. In particular,
the bicategory $\Esp$ is cartesian closed.

\subsubsection{Relationship with $\PRel$}
Distributors conservatively extend the proof-relevant relations of 
Section~\ref{subsec:prrel}: if we regard sets as discrete groupoids, we get  an embedding $\PRel
\hookrightarrow \Dist$ that preserves the symmetric monoidal closed
structure and the cartesian structure.
Explicit symmetries appear essential in defining an exponential
modality in a proof-relevant model: even when $A$ is a discrete groupoid,
$\Sym(A)$ is not discrete. 

\section{Double Categories for Static Models}
\label{sec:double-categ-stat}

The models $\PRel$ and $\Dist$ are bicategories: they have objects,
morphisms, and 2-cells. This 2-dimensional structure is essential,
because the laws for composition are only satisfied up to invertible
2-cells. Although $\Rel$ is a category, we can regard it as a
degenerate bicategory where the laws for composition hold
strictly, and with 2-cells given by the usual order on relations: if $R, R' \in \Rel(A, B)$, then we have a 2-cell
$R \Rightarrow R'$ whenever $R \subseteq R'$ as subsets of
$A \times B$. (This is a proof-irrelevant counterpart of 
the 2-dimensional structure in $\PRel$ and $\Dist$.)

The point of this section is to establish that each of these
bicategories arises as a fragment of a larger categorical structure
known as a pseudo double category, which consists of objects, two
kinds of morphisms (known as horizontal and vertical), and
2-cells. Horizontal morphisms compose weakly (as in a bicategory) and
vertical morphisms compose strictly (as in a category).
The motivation for considering this more general structure is that,
for many bicategories, there exists a simpler class of morphisms
between the objects for which composition is strictly associative and
unital. This can be exploited to simplify certain structural proofs,
as we will do. 

We first give the definition of a pseudo double category. We usually
drop the adjective ``pseudo'' in the paper.  Pseudo-double categories are originally due to
Ehresmann \cite{ehresmann1963categories}.  For a detailed reference
on double category theory that includes monoidal double categories,
functors between them, and monads, see 
\cite[Sections~2--5]{ggv24}. For the theory of double
monads, see also \cite{cruttwell2010unified}.

\begin{defi}
  \label{def:doublecat}
  A \textbf{(pseudo) double category} (e.g.~\cite[Def.~2.1]{ggv24}) consists of:
  \begin{itemize}
  \item a class of objects;
  \item \textbf{vertical morphisms} between the objects,
    typically displayed as $A \to B$, with
    the structure of a category (that is, identity morphisms, and a composition
    operation satisfying axioms);
  \item \textbf{horizontal morphisms} between the objects,
    typically displayed as $A \profto B$, with identity morphisms and
    a composition operation;
  \item square-shaped 2-cells between pairs of vertical morphisms and
    pairs of horizontal morphisms, as follows:
    \[
\begin{tikzcd}
	A & B \\
	C & D
	\arrow[""{name=0, anchor=center, inner sep=0}, "R", "\shortmid"{marking}, from=1-1, to=1-2]
	\arrow["f"', from=1-1, to=2-1]
	\arrow["g", from=1-2, to=2-2]
	\arrow[""{name=1, anchor=center, inner sep=0}, "S"', "\shortmid"{marking}, from=2-1, to=2-2]
	\arrow[shorten <=12pt, shorten >=12pt, Rightarrow, from=0, to=1]
\end{tikzcd}
\]
which can be composed horizontally and vertically. When the vertical
boundaries $f$ and $g$ are identity morphisms, a 2-cell is
called \textbf{globular}. 
\end{itemize}
\noindent 
Horizontal morphisms can be composed, but associativity and unit laws only
hold weakly, just like in a
bicategory: up to coherent, invertible, globular 2-cells, whose
names we omit.  
\end{defi}

An immediate observation is that a pseudo double category $\mathbb{D}$ always has an
underlying bicategory over the same objects, consisting of horizontal morphisms
and globular 2-cells. We call this the \textbf{horizontal
  bicategory} $\hori(\mathbb{D})$. It is well-known that our
bicategories $\Rel, \PRel$ and $\Dist$ all arise in this way from
pseudo double categories with a natural class of vertical morphisms, as
we explain now. 

\subsection{Examples: static models as double categories.} To define a
double category from a bicategory, we must define a class of vertical
morphisms and an appropriate notion of 2-cells. In $\Rel$ and
$\PRel$, the objects are sets, and the vertical morphisms are taken to
be functions. In $\Dist$, the objects are groupoids, and the vertical
morphisms will be functors.

\begin{exa}[The double category of relations] The double category
  $\dblRel$ has the following components:
  \begin{itemize}
  \item objects: sets;
  \item vertical morphisms: functions;
  \item horizontal morphisms: relations; 
  \item and a 2-cell of type \[ \begin{tikzcd}
	A & B \\
	C & D
	\arrow[""{name=0, anchor=center, inner sep=0}, "R", "\shortmid"{marking}, from=1-1, to=1-2]
	\arrow["f"', from=1-1, to=2-1]
	\arrow["g", from=1-2, to=2-2]
	\arrow[""{name=1, anchor=center, inner sep=0}, "S"', "\shortmid"{marking}, from=2-1, to=2-2]
	\arrow[shorten <=12pt, shorten >=12pt, Rightarrow, from=0, to=1]
      \end{tikzcd}\]
    exists when $(a, b) \in R$ implies $(f(a), g(b)) \in S$ for every
    $a, b$. 
  \end{itemize}
\end{exa}
\noindent 
When $f$ and $g$ are identity functions, there
is a 2-cell as above just when $R \subseteq S$. So the horizontal
bicategory $\hori(\dblRel)$ is exactly the bicategory $\Rel$ of sets
and relations. 

We now turn to proof-relevant relations and distributors. Since $\PRel$ can be regarded as
the full sub-bicategory of $\Dist$ over the discrete groupoids, we
start with the more general model. 

\begin{exa}[The double category of groupoids and distributors] 
 The double category
  $\dblDist$ has the following components:
  \begin{itemize}
  \item objects: groupoids;
  \item vertical morphisms $A \to B$: functors;
  \item horizontal morphisms $A \profto B$: distributors $A^\op \times
    B \to \Set$; 
  \item 2-cells of type \[ \begin{tikzcd}
	A & B \\
	C & D
	\arrow[""{name=0, anchor=center, inner sep=0}, "\alpha", "\shortmid"{marking}, from=1-1, to=1-2]
	\arrow["F"', from=1-1, to=2-1]
	\arrow["G", from=1-2, to=2-2]
	\arrow[""{name=1, anchor=center, inner sep=0}, "\beta"', "\shortmid"{marking}, from=2-1, to=2-2]
	\arrow[shorten <=12pt, shorten >=12pt, Rightarrow, from=0, to=1]
      \end{tikzcd}\]
    are natural transformations $\alpha(a, b) \tto \beta(F(a), G(b))$
    between functors $A^\op \times B \to \Set$. 
  \end{itemize}
\end{exa}
\noindent 
It is clear that $\hori(\dblDist) = \Dist$. We can consider the
discrete sub-model, with no symmetries:

\begin{exa}[The double category of proof-relevant relations]
  If we restrict the objects in $\dblDist$ to discrete skeletal
  groupoids (\emph{i.e.}~sets), we obtain a double category called
  $\dblPRel$. Up to isomorphism, this double
  category has sets as objects, and functions as vertical
  morphisms. Then $\hori(\dblPRel)$ corresponds to our bicategory $\PRel$. 
\end{exa}

A \textbf{double functor} between double categories defines an
appropriate mapping of objects, vertical and horizontal morphisms, and
2-cells, with coherence data similar to that for a functor of
bicategories. Note that a double functor can be lax, oplax, or pseudo,
depending on the nature of the globular compositor 2-cell. Between double functors, there are two possible kinds of
natural transformations: vertical and horizontal \cite[Def.~3.4 and
Def.~3.5]{ggv24}.

\subsection{Companions and conjoints.} Our three examples share a
common feature: a vertical morphism always induces a pair of
horizontal morphisms. For example, in $\dblRel$, a function
$f : A \to B$ induces relations
$\companion{f} = \{ (a, f(a)) \mid a \in A \} : A \profto B$ and
$\conjoint{f} = \{ (f(a), a) \mid a \in A\} : B \profto A$. In
$\dblDist$ (and so also in $\dblPRel$) we have a similar construction:
if $A$ and $B$ are groupoids and $F : A \to B$ is a functor, there are
distributors $\companion{F} : A \profto B$ and
$\conjoint{F} : B \profto A$, defined below
\begin{align*}
  \companion{F} &: A^\op \times B \to \Set   &   \conjoint{F} &: B^\op
                                                  \times A \to \Set \\
  &(a, b) \mapsto B[F(a), b] & & (b, a) \mapsto B[b, F(a)]
\end{align*}

These constructions are instances of an abstract notion in double
category theory.
\begin{defi}
Let $\mathbb{D}$ be a pseudo double category and let $S : A \to B$ be
a vertical morphism. A \textbf{companion} of $S$ is a horizontal
morphism $\widehat{S} : A \profto B$ together with a pair of 2-cells
\[
\begin{tikzcd}
	A & B \\
	B & B
	\arrow[""{name=0, anchor=center, inner sep=0},
        "{\companion{F}}", from=1-1, to=1-2, "\shortmid"{marking}]
	\arrow["F"', from=1-1, to=2-1]
	\arrow[Rightarrow, no head, from=1-2, to=2-2]
	\arrow[""{name=1, anchor=center, inner sep=0}, Rightarrow, no head, from=2-1, to=2-2]
	\arrow[shorten <=12pt, shorten >=12pt, Rightarrow, from=0, to=1]
      \end{tikzcd}\qquad
      \begin{tikzcd}
	A & A \\
	A & B
	\arrow[""{name=0, anchor=center, inner sep=0}, Rightarrow, no head, from=1-1, to=1-2]
	\arrow[Rightarrow, no head, from=1-1, to=2-1]
	\arrow["{F}", from=1-2, to=2-2]
	\arrow[""{name=1, anchor=center, inner sep=0}, "\companion{F}"', "\shortmid"{marking}, from=2-1, to=2-2]
	\arrow[shorten <=12pt, shorten >=12pt, Rightarrow, from=0, to=1]
\end{tikzcd}
\]
satisfying appropriate axioms (e.g. \cite[Def.~2.6]{ggv24}). A \textbf{conjoint} of $S$ is
defined dually, as a horizontal morphism $\widecheck{F} : B \profto A$
with corresponding data and axioms. 
\end{defi}
Any two companions of $S$, and any two conjoints, are canonically
isomorphic. The constructions we gave for relations and profunctors
provide companions and conjoints, and in this case the accompanying 2-cells are
trivial. 

\begin{lem}
In $\dblRel$, $\dblPRel$, and $\dblDist$, all vertical morphisms have
companions and conjoints.
\end{lem}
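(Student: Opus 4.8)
The plan is to verify the companion and conjoint axioms directly for the explicit constructions the excerpt has already given, treating $\dblDist$ as the primary case and deducing $\dblPRel$ as the restriction to discrete groupoids. Since $\dblRel$ is the proof-irrelevant shadow of $\dblDist$ (where $2$-cells are mere inclusions rather than natural transformations), the same computations specialize. So the real content is to exhibit, for each functor $F : A \to B$, the two globular-boundary $2$-cells accompanying $\companion{F}$ and the two accompanying $\conjoint{F}$, and to check the two triangle-like coherence axioms of \cite[Def.~2.6]{ggv24}.

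\textbf{The companion.} First I would treat $\companion{F} : A^\op \times B \to \Set$, $(a,b) \mapsto B[F(a), b]$. The first required $2$-cell has horizontal boundaries $\companion{F}$ (top) and the horizontal identity on $B$ (bottom), with vertical boundaries $F$ (left) and $\id_B$ (right); concretely this is a natural transformation $B[F(a), b] \tto B[F(a), b]$, for which I take the identity. The second $2$-cell has the horizontal identity on $A$ (top) and $\companion{F}$ (bottom), with vertical boundaries $\id_A$ and $F$; this is a natural transformation $A[a, a'] \tto B[F(a), F(a')]$, for which I take the functorial action of $F$ on hom-sets, $f \mapsto F(f)$. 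Naturality in both variables is routine since $F$ is a functor. The two companion axioms then assert that pasting these $2$-cells vertically and horizontally recovers identities up to the structural isomorphisms of $\dblDist$; unwinding the coend-based horizontal composite $\companion{F} \bullet \id_A$ and $\id_B \bullet \companion{F}$, each axiom reduces to the statement that $F$ preserves identities and composition, together with the Yoneda-style unit isomorphism $\int^{a'} A[a,a'] \times B[F(a'),b] \cong B[F(a),b]$. I would state that these hold by the functoriality of $F$ and the defining coend computation from \eqref{eq:coend}.

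\textbf{The conjoint.} The conjoint $\conjoint{F} : B^\op \times A \to \Set$, $(b,a) \mapsto B[b, F(a)]$ is handled dually, with the two accompanying $2$-cells built from the identity and from the action $f \mapsto F(f)$, now on the opposite side; the same reductions apply. Finally, for the discrete case $\dblPRel$ I would observe that when $A, B$ are sets viewed as discrete groupoids, $B[F(a),b]$ is $\{*\}$ if $b = F(a)$ and empty otherwise, recovering exactly the graph relations $\companion{F}$ and $\conjoint{F}$ described earlier, with all accompanying $2$-cells trivial; and for $\dblRel$ the same graphs satisfy the inclusion-based axioms on the nose.

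\textbf{Main obstacle.} I expect the only genuinely delicate point to be the bookkeeping of the coherence isomorphisms in the companion/conjoint axioms: because horizontal composition in $\dblDist$ is only weakly unital and associative (via coends), the two axioms do not hold strictly but up to the canonical invertible globular $2$-cells, and one must check that the chosen $2$-cells are compatible with these associators and unitors. This is a standard but slightly fiddly Yoneda-reduction rather than a conceptual difficulty, and it is exactly the kind of computation the double-categorical framework is designed to make mechanical.
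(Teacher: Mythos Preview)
Your proposal is correct and matches the paper's approach: the paper gives no proof of this lemma beyond the remark preceding it that ``the constructions we gave for relations and profunctors provide companions and conjoints, and in this case the accompanying 2-cells are trivial,'' so your explicit verification of the companion/conjoint 2-cells and their axioms via Yoneda-type coend reductions is exactly what is being left to the reader.
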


A pseudo double category with this property is often directly
presented as a \emph{proarrow equipment} \cite{wood1982abstract}. We do
not use equipments in this paper, because the double category of games
we present in the next section only admits companions and conjoints
for a subclass of vertical morphisms, and so we must discuss these
notions explicitly. 

Note that a functor $F : A \to B$ also determines a pair
 of species $\widehat{F} \in \Esp[A, B]$ and $\widecheck{F} \in
 \Esp[B, A]$, respectively defined by
 \begin{align*}
& \companion{F}((a_1, \dots, a_n), b) = \Sym (B) [(F(a_1), \dots, F(a_n)),
   (b)] \\
&   \conjoint{F}((b_1, \dots, b_n), a) = \Sym (B) [(b_1,
   \dots, b_n), (F(a))].
 \end{align*}
 This construction gives companions and conjoints for $F$ in an
 alternative double category consisting of groupoids, functors, and
 generalized species. We omit the details of this double category,
 which we do not use in the paper. (The notations $\companion{F}$ and
 $\conjoint{F}$ refer ambiguously to the induced distributors or to
 the induced species, but this will be clear from context.)
 
\subsection{Symmetric monoidal structure in double categories.} In
this paper we primarily exploit the double-categorical structure of
our models to simplify the description of symmetric monoidal structure
in the underlying bicategories. This technique is well-established \cite{shulman}. Symmetric monoidal double categories have a much
simpler definition than symmetric monoidal bicategories, because we
can use vertical morphisms to encode the associativity, unit, and
symmetry constraints of the monoidal product. Since vertical morphisms
form a strict category, we can use strict categorical notions,
avoiding the numerous coherence axioms that are needed when defining a
symmetric monoidal bicategory.

\begin{defi} A \textbf{symmetric monoidal double category} is a double
  category $\mathbb{D}$ equipped with a double functor
  $\otimes : \mathbb{D} \times \mathbb{D} \to \mathbb{D}$, an object
  $I$, and vertical double natural transformations for symmetry,
  associativity, and unitality: in particular there are vertical 1-cells
  \begin{align*}
    (A \otimes B) \otimes C \xrightarrow{a_{A, B, C}} A \otimes (B \otimes C)
    & & A \otimes I \xrightarrow{r_A} A & & I \otimes A
\xrightarrow{l_A} A & & A \otimes B \xrightarrow{b_{A, B}} B \otimes A
  \end{align*}
  satisfying a number of coherence axioms (see
  e.g.~\cite[Def.~4.1]{ggv24} for a fully explicit definition). 
\end{defi}

In general, if a double category $\mathbb{D}$ is symmetric
monoidal, then the underlying horizontal bicategory
$\hori(\mathbb{D})$ has no reason to be symmetric monoidal, since the
monoidal data was defined only at the vertical level, and the
horizontal bicategory has no access to it. But, when
the structural isomorphisms have horizontal companions, the
symmetric monoidal structure can be lifted the horizontal level. This is due to Shulman and Wester-Hansen \cite[Thm~1.2]{hansen-shulman}:

\begin{thmC}[\cite{hansen-shulman}]
  \label{thm:shulman}
  Let $\mathbb{D}$ be a symmetric monoidal double category, and
  suppose that in $\mathbb{D}$ all vertical isomorphisms have
  companions. Then the bicategory $\hori(\mathbb{D})$ is symmetric
  monoidal in a canonical way.

  Furthermore, if $\mathbb{E}$ is also a
  symmetric monoidal double category where all vertical isomorphisms
  have companions, then every
  symmetric monoidal pseudo double functor $F : \mathbb{D} \to
  \mathbb{E}$ induces a symmetric  monoidal pseudofunctor  $\hori(F) :
  \hori(\mathbb{D}) \to \hori(\mathbb{E})$. 
\end{thmC}

We can then directly verify the following property:
\begin{lem}
The double categories $\dblRel$, $\dblPRel$, and $\dblDist$ are
symmetric monoidal, and the induced symmetric monoidal structure on
$\Rel$, $\PRel$, and $\Dist$ corresponds to that given in Section~\ref{sec:static}. 
\end{lem}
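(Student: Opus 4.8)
The plan is to treat $\dblDist$ as the principal case, since $\dblPRel$ is its full sub-double-category on discrete groupoids and $\dblRel$ is its proof-irrelevant shadow. For $\dblDist$ the strategy has four steps: (i) exhibit the monoidal double functor $\otimes$ together with the unit and the structural vertical isomorphisms; (ii) verify the coherence axioms of a symmetric monoidal double category; (iii) invoke Proposition~\ref{prop:shulman} to lift the structure to $\hori(\dblDist) = \Dist$; and (iv) identify the lifted structure with the one described in \S\ref{subsec:dist}. I would then obtain the $\dblPRel$ and $\dblRel$ cases by restriction and by proof-irrelevance, respectively.

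First I would define $\otimes : \dblDist \times \dblDist \to \dblDist$ on objects by the cartesian product of groupoids $A \times B$, on vertical morphisms by the product functor $F \times G$, on horizontal morphisms by the pointwise product $(\alpha \otimes \beta)((a,b),(a',b')) = \alpha(a,a') \times \beta(b,b')$, and on 2-cells by the evident product of natural transformations; the unit is the terminal groupoid $1$. To make this a \emph{pseudo} double functor I must supply the globular compositors witnessing preservation of horizontal composition and identities. The identity compositor is the isomorphism $(A \times B)[-,-] \iso A[-,-] \times B[-,-]$, and the composition compositor is the Fubini-type isomorphism for coends in $\Set$, namely $(\alpha' \otimes \beta') \bullet (\alpha \otimes \beta) \iso (\alpha' \bullet \alpha) \otimes (\beta' \bullet \beta)$, which holds because binary products in $\Set$ preserve colimits in each variable. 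The structural vertical 1-cells $a_{A,B,C}$, $l_A$, $r_A$, $b_{A,B}$ are simply the canonical isomorphisms of the cartesian product of groupoids; crucially these live in the \emph{vertical} category, where composition is strict.

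This verticality is exactly what makes step (ii) tractable, and I expect it to be the main obstacle only at the level of bookkeeping rather than of ideas. The coherence axioms of a symmetric monoidal double category are equations between vertical morphisms and the square 2-cells they induce; since the vertical 1-category here is just groupoids and functors under cartesian product, these equations reduce to the classical coherence of the cartesian product on a $1$-category, supplemented by the naturality of the distributor-level assignment and the compatibility of the Fubini compositors with $a, l, r, b$. The genuinely weak data has been pushed entirely into the horizontal direction, so there are no pentagon/hexagon diagrams of invertible modifications to chase, only strict equalities plus the routine checks that the compositor isomorphisms are natural and coherent.

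For step (iii), the preceding lemma on companions and conjoints shows that in $\dblDist$ \emph{all} vertical morphisms have companions, in particular all vertical isomorphisms; hence Proposition~\ref{prop:shulman} applies and lifts the symmetric monoidal structure to $\Dist = \hori(\dblDist)$. For step (iv), the induced horizontal product is, by construction in the Shulman--Wester-Hansen lifting, the restriction of $\otimes$ to horizontal morphisms, i.e.\ the pointwise product, with unit $1$; the associativity, unit, and symmetry equivalences are the companions $\companion{a}$, $\companion{l}$, $\companion{r}$, $\companion{b}$ of the vertical structural isomorphisms, and a direct computation shows these are precisely the reindexing equivalences of \S\ref{subsec:dist}. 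Finally, the full sub-double-category of $\dblDist$ on discrete groupoids is closed under $\times$ and contains $1$, so the structure restricts to $\dblPRel$ and recovers the product of sets with pointwise product of proof-relevant relations; and running the identical computation with the proof-irrelevant $2$-cells of $\dblRel$ (relation inclusions) yields the cartesian product of sets, singleton unit, and pointwise product of relations of \S\ref{subsec:relational_model}, completing all three cases.
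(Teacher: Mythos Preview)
Your proposal is correct and follows exactly the approach the paper sets up: the paper states this lemma immediately after Proposition~\ref{prop:shulman} with only the phrase ``we can then directly verify the following property'' and gives no explicit proof, so you have simply spelled out the routine verification the authors omit. Your treatment of $\dblDist$ as the principal case, the use of the Fubini isomorphism for the compositor, the appeal to the preceding companion lemma, and the restriction to $\dblPRel$ and $\dblRel$ are all in line with the paper's intended (but unwritten) argument.
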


\subsection{Another application of companions and conjoints in
  $\dblDist$.}
  \label{sec:anoth-appl-comp}

In the rest of this section, we discuss an elementary construction on
distributors which will be useful in Section~\ref{sec:cc_pseudofunctor}.

\begin{defi}
  \label{def:pierres-construction}
  For a distributor $\alpha : A \profto B$, \emph{i.e.} a functor
$\alpha : A^{\op} \times B \to \Set$, and functors $S : A' \to A$ and
$T : B' \to B$, we define distributors $\alpha[S] : A' \profto B$ and
$[T]\alpha : A \profto B'$ by
$
\alpha[S](a, b) = \alpha(Sa, b)$
and $[T]\alpha(a, b) = \alpha(a, Tb).$
\end{defi}

This construction extends in the obvious way to functors between hom-categories:
\[
-[S] : \Dist[A, B] \to \Dist[A', B]\,,
\qquad
\,[T] - : \Dist[A, B] \to \Dist[A, B']\,.
\]

The distributors $\alpha[S]$ and $[T]\beta$ can be presented using
companions and conjoints:
\begin{lem}
  \label{lem:action-with-companions}
  For $\alpha : A \profto B$, $S : A' \to A$ and $T : B'
  \to B$, there are natural isomorphisms
\[
\alpha[S] \cong A' \overset{\companion{S}}{\profto} A
\overset{\alpha}{\profto} B
\qquad\text{and}\qquad
[T]\alpha = A \overset{\alpha}{\profto} B \overset{\conjoint{T}}{\profto} B'.
\]
\end{lem}
\begin{proof}
The first isomorphism is because $\int^{a \in A} \alpha(a, b) \times A[Sa',a] \cong \alpha(Sa', b)$ by the density formula for coends. The
second one is similar. 
\end{proof}

We now introduce a few lemmas expressing compatibility of these operations
with other constructions on distributors. 
\begin{lem}
Consider distributors $\alpha \in \Dist[A, B]$, $\beta \in \Dist[B,
C]$ and functors $S : A'
\to A$, $T : C' \to C$.

Then we have natural isomorphisms, additionally natural in $S$ and $T$:
\[
(\beta \bullet \alpha)[S] \iso \beta \bullet \alpha[S]\,,
\qquad
\qquad
\,[T] (\beta \bullet \alpha) \iso [T]\beta \bullet \alpha\,.
\]
\end{lem}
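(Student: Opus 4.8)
The plan is to observe that both isomorphisms are, at the level of coends, essentially definitional, and then to repackage them structurally so that naturality in $S$ and $T$ becomes transparent. For the direct view, I would unfold the definitions: on the one hand
\[
(\beta \bullet \alpha)[S](a', c) = (\beta \bullet \alpha)(Sa', c) = \int^{b \in B} \alpha(Sa', b) \times \beta(b, c),
\]
and on the other hand
\[
(\beta \bullet \alpha[S])(a', c) = \int^{b \in B} \alpha[S](a', b) \times \beta(b, c) = \int^{b \in B} \alpha(Sa', b) \times \beta(b, c),
\]
so the two distributors are computed by the same coend; the first isomorphism is thus an identity. The dual unfolding, substituting in the $\beta$ factor instead of the $\alpha$ factor, gives the second isomorphism in the same way.

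To make the naturality bookkeeping clean, I would instead derive both isomorphisms from associativity of composition in $\Dist$ together with the companion/conjoint presentation of Lemma~\ref{lem:action-with-companions}, which gives $\alpha[S] \cong \alpha \bullet \companion{S}$ and $[T]\gamma \cong \conjoint{T} \bullet \gamma$. The first isomorphism then reads
\[
(\beta \bullet \alpha)[S] \cong (\beta \bullet \alpha) \bullet \companion{S} \cong \beta \bullet (\alpha \bullet \companion{S}) \cong \beta \bullet \alpha[S],
\]
where the outer steps are Lemma~\ref{lem:action-with-companions} and the middle step is the associator of $\Dist$. The second is entirely dual:
\[
[T](\beta \bullet \alpha) \cong \conjoint{T} \bullet (\beta \bullet \alpha) \cong (\conjoint{T} \bullet \beta) \bullet \alpha \cong [T]\beta \bullet \alpha,
\]
again using Lemma~\ref{lem:action-with-companions} at both ends and associativity in the middle.

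Naturality in $\alpha$ and $\beta$ is immediate from either presentation, since composition, $-[S]$, and $[T]-$ are all functorial in their distributor arguments. For the additional naturality in $S$ and $T$, I would rely on the fact that the density-formula isomorphism of Lemma~\ref{lem:action-with-companions} is itself natural in the substituting functor, while the associator of $\Dist$ involves neither $S$ nor $T$; threading a $2$-cell $S \Rightarrow S'$ (resp. $T \Rightarrow T'$) through the displayed chains then produces the required coherence square. I do not expect a genuine obstacle: the mathematical content is simply that source-substitution commutes with the outer factor and target-substitution with the inner factor of a composite, which is forced by the indexing of the coend. The only point requiring care is the bookkeeping of the coherence isomorphisms when invoking associativity, but these are part of the already-established bicategorical structure of $\Dist$, so no new coherence argument is needed.
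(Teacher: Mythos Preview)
Your proof is correct and follows the same approach as the paper, which simply invokes Lemma~\ref{lem:action-with-companions} together with associativity and naturality of composition in $\Dist$. Your additional direct coend unfolding is a welcome sanity check but not needed.
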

\begin{proof}
 Immediate from Lemma~\ref{lem:action-with-companions}, and by associativity
 and naturality of composition.
\end{proof}

\begin{lem}\label{lem:comp_ren_comp}\label{lem:comp_ren_id}
Consider $\alpha \in \Dist[A, B]$ and $\beta \in \Dist[B', C]$ distributors,
with an adjunction $\e$ consisting of $L : B \dashv B' : R$ and natural unit
and counit $\eta_b \in B[b, RLb]$ and $\epsilon_{b'} \in B[LRb', b']$. Then, we have a natural isomorphism
\[
\beta [L] \bullet \alpha \iso \beta \bullet [R] \alpha\,
\]
whose action is as follows for $t \bullet s \in (\beta[L] \bullet
\alpha)(a, c)$,
\begin{align*}
  \xi_{\alpha, \beta, \e}(t \bullet s) = t \bullet \alpha(a,\eta_b)(s) \in
  (\beta \bullet [R]\alpha)(a, c),
\end{align*}
and as follows for $t' \bullet s' \in (\beta \bullet [R]\alpha)(a, c)$,
\begin{align*}
\xi^{-1}_{\alpha, \beta, \e}(t' \bullet s') = \beta(\epsilon_{b'}, c)(t')
\bullet s' \in (\beta[L] \bullet \alpha)(a, c)\,.
\end{align*}

This induces a natural isomorphism
\[\chi_{\e} : \id_{B'} [L] \iso [R]\,\id_B \in \Dist[B, B']\,\]
as a special case of the above, using the composition laws for distributors. 
\end{lem}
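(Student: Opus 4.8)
The plan is to verify directly that the two stated assignments $\xi_{\alpha,\beta,\e}$ and $\xi^{-1}_{\alpha,\beta,\e}$ are well-defined on coend classes, mutually inverse, and natural; conceptually, the isomorphism is just the coend incarnation of the adjunction bijection $B'[Lb, b'] \iso B[b, Rb']$. An essentially equivalent and more abstract route runs through Lemma~\ref{lem:action-with-companions}: using associativity one rewrites $\beta[L] \bullet \alpha \iso \beta \bullet (\companion{L} \bullet \alpha)$ and $\beta \bullet [R]\alpha \iso \beta \bullet (\conjoint{R} \bullet \alpha)$, so the whole statement reduces to the natural isomorphism $\companion{L} \iso \conjoint{R}$ of distributors $B \profto B'$, which sends $\companion{L}(b,b') = B'[Lb,b']$ to $\conjoint{R}(b,b') = B[b, Rb']$ by adjoint transposition, and is natural exactly because $L \dashv R$ is. I would keep this as the guiding intuition but carry out the explicit computation, since the statement commits to specific formulas for $\xi$ and $\xi^{-1}$.

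First I would check that $\xi_{\alpha,\beta,\e}$ is well-defined, i.e.\ respects the equivalence relation defining the source coend $(\beta[L]\bullet\alpha)(a,c) = \int^{b} \alpha(a,b)\times \beta(Lb,c)$. Concretely, for $g \in B(b_1,b_2)$ one must show $\xi$ sends the two representatives $(\alpha(a,g)(s),\, t)$ and $(s,\, \beta(Lg,c)(t))$ of a single class to the same target class. Unwinding both, they are related by the target coend relation along $Lg \in B'(Lb_1, Lb_2)$, and the two values coincide precisely by the naturality square $RL(g)\circ \eta_{b_1} = \eta_{b_2}\circ g$ of the unit $\eta$. Dually, well-definedness of $\xi^{-1}$ on $(\beta \bullet [R]\alpha)(a,c) = \int^{b'} \alpha(a,Rb')\times\beta(b',c)$ uses the naturality of the counit $\epsilon$.

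Next I would verify that $\xi$ and $\xi^{-1}$ are mutually inverse. Computing $\xi^{-1}(\xi(t\bullet s))$ gives a representative at index $RLb$; applying the source coend relation along $\eta_b : b \to RLb$ moves it to index $b$, where its $\beta$-component becomes $\beta(\epsilon_{Lb}\circ L\eta_b,\, c)(t) = t$ by the triangle identity $\epsilon_{Lb}\circ L\eta_b = \id_{Lb}$, leaving exactly $t\bullet s$. Symmetrically, $\xi(\xi^{-1}(t'\bullet s')) = t'\bullet s'$ follows via the target coend relation along $\epsilon_{b'}$, the triangle identity $R\epsilon_{b'}\circ\eta_{Rb'} = \id_{Rb'}$ now being absorbed into the $\alpha$-component $\alpha(a, R\epsilon_{b'}\circ\eta_{Rb'})(s') = s'$. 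Thus both triangle identities are used, one per direction. Naturality of $\xi_{\alpha,\beta,\e}$ in $\alpha$ and $\beta$ (and automatically in $a,c$, since it is an isomorphism in $\Dist[A,C]$) is then the routine check that the formula commutes with the actions of the relevant $2$-cells, using only the functorial action of $\alpha$ together with naturality of those $2$-cells.

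Finally, the induced isomorphism $\chi_{\e}: \id_{B'}[L] \iso [R]\,\id_B$ is obtained by specializing to $\alpha = \id_B$ and $\beta = \id_{B'}$ and cancelling the identities: the unit laws for distributor composition give $\id_{B'}[L]\bullet\id_B \iso \id_{B'}[L]$ and $\id_{B'}\bullet[R]\id_B \iso [R]\id_B$ through the density (co-Yoneda) formula, so the general isomorphism restricts to $\chi_{\e}$, which concretely is again the transpose $B'[Lb,b'] \iso B[b,Rb']$. The main obstacle is purely bookkeeping: one must keep the coend equivalence relation and the variances of $\alpha$ and $\beta$ straight so that each verification lands exactly on a naturality square or on a triangle identity. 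There is no genuine difficulty beyond this careful coend calculus.
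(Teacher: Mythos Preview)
Your proposal is correct. The paper's proof is exactly the abstract route you mention as the ``guiding intuition'': it simply observes that the adjunction gives $\companion{L} \cong \conjoint{R}$ and then invokes Lemma~\ref{lem:action-with-companions} together with associativity of composition, without spelling out the explicit coend formulas. Your approach goes further by checking directly that the specific maps $\xi$ and $\xi^{-1}$ stated in the lemma are well-defined on coend classes and mutually inverse via the triangle identities; this is a genuine addition, since the lemma statement commits to those concrete formulas and the paper's one-line proof leaves their verification to the reader. Both approaches are sound; yours is more self-contained, while the paper's highlights that the result is a formal consequence of the companion/conjoint machinery already in place.
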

\begin{proof}
From the adjunction property it is easy to check that $\companion{L} \cong
\conjoint{R}$, and the result follows by associativity of
composition. 
\end{proof}

\section{Concurrent Games and Static Collapse}\label{sec:collapse_prel}

We now construct a dynamic model based on concurrent games and
strategies, without symmetries. We show that it has a static double
categorical collapse in the model of proof-relevant relations
introduced in Section~\ref{subsec:prrel}. 

\subsection{Rudiments of concurrent games} 
\label{subsec:basic-cg}

Game semantics presents computation in terms of a two-player game:
\emph{Player} plays for the program under scrutiny, while
\emph{Opponent} plays for the execution environment. So a program is
interpreted as a \emph{strategy} for Player, and this strategy
is constrained by a notion of \emph{game}, specified
by the type. 
The framework of \emph{concurrent games}
(\cite{DBLP:conf/concur/MelliesM07,DBLP:conf/tlca/FaggianP09,DBLP:conf/lics/RideauW11})
is not merely a game semantics for concurrency, but a deep
reworking of the basic mechanisms of game semantics using causal
``truly concurrent'' structures from concurrency theory
\cite{DBLP:conf/scc/NielsenPW79}.

\subsubsection{Event structures} Concurrent games and
strategies are based on event structures. An event structure
represents the behaviour of a system
as a set of possible computational events equipped with
dependency and incompatibility constraints.

\begin{defi}\label{def:es}
An \textbf{event structure (es)} is $E = (\ev{E}, {\leq_E}, {\conflict_E})$,
where $\ev{E}$ is a (countable) set of \textbf{events}, $\leq_E$ is a
partial order called \textbf{causal dependency} and $\conflict_E$ is an
irreflexive symmetric binary relation on $\ev{E}$ called
\textbf{conflict}, satisfying:
\[
\begin{array}{rl}
\text{\emph{(1)}}& \forall e\in \ev{E},~[e]_E = \{e'\in
\ev{E} \mid
e'\leq_E e\}~\text{is finite,}\\
\text{\emph{(2)}}&
\forall e_1 \conflict_E e_2,~\forall e'_2 \geq_E e_2,~e_1 \conflict_E
e'_2\,.
\end{array}
\]
\end{defi}
Operationally, an event can occur if \emph{all} its
dependencies are met, and \emph{no} conflicting events 
have occurred. A finite set $x \subseteq_f \ev{E}$
down-closed for $\leq_E$ and comprising no conflicting pair
is called a \textbf{configuration} -- we write $\conf{E}$ for the set
of configurations on $E$, naturally ordered by inclusion. If $x \in
\conf{E}$ and $e \in \ev{E}$ is such that $e \not \in x$ but $x \cup
\{e\} \in \conf{E}$, we say that $e$ is \textbf{enabled} by $x$ and
write $x \enb_E e$. For $e_1, e_2 \in \ev{E}$ we write $e_1 \imc_E e_2$
for the \textbf{immediate causal dependency}, \emph{i.e.} $e_1 <_E
e_2$ with no event strictly in between.

There is an accompanying notion of \emph{map}: a \textbf{map of event
structures} from $E$ to $F$ is a function $f : \ev{E} \to \ev{F}$ such
that: \emph{(1)} for all $x \in \conf{E}$, the direct image $f x \in
\conf{F}$; and \emph{(2)} for all $x \in \conf{E}$ and $e, e' \in x$,
if $f e = f e'$ then $e = e'$. There is a category $\ES$ of event
structures and maps.

\subsubsection{Games and strategies} Throughout this paper, we will 
gradually refine our notion of game. For now, a \textbf{plain game} is
simply an event structure $A$ together with a \textbf{polarity}
function $\pol_A : \ev{A} \to \{-, +\}$ which specifies, for each event
$a \in A$, whether it is \textbf{positive} (\emph{i.e.} due to Player /
the program) or \textbf{negative} (\emph{i.e.} due to Opponent / the
environment). Events are often called \textbf{moves}, and annotated
with their polarity (as in $a^-, a^+$).

A strategy is an event structure with a projection map to $A$:
\begin{defi}\label{def:plain_strategy}
Consider $A$ a plain game. A \textbf{strategy} on $A$, written $\sigma
: A$, is an event structure $\sigma$ together with a map $\pr_\sigma : \sigma
\to A$ called the \textbf{display map}, satisfying:
\[
\begin{array}{rl}
\text{\emph{(1)}} &
\text{for all $x\in \conf{\sigma}$ and $\pr_\sigma x \enb_A a^-$,
there is a unique $x \enb_\sigma s$ such that $\pr_\sigma s =
a$.}\\
\text{\emph{(2)}} &
\text{for all $s_1 \imc_\sigma s_2$, if $\pol_A(\pr_\sigma(s_1)) = +$
or $\pol_A(\pr_\sigma(s_2)) = -$, then $\pr_\sigma(s_1) \imc_A
\pr_\sigma(s_2)$.}
\end{array}
\]
\end{defi}
\noindent 
Informally, the two conditions (called \emph{receptivity} and \emph{courtesy})
ensure that the strategy does not constrain the behavior of Opponent
any more than the game does. They are essential
for the compositional structure we describe below, but they do not
play a major role in this paper. 

As a simple example, the usual game $\tbool$ for booleans in
call-by-name is
\[
  \begin{tikzpicture}[  baseline=(current bounding box.center)]

  \node[negnode] (qm) at (0,0.6) {$\qu$};
  \node[posnode] (t) at (-0.7,0) {$\ttrue$};
  \node[posnode] (f) at (0.7,0) {$\tfalse$};
  \draw[thick, dotted] (qm) -- (t);
  \draw[thick, dotted] (qm) -- (f);
  \draw[conflict] (t) -- (f);
\end{tikzpicture}
\]
drawn following the order from top to bottom, with the wiggly line
indicating conflict. Player moves are blue, and Opponent moves are red
-- Opponent initiates computation with the first move $\qu$, to
which Player can react with either $\ttrue$ or $\tfalse$. 

Just like $\PRel$, strategies give a ``proof-relevant'' account of
execution, in the sense that moves and configurations of the game can
have multiple witnesses in the strategy. For example, on the left
below, $b$ and $c$ are mapped to the same move $\ttrue$:
\[
  \begin{tikzpicture}
  \node[neutralnode] (a) at (0,0.7) {$a$};
  \node[neutralnode] (b) at (-0.7,0) {$b$};
  \node[neutralnode] (c) at (0.7,0) {$c$};
  \node[negnode] (qm) at (2.8,0.7) {$\qu$};
  \node[posnode] (t) at (2.1,0) {$\ttrue$};
  \node[posnode] (f) at (3.5, 0) {$\tfalse$};
  
  \draw[strat-causality] (a) -- (b);
  \draw[strat-causality] (a) -- (c);
  \draw[|->, shorten >=5pt, shorten <=5pt] (a) to (qm);
  \draw[conflict] (b) -- (c);
  \draw[|->, shorten >=5pt, shorten <=5pt] (b) to[bend right] (t);
  \draw[|->, shorten >=5pt, shorten <=5pt] (c) to (t);
  \draw[game-causality] (qm) -- (t);
  \draw[game-causality] (qm) -- (f);
  \draw[conflict] (t) -- (f);

  \node (eq) at (4.5, 0.3) {\large$ =\vcentcolon$};
  
  \node[negnode] (qm2) at (6.5, 0.7) {$\qu$};
  \node[posnode] (t2) at (5.8,0) {$\ttrue$};
  \node[posnode] (t2') at (7.2, 0) {$\ttrue$};

  \draw[strat-causality] (qm2) -- (t2);
  \draw[strat-causality] (qm2) -- (t2');
  \draw[game-causality] (qm2) to[bend right] (t2);
  \draw[game-causality] (qm2) to[bend left] (t2');
  \draw[conflict] (t2) -- (t2');
\end{tikzpicture}
\]

Note that we denote immediate causality by
$\imc$ in strategies, while we use dotted lines for games.
This
lets us represent the strategy
in a single diagram, as on the right above. 

\subsubsection{Morphisms between strategies}
\label{subsubsec:mor_strat}
For $\sigma$ and $\tau$ two strategies on $A$, a \textbf{morphism} from $\sigma$ to
$\tau$, written $f : \sigma \Rightarrow \tau$, is a map of event
structures $f : \sigma \to \tau$ preserving the dependency relation $\leq$
(we say it is \textbf{rigid}) 
and such that $\pr_\tau \circ f = \pr_\sigma$.

\subsubsection{+-covered configurations}
 We now describe a useful technical tool: a strategy is completely characterized by a subset of its
configurations, called $+$-covered. 

For a strategy $\sigma$ on a game $A$, a configuration $x \in \conf{\sigma}$ is \textbf{$+$-covered} if all its
maximal events are positive, so every Opponent move has at least one
Player successor. We write $\confp{\sigma}$ for
the partial order of $+$-covered configurations
of $\sigma$. Those are fairly important, because morphisms between
strategies may be entirely described through their action on
$+$-covered configurations:

\begin{lem}\label{lem:pcov_mapification}
Consider $\sigma, \tau : A$ two strategies on a plain game $A$. Assume there is a
function
\[
f : \confp{\sigma} \to \confp{\tau}
\]
compatible with display maps and preserving unions.
Then, there is a unique morphism of strategies $\hat{f} : \sigma
\Rightarrow \tau$ such that for all $x\in \confp{\sigma}$,
$\hat{f}\,x = f\,x$.
\end{lem}

This is \cite[Lemma 6.3.4]{hdr}. We also mention the immediate
consequence:

\begin{lem}\label{lem:iso_confp}
 Consider a plain game $A$, and strategies $\sigma, \tau : A$. 

If $f : \confp{\sigma} \iso \confp{\tau}$ is an order-isomorphism such
that $\pr_\tau \circ f = \pr_\sigma$, then there is a unique
isomorphism of strategies $\hat{f} :
\sigma \iso \tau$ such that for all $x \in
\confp{\sigma}$, $\hat{f}(x) = f(x)$.
\end{lem}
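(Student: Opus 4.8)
The plan is to show that a strategy $\sigma : A$ can be reconstructed, as an event structure equipped with its display map, from the labelled poset $(\confp{\sigma}, \subseteq, \pr_\sigma)$, and that this reconstruction is canonical enough to be transported along $f$. Concretely, I would build $\hat{f}$ at the level of events, and then check that it is a bijection $\ev{\sigma} \to \ev{\tau}$ preserving and reflecting $\leq$ and $\conflict$ and satisfying $\pr_\tau \circ \hat{f} = \pr_\sigma$, hence an isomorphism of strategies $\hat{f} : \sigma \iso \tau$; that its direct-image action on $+$-covered configurations is $f$; and that it is the only such isomorphism.

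First I would reconstruct the events of $\sigma$ that are \emph{visible} in $\confp{\sigma}$. The positive events are recovered as the maximal events of the join-irreducible elements of the poset $\confp{\sigma}$: a positive event $s^+$ yields the prime configuration $[s]_\sigma$, which is $+$-covered and join-irreducible, and conversely a join-irreducible $+$-covered configuration has a unique (positive) maximal event. Since $f$ is an order-isomorphism it preserves join-irreducibles, and since $\pr_\tau \circ f = \pr_\sigma$ the induced bijection on positive events preserves polarity and the display map. The causal order between these events, and in particular the immediate dependencies of shape $s^- \imc s^+$, are read off directly from inclusion of prime configurations inside $\confp{\sigma}$; by courtesy (condition (2) of Definition~\ref{def:plain_strategy}) these negative-to-positive links are exactly the dependencies \emph{not} forced to mirror the game, i.e. the genuine content of the strategy, and they are transported by $f$ together with the positive events.

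The main obstacle is the negative events. A negative event either lies below some positive event, in which case it already appears inside a $+$-covered configuration and is recovered with the positive events above; or it has no positive successor, in which case it is invisible in $\confp{\sigma}$. For the latter I would invoke receptivity (condition (1) of Definition~\ref{def:plain_strategy}): such events are \emph{uniquely forced} by the game, since for every configuration $x$ and every negative move $a^-$ with $\pr_\sigma x \enb_A a$ there is a unique $x \enb_\sigma s$ over $a$. Thus the remaining negative events, their display images, and the immediate causalities of the covered shapes $s^+ \imc s'$ and $s \imc s^-$ (which by courtesy mirror immediate causality in $A$) are determined by the game $A$ via $\pr_\sigma$ together with the data already visible in $\confp{\sigma}$. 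The delicate point is to show this forcing is compatible with $f$: because $f$ preserves the display images of $+$-covered configurations, the reachable configurations of $A$, and hence the forced negative events, match on both sides, yielding a bijection on all events preserving polarity, display, and causal order, and (dually, via incompatibility of prime configurations in $\confp{\sigma}$) conflict.

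Finally I would assemble these bijections into $\hat{f} : \ev{\sigma} \to \ev{\tau}$ and verify that both $\hat{f}$ and $\hat{f}^{-1}$ are maps of event structures commuting with the display maps, so that $\hat{f} : \sigma \iso \tau$ is an isomorphism of strategies; the equation $\hat{f}(x) = f(x)$ for $x \in \confp{\sigma}$ then holds by construction. Uniqueness is the easy direction: any strategy isomorphism agreeing with $f$ on $+$-covered configurations must agree on all positive events (which occur in prime $+$-covered configurations) and on the negative events below them, and is then forced on the remaining negative events by receptivity, so it coincides with $\hat{f}$. I expect the transfer of the forced negative events and of conflict through $f$, rather than the positive-event bookkeeping or the uniqueness argument, to be where the real care is needed.
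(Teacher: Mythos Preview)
Your proposal is sound, but you should know that the paper itself does not give a proof: it simply records the lemma as an ``immediate consequence of \cite[Lemma 6.3.4]{hdr}'', deferring entirely to an external reference. So there is no in-paper argument to compare against; what you have written is, in effect, an unpacking of what that cited lemma does.

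Your reconstruction strategy is the right one and matches how such results are established in the literature on concurrent games: recover positive events as the top elements of prime $+$-covered configurations, read off the $- \imc +$ links from inclusions among these primes, and then use courtesy to force the remaining $+\imc{-}$ and $+\imc{+}$ links from the game, and receptivity to freely adjoin the ``invisible'' negative events (those with no positive successor). Your identification of join-irreducibles with the prime configurations $[s]$ for $s$ positive is correct, since any proper $+$-covered subconfiguration of $[s]$ must omit $s$.

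The one place I would tighten the argument is the transport of conflict and of the invisible negative layer. For conflict between events appearing in $+$-covered configurations, incompatibility of primes in $\confp{\sigma}$ suffices, as you say. For the invisible negative events, you should argue explicitly that the receptive completion is determined by the game together with the display images of $+$-covered configurations, so that the two completions on the $\sigma$- and $\tau$-sides are canonically identified via $\pr_\tau \circ f = \pr_\sigma$; in particular, any conflict among invisible negatives is inherited from $A$ (by courtesy, an immediate causal link into a negative mirrors one in $A$, and minimal conflicts among receptive extensions over distinct moves of $A$ cannot arise without violating the uniqueness clause of receptivity). You already flag this as the delicate step; spelling it out would make the proof self-contained rather than a sketch.
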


\subsection{A double category of concurrent games and strategies}
We construct a double category whose objects are plain games,
horizontal morphisms are strategies between games, vertical morphisms
are maps of games, and 2-cells are morphisms of strategies. The main
technical point is the composition of strategies.  

\subsubsection{Strategies between games}
\label{subsec:strat-betw-games}

If $A$ is a plain game, its
\textbf{dual} $A^\perp$ has the same components as $A$ except for
the reversed polarity. In particular $\conf{A} = \conf{A^\perp}$. The
\textbf{tensor} $A \tensor B$ of $A$ and $B$ is simply $A$ and $B$ side
by side, with no interaction -- its events are the tagged disjoint
union $\ev{A \tensor B} = \ev{A} + \ev{B} = \{1\} \times \ev{A} \uplus
\{2\} \times \ev{B}$, and other components are inherited. 
We write $x_A \tensor x_B$ for the configuration of $A \tensor B$
that has $x_A \in \conf{A}$ on the left and $x_B \in \conf{B}$ on the
right, informing an order-isomorphism
\begin{eqnarray}
- \tensor - \quad:\quad \conf{A} \times \conf{B} &\iso& \conf{A\tensor B}\,.
\label{eq:isotensor}
\end{eqnarray}

Finally, the \textbf{hom} $A\vdash B$ is $A^\perp \tensor B$. As above, its
configurations are denoted $x_A \vdash x_B$ for $x_A \in \conf{A}$ and $x_B \in
\conf{B}$.
\begin{defi}
A \textbf{strategy from $A$ to $B$} is a strategy on the game $A \vdash B$. If $\sigma : A \vdash B$ and
$x^\sigma \in \conf{\sigma}$, by convention we write
$\pr_\sigma(x^\sigma) = x^\sigma_A \vdash x^\sigma_B \in \conf{A\vdash
B}$. 
\end{defi}
Our first example of a strategy between games is the \textbf{copycat}
strategy $\cc_A$, which is the identity morphism on $A$ in our
bicategory, \emph{i.e.} the horizontal identity.
Concretely, copycat on $A$ has the same events as $A \vdash A$,
but adds immediate causal links between copies of the same
move across components, from the negative copy to the positive.
By Lemma~\ref{lem:iso_confp}, the following characterizes copycat up to isomorphism.
\begin{prop}\label{prop:confp_cc}
If $A$ is a game, there is an order-isomorphism
\[
\cc_{(-)} : \conf{A} \iso \confp{\cc_A}
\]
such that for all $x \in \conf{A}$, $\pr_{\cc_A}(\cc_x) = x \vdash
x$. 
\end{prop}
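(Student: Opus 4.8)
The plan is to work directly with the explicit description of copycat given above. Writing $\ev{A \vdash A} = \{1\}\times\ev A \uplus \{2\}\times\ev A$, I abbreviate the two copies of a move $a \in \ev A$ as $a^{\mathsf l} = (1,a)$ (in the $A^\perp$ component, hence with polarity reversed) and $a^{\mathsf r} = (2,a)$ (in the $A$ component, with polarity $\pol_A(a)$). The causal order of $\cc_A$ is generated by the orders inherited in each component ($a^{\mathsf l} \leq a'^{\mathsf l}$ and $a^{\mathsf r} \leq a'^{\mathsf r}$ whenever $a \leq_A a'$) together with the cross-links pointing from the negative to the positive copy of each move: $a^{\mathsf l} \imc a^{\mathsf r}$ when $a$ is positive, and $a^{\mathsf r} \imc a^{\mathsf l}$ when $a$ is negative; conflict is inherited componentwise with none across components. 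My first step is to record that, since down-closure under a transitive relation is equivalent to down-closure under any generating family, a subset $y$ with component projections $y_1, y_2 \subseteq \ev A$ lies in $\conf{\cc_A}$ exactly when $y_1, y_2 \in \conf A$ and the cross-links are respected, i.e. $y_2 \cap A^+ \subseteq y_1$ and $y_1 \cap A^- \subseteq y_2$ (writing $A^+$, $A^-$ for the positive and negative moves). I then define $\cc_x = \{a^{\mathsf l}, a^{\mathsf r} \mid a \in x\}$ for $x \in \conf A$.

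The second step handles the easy direction. By the characterization above, $\cc_x$ is a configuration: since $y_1 = y_2 = x$, both cross-link conditions hold trivially. It is $+$-covered because every negative event it contains is dominated by its positive cross-partner (a negative $a^{\mathsf l}$ with $a \in A^+$ satisfies $a^{\mathsf l} \imc a^{\mathsf r} \in \cc_x$, and dually for $a^{\mathsf r}$ with $a \in A^-$), so no negative event is maximal. By construction $\pr_{\cc_A}(\cc_x) = x \vdash x$, and $x \mapsto \cc_x$ is evidently monotone. It remains to prove that every $+$-covered configuration has this form, which is the heart of the statement.

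For the converse I use that $y \in \confp{\cc_A}$ iff every negative event of $y$ has a strict successor in $y$, and I establish $y_1 = y_2$ by ruling out a $\leq_A$-minimal element of the symmetric difference. Suppose $a$ is $\leq_A$-minimal in $y_2 \setminus y_1$. Then $a$ must be negative in $A$ (otherwise $y_2 \cap A^+ \subseteq y_1$ forces $a \in y_1$), so $a^{\mathsf r}$ is a negative event of $y$ and, by $+$-coveredness, has an immediate successor $e \in y$. The key combinatorial point — and the step requiring the most care — is that the immediate successors of $a^{\mathsf r}$ in $\cc_A$ are exactly $a^{\mathsf l}$ (the cross-link) and the within-component covers $a'^{\mathsf r}$ with $a \imc_A a'$, because any covering pair in the transitive closure is realized by a single generating step. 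If $e = a^{\mathsf l}$ then $a \in y_1$, a contradiction; if $e = a'^{\mathsf r}$ with $a \imc_A a'$, then $a' \in y_2$, so minimality gives $a' \in y_1$, whence $a'^{\mathsf l} \in y$ and down-closure in the left component (as $a <_A a'$) forces $a^{\mathsf l} \in y$, again contradicting $a \notin y_1$. The inclusion $y_1 \subseteq y_2$ is symmetric, exchanging the two components and the two polarities. Thus $y_1 = y_2 =: x$ and $y = \cc_x$, so $x \mapsto \cc_x$ is a bijection onto $\confp{\cc_A}$ with monotone inverse $y \mapsto y_1$, giving the desired order-isomorphism. I expect the description of immediate causality in $\cc_A$ to be the only delicate point; everything else is a direct unfolding of the definitions.
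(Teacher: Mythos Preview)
Your argument is correct, and unlike the paper --- which simply cites \cite[Lemma~6.4.4]{hdr} --- you give a self-contained combinatorial proof working directly with the generating description of $\leq_{\cc_A}$.  Your characterization of configurations of $\cc_A$ via the two cross-link conditions, and the minimality argument on the symmetric difference, are exactly the right shape.

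One small imprecision: your claim that the immediate successors of $a^{\mathsf r}$ (for $a$ negative) are \emph{exactly} $a^{\mathsf l}$ and the within-component covers $a'^{\mathsf r}$ with $a \imc_A a'$ is slightly too strong.  If $a \imc_A a'$ with $a'$ positive, then $a^{\mathsf r} < a^{\mathsf l} < a'^{\mathsf l} < a'^{\mathsf r}$, so $a'^{\mathsf r}$ is not an immediate successor.  However, your justification (``any covering pair in the transitive closure is realized by a single generating step'') is the correct direction: covers in $\cc_A$ are always generating steps, so the immediate successors of $a^{\mathsf r}$ are \emph{contained in} $\{a^{\mathsf l}\} \cup \{a'^{\mathsf r} \mid a \imc_A a'\}$.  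This inclusion is all your case analysis needs, and both cases still yield the contradiction exactly as you describe.  So the argument stands; you should just replace ``exactly'' by ``contained in''.
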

\begin{proof}
Follows from \cite[Lemma 6.4.4]{hdr}. 
\end{proof}

\subsubsection{Composition} \label{subsubsec:cg_comp}
Consider $\sigma : A \vdash B$ and $\tau :
B \vdash C$. We define their composition $\tau \odot \sigma :
A \vdash C$. This is a dynamic model, and to successfully synchronize,
$\sigma$ and $\tau$ must agree to play the same events \emph{in the same
order}; this is defined in two steps.

We say that configurations $x^\sigma \in \conf{\sigma}$ and $x^\tau
\in \conf{\tau}$ are \textbf{matching} if they reach the same
configuration on $B$, \emph{i.e.} $x^\sigma_B =
x^\tau_B = x_B$. If that is the case, it induces a synchronization (and
we may then ask if that synchronization induces a deadlock).
If all events of
$x^\sigma$ and $x^\tau$ were in $B$, this synchronization would take the form of a
bijection $x^\sigma \bij x^\tau$. But some moves of $x^\sigma$ are in
$A$ and some moves of $x^\tau$ are in $C$, so instead we form the
bijection
\[
\varphi[x^\sigma, x^\tau] : x^\sigma \parallel x^\tau_C 
\stackrel{\pr_\sigma \parallel x^\tau_C}{\bij}
x^\sigma_A \parallel x_B \parallel x^\tau_C 
\stackrel{x^\sigma_A \parallel \pr_\tau^{-1}}{\bij}
x^\sigma_A \parallel x^\tau
\]
where $x\parallel y$ is the tagged disjoint union.
This uses the fact that from the conditions on maps of event
structures,
$\pr_\sigma : x^\sigma \bij x^\sigma_A \vdash x^\sigma_B$ is a
bijection and likewise for $\pr_\tau$.

Now that the synchronization is formed, we import the causal
constraints of $\sigma$ and $\tau$ to (the graph of)
$\varphi[x^\sigma, x^\tau]$, via (with $m, m' \in x^\sigma \parallel
x^\tau_C$ and $n, n' \in x^\sigma_A \parallel x^\tau$):
\[
\begin{array}{rcl}
(m, n) \cleq_\sigma (m', n') &\Leftrightarrow& 
m <_{\sigma \parallel C} m'\\
(m, n) \cleq_\tau (m', n') &\Leftrightarrow& 
n <_{A \parallel \tau} n'
\end{array}
\]
letting us finally say that matching $x^\sigma$ and $x^\tau$ are
\textbf{causally compatible} if ${\cleq} = {\cleq_\sigma \cup
  \cleq_\tau}$ on (the graph of) $\varphi[x^\sigma, x^\tau]$ is
acyclic.
In particular,  $x^\sigma$ and $x^\tau$ in Figure \ref{fig:deadlock}
are
\emph{not} causally compatible, the synchronization induces a
\emph{deadlock}.

\begin{figure}
\[
\begin{tikzpicture}
\matrix (m) [matrix of nodes, column sep=0pt, row sep=5pt] {
  \node (a) {$\tunit$}; & \node (lin) {$\lin$}; & \node (b)
  {$\tunit$}; \\[14pt]
  & & \node[negnode] (c) {$\qu$}; \\
  \node[posnode] (d) {$\qu$}; & & \\[10pt]
  \node[negnode] (e) {$\done$}; & &\\
&   & \node[posnode] (f) {$\done$}; \\
};

\draw[strat-causality] (c) to (d);
\draw[strat-causality] (d) to (e);
\draw[strat-causality] (e) to (f);
\draw[game-causality] (c) to[bend right] (f);
\draw[game-causality] (d) to[bend right] (e);
\draw[game-causality] (c) to[bend right] (d);
    \begin{pgfonlayer}{background}
    \draw[background rectangle,rounded corners=3pt, draw=none]
(a.west) -- (a.north west) -- (b.north east) -- (b.south east) --
(a.south west) -- (a.west);
     \end{pgfonlayer}
\end{tikzpicture}
\quad
\quad
\begin{tikzpicture}
  \node (a) at (0, 4.2) {$ \tunit$};
  \node (lin) at (0.5, 4.2) {$\lin$};
  \node (b) at (1, 4.2) {$\tunit$};
  \node (hom) at (1.5, 4.2) {$\vdash$};
  \node (c) at (2, 4.2) {$\tnat $};
  \node[negnode] (d) at (2, 3.5) {$\qu$};
  \node[posnode] (e) at (1, 3) {$\qu$}; 
  \node[negnode] (f) at (0, 2.5) {$\qu$};
  \node[posnode] (g) at (0, 1.2){$\done$}; 
  \node[negnode] (h) at (1, 1.6) {$\done$}; 
  \node[posnode] (i) at (2, 1){$0$}; 

\draw[strat-causality] (d) -- (e);
\draw[strat-causality] (e) -- (f);
\draw[strat-causality] (f) -- (g);
\draw[strat-causality] (e) -- (h);
\draw[strat-causality] (h) -- (g);
\draw[strat-causality] (h) -- (i);

\draw[game-causality] (d) to[bend left] (i);
\draw[game-causality] (e) to[bend right] (f);
\draw[game-causality] (f) to[bend right] (g);
\draw[game-causality] (e) to[bend right] (h);

    \begin{pgfonlayer}{background}
    \draw[background rectangle,rounded corners=3pt, draw=none]
(a.west) -- (a.north west) -- (c.north east) -- (c.south east) --
(a.south west) -- (a.west);
     \end{pgfonlayer}
\end{tikzpicture}
\]
\caption{An example of matching but causally incompatible
  configurations, in the composition of $\sigma :
  \tunit \lin \tunit$ and $\tau : \tunit \lin \tunit \vdash
  \tnat$. The underlying games are left undefined, but can be recovered
  by removing the arrows $\imc$. The configurations
  are matching on $\tunit \lin \tunit$,  but the arrows $\imc$ impose
  incompatible orders (i.e.~a cycle) between the two occurrences of $\done$. 
}
\label{fig:deadlock}
\end{figure}

The \textbf{composition} of $\sigma$ and $\tau$ is the unique (up to
iso) strategy
whose $+$-covered configurations are essentially causally compatible pairs of $+$-covered
configurations. Write $\CC(\sigma, \tau)$ for the set of causally
compatible pairs $(x^\sigma, x^\tau)
 \in \confp{\sigma} \times \confp{\tau}$, ordered componentwise.   

 \begin{prop}\label{prop:char_comp}
Consider strategies $\sigma : A \vdash B$ and $\tau : B \vdash C$.

There is a strategy $\tau \odot \sigma : A \vdash C$, unique up
to isomorphism, with an order-isomorphism
\[
\begin{array}{rcrcl}
- \odot -
&\!\!:\!\!&
\CC(\sigma, \tau) \iso \confp{\tau \odot \sigma}
\end{array}
\]
s.t. for all $x^\sigma \in \confp{\sigma}$ and $x^\tau \in
\confp{\tau}$ causally compatible,
\[
\pr_{\tau\odot \sigma}(x^\tau \odot x^\sigma) =
x^\sigma_A \vdash x^\tau_C\,.
\]
\end{prop}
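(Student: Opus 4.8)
The plan is to get existence of $\tau \odot \sigma$ from the standard composition of concurrent strategies (interaction followed by hiding), to get uniqueness from Lemma~\ref{lem:iso_confp}, and to concentrate the real work on the order-isomorphism $- \odot - : \CC(\sigma, \tau) \iso \confp{\tau \odot \sigma}$. Concretely, I would first recall that $\sigma$ and $\tau$ admit an interaction $\tau \circledast \sigma$, an event structure over $A \parallel B \parallel C$ whose configurations are exactly the synchronizations determined by $\varphi[x^\sigma, x^\tau]$ for matching, causally compatible pairs $(x^\sigma, x^\tau)$ --- the acyclicity of $\cleq = \cleq_\sigma \cup \cleq_\tau$ being precisely the condition for $\varphi[x^\sigma, x^\tau]$ to define a valid configuration. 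The composite $\tau \odot \sigma$ is then obtained by hiding the internal $B$-moves, so that its configurations are the visible parts (the $A$- and $C$-components) of configurations of the interaction.

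Given this, I would define the candidate map by sending a causally compatible pair $(x^\sigma, x^\tau) \in \CC(\sigma, \tau)$ to the configuration $x^\tau \odot x^\sigma \in \conf{\tau \odot \sigma}$ obtained by hiding the $B$-moves of the interaction configuration determined by $\varphi[x^\sigma, x^\tau]$. The display-map equation $\pr_{\tau \odot \sigma}(x^\tau \odot x^\sigma) = x^\sigma_A \vdash x^\tau_C$ then holds by construction, since hiding retains exactly the $A$- and $C$-components. The inverse sends $y \in \confp{\tau \odot \sigma}$ to the pair read off from its \emph{witness}, the minimal configuration $z$ of the interaction whose visible part is $y$; projecting $z$ to $\sigma$ and to $\tau$ yields a matching, causally compatible pair. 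Order-preservation and order-reflection are componentwise and follow from monotonicity of hiding and of witnessing with respect to inclusion.

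The crux, and the step I expect to be the main obstacle, is matching the $+$-covering conditions on the two sides. In the forward direction I would show that when $x^\sigma$ and $x^\tau$ are both $+$-covered, the interaction configuration $z$ determined by $\varphi[x^\sigma, x^\tau]$ has all its maximal events visible: a maximal internal $B$-event is synchronized, hence negative for one of the two strategies, say $\sigma$; since $x^\sigma$ is $+$-covered this event has a positive successor in $\sigma$, which embeds into $z$ via $\cleq_\sigma \subseteq \cleq$ and contradicts maximality. Consequently $z$ is the witness of its visible part $y$, and the same argument shows each maximal event of $y$ is positive in $A \vdash C$ (the $A$- and $C$-polarities agreeing with those in $A \vdash B$ and $B \vdash C$), so $y \in \confp{\tau \odot \sigma}$; this also gives injectivity, since $z$ is uniquely recovered from $y$. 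In the converse direction I would start from $y \in \confp{\tau \odot \sigma}$ with witness $z$ (all of whose maximal events are visible, by minimality), decompose into $(x^\sigma, x^\tau)$, and show each component is $+$-covered by the dual argument: a maximal event of $x^\sigma$ has no $\cleq_\sigma$-successor in $z$, so being internal would leave it maximal in $z$ (excluded) and being a visible $A$-move while negative in $A \vdash C$ would contradict that $y$ is $+$-covered. The delicate point throughout is the bookkeeping of polarities across the dualizations in $A \vdash B$, $B \vdash C$, $A \vdash C$, together with the interplay of receptivity and courtesy with hiding of internal moves; this is exactly the content that can alternatively be extracted from \cite{hdr}.

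Finally, uniqueness up to isomorphism is immediate: any two strategies on $A \vdash C$ equipped with an order-isomorphism from $\CC(\sigma, \tau)$ compatible with the display maps have isomorphic posets of $+$-covered configurations over $A \vdash C$, so Lemma~\ref{lem:iso_confp} supplies a unique isomorphism of strategies commuting with the two isomorphisms to $\CC(\sigma, \tau)$.
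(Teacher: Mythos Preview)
The paper's own proof is a bare citation to \cite[Proposition 6.2.1]{hdr}; your sketch unpacks exactly the standard interaction-then-hiding construction that this reference contains, and you explicitly defer the fiddly polarity/receptivity bookkeeping to the same source. So this is the same approach, just made explicit. One small imprecision: in the converse direction, ``being internal would leave it maximal in $z$'' is not literally right, since a $B$-event maximal in $x^\sigma$ can still have $\cleq_\tau$-successors; the actual argument there uses the polarity flip on $B$ together with the minimality of the witness, which you allude to but do not spell out --- but since you already flag this as the delicate point deferred to \cite{hdr}, this is consistent with what the paper does.
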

\begin{proof}
See \cite[Proposition 6.2.1]{hdr}.
\end{proof}

This description of composition emphasizes 
the conceptual difference between a static model, in which composition
is based on matching pairs as in \eqref{eq:comp_prel},
and a dynamic model, based on causal compatibility and
sensitive to deadlocks. 

\subsubsection{The double category $\dblCG$.} We assemble strategies
and games into a general compositional framework. This is a double
category whose vertical morphisms are maps of event structures, which
compose strictly, and whose horizontal morphisms are strategies
between games, which compose weakly. Altogether, we get:

\begin{thm}\label{th:cg_bicat}
  There is a double category $\dblCG$ with components as follows:
  \begin{itemize}
  \item  objects are plain games;
  \item vertical morphisms $A \to B$ are polarity-preserving rigid maps
of event structures $A \to B$ (also referred to as \emph{maps of
games});
\item horizontal morphisms $A \profto B$ are strategies on $A \vdash B$;
\item $2$-cells of type  \[ \begin{tikzcd}
	A & B \\
	C & D
	\arrow[""{name=0, anchor=center, inner sep=0}, "\sigma", "\shortmid"{marking}, from=1-1, to=1-2]
	\arrow["h"', from=1-1, to=2-1]
	\arrow["k", from=1-2, to=2-2]
	\arrow[""{name=1, anchor=center, inner sep=0}, "\tau"', "\shortmid"{marking}, from=2-1, to=2-2]
	\arrow[shorten <=12pt, shorten >=12pt, Rightarrow, from=0, to=1]
      \end{tikzcd}\]
 are rigid maps $f : \sigma \to \tau$ such that the following diagram
 commutes
\[ \begin{tikzcd}
	\sigma & \tau \\
	{A\vdash B} & {C\vdash D}.
	\arrow["f", from=1-1, to=1-2]
	\arrow["{\pr_\sigma}"', from=1-1, to=2-1]
	\arrow["{\pr_\tau}", from=1-2, to=2-2]
	\arrow["{h\vdash k}"', from=2-1, to=2-2]
\end{tikzcd}\]
\end{itemize}
\noindent 
We write $\CG$ for the horizontal bicategory $\hori(\dblCG)$. It is
clear that 2-cells in the bicategory $\CG$ correspond to the morphisms of
strategies described in Section~\ref{subsubsec:mor_strat}.
\end{thm}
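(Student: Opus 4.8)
The plan is to exhibit the data required by Definition~\ref{def:doublecat} and to reduce the axioms, as far as possible, to properties of strategy composition already available in the concurrent games framework (\cite{hdr}), so that the genuinely new work is confined to the behaviour of the $2$-cells. I would organise the structure as a category $\dblCG_0$ of objects and vertical morphisms together with a category $\dblCG_1$ of horizontal morphisms and $2$-cells, equipped with source/target and identity functors and a horizontal composition functor.

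First, the vertical structure is immediate: plain games with polarity-preserving maps of event structures form a (non-full) subcategory $\dblCG_0$ of $\ES$, since maps of event structures compose strictly and the polarity condition is closed under composition and holds for identities. For the category $\dblCG_1$, objects are strategies $\sigma : A \vdash B$ and morphisms are the $2$-cells of the statement, i.e. rigid maps $f : \sigma \to \tau$ with $\pr_\tau \circ f = (h \vdash k) \circ \pr_\sigma$ over vertical boundaries $h, k$; note $h \vdash k = h^\perp \tensor k$ is a map of games because $h$, being polarity-preserving $A \to C$, is also polarity-preserving $A^\perp \to C^\perp$. A $2$-cell is thus a rigid map of event structures together with a pair of vertical boundaries, and both data compose strictly while the commuting squares paste, so $\dblCG_1$ is a category with strictly associative, unital vertical composition of $2$-cells. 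The source and target functors $\dblCG_1 \to \dblCG_0$ send $\sigma : A \vdash B$ to $A$ and $B$ and a $2$-cell to its boundaries; the identity functor $\dblCG_0 \to \dblCG_1$ sends $A$ to copycat $\cc_A$ (Proposition~\ref{prop:confp_cc}) and a vertical $h : A \to B$ to the evident copycat $2$-cell $\cc_h : \cc_A \Rightarrow \cc_B$, whose functoriality in the game is standard and readable off the isomorphism $\conf{A} \iso \confp{\cc_A}$.

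The crux is the horizontal composition functor $\odot : \dblCG_1 \times_{\dblCG_0} \dblCG_1 \to \dblCG_1$. On objects it is $\tau \odot \sigma$ of Proposition~\ref{prop:char_comp}. On $2$-cells, given composable $f : \sigma \Rightarrow \sigma'$ and $g : \tau \Rightarrow \tau'$ sharing the middle vertical boundary, I would construct $g \odot f : \tau \odot \sigma \Rightarrow \tau' \odot \sigma'$ by functoriality of the interaction-and-hiding construction: the pair $(f, g)$ induces a rigid map on the interaction $\tau \circledast \sigma \to \tau' \circledast \sigma'$ by the universal property of the defining pullback in $\ES$, and this descends through hiding to the composite, with boundaries $h_A \vdash h_C$. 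Closer to the presentation used here, the same map can be read at the level of $+$-covered configurations: rigidity guarantees that $f$ and $g$ preserve the causal constraints $\cleq_\sigma, \cleq_\tau$, so a causally compatible pair $(x^\sigma, x^\tau)$ is sent to a causally compatible pair $(f\,x^\sigma, g\,x^\tau)$, which via the isomorphism $\CC(\sigma, \tau) \iso \confp{\tau \odot \sigma}$ determines the action of $g \odot f$. Functoriality ($\id \odot \id = \id$, compatibility with vertical composition) and the interchange law then follow from the corresponding identities on the underlying maps of event structures.

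It remains to supply the weak structure: the associator $(\upsilon \odot \tau) \odot \sigma \iso \upsilon \odot (\tau \odot \sigma)$ and the unitors $\cc_B \odot \sigma \iso \sigma \iso \sigma \odot \cc_A$ are the globular invertible $2$-cells provided by the concurrent games framework (\cite{hdr}); I would check they are natural with respect to $2$-cells and satisfy the pentagon and triangle axioms, all inherited from the corresponding coherence of strategy composition. The final identification $\hori(\dblCG) = \CG$, with $2$-cells the morphisms of strategies of \S\ref{subsubsec:mor_strat}, is then immediate, since globular $2$-cells are precisely the rigid maps with identity boundaries. I expect the main obstacle to be the third step: showing horizontal composition is genuinely functorial on $2$-cells and satisfies interchange. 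The difficulty is not conceptual but bookkeeping — tracking how rigid maps interact with the synchronisation $\varphi[x^\sigma, x^\tau]$ and the induced order $\cleq$ defining causal compatibility, and ensuring the reconstruction of event-structure maps from configuration-level data (in the spirit of Lemma~\ref{lem:iso_confp}) is compatible with composition.
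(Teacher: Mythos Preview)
Your proposal is correct and aligns closely with the paper's own proof sketch: both identify the horizontal composition of $2$-cells as the crux, and both resolve it by working at the level of $+$-covered configurations via the pointwise formula $(g \odot f)(x^\tau \odot x^\sigma) = g(x^\tau) \odot f(x^\sigma)$, appealing to \cite{hdr} (specifically Lemma~6.3.4 there, a slight generalisation of your Lemma~\ref{lem:iso_confp}) for the reconstruction of maps from configuration data. The paper additionally cites \cite{DBLP:journals/lmcs/CastellanCRW17} for the interaction-and-hiding route you also sketch, and \cite{paquet2020probabilistic} for the double-categorical packaging, but otherwise your outline matches.
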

\begin{proof}[Proof sketch]
  We have already introduced most of the necessary components, and here we
   mention two final points. 

First, the copycat construction is functorial, as required in a double
category. More precisely, for a vertical morphism $h : A \to B$, there
is a rigid map of event structures $\cc_h : \cc_A \to \cc_B$, defined to have the
same action on events as $h \vdash h : (A \vdash A) \to (B \vdash
B)$. This defines a 2-cell
\[\begin{tikzcd}
	A & A \\
	B & B
	\arrow[""{name=0, anchor=center, inner sep=0}, "{\cc_A}", "\shortmid"{marking}, from=1-1, to=1-2]
	\arrow["h"', from=1-1, to=2-1]
	\arrow["h", from=1-2, to=2-2]
	\arrow[""{name=1, anchor=center, inner sep=0}, "{\cc_B}"', "\shortmid"{marking}, from=2-1, to=2-2]
	\arrow[shorten <=9pt, shorten >=9pt, Rightarrow, from=0, to=1]
      \end{tikzcd}\]
in a functorial way. 
    
Another remaining challenge is to define the ``horizontal composition''
of $2$-cells
\begin{equation}
  \label{eq:1}
\begin{tikzcd}
	A & B \\
	{A'} & {B'}
	\arrow[""{name=0, anchor=center, inner sep=0}, "\sigma",
        from=1-1, to=1-2, "\shortmid"{marking}]
	\arrow["h"', from=1-1, to=2-1]
	\arrow["k", from=1-2, to=2-2]
	\arrow[""{name=1, anchor=center, inner sep=0}, "{\sigma'}"',
        from=2-1, to=2-2, "\shortmid"{marking}]
	\arrow["f", shorten <=12pt, shorten >=12pt, Rightarrow, from=0, to=1]
      \end{tikzcd}
      \qquad
      \begin{tikzcd}
	B & C \\
	{B'} & {C'}
	\arrow[""{name=0, anchor=center, inner sep=0}, "\tau",
        from=1-1, to=1-2, "\shortmid"{marking}]
	\arrow["k"', from=1-1, to=2-1]
	\arrow["l", from=1-2, to=2-2]
	\arrow[""{name=1, anchor=center, inner sep=0}, "{{\tau'}}"',
        from=2-1, to=2-2, "\shortmid"{marking}]
	\arrow["g", shorten <=12pt, shorten >=12pt, Rightarrow, from=0, to=1]
      \end{tikzcd}
      \qquad \qquad \longmapsto \qquad \qquad
      \begin{tikzcd}
	A && C \\
	{A'} && {C'}
	\arrow[""{name=0, anchor=center, inner sep=0}, "{\tau \odot
          \sigma}", from=1-1, to=1-3, "\shortmid"{marking}]
	\arrow["h"', from=1-1, to=2-1]
	\arrow["l", from=1-3, to=2-3]
	\arrow[""{name=1, anchor=center, inner sep=0}, "{\tau' \odot
          \sigma'}"', from=2-1, to=2-3, "\shortmid"{marking}]
	\arrow["{g \odot f}", shorten <=12pt, shorten >=12pt, Rightarrow, from=0, to=1]
\end{tikzcd}
\end{equation}
and proving the required coherence conditions. This is detailed in
  \cite{DBLP:journals/lmcs/CastellanCRW17}. By Lemma
\ref{lem:pcov_mapification}, it is sufficient to
  define morphisms between strategies on $+$-covered configurations, so
that we may define the horizontal composition simply pointwise with 
\[
(g \odot f)(x^\tau \odot x^\sigma) = g(x^\tau) \odot f(x^\sigma)
\]
for all $x^\sigma \in \confp{\sigma}$ and $x^\tau \in \confp{\tau}$
causally compatible. All the necessary verifications for constructing a
bicategory follow rather easily \cite[Theorem 6.4.11]{hdr}, and it is
only a minor extension to define the full double-categorical structure
(see also
\cite{paquet2020probabilistic}).
\end{proof}

\subsubsection{Symmetric monoidal structure.} We show that $\dblCG$ is
a symmetric monoidal double category. The tensor product of games
$A \otimes B$ extends to strategies: if $\sigma : A \vdash A'$ and
$\tau : B \vdash B'$, then $\sigma \tensor \tau$ is given by
\[ \sigma \tensor \tau \overset{\pr_\sigma \otimes
    \pr_\tau}{\longrightarrow} (A \vdash B) \tensor (A' \vdash B)
  \overset{\cong}{\longrightarrow} (A \otimes A') \vdash (B \otimes
  B')\] and similarly for 2-cells. All coherence data can be defined
using that the category of event structures and maps is symmetric
monoidal under $\tensor$; in particular the empty game gives the
monoidal unit. There is a monoidal interchange law given by the
invertible, globular 2-cell
$(\tau \odot \sigma) \tensor (\tau' \odot \sigma') \to (\tau \otimes
\tau') \odot (\sigma \otimes \sigma')$ that sends $(x^\tau \odot
x^\sigma) \tensor (x^{\tau'} \odot x^{\sigma'})$ to $(x^\tau \otimes
x^{\tau'}) \odot (x^\sigma \otimes x^{\sigma'})$.

\begin{prop}
The double category $\dblCG$ has a symmetric monoidal structure given
by the tensor product $\otimes$ of event structures, which extends to
games, strategies, maps of games, and 2-cells.
\end{prop}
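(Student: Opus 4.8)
The plan is to verify the definition of symmetric monoidal double category (\cite[Def.~4.1]{ggv24}) component by component for the double category $\dblCG$ of Theorem~\ref{th:cg_bicat}, leveraging the fact that the vertical morphisms of $\dblCG$ are maps of event structures and that $(\ES, \tensor)$ is itself a symmetric monoidal category. The crucial observation is that the structural data---associator $a$, unitors $r, l$, and symmetry $b$---are all \emph{vertical} morphisms, hence maps of event structures, so we simply take the corresponding isomorphisms in $\ES$. Since vertical composition in $\dblCG$ is strict and these are genuine isomorphisms of event structures, the pentagon, triangle, and hexagon axioms reduce to the already-known coherence of $\ES$. This is precisely the payoff of working double-categorically: it sidesteps the coherence overhead that a symmetric monoidal \emph{bicategory} would demand.

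First I would package the tensor as a pseudo double functor $\tensor : \dblCG \times \dblCG \to \dblCG$, with unit object $I$ the empty game. Its action on objects, vertical maps, strategies, and 2-cells is as described above the statement, and functoriality on vertical morphisms is strict, being inherited from $\ES$. The compositor of the double functor is the invertible globular interchange 2-cell $(\tau \odot \sigma) \tensor (\tau' \odot \sigma') \iso (\tau \tensor \tau') \odot (\sigma \tensor \sigma')$ already exhibited, and its unitor is an invertible globular 2-cell $\cc_A \tensor \cc_B \iso \cc_{A \tensor B}$. Both are constructed and checked pointwise on $+$-covered configurations via Lemma~\ref{lem:iso_confp} and the characterization of composition in Proposition~\ref{prop:char_comp}: there, the tensor acts by juxtaposition of causally-compatible pairs, so the compositor is visibly a bijection, and the coherence diagrams for a pseudo double functor commute once evaluated at $+$-covered configurations.

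Next I would check that $a, r, l, b$ assemble into \emph{vertical double natural transformations}, which is the main content of the argument. Beyond naturality with respect to vertical morphisms---immediate from $\ES$---each structural isomorphism requires a globular naturality 2-cell witnessing compatibility with horizontal morphisms. For symmetry, for instance, one needs a 2-cell filling the square with horizontal boundaries $\sigma \tensor \tau$ and $\tau \tensor \sigma$ and vertical boundaries $b$. I would produce each such 2-cell from Lemma~\ref{lem:iso_confp} by specifying the underlying order-isomorphism on $+$-covered configurations: since tensoring strategies is a disjoint juxtaposition and $b$ merely swaps the two sides, this map is the evident reindexing bijection, and commutation with the display maps is immediate.

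Finally, the remaining coherence of $a, r, l, b$ among themselves and their compatibility with the compositor and unitor are verified by the same method: each is an equation between globular 2-cells, so by Lemma~\ref{lem:iso_confp} it suffices to check it on $+$-covered configurations, where every cell is a concrete bijection and the equation reduces to the corresponding identity in $\ES$. The main obstacle is bookkeeping rather than conceptual: keeping track of the tagged disjoint-union reindexings and the matching/causal-compatibility data through the interchange 2-cell, so that the pentagon, hexagon, and interchange-coherence diagrams are seen to commute. No new ideas are required once the pointwise description via $\confp{(-)}$ is in place.
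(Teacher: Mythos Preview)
Your proposal is correct and follows essentially the same approach as the paper: the structural isomorphisms $a, r, l, b$ are taken to be the corresponding vertical maps of event structures, so that the coherence axioms reduce to those of the symmetric monoidal category $(\ES, \tensor)$. The paper's own proof is a one-line sketch (``all axioms are verified by elementary reasoning''), and your outline simply unpacks what that elementary reasoning looks like, using the same pointwise verification via $+$-covered configurations that the paper employs throughout.
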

\begin{proof}[Proof sketch]
The required structural vertical morphisms are rigid maps of event
structures representing the symmetry, associativity, and unit
properties of the tensor product. All axioms are verified by elementary reasoning. 
\end{proof}

\subsubsection{Companions in $\dblCG$.} Not all vertical morphisms in
$\dblCG$ admit a horizontal companion, but all vertical isomorphisms
do. The idea is as follows. For an isomorphism $h : A \to
B$ between games, the composite map 
\[
\cc_A \xrightarrow{\display_{\cc_A}} A \vdash A \xrightarrow{A \vdash
  h} A \vdash B
\]
is a strategy because re-indexing along an isomorphism preserves the
strategy axioms. We call this strategy $\companion{h} : A \vdash B$. 

In addition, there is a pair of 2-cells of type 
\[
\begin{tikzcd}
	A & B \\
	B & B
	\arrow[""{name=0, anchor=center, inner sep=0},
        "{\companion{h}}", from=1-1, to=1-2, "\shortmid"{marking}]
	\arrow["h"', from=1-1, to=2-1]
	\arrow[Rightarrow, no head, from=1-2, to=2-2]
	\arrow[""{name=1, anchor=center, inner sep=0}, Rightarrow, no head, from=2-1, to=2-2]
	\arrow[shorten <=12pt, shorten >=12pt, Rightarrow, from=0, to=1]
      \end{tikzcd}\qquad
      \begin{tikzcd}
	A & A \\
	A & B
	\arrow[""{name=0, anchor=center, inner sep=0}, Rightarrow, no head, from=1-1, to=1-2]
	\arrow[Rightarrow, no head, from=1-1, to=2-1]
	\arrow["{h}", from=1-2, to=2-2]
	\arrow[""{name=1, anchor=center, inner sep=0}, "\companion{h}"', "\shortmid"{marking}, from=2-1, to=2-2]
	\arrow[shorten <=12pt, shorten >=12pt, Rightarrow, from=0, to=1]
\end{tikzcd}
  \] 
  given by the diagrams below
  \[
\begin{tikzcd}
	{\cc_A} & {\cc_ B} \\
	{A \vdash A} \\
	{A\vdash B} & {B \vdash B}
	\arrow["{\cc_h}", from=1-1, to=1-2]
	\arrow["{\display_{\cc_A}}"', from=1-1, to=2-1]
	\arrow["{\display_{\cc_B}}", from=1-2, to=3-2]
	\arrow["{A \vdash h}"', from=2-1, to=3-1]
	\arrow["{h\vdash B}"', from=3-1, to=3-2]
      \end{tikzcd}
      \qquad \begin{tikzcd}
	{\cc_A} & {\cc_A} \\
	& {A\vdash A} \\
	{A\vdash A} & {A \vdash B}
	\arrow[Rightarrow, no head, from=1-1, to=1-2]
	\arrow["{\display_{\cc_A}}"', from=1-1, to=3-1]
	\arrow["{\display_{\cc_A}}", from=1-2, to=2-2]
	\arrow["{A \vdash h}", from=2-2, to=3-2]
	\arrow["{A\vdash h}"', from=3-1, to=3-2]
\end{tikzcd}
\]

The two companion axioms are easy to verify.

\begin{prop}
The double category $\dblCG$ has all companions of vertical
isomorphisms. 
\end{prop}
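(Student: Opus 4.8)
The plan is to verify the two conditions in the definition of companion (see the axioms of \cite[Def.~2.6]{ggv24}) for the data already exhibited above, namely the strategy $\companion{h} : A \vdash B$ together with the two displayed $2$-cells; call them $\psi$ (the left-hand diagram, with $\companion{h}$ as top boundary) and $\phi$ (the right-hand diagram, with $\companion{h}$ as bottom boundary). I first confirm that this data is well-formed, and then check the two axioms. Throughout, the key simplification is that both cells have essentially trivial underlying maps, so that by Lemma~\ref{lem:iso_confp} every required equality of $2$-cells can be tested on $+$-covered configurations alone.

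Well-formedness is immediate: $\companion{h}$ is a strategy because it is obtained by reindexing $\cc_A$ along the iso $A \vdash h$, and each diagram encodes a rigid map making the relevant display triangle commute. For $\psi$ the underlying map is $\cc_h : \cc_A \to \cc_B$, and one checks $\partial_{\cc_B} \circ \cc_h = (h \vdash h) \circ \partial_{\cc_A} = (h \vdash \id_B) \circ \partial_{\companion{h}}$; for $\phi$ the underlying map is $\id_{\cc_A}$, and $\partial_{\companion{h}} = (A \vdash h)\circ \partial_{\cc_A} = (\id_A \vdash h) \circ \partial_{\cc_A}$. Both are routine once one recalls from Theorem~\ref{th:cg_bicat} that $\cc_{(-)}$ is functorial.

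The first companion axiom asks that the vertical composite of the two cells be the canonical $2$-cell on the vertical morphism $h$ whose horizontal boundaries are $\cc_A$ and $\cc_B$. Stacking $\phi$ above $\psi$, the composite has underlying map $\cc_h \circ \id_{\cc_A} = \cc_h$, with both vertical boundaries equal to $h$; this is precisely the $2$-cell $\cc_h$ exhibited in the proof of Theorem~\ref{th:cg_bicat}, which is exactly that canonical cell. Hence this axiom holds on the nose.

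The second axiom is the main point, and the only step requiring care: the horizontal composite $\psi \odot \phi$ must agree with $\id_{\companion{h}}$ modulo the unitor coherence isomorphisms $\companion{h} \odot \cc_A \cong \companion{h} \cong \cc_B \odot \companion{h}$. Using that horizontal composition of $2$-cells is computed pointwise on $+$-covered configurations (Theorem~\ref{th:cg_bicat}), and that composition and copycat are described on such configurations by Proposition~\ref{prop:char_comp} and Proposition~\ref{prop:confp_cc} respectively, this reduces to a direct configuration-level computation: a causally compatible pair presenting a $+$-covered configuration of $\companion{h} \odot \cc_A$ is sent by $\psi \odot \phi$ to $\psi(-) \odot \phi(-)$, and one verifies that, after transport along the copycat unitors, this is the identity. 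The only genuine bookkeeping is keeping track of these unitor isomorphisms; once they are inserted, Lemma~\ref{lem:iso_confp} upgrades the resulting configuration-level identity to the required equality of strategies, completing the verification of both axioms.
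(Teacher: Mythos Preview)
Your proof is correct and follows the same approach as the paper: the paper merely states ``The two companion axioms are easy to verify'' after defining the data, and you have supplied those verifications. Your treatment of both axioms is sound, in particular the reduction of the horizontal-composite axiom to a pointwise check on $+$-covered configurations via Proposition~\ref{prop:char_comp}, Proposition~\ref{prop:confp_cc}, and the definition of $\odot$ on $2$-cells from Theorem~\ref{th:cg_bicat}.
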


Applying Theorem~\ref{thm:shulman}, we deduce: 
\begin{cor}
The bicategory $\CG$, equipped with the tensor product $\otimes$ and
the associated structural data, is symmetric monoidal.
\end{cor}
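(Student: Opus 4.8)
The plan is to apply Proposition~\ref{prop:shulman} directly. That proposition takes a symmetric monoidal double category in which all vertical isomorphisms admit companions, and produces a canonical symmetric monoidal structure on its horizontal bicategory. Both hypotheses have just been established for $\dblCG$: the preceding proposition equips $\dblCG$ with a symmetric monoidal double-categorical structure built from the tensor product $\otimes$ of event structures, and the proposition immediately above it supplies companions for every vertical isomorphism. Since $\CG = \hori(\dblCG)$ by definition, the conclusion follows at once.

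The one point worth spelling out is that the ``associated structural data'' named in the corollary is exactly the data produced by Proposition~\ref{prop:shulman} out of the vertical structural morphisms of $\dblCG$. This is immediate from the construction underlying that proposition: each vertical structural isomorphism of $\dblCG$ (for associativity, the two unitors, and the symmetry) is transported to a horizontal equivalence of $\CG$ via its companion, and the coherence invertible 2-cells of the resulting symmetric monoidal bicategory are synthesized from the companion 2-cells. Thus the induced symmetric monoidal structure on $\CG$ agrees with the tensor product $\otimes$ of games and strategies described earlier.

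I expect no genuine obstacle at the level of this corollary: it is a formal consequence of results already in hand, so the proof is essentially a one-line invocation of Proposition~\ref{prop:shulman}. The real work sits in the two preceding propositions, and in particular in verifying the symmetric monoidal double category axioms for $\dblCG$. This is precisely where the double-categorical formulation pays off, as emphasized in \S\ref{sec:double-categ-stat}: the monoidal constraints are encoded as strictly composing vertical morphisms, so one avoids the large family of coherence axioms that a direct check of symmetric monoidal bicategory structure on $\CG$ would otherwise demand.
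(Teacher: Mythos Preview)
Your proof is correct and matches the paper's approach exactly: the corollary is stated immediately after the two propositions you cite, and the paper simply says ``Applying Proposition~\ref{prop:shulman}, we deduce'' before stating it. Your additional paragraph unpacking what the ``associated structural data'' means is accurate but goes beyond what the paper provides.
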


\subsection{A static collapse of concurrent games}
\label{subsubsec:plain_oplax}
We are ready to define our first collapse from dynamic to static
semantics. Formally, we describe an oplax double functor
$\coll{-} : \dblCG \to \dblPRel$, where \emph{oplax} means that
composition is not preserved: instead we have globular 2-cells
$\coll{\tau \odot \sigma} \to \coll{\tau} \circ \coll{\sigma}$ which
embed the causally compatible pairs into the matching pairs.

The image of a plain game $A$ is the set $\conf{A}$. To a
strategy $\sigma : A \vdash B$, we associate 
\[
\coll{\sigma}(x_A, x_B) = \{x^\sigma \in \confp{\sigma}\mid
\pr_\sigma(x^\sigma) = x_A \vdash x_B\}
\]
yielding a proof-relevant relation $\coll{\sigma}$ from $\conf{A}$ to
$\conf{B}$. To a 2-cell
\[
\begin{tikzcd}
	A & B \\
	{A'} & {B'}
	\arrow[""{name=0, anchor=center, inner sep=0}, "\sigma",
        from=1-1, to=1-2, "\shortmid"{marking}]
	\arrow["h"', from=1-1, to=2-1]
	\arrow["k", from=1-2, to=2-2]
	\arrow[""{name=1, anchor=center, inner sep=0}, "{\sigma'}"',
        from=2-1, to=2-2, "\shortmid"{marking}]
	\arrow["f", shorten <=12pt, shorten >=12pt, Rightarrow, from=0, to=1]
      \end{tikzcd}
\]
we associate the family of functions mapping $x^\sigma \in
\coll{\sigma}(x_A, x_B)$ to  $f(x^\sigma)
\in \coll{\sigma'}(h(x_A), k(x_B))$, for $x_A \in A$ and $x_B \in B$. 

Proposition \ref{prop:confp_cc} induces an isomorphism of
$\coll{\cc_A}$ with the identity proof-relevant relation. From
Proposition \ref{prop:char_comp} we obtain a function 
$\coll{\tau \odot \sigma} \to \coll{\tau} \circ \coll{\sigma}$. This
is non-invertible in general, because some matching pairs are not
causally compatible (see Figure \ref{fig:deadlock}).

Finally, there is a bijection
$\coll{A \otimes B} \cong \coll{A} \otimes \coll{B}$ for games $A$ and
$B$, which extends to an isomorphism of proof-relevant relations
$\coll{\sigma \otimes \tau} \cong \coll{\sigma} \otimes \coll{\tau}$,
and a similar situation on 2-cells. Overall the double functor (strongly)
preserves the symmetric monoidal structure. 

\begin{thm}\label{th:collPR}
This data determines a monoidal oplax double functor $\coll{-} :
\dblCG \to \dblPRel$. 
\end{thm}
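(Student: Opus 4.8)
The plan is to verify that the data above satisfies the axioms of a monoidal oplax double functor, working throughout with $+$-covered configurations, where every relevant structure admits an explicit pointwise description. First I would record well-definedness and strict vertical functoriality: a polarity-preserving map $h : A \to B$ sends configurations to configurations, hence restricts to a function $\conf{A} \to \conf{B}$, and since direct image is strictly functorial, vertical identities and composites are preserved on the nose. For the action on $2$-cells, I would check that a rigid map $f : \sigma \to \sigma'$ with boundaries $h, k$ sends $\coll{\sigma}(x_A, x_B)$ into $\coll{\sigma'}(h(x_A), k(x_B))$; this follows from $\pr_{\sigma'} \circ f = (h \vdash k) \circ \pr_\sigma$ together with the observation that rigidity and polarity-preservation force the maximal events of $f(x^\sigma)$ to be images of maximal, hence positive, events of $x^\sigma$, so $+$-coveredness is preserved.

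Next I would install the oplax structure. The unitor is supplied by Proposition~\ref{prop:confp_cc}, which yields an \emph{isomorphism} $\coll{\cc_A} \iso \id_{\conf{A}}$, so the functor is in fact pseudo on horizontal units. The compositor is supplied by Proposition~\ref{prop:char_comp}: an element of $\coll{\tau \odot \sigma}(x_A, x_C)$ is a causally compatible pair $(x^\sigma, x^\tau) \in \CC(\sigma, \tau)$, which I map to the same pair regarded as a \emph{matching} pair, i.e.\ an element of $(\coll{\tau} \circ \coll{\sigma})(x_A, x_C)$ via \eqref{eq:comp_prel}. This is well-defined because causally compatible pairs agree on $B$; it is precisely the inclusion of causally compatible pairs into matching pairs, injective but not surjective, since deadlocks exclude some matching pairs --- which is exactly why the functor is only oplax. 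Naturality of the compositor and unitor with respect to the horizontal composition of $2$-cells is then immediate from the pointwise formula $(g \odot f)(x^\tau \odot x^\sigma) = g(x^\tau) \odot f(x^\sigma)$, since both cells simply forget causal compatibility.

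The main obstacle is the oplax coherence, in particular associativity. Given $\sigma, \tau, \upsilon$, I would compare the two routes from $\coll{\upsilon \odot \tau \odot \sigma}$ to $\coll{\upsilon} \circ \coll{\tau} \circ \coll{\sigma}$ induced by the two bracketings. Because the associator of $\dblCG$ is already available (Theorem~\ref{th:cg_bicat}) and composition in $\dblPRel$ is associative up to canonical isomorphism, it suffices to check that the two composite inclusions agree on $+$-covered configurations; both send a causally compatible triple to the underlying matching triple, so the associativity coherence diagram commutes pointwise once the associators are unfolded via the pointwise description of $\odot$. The unit-coherence triangles are handled similarly, using that copycat is the horizontal identity and, again, Proposition~\ref{prop:confp_cc}. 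I expect the bookkeeping of identifying causally compatible triples under both bracketings to be the only genuinely delicate point, and it is controlled by the already-established coherences of $\dblCG$.

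Finally I would establish the strong monoidal structure. The structural isomorphism $\coll{A \otimes B} \iso \coll{A} \otimes \coll{B}$ is the order-isomorphism $\conf{A \otimes B} \iso \conf{A} \times \conf{B}$ of \eqref{eq:isotensor} --- recalling that $\otimes$ in $\dblPRel$ is the cartesian product of sets --- and the unit is preserved since the empty game has a unique configuration. On horizontal morphisms this refines to $\coll{\sigma \tensor \tau} \iso \coll{\sigma} \tensor \coll{\tau}$, using that the $+$-covered configurations of a tensor of strategies are exactly tensors of $+$-covered configurations. The monoidal coherence axioms and the compatibility of the monoidal data with the oplax compositor then reduce, through \eqref{eq:isotensor} and the monoidal interchange cell of $\dblCG$, to routine identities on configurations, which I do not expect to present any difficulty.
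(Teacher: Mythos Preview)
Your proposal is correct and complete as a direct verification; the only step where I would ask for slightly more care is the preservation of $+$-coveredness by $2$-cells, but your argument goes through once one notes that rigid maps of event structures are locally injective on configurations, so that a maximal event of $f(x^\sigma)$ is necessarily the image of a maximal event of $x^\sigma$.

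The paper, however, does not prove this theorem directly. It simply observes that Theorem~\ref{th:collPR} is the symmetry-free special case of the later Theorem~\ref{th:oplax_tcg} (the oplax double functor $\dblTCG \to \dblDist$), obtained by restricting along the embeddings $\dblCG \hookrightarrow \dblTCG$ and $\dblPRel \hookrightarrow \dblDist$. Your approach is therefore genuinely different in organization: you give a self-contained argument at the level of $\dblCG$ and $\dblPRel$, exploiting that in the absence of symmetry the compositor is literally an inclusion of causally compatible pairs into matching pairs (no coend quotient), which makes the coherence checks almost tautological. The paper's approach buys economy --- one proof for both theorems --- at the cost of front-loading the more delicate symmetry machinery (positive witnesses, Proposition~\ref{prop:act_strat}, the naturality lemma for $\pcomp$); yours buys a cleaner, more elementary argument for this particular statement, at the cost of redundancy once symmetries are added.
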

\noindent 
We omit the proof, as the theorem is a special case of the more
elaborate version to come, which includes symmetries. Note that, although this
involves a lot of algebraic data, we emphasize that it would be
incomparably harder to establish a similar result directly at the
bicategorical level. (We discuss a lifting of this theorem to the respective
bicategories in Section~\ref{sec:symm-mono-struct}, when required for our
applications in semantics.)

In summary, we can regard $\CG$ as a dynamic version of $\PRel$, where
the witnesses in $\PRel$ are reached over time and composition is
sensitive to deadlocks.

\section{Accommodating Symmetry}
\label{sec:symmetry}

In this section, we look at a model of concurrent games enriched with
symmetry, known as thin concurrent games
\cite{DBLP:journals/lmcs/CastellanCW19}. We start by 
explaining the basics of event structures with symmetry and
thin concurrent games (Section~\ref{subsec:tcg-defs}). Then we explain how
strategies in thin concurrent games can be viewed as distributors
(Section~\ref{subsubsec:strat_2_dist}).
We then define the double category $\dblTCG$ (Section~\ref{subsec:tcg}) and
construct a monoidal oplax functor
$\dblTCG \to \dblDist$ (Section~\ref{subsec:collapse}). Finally we
discuss the exponential modality (Section~\ref{subsec:problems}). 

\subsection{Symmetry and thin concurrent games}
\label{subsec:tcg-defs}
Recall that we went from $\PRel$ to $\Dist$ by replacing sets with
groupoids. We now go from $\CG$ to $\TCG$ by replacing the set of
configurations  $\conf{A}$ with a groupoid of configurations
$\tilde{A}$ whose morphisms are chosen bijections called
\emph{symmetries}, that behave well with respect to the causal order.

\subsubsection{Event structures with symmetry}
\label{subsubsec:ess}

Our model is based on the following notion of event structure with
symmetry \cite{DBLP:journals/entcs/Winskel07}:
\begin{defi}\label{def:isofam}%
An \textbf{isomorphism family} on es $E$ is a groupoid $\tilde{E}$
having as objects all configurations, and as morphisms certain
bijections between configurations, satisfying: 
\[
\begin{array}{rl}
\text{\emph{restriction:}}&
\text{for all~$\theta : x \bij y \in \tilde{E}$ and $x \supseteq x' \in
\conf{E}$,}\\
&\text{there is $\theta \supseteq \theta' \in \tilde{E}$
such that $\theta' : x' \bij y'$.}\\
\text{\emph{extension:}}&\text{for all $\theta : x \bij y \in
\tilde{E}$, $x
\subseteq x' \in \conf{E}$,}\\
&\text{there is $\theta \subseteq \theta' \in
\tilde{E}$ such that $\theta' : x' \bij y'$.}
\end{array}
\]

We call $(E, \tilde E)$ an \textbf{event structure with symmetry (\emph{ess})}.
\end{defi}
\noindent 
We refer to morphisms in $\tilde{E}$ as \textbf{symmetries}, and write
$\theta : x \sym_E y$ if $\theta : x \bij y$ with $\theta \in
\tilde{E}$. The \textbf{domain} $\dom(\theta)$ of $\theta : x \sym_E y$
is $x$, and likewise its \textbf{codomain} $\cod(\theta)$ is $y$.
A \textbf{map of ess} $E \to F$ is a map of event
structures that preserves symmetry: the bijection
\[
f \theta \qquad \overset{\mathrm{def}}= \qquad 
f x 
\quad \stackrel{f^{-1}}{\bij} \quad
x 
\quad \stackrel{\theta}{\bij} \quad 
y
\quad \stackrel{f}{\bij} \quad
f y,
\]
is in $\tilde{F}$ for
every $\theta : x \sym_E y$ (recall that $f$ restricted to any $y$ is
bijective).  This makes $f : \tilde{E} \to \tilde{F}$ a functor of groupoids. 

We can define a 2-category $\ESS$ of ess, maps of ess, and natural
transformations between the induced functors. For $f, g : E \to F$ such
a natural transformation is necessarily unique
\cite{DBLP:journals/entcs/Winskel07}, and corresponds to the fact that
for every $x \in \conf{E}$ the composite bijection
\[
f\,x 
\quad \stackrel{f^{-1}}{\bij} \quad
x
\quad \stackrel{g}{\bij} \quad
g\,x
\]
via local injectivity of $f$ and $g$, is in $\tilde{F}$.
So this is an equivalence, denoted  $f \sim g$. 

\subsubsection{Thin games} We define games with symmetry. To
match the polarized structure, a game is an ess with two
sub-symmetries, one for each player (see
e.g.~\cite{mellies2003asynchronous,DBLP:journals/lmcs/CastellanCW19,mfps22}). 
\begin{defi}\label{def:tcg}
A \textbf{thin concurrent game (tcg)} is a game $A$ with 
isomorphism families $\tilde{A}, \ptilde{A}, \ntilde{A}$ s.t. $\ptilde{A}, \ntilde{A} \subseteq
\tilde{A}$, symmetries preserve polarity, and 
\[
\begin{array}{rl}
\text{\emph{(1)}} & \text{if $\theta \in \ptilde{A} \cap \ntilde{A}$,
then $\theta = \id_x$ for $x \in \conf{A}$,}\\
\text{\emph{(2)}} & \text{if $\theta \in \ntilde{A}$, 
$\theta \subseteq^- \theta' \in \tilde{A}$, then $\theta' \in
\ntilde{A}$,}\\
\text{\emph{(3)}} & \text{if $\theta \in \ptilde{A}$, 
$\theta \subseteq^+ \theta' \in \tilde{A}$, then $\theta' \in
\ptilde{A}$,}
\end{array}
\]
where $\theta \subseteq^p \theta'$ is $\theta \subseteq
\theta'$ with (pairs of) events of polarity $p$.
\end{defi}

Elements of $\ptilde{A}$ (resp. $\ntilde{A}$) are called
\textbf{positive} (resp. \textbf{negative}); they intuitively
correspond to symmetries carried by positive (resp. negative) moves,
and thus introduced by Player (resp. Opponent). We write $\theta : x
\sym_A^- y$ (resp. $\theta : x \sym_A^+ y$) if $\theta \in \ntilde{A}$
(resp. $\theta \in \ptilde{A}$).

Each symmetry has a unique positive-negative factorization \cite[Lemma
7.1.18]{hdr}:
\begin{lem}\label{lem:sym_factor}
For $A$ a tcg and $\theta : x \sym_A z$, there are unique
$y \in \conf{A}$, $\theta_- : x \sym_A^- y$ and $\theta_+ : y \sym_A^+
z$ s.t. $\theta = \theta_+ \circ \theta_-$.
\end{lem}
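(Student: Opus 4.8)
The plan is to dispatch uniqueness first, since it is short and pins down the exact role of the thinness conditions, and then to build the factorization by induction on $|x|$.

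\emph{Uniqueness.} Suppose $\theta = \theta_+ \circ \theta_- = \psi_+ \circ \psi_-$ with $\theta_-, \psi_- \in \ntilde{A}$ and $\theta_+, \psi_+ \in \ptilde{A}$, say $\theta_- : x \sym_A^- y$, $\theta_+ : y \sym_A^+ z$ and $\psi_- : x \sym_A^- y'$, $\psi_+ : y' \sym_A^+ z$. Rearranging the equality produces a single bijection $\chi = \psi_+^{-1}\circ \theta_+ = \psi_-\circ\theta_-^{-1} : y \bij y'$. Since $\ptilde{A}$ and $\ntilde{A}$ are isomorphism families, hence groupoids closed under composition and inverse, the first expression places $\chi$ in $\ptilde{A}$ and the second in $\ntilde{A}$, so $\chi \in \ptilde{A}\cap\ntilde{A}$. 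Condition (1) of Definition~\ref{def:tcg} then forces $\chi$ to be an identity, whence $y = y'$, $\theta_+ = \psi_+$ and $\theta_- = \psi_-$.

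\emph{Existence.} I would induct on $|x| = |z|$. The base case $x = \emptyset$ is immediate, taking $y = \emptyset$ and $\theta_- = \theta_+ = \id_\emptyset$, which lies in both $\ptilde{A}$ and $\ntilde{A}$. For the inductive step, pick a maximal $e \in x$; since the restriction axiom guarantees that $\theta$ sends every down-closed subset of $x$ to a configuration, $\theta$ is an order-isomorphism, so $\theta(e)$ is maximal in $z$ and shares the polarity of $e$. Restricting $\theta$ to $x' = x\setminus\{e\}$ gives $\theta' : x' \sym_A z'$ with $z' = z \setminus\{\theta(e)\}$, and the induction hypothesis supplies $\theta' = \theta'_+\circ\theta'_-$ with $\theta'_- : x' \sym_A^- y'$ and $\theta'_+ : y' \sym_A^+ z'$. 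Now I case on the polarity of $e$. If $e$ is \emph{positive}, I do \emph{not} extend the positive part: instead I use the extension axiom of the isomorphism family $\ntilde{A}$ to extend $\theta'_-$ to a negative symmetry $\theta_- : x \sym_A^- y$, and set $\theta_+ := \theta\circ\theta_-^{-1}$. This $\theta_+$ lies in $\tilde{A}$ because $\tilde{A}$ is a groupoid, and a short computation shows it extends $\theta'_+$ by exactly one positive pair, i.e.\ $\theta'_+ \subseteq^+ \theta_+$; condition (3) then upgrades this to $\theta_+ \in \ptilde{A}$, and $\theta_+\circ\theta_- = \theta$ by construction. If $e$ is \emph{negative}, I argue dually: extend $\theta'_+$ to a positive symmetry $\theta_+ : y \sym_A^+ z$ (extending on the codomain side, using that $\ptilde{A}$ is closed under inverse) and put $\theta_- := \theta_+^{-1}\circ\theta$, which extends $\theta'_-$ by a single negative pair and so lands in $\ntilde{A}$ by condition (2).

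I expect the crux to be precisely this crossed bookkeeping: a maximal \emph{positive} event is absorbed by enlarging the \emph{negative} sub-symmetry, with the positive residue $\theta\circ\theta_-^{-1}$ then falling into $\ptilde{A}$ for free via condition (3), and symmetrically for negative events; the two closure conditions (2)--(3) are exactly what make the complementary factor land in the correct family. The remaining verifications -- that the composites are well-typed symmetries and that each extension adds exactly one pair of the expected polarity -- are routine once one has fixed, at the outset, that symmetries are order-isomorphisms and that $\tilde{A}$, $\ptilde{A}$, $\ntilde{A}$ are groupoids.
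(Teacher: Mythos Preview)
Your proof is correct. The paper does not give its own proof of this lemma: it simply cites \cite[Lemma 7.1.18]{hdr}, so there is no in-paper argument to compare against. Your uniqueness argument via condition~(1) of Definition~\ref{def:tcg} and your existence argument by induction on $|x|$, using the extension axiom on the \emph{opposite} polarity family and then upgrading the residual factor via conditions~(2)/(3), are the standard route and match what one finds in the cited source.
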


We extend with symmetry the basic constructions on games:
the \textbf{dual} $A^\perp$ has the same symmetries as $A$, but
$\ptilde{A^\perp} = \ntilde{A}$ and $\ntilde{A^\perp} = \ptilde{A}$;
the \textbf{tensor} $A_1 \tensor A_2$ has symmetries of the form $\theta_1 \tensor \theta_2 : x_1
  \tensor x_2 \sym_{A_1 \tensor A_2} y_1 \tensor y_2$,   where each
$\theta_i : x_i \sym_{A_i} y_i$, and similarly for positive and
negative symmetries; the \textbf{hom} $A \vdash B$ is $A^\perp \tensor B$.

\subsubsection{Thin strategies} We now define strategies on thin
concurrent games:

\begin{defi}\label{def:thin}
Consider $A$ a tcg. 
\noindent 
A \textbf{strategy} on $A$, written $\sigma :
A$, is an ess $\sigma$ equipped with a morphism of ess $\pr_\sigma :
\sigma \to A$ forming a strategy in the sense of Definition
\ref{def:plain_strategy}, and such that:
\[
\begin{array}{rl}
\text{\emph{(1)}} &
\text{if $\theta \in \tilde{\sigma}, \pr_\sigma \theta
\enb_A (a^-, b^-)$, there are unique $\theta \vdash_\sigma (s, t)$ s.t. $\pr_\sigma s = a$ and
$\pr_\sigma t = b$.}\\
\text{\emph{(2)}} & 
\text{if $\theta : x \sym_\sigma y, \pr_\sigma \theta \in
\ptilde{A}$, then $x = y$ and $\theta = \id_x$.}
\end{array}
\]

As before, a \textbf{strategy from $A$ to $B$} is a strategy on $\sigma : A
\vdash B$. 
\end{defi}
\noindent 
The first condition forces $\sigma$ to acknowledge Opponent symmetries
in $A$; the notation $\theta \vdash_A (a, b)$
means $(a, b) \not \in \theta$ and $\theta \cup \{(a, b)\} \in
\tilde{A}$. The second condition is \textbf{thinness}: it means that any
non-identity symmetry in the strategy must originate from Opponent.

\subsubsection{Comparison with the ``saturated'' approach} 
The ``thin'' approach is only one possible way of adding symmetry to
games. Other models
(e.g.~\cite{DBLP:conf/lics/BaillotDE97,DBLP:conf/csl/CastellanCW14,DBLP:conf/lics/Mellies19})
follow a different approach, where strategies satisfy a 
\emph{saturation} condition.

We explain the difference in the language of concurrent games. 
Consider a strategy  $\sigma : A$ on a tcg, in the sense of
Definition~\ref{def:thin} \emph{without} conditions (1) and (2). 
The saturation condition \cite{DBLP:conf/csl/CastellanCW14} corresponds to a fibration property of the
functor
$  \pr_\sigma : \tilde{\sigma} \to \tilde{A}$: for $x \in \conf{\sigma}$ and $\psi :
\pr_\sigma x \sym_{A} y$, there is a unique $\varphi : x
\sym_\sigma z$ such that $\pr_\sigma \varphi = \psi$:
\begin{equation}
  \label{eq:saturation}
  \begin{tikzcd}
    {\color{gray}\tilde \sigma} &[-2em]& x & z \\
    {\color{gray}\tilde A} && {\pr_\sigma x} & y
    \arrow[maps to, from=1-3, to=2-3]
    \arrow["\varphi", dashed, from=1-3, to=1-4]
    \arrow["{\psi}"', from=2-3, to=2-4]
    \arrow[maps to, from=1-4, to=2-4]
    \arrow["{\color{gray}\pr_\sigma}"', color=gray, from=1-1, to=2-1]
  \end{tikzcd}
\end{equation}

In contrast, thin strategies satisfy a different lifting property:
a unique lifting exists, up to a positive symmetry \cite[Lemma
7.2.6]{hdr}:

\begin{lem}\label{lem:act_strat}
Let $\sigma : A$ be a strategy as in Definition~\ref{def:thin}.
For all $x \in \conf{\sigma}$ and $\psi : \pr_\sigma x \sym_A y$,
there are unique $\varphi : x \sym_\sigma z$ and $\theta^+: \pr_\sigma
z \sym_A^+ y$ such that $\theta^+ \circ \pr_\sigma \varphi= \psi$:
\[\begin{tikzcd}
	{\color{gray}\tilde \sigma} &[-2em]& x && z \\
	{\color{gray}\tilde A} && {\pr_\sigma x} & y & {\pr_\sigma z}
	\arrow[maps to, from=1-3, to=2-3]
	\arrow["\varphi", dashed, from=1-3, to=1-5]
	\arrow["\psi"', from=2-3, to=2-4]
	\arrow["{\color{gray}\pr_\sigma}"', color=gray, from=1-1, to=2-1]
	\arrow["{\theta^+}", dashed, from=2-5, to=2-4]
	\arrow[maps to, from=1-5, to=2-5]
	\arrow["{\pr_\sigma \varphi}", curve={height=-12pt}, from=2-3, to=2-5]
      \end{tikzcd}\]
  \end{lem}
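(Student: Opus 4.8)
The plan is to reduce the statement to a lifting property for negative symmetries, and then to extract uniqueness from thinness. First I would apply the positive--negative factorization of Lemma~\ref{lem:sym_factor} to write $\psi = \psi_+ \circ \psi_-$, with $\psi_- : \pr_\sigma x \sym_A^- w$ negative and $\psi_+ : w \sym_A^+ y$ positive. It then suffices to lift the negative part: I claim there is a $\varphi : x \sym_\sigma z$ with $\pr_\sigma \varphi = \psi_-$, so in particular $\pr_\sigma z = w$. Granting this, I set $\theta^+ := \psi_+ : \pr_\sigma z \sym_A^+ y$, and then $\theta^+ \circ \pr_\sigma \varphi = \psi_+ \circ \psi_- = \psi$, which gives existence.

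The heart of the argument is the lifting of $\psi_-$, which I would prove by induction along a covering chain $\emptyset = x_0 \subset x_1 \subset \dots \subset x_n = x$ in $\conf{\sigma}$, writing $\psi_-^{(i)}$ for the restriction of $\psi_-$ to $\pr_\sigma x_i$ (again negative, since $\ntilde{A}$ is an isomorphism family and hence closed under restriction) and building compatible lifts $\varphi^{(i)} : x_i \sym_\sigma z_i$ with $\pr_\sigma \varphi^{(i)} = \psi_-^{(i)}$. At a step adding an event $s$ with $\pr_\sigma s$ negative, the pair $(\pr_\sigma s, \psi_-(\pr_\sigma s))$ is a negative extension of $\pr_\sigma \varphi^{(i)}$, so receptivity (Definition~\ref{def:thin}(1)) provides the lift, and plain receptivity (Definition~\ref{def:plain_strategy}(1)) identifies its new domain event with $s$. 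The delicate case, and the main obstacle, is a positive event $s$: receptivity does not apply, so I would instead use the extension axiom of the isomorphism family of $\sigma$ to extend $\varphi^{(i)}$ to domain $x_{i+1}$, and then verify that this extension projects exactly to $\psi_-^{(i+1)}$. The point is that for a negative symmetry the image $\psi_-(\pr_\sigma s)$ of a positive event is already determined by the images of its negative dependencies, which were fixed at earlier stages, so it must coincide with the positive move produced by the extension; thinness (Definition~\ref{def:thin}(2)) guarantees there is no competing choice, since two distinct such extensions would differ by a nonidentity symmetry of $\sigma$ projecting to a positive symmetry of $A$, which is forbidden. This strong receptivity of thin strategies with respect to negative symmetries is the technical crux (cf.~\cite[Lemma 7.2.6]{hdr}).

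For uniqueness, suppose $(\varphi', {\theta'}^+)$ is another pair with $\varphi' : x \sym_\sigma z'$ and ${\theta'}^+ \circ \pr_\sigma \varphi' = \psi$. I would consider the symmetry $\varphi' \circ \varphi^{-1} : z \sym_\sigma z'$ and compute its projection: from $\pr_\sigma \varphi = (\theta^+)^{-1} \circ \psi$ and $\pr_\sigma \varphi' = ({\theta'}^+)^{-1} \circ \psi$ we obtain $\pr_\sigma(\varphi' \circ \varphi^{-1}) = ({\theta'}^+)^{-1} \circ \theta^+$, which is a composite of positive symmetries and hence lies in $\ptilde{A}$. By thinness (Definition~\ref{def:thin}(2)) this forces $z = z'$ and $\varphi' \circ \varphi^{-1} = \id_z$, that is $\varphi = \varphi'$; and since $\pr_\sigma \varphi$ is invertible, $\theta^+ = {\theta'}^+$ follows. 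Thus the pair $(\varphi, \theta^+)$ is unique, completing the proof.
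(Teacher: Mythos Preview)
Your approach matches the paper's sketch: existence via induction on $x$ using sym-receptivity (condition~(1) of Definition~\ref{def:thin}) for negative events and the extension axiom of $\tilde{\sigma}$ for positive ones, then uniqueness from thinness (condition~(2)); your explicit factorization $\psi = \psi_+ \circ \psi_-$ via Lemma~\ref{lem:sym_factor} and the uniqueness computation through $\varphi' \circ \varphi^{-1}$ add welcome detail. The delicate point --- ensuring the $\tilde{\sigma}$-extension at a positive event projects exactly to $\psi_-$ --- is, as you rightly flag, the technical crux, and both you and the paper defer it to \cite[Lemma~7.2.6]{hdr}.
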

\begin{proof}[Sketch]
\emph{Existence} is proved first for $\psi$ positive, by induction on
$x$ using condition \emph{(1)} of Definition \ref{def:thin} and
properties of isomorphism families; and then generalized to arbitrary
$\psi$. \emph{Uniqueness} follows from condition \emph{(2)} of
Definition \ref{def:thin}.
\end{proof}

Below, we will use this to construct a distributor from a
thin strategy. We note that saturated strategies are closer to
distributors, because the saturation property \eqref{eq:saturation}
directly induces a functorial action of the groupoid $\tilde A$.
However, saturated strategies are more difficult to
understand operationally, and have not achieved the precision
of thin strategies for languages with state, concurrency, or
non-determinism \cite{DBLP:journals/corr/abs-2103-15453,hdr}.

\subsection{From strategies to distributors}
\label{subsubsec:strat_2_dist}
For tcgs $A$ and $B$, we show how to construct 
\[
\coll{\sigma} : \tilde{A}^{\op} \times \tilde{B} \to \Set
\]
a distributor
from a strategy $\sigma : A \vdash B$. The key idea is to use
witnesses ``up to positive symmetry'' and use
the lifting
property in Lemma~\ref{lem:act_strat}.

For $x_A \in \conf{A}$ and $x_B \in \conf{B}$ we define the set of
\textbf{positive witnesses} of $(x_A, x_B)$, written
$\coll{\sigma}(x_A, x_B)$, as the set of all triples $(\theta_A^-,
x^\sigma, \theta_B^+)$ such that $x^\sigma \in \confp{\sigma}$ and
\[
\theta_A^- : x_A {\sym_A^-} x^\sigma_A\,
\qquad\qquad
\theta_B^+ : x^\sigma_B {\sym_B^+} x_B
\]
are positive symmetries on $A^\perp$ and $B$.
The groupoid actions of $A$ and $B$ on this set
are determined by the uniqueness result:
\begin{prop}\label{prop:act_strat}
Consider $(\theta_A^-, x^\sigma, \theta_B^+) \in \coll{\sigma}(x_A,
x_B)$.

For each $\Omega_A : y_A \sym_A x_A$ and $\Omega_B : x_B \sym_B y_B$,
there are unique $\varphi : x^\sigma \sym_\sigma y^\sigma$ and
$\vartheta_A^- : y_A \sym_A^- y^\sigma_A$, $\vartheta_B^+ : y^\sigma_B
\sym_B^+ y_B$ such that the following two diagrams commute:
\[
\xymatrix{
x_A	\ar[r]^{\theta_A^-}
	\ar@{<-}[d]_{\Omega_A}&
x^\sigma_A
	\ar[d]^{\varphi^\sigma_A}\\
y_A	\ar[r]_{\vartheta_A^-}&
y^\sigma_A
}
\qquad\qquad
\xymatrix{
x^\sigma_B
	\ar[r]^{\theta_B^+}
	\ar[d]_{\varphi^\sigma_B}&
x_B	\ar[d]^{\Omega_B}\\
y^\sigma_B
	\ar[r]_{\vartheta_B^+}&
y_B
}
\]
\end{prop}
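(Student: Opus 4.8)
The plan is to reduce everything to the single lifting property of Lemma~\ref{lem:act_strat}, applied to $\sigma$ viewed as a strategy on the game $A \vdash B = A^\perp \tensor B$. The only real work is to package the given data $(\theta_A^-, x^\sigma, \theta_B^+)$, $\Omega_A$ and $\Omega_B$ into a single symmetry of the game $A \vdash B$ with source $\pr_\sigma(x^\sigma) = x^\sigma_A \vdash x^\sigma_B$, and then to read off the conclusion from the decomposition of the resulting positive symmetry along the tensor.

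First I would form the symmetry
\[
\psi \;=\; (\theta_A^- \circ \Omega_A)^{-1} \vdash (\Omega_B \circ \theta_B^+) \;:\; (x^\sigma_A \vdash x^\sigma_B) \;\sym_{A\vdash B}\; (y_A \vdash y_B),
\]
using that on $A \vdash B = A^\perp \tensor B$ a symmetry is exactly a pair of a symmetry of $A$ (read on $A^\perp$) and a symmetry of $B$; here $(\theta_A^- \circ \Omega_A)^{-1}$ is a symmetry of $A$ and $\Omega_B \circ \theta_B^+$ one of $B$. Applying Lemma~\ref{lem:act_strat} to $x^\sigma \in \conf{\sigma}$ and $\psi$ yields a unique $\varphi : x^\sigma \sym_\sigma y^\sigma$ and a unique positive $\theta^+ : \pr_\sigma(y^\sigma) \sym^+_{A\vdash B} (y_A \vdash y_B)$ with $\theta^+ \circ \pr_\sigma \varphi = \psi$.

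Then I would decompose $\theta^+ = \theta^+_{A^\perp} \vdash \theta^+_B$ along the tensor; since $\theta^+$ is positive on $A \vdash B$, the factor $\theta^+_{A^\perp} : y^\sigma_A \sym y_A$ is positive on $A^\perp$, i.e.\ negative on $A$, and $\theta^+_B : y^\sigma_B \sym^+_B y_B$. Setting $\vartheta_A^- := (\theta^+_{A^\perp})^{-1} : y_A \sym^-_A y^\sigma_A$ and $\vartheta_B^+ := \theta^+_B$ gives symmetries of the required polarities. Writing $\pr_\sigma\varphi = \varphi^\sigma_A \vdash \varphi^\sigma_B$, the identity $\theta^+ \circ \pr_\sigma\varphi = \psi$ splits into its $A$- and $B$-components, which after inverting the $A$-component become exactly $\vartheta_A^- = \varphi^\sigma_A \circ \theta_A^- \circ \Omega_A$ and $\vartheta_B^+ \circ \varphi^\sigma_B = \Omega_B \circ \theta_B^+$, i.e.\ the two commuting squares in the statement. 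Finally I would note that $y^\sigma \in \confp{\sigma}$: since symmetries of ess are order-isomorphisms preserving polarity, $\varphi$ carries the maximal (hence positive) events of the $+$-covered configuration $x^\sigma$ to maximal positive events of $y^\sigma$.

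For uniqueness, I would run the correspondence backwards: any $(\varphi, \vartheta_A^-, \vartheta_B^+)$ making the two squares commute, with $\vartheta_A^-$ negative and $\vartheta_B^+$ positive, reassembles into a positive symmetry $(\vartheta_A^-)^{-1} \vdash \vartheta_B^+$ of $A \vdash B$ which, by the commuting squares, satisfies $\left((\vartheta_A^-)^{-1} \vdash \vartheta_B^+\right) \circ \pr_\sigma\varphi = \psi$. Uniqueness in Lemma~\ref{lem:act_strat} then forces this symmetry and $\varphi$ to coincide with the ones produced above, so $\vartheta_A^-$ and $\vartheta_B^+$ are determined as well. The step I expect to require the most care is the polarity bookkeeping across the duality $A^\perp$: one must consistently track that a positive symmetry of $A \vdash B$ restricts to a negative symmetry of $A$ and a positive symmetry of $B$, that inversion preserves the polarity class while exchanging source and target, and that the directions of $\theta_A^-$, $\Omega_A$ and their inverses in $\psi$ are assembled correctly — this is where a sign error would most easily creep in.
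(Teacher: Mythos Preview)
Your proof is correct and follows exactly the approach the paper intends: the paper's own proof is the one-line ``\emph{Existence} by Lem.~\ref{lem:act_strat}, \emph{uniqueness} by \emph{(2)} of Definition~\ref{def:thin}'', and you have simply spelled this out in full, packaging the data into a single symmetry on $A \vdash B$, applying the lifting lemma, and unpacking along the tensor. Your uniqueness argument via the uniqueness clause of Lemma~\ref{lem:act_strat} is equivalent to invoking thinness directly, since that clause is itself derived from condition~\emph{(2)} of Definition~\ref{def:thin}.
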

\begin{proof}
\emph{Existence} by Lem. \ref{lem:act_strat},
\emph{uniqueness} by \emph{(2)} of Definition \ref{def:thin}.
\end{proof}

Thus we may set $\coll{\sigma}(\Omega_A, \Omega_B)(\theta_A^-,
x^\sigma, \theta_B^+)$ as the positive witness $(\vartheta_A^-,
y^\sigma, \vartheta_B^+)$ above. This immediately gives us a
distributor:
\begin{prop}
\label{prop:strategy-to-distributor}
We have $\coll{\sigma} : \tilde{A}^{\op}
\times \tilde{B} \to \Set$.
\end{prop}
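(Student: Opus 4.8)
The plan is to check that the object and morphism assignments already in place assemble into a functor, with all the real content deferred to the uniqueness clause of Proposition~\ref{prop:act_strat}. On objects the assignment $(x_A, x_B) \mapsto \coll{\sigma}(x_A, x_B)$ is exactly the definition of the set of positive witnesses. On morphisms, given $(\Omega_A, \Omega_B) : (x_A, x_B) \to (y_A, y_B)$ in $\tilde{A}^{\op} \times \tilde{B}$ — that is, $\Omega_A : y_A \sym_A x_A$ and $\Omega_B : x_B \sym_B y_B$ — I would set $\coll{\sigma}(\Omega_A, \Omega_B)(\theta_A^-, x^\sigma, \theta_B^+) = (\vartheta_A^-, y^\sigma, \vartheta_B^+)$, the unique positive witness produced by Proposition~\ref{prop:act_strat}. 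First I would check this lands in $\coll{\sigma}(y_A, y_B)$: the component $\varphi : x^\sigma \sym_\sigma y^\sigma$ is a symmetry of $\sigma$, hence restricts to an order-isomorphism commuting with $\pr_\sigma$ and therefore preserving both the causal order and (via the game's polarity-preservation axiom) the polarity of events; so it carries the $+$-covered $x^\sigma$ to a $+$-covered $y^\sigma \in \confp{\sigma}$, while $\vartheta_A^-$ and $\vartheta_B^+$ have the correct polarities and endpoints by construction. Since Proposition~\ref{prop:act_strat} leaves no choices, this function is well defined.

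The functor laws I would verify purely by uniqueness. For identities, apply Proposition~\ref{prop:act_strat} with $\Omega_A = \id_{x_A}$ and $\Omega_B = \id_{x_B}$: the data $\varphi = \id_{x^\sigma}$, $\vartheta_A^- = \theta_A^-$, $\vartheta_B^+ = \theta_B^+$ make the two defining squares commute trivially, so by uniqueness they are the solution, and $\coll{\sigma}(\id_{x_A}, \id_{x_B})$ is the identity function on $\coll{\sigma}(x_A, x_B)$.

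For composition, take composable morphisms, i.e.\ $\Omega_A : y_A \sym_A x_A$, $\Omega_A' : z_A \sym_A y_A$ (composing in $\tilde{A}^{\op}$ to $\Omega_A \circ \Omega_A' : z_A \sym_A x_A$) and $\Omega_B : x_B \sym_B y_B$, $\Omega_B' : y_B \sym_B z_B$. Applying Proposition~\ref{prop:act_strat} to $(\Omega_A, \Omega_B)$ and then to $(\Omega_A', \Omega_B')$ yields symmetries $\varphi_1 : x^\sigma \sym_\sigma y^\sigma$ and $\varphi_2 : y^\sigma \sym_\sigma z^\sigma$ together with the accompanying negative symmetries on $A$ and positive symmetries on $B$. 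Stacking the two commuting squares on the $A$-side (their shared edge being the intermediate $\vartheta_A^-$), and likewise on the $B$-side, shows that $\varphi_2 \circ \varphi_1 : x^\sigma \sym_\sigma z^\sigma$ together with the composite symmetries satisfies precisely the defining squares for the single application of Proposition~\ref{prop:act_strat} to $(\Omega_A \circ \Omega_A', \Omega_B' \circ \Omega_B)$. By the uniqueness clause, these two data agree, which is exactly the identity $\coll{\sigma}(\Omega_A \circ \Omega_A', \Omega_B' \circ \Omega_B) = \coll{\sigma}(\Omega_A', \Omega_B') \circ \coll{\sigma}(\Omega_A, \Omega_B)$.

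The only delicate point, and the real content of the argument, is this repeated appeal to uniqueness, which rests on thinness — condition \emph{(2)} of Definition~\ref{def:thin} — as already exploited in Proposition~\ref{prop:act_strat}. Everything else reduces to pasting commuting squares in the groupoids $\tilde{A}$, $\tilde{B}$, and $\tilde{\sigma}$ and reading off the result; once Proposition~\ref{prop:act_strat} is available, no further event-structure combinatorics is needed.
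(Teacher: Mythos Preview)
Your proof is correct and follows exactly the approach the paper intends: the paper gives no explicit proof, simply stating that Proposition~\ref{prop:act_strat} ``immediately gives us a distributor,'' and your argument is precisely the routine unpacking of that claim---functoriality from the uniqueness clause, with identity and composition verified by exhibiting the obvious candidate data and invoking uniqueness. Your extra care in checking that $y^\sigma$ remains $+$-covered (via symmetries being order-isomorphisms and polarity-preserving) fills in a detail the paper leaves implicit.
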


\subsection{The double category of thin concurrent games}
\label{subsec:tcg}

We now define the double category $\dblTCG$ of thin concurrent
games. We have already defined the objects (tcgs, Definition~\ref{def:tcg})
and the horizontal morphisms (thin strategies,
Definition~\ref{def:thin}), so it remains to define the vertical
morphisms and $2$-cells, and appropriate compositions and identities. 

\subsubsection{Vertical morphisms: maps between tcgs.}
Recall that in $\dblCG$ the vertical morphisms between games are rigid maps
of event structures preserving the polarity of moves. Here we define 
\textbf{maps of tcgs} $A \to B$ to be rigid maps of event structures with
symmetry, preserving the polarity of moves, and preserving the polarity of symmetry
bijections. 

\subsubsection{2-cells: morphisms of strategies} We now define
generalized morphisms between strategies. The 2-cells of $\dblTCG$ are more liberal than those in $\dblCG$, because
there should be an isomorphism between two strategies which play
symmetric -- rather than equal -- moves.  Recall the
2-dimensional structure in $\ESS$, given by the equivalence relation
$\sim$ on morphisms (Section~\ref{subsubsec:ess}). For two maps $f, g : E \to
A$ into a tcg, we write $f \sim^+ g$ if $f \sim g$ and for every $x \in
\conf{E}$ the symmetry obtained as the composition
\[
f\,x 
\quad \stackrel{f^{-1}}{\bij} \quad
x
\quad \stackrel{g}{\bij} \quad
g\,x\,,
\]
witnessing $f \sim g$ for $x$, is positive.

\begin{defi}
Let $\sigma : A \vdash B$ and $\tau : C \vdash D$ be thin
strategies, and let $h : A \to C$ and $k : B \to D$ be maps of tcgs. A
\textbf{positive morphism of strategies} of type
\[
\begin{tikzcd}
	A & B \\
	C & D
	\arrow[""{name=0, anchor=center, inner sep=0}, "\sigma",
        from=1-1, to=1-2, "\shortmid"{marking}]
	\arrow["h"', from=1-1, to=2-1]
	\arrow["k", from=1-2, to=2-2]
	\arrow[""{name=1, anchor=center, inner sep=0}, "{\tau}"',
        from=2-1, to=2-2, "\shortmid"{marking}]
	\arrow["f", shorten <=12pt, shorten >=12pt, Rightarrow, from=0, to=1]
      \end{tikzcd}
    \]
    is a rigid map of ess $f : \sigma \to \tau$ s.t. the following diagram commutes up to
    positive symmetry:
\[ \begin{tikzcd}
	\sigma & \tau \\
	{A\vdash B} & {C\vdash D}.
	\arrow["f", from=1-1, to=1-2]
	\arrow["{\pr_\sigma}"', from=1-1, to=2-1]
	\arrow["{\pr_\tau}", from=1-2, to=2-2]
	\arrow["{h\vdash k}"', from=2-1, to=2-2]
	\arrow["{\sim^+}"{description, pos=0.7}, draw=none, from=1-1, to=2-2]
\end{tikzcd}\]
    \emph{i.e.}   $\pr_\tau \circ f \sim^+ (h\vdash k) \circ \pr_\sigma$.
  \end{defi}
  As a convention, if $f$ is a 2-cell as above, for
  $x^\sigma \in \conf{\sigma}$ we write
\[
f[x^\sigma] : (h\vdash k)( \pr_\sigma x^\sigma) \sym_A^+ \pr_{\tau}(f\,x^\sigma)
\]
for the positive symmetry witnessing this.

\subsubsection{Composition and identity}  
The composition of thin
strategies $\sigma : A \vdash B$ and $\tau : B \vdash C$ is obtained by
equipping $\tau \odot \sigma$ (Proposition
\ref{prop:char_comp}) with an adequate isomorphism family. If
$\tildep{\sigma}$ is the restriction of $\tilde{\sigma}$ to
$+$-covered configurations, then we can write $\CC(\tildep{\sigma},
\tildep{\tau})$
for the pairs $(\varphi^\sigma, \varphi^\tau)$ of symmetries  
which are matching, i.e. $\varphi^\sigma_B = \varphi^\tau_B$ and whose
domain (and necessarily, codomain) are causally compatible.

\begin{prop} Consider $\sigma : A \vdash B$ and $\tau : B \vdash C$
thin strategies.

There is a unique symmetry on $\tau \odot \sigma$ with a bijection
commuting with $\dom$ and $\cod$
\[
(- \odot -) : \CC(\tildep{\sigma}, \tildep{\tau}) \bij \tildep{\tau
\odot \sigma}
\]
and compatible with
display maps, \emph{i.e.} 
$(\varphi^\tau \odot \varphi^\sigma)_A = \varphi^\sigma_A$ and
$(\varphi^\tau \odot \varphi^\sigma)_C = \varphi^\tau_C$.
\end{prop}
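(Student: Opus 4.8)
The plan is to establish this as the symmetry-level refinement of Proposition~\ref{prop:char_comp}: just as the $+$-covered configurations of $\tau \odot \sigma$ are built from causally compatible pairs of $+$-covered configurations of $\sigma$ and $\tau$, the symmetries between such configurations will be built from matching, causally compatible pairs of symmetries. Throughout I rely on the isomorphism-family analogue of Lemma~\ref{lem:iso_confp} (established by the same reasoning, cf.~\cite{hdr}): an isomorphism family on a strategy is completely determined by, and can be reconstructed by restriction and extension from, its restriction to $+$-covered configurations. This reduces both existence and uniqueness to the $+$-covered level, \emph{i.e.}~to constructing the bijection $- \odot -$ and checking its compatibility with $\dom$, $\cod$ and the display maps.

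For existence, I first construct the composite symmetry. Given a pair $(\varphi^\sigma, \varphi^\tau) \in \CC(\tildep{\sigma}, \tildep{\tau})$, set $x^\sigma = \dom(\varphi^\sigma)$, $x^\tau = \dom(\varphi^\tau)$, $y^\sigma = \cod(\varphi^\sigma)$, $y^\tau = \cod(\varphi^\tau)$; by hypothesis the pairs $(x^\sigma, x^\tau)$ and $(y^\sigma, y^\tau)$ are matching and causally compatible, so the composite configurations $x^\tau \odot x^\sigma$ and $y^\tau \odot y^\sigma$ are defined. Using the synchronization bijections $\varphi[x^\sigma, x^\tau]$ and $\varphi[y^\sigma, y^\tau]$ from \S\ref{subsubsec:cg_comp}, I define $\varphi^\tau \odot \varphi^\sigma$ to act as $\varphi^\sigma_A$ on the $A$-component, as $\varphi^\tau_C$ on the $C$-component, and as the common value $\varphi^\sigma_B = \varphi^\tau_B$ on the synchronized $B$-events. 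The matching condition is exactly what makes this gluing well defined; the fact that each of $\varphi^\sigma, \varphi^\tau$ is an order-isomorphism, together with causal compatibility of both domains and codomains (the codomain case following from the domain case, since order-isomorphisms transport acyclicity of ${\cleq_\sigma \cup \cleq_\tau}$), guarantees that the glued bijection preserves the composite causal order and is a genuine symmetry of $\tau \odot \sigma$. Compatibility with $\dom$, $\cod$, and the display maps, namely $(\varphi^\tau \odot \varphi^\sigma)_A = \varphi^\sigma_A$ and $(\varphi^\tau \odot \varphi^\sigma)_C = \varphi^\tau_C$, is immediate from the construction.

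I then verify that the collection of all bijections obtained from the $\varphi^\tau \odot \varphi^\sigma$ by restriction and extension forms an isomorphism family on $\tau \odot \sigma$, and that $- \odot -$ is a bijection onto its $+$-covered part $\tildep{\tau \odot \sigma}$. Injectivity and surjectivity of $- \odot -$, and uniqueness of the family, both flow from the fact that the display map is injective on each configuration: a $+$-covered symmetry of $\tau \odot \sigma$ projects to a symmetry of $A \vdash C$ and is therefore determined by its $A$- and $C$-components; read back through the synchronization, these recover $(\varphi^\sigma, \varphi^\tau)$ uniquely, using thinness (Definition~\ref{def:thin}(2)) to pin down the component symmetries. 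Consequently any isomorphism family admitting such a bijection has its $+$-covered symmetries forced to equal our $\varphi^\tau \odot \varphi^\sigma$, and by the reduction in the first paragraph the full family is forced as well.

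The main obstacle is the verification of the restriction and extension axioms for the composite isomorphism family: one must show that restricting (resp.~extending) a composite symmetry $\varphi^\tau \odot \varphi^\sigma$ again arises from a matching, causally compatible pair. The difficulty is that a sub- or super-configuration of the composite does not split cleanly into sub-/super-configurations of $\sigma$ and $\tau$ -- extending along a move on the $A$ or $C$ side may force synchronized $B$-moves, and one must restore $+$-coveredness after a restriction. The tools to control this are the positive/negative factorization of symmetries (Lemma~\ref{lem:sym_factor}) and the essentially unique lifting of symmetries along the display map (Lemma~\ref{lem:act_strat}); the delicate point is checking that the liftings performed on the $\sigma$- and $\tau$-sides agree over $B$, so that the matching condition is preserved at every step, and that acyclicity is maintained so that causal compatibility survives restriction and extension. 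Once this bookkeeping is in place -- much of it parallel to the configuration-level argument behind Proposition~\ref{prop:char_comp} -- the isomorphism family axioms follow.
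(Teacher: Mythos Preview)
The paper does not give a proof of this statement: it simply defers to \cite[Proposition 7.3.1]{hdr}. So there is no in-paper argument to compare against; your sketch is already more detailed than what the paper provides.

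Your outline is essentially sound. The reduction to $+$-covered symmetries, the explicit gluing of $\varphi^\sigma$ and $\varphi^\tau$ along their common $B$-component, and the uniqueness argument (a symmetry of $\tau\odot\sigma$ is determined by its $A$- and $C$-projections since the hidden composite has no visible $B$-events, and the bijection hypothesis together with $\dom$/$\cod$ compatibility then forces it to coincide with your construction) are all correct in spirit. One point worth flagging: in the reference \cite{hdr} the construction actually proceeds via the \emph{interaction} $\tau\circledast\sigma$, the pre-hiding composite that retains the $B$-events. Defining the isomorphism family first on $\tau\circledast\sigma$ and then pushing it forward along hiding is what makes the restriction/extension axioms tractable, precisely because the $B$-synchronizations are still explicit there; your last paragraph correctly identifies this as the hard step, but working directly on $\tau\odot\sigma$ as you propose makes it harder than necessary to ``restore $+$-coveredness after a restriction'' and to track the hidden $B$-events during extension. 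If you were to flesh this out, routing through $\tau\circledast\sigma$ would save you considerable bookkeeping.
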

\begin{proof}
This follows from \cite[Proposition 7.3.1]{hdr}.
\end{proof}

Most of the effort in organizing these components into a double
category is due to the difficulty of composing 2-cells
horizontally. Suppose we have a pair of 2-cells $f$ and $g$ as in \eqref{eq:1},
but where all components now have symmetry. Explicitly we have
\[
\begin{tikzcd}
	\sigma & {\sigma'} \\
	{A\vdash B} & {A'\vdash B'}
	\arrow["f", from=1-1, to=1-2]
	\arrow["{{\pr_\sigma}}"', from=1-1, to=2-1]
	\arrow["{{\sim^+}}"{description, pos=0.7}, draw=none, from=1-1, to=2-2]
	\arrow["{{\pr_{\sigma'}}}", from=1-2, to=2-2]
	\arrow["{{h\vdash k}}"', from=2-1, to=2-2]
      \end{tikzcd}\qquad\qquad
      \begin{tikzcd}
	\tau & {\tau'} \\
	{B \vdash C} & {B'\vdash C'}
	\arrow["g", from=1-1, to=1-2]
	\arrow["{{\pr_\tau}}"', from=1-1, to=2-1]
	\arrow["{{\sim^+}}"{description, pos=0.7}, draw=none, from=1-1, to=2-2]
	\arrow["{{\pr_{\tau'}}}", from=1-2, to=2-2]
	\arrow["{{k \vdash l}}"', from=2-1, to=2-2]
\end{tikzcd}
  \]
and we must define an appropriate map $g \odot f : \tau \odot \sigma
\to \tau' \odot \sigma'$. 

Recall (from the discussion following Theorem \ref{th:cg_bicat}) that
in the analogous situation in $\dblCG$,
$g \odot f : \tau \odot \sigma \Rightarrow \tau' \odot \sigma'$ was
characterized by
$(g \odot f)(x^\tau \odot x^\sigma) = g(x^\tau) \odot f(x^\sigma)$. In
$\dblTCG$ this simple description is no longer possible, as we may not
have a matching situation $f(x^\sigma)_{B'} = g(x^\tau)_{B'}$. Instead,
these two configurations are matching up to a symmetry
\[
\theta^{f, g}_{x^\sigma, x^\tau} : f(x^\sigma)_{B'} \sym_{B'} g(x^\tau)_{B'}
\]
obtained as $\theta^{f, g}_{x^\sigma, x^\tau} = g[x^\tau]_{B'} \circ
f[x^\sigma]_{B'}^{-1}$.
Fortunately, interaction of thin strategies supports a notion of
synchronization \emph{up to symmetry} \cite[Proposition 7.4.4]{hdr}:

\begin{prop}\label{prop:sync_sym}
Consider $x^\sigma \in \confp{\sigma}, \theta : x^\sigma_B \sym_B
x^\tau_B, x^\tau \in \confp{\tau}$ \textbf{causally compatible},
\emph{i.e.}
the relation $\cleq$ induced on the graph of the composite bijection
\[
\scalebox{.90}{$
x^\sigma \parallel x^\tau_C 
 \!\stackrel{\pr_\sigma \parallel x^\tau_C}{\bij}\!
x^\sigma_A \parallel x^\sigma_B \parallel x^\tau_C
 \!\stackrel{x^\sigma_A \parallel \theta \parallel x^\tau_C}{\bij}\!
x^\sigma_A \parallel x^\tau_B \parallel x^\tau_C
 \!\stackrel{x^\sigma_A \parallel \pr_\tau^{-1}}{\bij}\!
x^\sigma_A \parallel x^\tau
$}
\]
by $<_{\sigma \parallel C}$ and $<_{A\parallel \tau}$
as in Section~\ref{subsubsec:cg_comp}, is acyclic -- we also say the
composite bijection is \textbf{secured}.

Then, there are unique $y^\tau \odot y^\sigma \in \confp{\tau
\odot \sigma}$ with symmetries
$\varphi^\sigma : x^\sigma \sym_\sigma y^\sigma$ and $\varphi^\tau :
x^\tau \sym_\tau y^\tau$, such that $\varphi^\sigma_A \in \ntilde{A}$
and $\varphi^\tau_C \in \ptilde{C}$, and $\varphi^\tau_B \circ \theta =
\varphi^\sigma_B$.
\end{prop}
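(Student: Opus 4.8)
The plan is to \emph{transport the mismatch} $\theta_B$ into the two strategies, converting the ``matching up to symmetry'' situation into a strict matching, to which the composition results of \S\ref{subsubsec:cg_comp} (Proposition~\ref{prop:char_comp}) and the preceding proposition on composition of symmetries apply directly. The only tools needed for this transport are the lifting property of thin strategies (Lemma~\ref{lem:act_strat}) and the positive--negative factorization of symmetries (Lemma~\ref{lem:sym_factor}). Throughout, recall that on $A \vdash B = A^\perp \tensor B$ the moves of $B$ keep their polarity while those of $A$ are reversed, so that a symmetry which is negative on $B$ is an Opponent symmetry for $\sigma$, whereas the dual symmetry seen in $B^\perp$ inside $\tau : B \vdash C$ is an Opponent symmetry for $\tau$. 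This is exactly what lets each strategy acknowledge one half of $\theta_B$.

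Concretely, I would first factor $\theta_B = \theta_B^+ \circ \theta_B^-$ by Lemma~\ref{lem:sym_factor}, with $\theta_B^- : x^\sigma_B \sym_B^- w_B$ and $\theta_B^+ : w_B \sym_B^+ x^\tau_B$ for a unique intermediate $w_B \in \conf{B}$. The negative half $\theta_B^-$ is an Opponent symmetry for $\sigma$: feeding $\psi := \id_{x^\sigma_A} \vdash \theta_B^-$ into Lemma~\ref{lem:act_strat} for $\sigma$ yields $\varphi^\sigma : x^\sigma \sym_\sigma y^\sigma$ with a positive correction $\theta^+ = \theta^+_A \vdash \theta^+_B$; since $\theta_B^-$ is negative it is acknowledged with no correction on $B$, giving $\varphi^\sigma_B = \theta_B^-$ and $y^\sigma_B = w_B$, while on $A$ the identity input and the correction force $\varphi^\sigma_A = (\theta^+_A)^{-1}$, which is negative in $A$, i.e.\ $\varphi^\sigma_A \in \ntilde{A}$ as required. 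Dually, $(\theta_B^+)^{-1} : x^\tau_B \sym_B w_B$ is, read in $B^\perp$, an Opponent symmetry for $\tau$; feeding $\psi^\tau := (\theta_B^+)^{-1} \vdash \id_{x^\tau_C}$ into Lemma~\ref{lem:act_strat} for $\tau$ produces $\varphi^\tau : x^\tau \sym_\tau y^\tau$ with $\varphi^\tau_B = (\theta_B^+)^{-1}$, $y^\tau_B = w_B$, and $\varphi^\tau_C$ positive, i.e.\ $\varphi^\tau_C \in \ptilde{C}$. A direct computation then gives $y^\sigma_B = w_B = y^\tau_B$ and $\varphi^\tau_B \circ \theta_B = (\theta_B^+)^{-1} \circ \theta_B^+ \circ \theta_B^- = \theta_B^- = \varphi^\sigma_B$, which is precisely the required coherence.

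It remains to see that $y^\sigma$ and $y^\tau$ are \textbf{causally compatible}, so that $y^\tau \odot y^\sigma \in \confp{\tau \odot \sigma}$ is well defined by Proposition~\ref{prop:char_comp}; this is where the hypothesis that the original composite bijection is \emph{secured} enters. The key observation is that symmetries preserve causal order (a consequence of the restriction axiom for isomorphism families, Definition~\ref{def:isofam}), so $\varphi^\sigma$, $\varphi^\tau$ and their restrictions are order-isomorphisms. The coherence $\varphi^\tau_B \circ \theta_B = \varphi^\sigma_B$ ensures that the pair $(\varphi^\sigma, \varphi^\tau_C)$ carries the mixed bijection on $x^\sigma \parallel x^\tau_C$ isomorphically onto the \emph{strict} matching bijection on $y^\sigma \parallel y^\tau_C$, intertwining the defining relations $\cleq_\sigma$ and $\cleq_\tau$ on the two interaction graphs. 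Acyclicity of $\cleq$ is therefore transported, the strict synchronization is secured, and the composite configuration exists.

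Finally, for uniqueness I would argue that the constraints pin down all the data: the requirement $\varphi^\sigma_A \in \ntilde{A}$ together with $\varphi^\tau_B \circ \theta_B = \varphi^\sigma_B$ forces $\varphi^\sigma_B$ to coincide with the negative part $\theta_B^-$ (by uniqueness of the factorization, Lemma~\ref{lem:sym_factor}), after which the uniqueness clause of Lemma~\ref{lem:act_strat} and thinness (Definition~\ref{def:thin}(2)) determine $\varphi^\sigma$, and symmetrically $\varphi^\tau$. I expect the main obstacle to be precisely the securedness transfer of the third step: one must verify carefully that gluing the two order-isomorphisms $\varphi^\sigma$ and $\varphi^\tau$ along the relabelled $B$-interface yields a genuine isomorphism between the two interaction graphs that matches $\cleq$ on the nose, since this is where the commutation with $\theta_B$ and the compatibility with the display maps interact most delicately.
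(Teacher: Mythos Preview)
The paper itself does not prove this proposition; it simply cites \cite[Proposition~7.4.4]{hdr}. Your overall plan for existence---factor $\theta_B = \theta_B^+ \circ \theta_B^-$ via Lemma~\ref{lem:sym_factor}, push the negative half into $\sigma$ and the (inverse of the) positive half into $\tau$ so that both land on the same $w_B$, then transport securedness along the resulting order-isomorphisms---is indeed the natural route. One imprecision: you reason \emph{componentwise}, claiming ``$\theta_B^-$ is acknowledged with no correction on $B$'' while still allowing $\theta^+_A$ to be non-trivial. Lemma~\ref{lem:act_strat} as stated gives no control over individual components of the correction. What you really need is that the \emph{whole} symmetry $\id_{x^\sigma_A}\vdash\theta_B^-$ lies in $\ntilde{A\vdash B}$ and therefore lifts \emph{exactly} through $\pr_\sigma$, i.e.\ with $\theta^+=\id$ (so $\varphi^\sigma_A=\id$ and $\varphi^\sigma_B=\theta_B^-$). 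This exact-lifting property for Opponent symmetries is standard but is a genuine strengthening of Lemma~\ref{lem:act_strat}; it follows from iterating condition~\emph{(1)} of Definition~\ref{def:thin} together with the extension axiom of $\tilde{\sigma}$, and you should say so.

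Your uniqueness argument, however, has a real gap. You assert that the constraints ``force $\varphi^\sigma_B$ to coincide with $\theta_B^-$'', but nothing in the conclusion of the proposition fixes the polarity of $\varphi^\sigma_B$: the requirements are only $\varphi^\sigma_A\in\ntilde{A}$, $\varphi^\tau_C\in\ptilde{C}$, and $\varphi^\sigma_B=\varphi^\tau_B\circ\theta_B$, which leaves the common landing point on $B$ a priori free, so Lemma~\ref{lem:sym_factor} has nothing to bite on. The correct argument is genuinely two-sided: given two candidate solutions $(\varphi^\sigma,\varphi^\tau)$ and $(\hat\varphi^\sigma,\hat\varphi^\tau)$, set $\psi^\sigma=\hat\varphi^\sigma\circ(\varphi^\sigma)^{-1}$ and $\psi^\tau=\hat\varphi^\tau\circ(\varphi^\tau)^{-1}$. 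The two coherence equations combine to give $\psi^\sigma_B=\psi^\tau_B$, so $\psi^\tau\odot\psi^\sigma$ is a well-defined symmetry of $\tau\odot\sigma$ whose display $\psi^\sigma_A\vdash\psi^\tau_C$ lies in $\ntilde{A}\times\ptilde{C}=\ptilde{A\vdash C}$; thinness of the \emph{composite} strategy then forces $\psi^\tau\odot\psi^\sigma=\id$, whence $\psi^\sigma=\id$ and $\psi^\tau=\id$. Uniqueness here is irreducibly a joint property of $\sigma$ and $\tau$ glued along $B$; it cannot be extracted from either strategy in isolation.
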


In that case, we write $y^\tau \odot y^\sigma = x^\tau \odot_\theta
x^\sigma$. With that notation, there is a unique positive morphism
$g\odot f : \tau \odot \sigma \Rightarrow \tau' \odot \sigma'$ s.t.
$(g \odot f)(x^\tau \odot x^\sigma) = g(x^\tau) \odot_{\theta^{f,
    g}_{x^\sigma, x^\tau}} f(x^\sigma)$. This serves as a
definition of the horizontal composition of $2$-cells, see
\cite{paquet2020probabilistic} or \cite[Theorem 7.4.13]{hdr}. 

Finally, for the identity horizontal morphism in $\dblTCG$, we equip the
copycat strategy $\cc_A : A \vdash A$ (Proposition
\ref{prop:confp_cc}) with the unique symmetry that has an iso
\[
\cc_{(-)} : \tilde{A} \bij \tildep{\cc_A}
\]
commuting with $\dom$ and $\cod$, such that
$\pr_{\cc_A}(\cc_\theta) = \theta \vdash \theta$.

The tensor product $\otimes$ of tcgs extends to a symmetric monoidal
structure. We omit all of the details, as the situation is completely
analogous to that in $\dblCG$, using that the symmetry is always
handled componentwise in a tensor product of games. The following
proposition assembles all of the structure defined in this section.
\begin{prop}
  There is a symmetric monoidal double category $\dblTCG$, and a
  symmetric monoidal embedding $\dblCG \hookrightarrow \dblTCG$. 
\end{prop}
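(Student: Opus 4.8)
The plan is to assemble the data introduced in this section --- maps of tcgs as vertical morphisms, thin strategies as horizontal morphisms, positive morphisms of strategies as $2$-cells, together with their compositions and identities --- into a pseudo double category, then to equip it with the tensor $\tensor$ of tcgs, and finally to realize $\dblCG$ inside it by carrying the discrete isomorphism family. Since the one-dimensional data closely parallels $\dblCG$ (Theorem~\ref{th:cg_bicat}), the substance of the argument lies entirely in verifying that symmetry is compatible with the weak horizontal composition and, above all, with the horizontal composition of $2$-cells; for the latter I lean on the synchronization-up-to-symmetry mechanism of Proposition~\ref{prop:sync_sym} and on \cite[Chapter~7]{hdr} and \cite{paquet2020probabilistic}.

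First I would check the double-category axioms. Vertical morphisms compose strictly with strict identities, so they form a category. Horizontal composition is given by $\tau \odot \sigma$ equipped with the isomorphism family characterized by the bijection $\CC(\tildep{\sigma}, \tildep{\tau}) \bij \tildep{\tau \odot \sigma}$; weak associativity and unitality are inherited from the plain case by transporting the structural isomorphisms of $\dblCG$ through this bijection, with copycat equipped with $\cc_{(-)} : \tilde{A} \bij \tildep{\cc_A}$ serving as horizontal unit. For $2$-cells, vertical composition is the strict composition of rigid maps of ess, and horizontal composition is the operation $g \odot f$ uniquely determined by $(g \odot f)(x^\tau \odot x^\sigma) = g(x^\tau) \odot_{\theta^{f,g}_{x^\sigma, x^\tau}} f(x^\sigma)$ via Proposition~\ref{prop:sync_sym}. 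The interchange law and the coherence of the globular compositors would then be verified.

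Second, I would treat the monoidal structure. The tensor $A \tensor B$ of tcgs acts componentwise on symmetries, so I would extend it to a double functor $\tensor : \dblTCG \times \dblTCG \to \dblTCG$ exactly as in $\dblCG$, with the interchange $2$-cell $(\tau \odot \sigma) \tensor (\tau' \odot \sigma') \to (\tau \tensor \tau') \odot (\sigma \tensor \sigma')$. The structural vertical morphisms for associativity, unitality and symmetry are the underlying maps of event structures from the plain case; they are maps of tcgs because every symmetry of a tensor factors as $\theta_1 \tensor \theta_2$, so polarity of symmetries is preserved componentwise. Consequently all coherence axioms of a symmetric monoidal double category reduce to those already established for $\dblCG$.

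Finally, the embedding sends a plain game $A$ to the tcg carrying the \emph{discrete} isomorphism family, whose only symmetries are the identities $\id_x$; then $\ptilde{A} = \ntilde{A}$ consist of identities and the axioms of Definition~\ref{def:tcg} hold trivially. A plain strategy, equipped with the discrete symmetry on its event structure, satisfies conditions (1)--(2) of Definition~\ref{def:thin} vacuously, and a strict $2$-cell of $\dblCG$ is a positive morphism because $\sim^+$ collapses to strict equality when the codomain symmetry is discrete. Composites and copycat of discrete-symmetry strategies remain discrete, and tensors of discrete tcgs are discrete, so this assignment is a symmetric monoidal double functor that is full and faithful onto the sub-double-category of tcgs with trivial symmetry --- an embedding. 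The main obstacle is the step in the first verification: establishing that $g \odot f$ is well defined, independent of choices, and compatible with both vertical composites and the interchange law, since synchronization now only holds up to the mediating symmetry $\theta^{f,g}_{x^\sigma, x^\tau}$; this is exactly the content deferred to \cite[Theorem~7.4.13]{hdr} and \cite{paquet2020probabilistic}.
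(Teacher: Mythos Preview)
Your proposal is correct and follows essentially the same approach as the paper: the paper simply states that this proposition ``assembles all of the structure defined in this section,'' omitting details because ``the situation is completely analogous to that in $\dblCG$,'' and defers the technical verifications to \cite{hdr} and \cite{paquet2020probabilistic} exactly as you do. Your account is in fact more explicit than the paper's, particularly in spelling out the embedding via discrete isomorphism families, which the paper leaves implicit.
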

Thus we have extended our basic model $\dblCG$ with symmetries. The
new model supports an exponential modality, which we discuss in
Section~\ref{subsec:problems}.  

\subsection{An oplax functor $\dblTCG \to \dblDist$}
\label{subsec:collapse}

We now give the components of an oplax double functor $\dblTCG \to
\dblDist$. We have already explained the action on objects and
horizontal morphisms,
by mapping every strategy to a distributor
(Section~\ref{subsubsec:strat_2_dist}). We turn to the action on vertical
morphisms and 2-cells.

\subsubsection{From maps of tcgs to functors of groupoids.} If $A$ and
$B$ are tcgs, and $l : A \to B$ is a map preserving symmetries, then
the action of $l$ on configurations and symmetries determines a functor
$\coll{l} : \tilde{A} \to \tilde{B}$. 

\subsubsection{From positive morphisms to natural transformations}
We must show that a 2-cell in $\dblTCG$, say $f$ in the diagram below, 
\[\begin{tikzcd}
	\sigma & \tau \\
	{A \vdash B} & {C \vdash D}
	\arrow["f", from=1-1, to=1-2]
	\arrow["{{\pr_\sigma}}"', from=1-1, to=2-1]
	\arrow["{{\sim^+}}"{description, pos=0.7}, draw=none, from=1-1, to=2-2]
	\arrow["{{\pr_{\tau}}}", from=1-2, to=2-2]
	\arrow["{{h \vdash k}}"', from=2-1, to=2-2]
\end{tikzcd}\]
determines a natural transformation of distributors
$\coll{\sigma}(x_A, x_B)
\Rightarrow \coll{\tau}(h(x_A), k(x_B))$. Its components are the functions
\[
\begin{array}{rcrcl}
\coll{f}_{x_A, x_B} \!\!&\!\!\!\!\!\!:\!\!\!\!\!\!&\!\! \coll{\sigma}(x_A, x_B) &\to& \coll{\tau}(h(x_A), k(x_B))\\
&&(\theta_A^-, x^\sigma, \theta_B^+)
\!\!&\!\!\!\!\!\!\mapsto\!\!\!\!\!\!&\!\! 
(\theta^x_C \circ h(\theta_A^-), f(x^\sigma), k(\theta_B^+) \circ
{\theta^x_D}^{-1})
\end{array}
\]
for $x_A \in \conf{A}$ and $x_B \in \conf{B}$, where
$\theta^x_C : h(x^\sigma_A) \sym_C^- f(x^\sigma)_C$ and
$\theta^x_D : k(x^\sigma_B) \sym_D^+ f(x^\sigma)_D$ are determined by
the commutative diagram above, by definition of $\sim^+$ (see
Section~\ref{subsec:tcg}). 
This is natural, as an application of Proposition \ref{prop:act_strat}.

\subsubsection{The unitor and compositor}
To complete the construction of the double functor $\coll{-}$, we must
define appropriate globular 2-cells. This double functor will be
\emph{oplax normal}, meaning that there are invertible \textbf{unitors} and
non-invertible \textbf{compositors}. We define these in the next two
propositions. 

\begin{prop}
Consider $A$ a tcg. Then, there is a natural iso
\[
\pid^A : \coll{\cc_A} \stackrel{\iso}{\Rightarrow} \tilde{A}[-, -] : 
\tilde{A}^{\op} \times \tilde{A} \to \Set\,.
\]
\end{prop}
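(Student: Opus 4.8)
The plan is to exhibit $\pid^A$ componentwise and identify it with the unique positive-negative factorization of symmetries. First I would unfold the definition of positive witnesses for copycat. By Proposition~\ref{prop:confp_cc} together with the accompanying symmetry iso $\cc_{(-)} : \tilde{A} \bij \tildep{\cc_A}$, every $+$-covered configuration of $\cc_A$ is of the form $\cc_x$ with $\pr_{\cc_A}(\cc_x) = x \vdash x$, so both display components coincide: $(\cc_x)_A = (\cc_x)_B = x$. Hence an element of $\coll{\cc_A}(x_A, x_A')$ is exactly a triple $(\theta^-, \cc_x, \theta^+)$ consisting of a configuration $x \in \conf{A}$ together with $\theta^- : x_A \sym_A^- x$ and $\theta^+ : x \sym_A^+ x_A'$ (recall that a positive symmetry on the left component $A^\perp$ is a negative symmetry on $A$). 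I would then set $\pid^A_{x_A, x_A'}(\theta^-, \cc_x, \theta^+) = \theta^+ \circ \theta^-$, which is a symmetry $x_A \sym_A x_A'$, hence an element of $\tilde{A}[x_A, x_A']$.

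For bijectivity I would invoke the unique positive-negative factorization of Lemma~\ref{lem:sym_factor}: any $\psi : x_A \sym_A x_A'$ factors uniquely as $\psi = \theta_+ \circ \theta_-$ through an intermediate $y$, with $\theta_- : x_A \sym_A^- y$ and $\theta_+ : y \sym_A^+ x_A'$; sending $\psi \mapsto (\theta_-, \cc_y, \theta_+)$ defines the inverse. The two maps are mutually inverse by uniqueness: the composite $\theta^+ \circ \theta^-$ of a witness is already in positive-negative form with intermediate $x$, so the factorization recovers the original triple, and conversely the factorization of $\psi$ recomposes to $\psi$.

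The remaining work is naturality in $\tilde{A}^{\op} \times \tilde{A}$. I would compute the functorial action of the distributor $\coll{\cc_A}$ via Proposition~\ref{prop:act_strat}: given $\Omega_A : y_A \sym_A x_A$ and $\Omega_A' : x_A' \sym_A y_A'$, the action sends $(\theta^-, \cc_x, \theta^+)$ to $(\vartheta_A^-, \cc_z, \vartheta_B^+)$, where the mediating symmetry $\varphi : \cc_x \sym_{\cc_A} \cc_z$ is necessarily of the form $\cc_\mu$ for a unique $\mu : x \sym_A z$ (again using $\cc_{(-)} : \tilde{A} \bij \tildep{\cc_A}$, which displays to $\mu \vdash \mu$). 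The two commuting squares of Proposition~\ref{prop:act_strat} then read $\vartheta_A^- = \mu \circ \theta^- \circ \Omega_A$ and $\vartheta_B^+ = \Omega_A' \circ \theta^+ \circ \mu^{-1}$. Applying $\pid^A$ to the image yields $\vartheta_B^+ \circ \vartheta_A^- = \Omega_A' \circ \theta^+ \circ \theta^- \circ \Omega_A$, which is exactly $\Omega_A' \circ \pid^A(\theta^-, \cc_x, \theta^+) \circ \Omega_A$, i.e. the hom-functor action applied after $\pid^A$. The key point is that the unknown $\mu$ cancels telescopically, so the naturality square commutes on the nose.

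I expect the main obstacle to be bookkeeping rather than conceptual: correctly tracking the polarity conventions across the dualized left component, and correctly reading off the action of $\coll{\cc_A}$ from Proposition~\ref{prop:act_strat} in the special case of copycat. Once the action is identified and $\varphi = \cc_\mu$ is recognized, naturality reduces to the one-line cancellation above, and the whole statement is a direct packaging of Lemma~\ref{lem:sym_factor} into the language of distributors.
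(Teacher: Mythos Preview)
Your proposal is correct and follows exactly the paper's approach: define $\pid^A(\theta^-, \cc_z, \theta^+) = \theta^+ \circ \theta^-$ and obtain invertibility from the unique factorization of Lemma~\ref{lem:sym_factor}. The paper's proof is a one-liner that leaves naturality implicit (``naturality and invertibility follow from Lemma~\ref{lem:sym_factor}''), whereas you spell out the naturality square via Proposition~\ref{prop:act_strat} and the telescoping cancellation of $\mu$; this is a welcome elaboration but not a different argument.
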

\begin{proof}
Consider $(\theta^-, \cc_z, \theta^+) \in \coll{\cc_A}(x, y)$, with
$\theta^- : x \sym_A^- z$ and $\theta^+ : z \sym_A^+ y$. We 
set $\pid^A(\theta^-, \cc_z, \theta^+) = \theta^+ \circ \theta^-$;
naturality and invertibility follow from Lemma
\ref{lem:sym_factor}. 
\end{proof}

Likewise, for two strategies $\sigma : A \vdash B$ and $\tau : B \vdash
C$, we have a compositor as follows. 

\begin{prop}\label{prop:compositor}
For any $\sigma : A \vdash B$ and $\tau : B \vdash C$, there is a
natural transformation: 
\[
\pcomp^{\sigma, \tau} : \coll{\tau \odot \sigma} \Rightarrow
\coll{\tau} \bullet \coll{\sigma} :
\tilde{A}^{\op} \times \tilde{C} \to \Set\,.
\]
\end{prop}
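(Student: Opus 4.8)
The plan is to define the components of $\pcomp^{\sigma,\tau}$ directly, exploiting the characterization of the $+$-covered configurations of $\tau \odot \sigma$ from Proposition~\ref{prop:char_comp}. Fix $x_A \in \conf{A}$ and $x_C \in \conf{C}$. A positive witness in $\coll{\tau \odot \sigma}(x_A, x_C)$ is a triple $(\Theta_A^-, w, \Theta_C^+)$ with $w \in \confp{\tau \odot \sigma}$; by Proposition~\ref{prop:char_comp} we may write $w = x^\tau \odot x^\sigma$ for a unique causally compatible pair $(x^\sigma, x^\tau) \in \CC(\sigma,\tau)$, which is in particular matching, so that $x^\sigma_B = x^\tau_B =: x_B$, with $w_A = x^\sigma_A$ and $w_C = x^\tau_C$. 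I will send this witness to the class, in the coend \eqref{eq:coend} computing $(\coll{\tau} \bullet \coll{\sigma})(x_A, x_C)$, of the pair $(u, v)$ with
\[
u = (\Theta_A^-, x^\sigma, \id_{x_B}) \in \coll{\sigma}(x_A, x_B),
\qquad
v = (\id_{x_B}, x^\tau, \Theta_C^+) \in \coll{\tau}(x_B, x_C),
\]
that is, splitting the witness at its matching configuration on $B$ and inserting identity symmetries at the interface (the identity being both positive and negative). This is well-defined on the nose, since $(x^\sigma, x^\tau)$, and hence $x_B$, is uniquely determined by $w$.

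It then remains to check naturality in $\tilde{A}^\op \times \tilde{C}$, which is the crux. Fix $\Omega_A : y_A \sym_A x_A$ and $\Omega_C : x_C \sym_C y_C$. On one side, the action of $(\Omega_A, \Omega_C)$ on $\coll{\tau \odot \sigma}$ is computed by Proposition~\ref{prop:act_strat} applied to $\tau \odot \sigma$, yielding a symmetry $\Phi : w \sym_{\tau \odot \sigma} w'$; decomposing $\Phi$ through the bijection $(-\odot-) : \CC(\tildep{\sigma}, \tildep{\tau}) \bij \tildep{\tau \odot \sigma}$ on symmetries (constructed just above, and underwritten by Proposition~\ref{prop:sync_sym}) gives $\Phi = \varphi^\tau \odot \varphi^\sigma$ for a genuinely matching pair, so $\varphi^\sigma_B = \varphi^\tau_B : x_B \sym_B z^\sigma_B$, with $\Phi_A = \varphi^\sigma_A$ and $\Phi_C = \varphi^\tau_C$. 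On the other side, I act with $\Omega_A$ on $u$ and with $\Omega_C$ on $v$, each keeping the interface fixed (acting by $\id_{x_B}$ on $B$); by Proposition~\ref{prop:act_strat} applied separately to $\sigma$ and to $\tau$ this produces witnesses $u'$ and $v'$, both still indexed by $x_B$ on $B$. The two resulting coend classes live at \emph{different} $B$-indices ($z^\sigma_B$ for the synchronised action, $x_B$ for the componentwise one), bridged by $\varphi^\sigma_B$; I reconcile them using the coend quotient relation $(g \cdot x, y) \sim (x, y \cdot g)$, sliding $g = \varphi^\sigma_B$ across the interface, after which the $A$-components, $C$-components, and underlying witnesses are forced to agree by the uniqueness clauses of Proposition~\ref{prop:act_strat} and functoriality of the groupoid actions.

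The main obstacle is exactly this naturality bookkeeping at the $B$-interface. When one acts on the two halves of the coend independently, the symmetries induced on $B$ are computed separately and need not coincide, whereas the synchronised action on $\tau \odot \sigma$ returns a single matching pair of symmetries; the whole argument hinges on showing that the bridging $B$-symmetry is precisely of the form quotiented out by the coend. The essential leverage comes from the uniqueness statements of Proposition~\ref{prop:act_strat} and Proposition~\ref{prop:sync_sym}, combined with thinness (condition \emph{(2)} of Definition~\ref{def:thin}), which together pin down every symmetry involved up to the positive symmetries modulo which the witnesses are already taken.
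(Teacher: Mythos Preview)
Your proposal is correct and follows the same approach as the paper: define $\pcomp$ by splitting the composite witness at the matching $B$-configuration with identity symmetries at the interface, then verify naturality by decomposing the induced symmetry on $\tau\odot\sigma$ as $\varphi^\tau\odot\varphi^\sigma$ and comparing with the componentwise actions. You are in fact more explicit than the paper about reconciling the $B$-indices via the coend quotient---the paper writes the componentwise actions as if they already land at $y_B$, which tacitly absorbs the slide along $\varphi^\sigma_B=\varphi^\tau_B$ that you spell out.
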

\begin{proof}
Consider $(\theta_A^-, x^\tau \odot x^\sigma, \theta_C^+) \in
\coll{\tau \odot \sigma}(x_A, x_C)$; this is sent by
$\pcomp^{\sigma,\tau}_{x_A, x_C}$ to (the equivalence
class for the equivalence $\sim$ originating for the coend formula
\eqref{eq:coend} of) the pair
\[
((\theta_A^-, x^\sigma, \id_{x_B}), (\id_{x_B}, x^\tau, \theta_C^+))
\in (\coll{\tau} \bullet \coll{\sigma})(x_A, x_C)
\]
for $x^\sigma_B = x^\tau_B = x_B$, exploiting that $x^\sigma$ and
$x^\tau$ are matching.

We must show that the function $\pcomp^{\sigma, \tau}(x_A, x_C) :
\coll{\tau \odot \sigma}(x_A, x_C) \to (\coll{\tau} \bullet
\coll{\sigma})(x_A, x_C)$ is natural in $x_A, x_C$.
Consider $\w^{\tau \odot \sigma} \in \coll{\tau \odot \sigma}(x_A,
x_C)$,
written as
\[
\w^{\tau \odot \sigma} = (\psi_A^-, x^\tau \odot x^\sigma, \psi_C^+)\,,
\]
hence with $\pcomp(\w^{\tau \odot \sigma}) = (\w^\sigma, \w^\tau)$
where $\w^\sigma = (\psi_A^-, x^\sigma, \id_{x^\sigma_B})$ and
$\w^\tau = (\id_{x^\tau_B}, x^\tau, \psi_C^+)$.

Now consider $\theta_A : y_A \sym_A x_A, \theta_C : x_C \sym_C y_C$,
then by definition of $\theta_C \cdot \w^{\tau\odot\sigma} \cdot
\theta_A$,
it must be given as $(\nu_A^-, y^\tau \odot y^\sigma, \nu_C^+)$ as in
the bottom of the following diagram:
\[
\xymatrix@R=15pt@C=15pt{
x_A     \ar[r]^{\psi_A^-}
        \ar[d]_{\theta_A^{-1}}&
x^\sigma_A
        \ar[d]_{\varphi^\sigma_A}&
x^\sigma_B
        \ar@{}|=[r]
        \ar[d]^{\varphi^\sigma_B}&
x_B     \ar@{}[d]|{\rotatebox{90}{$=$}}
        \ar@{}|=[r]&
x^\tau_B
        \ar[d]_{\varphi^\tau_B}&
x^\tau_C
        \ar[d]^{\varphi^\tau_C}
        \ar[r]^{\psi_C^+}&
x_C     \ar[d]^{\theta_C}\\
y_A     \ar[r]_{\nu_A^-}&
y^\sigma_A&
y^\sigma_B
        \ar@{}|=[r]&
y_B     \ar@{}|=[r]&
y^\tau_B&
y^\tau_C
        \ar[r]_{\nu_C^+}&
y_C
}
\]
for $\varphi^\sigma : x^\sigma \sym_\sigma y^\sigma$ and
$\varphi^\tau : x^\tau \sym_\tau y^\tau$. But it also follows
\[
\w^\sigma \cdot \theta_A = (\nu_A^-, y^\sigma, \id_{y^\sigma_B})\,,
\qquad
\theta_C \cdot \w^\tau = (\id_{y^\tau_B}, y^\tau, \nu_C^+)
\]
by definition of these functorial actions, so that
\[
\pcomp(\theta_C \cdot \w^{\tau \odot \sigma} \cdot \theta_A) =
(\w^\sigma \cdot \theta_A, \theta_C \cdot \w^\tau)
\]
as required for the naturality of $\pcomp^{\sigma, \tau}$.
\end{proof}

Next, we must additionally prove that $\pcomp^{\sigma, \tau}$ is
natural in $\sigma$ and $\tau$:

\begin{lem}[Naturality of $\pcomp^{\sigma, \tau}$]
For any pair of 2-cells of the form
\[\begin{tikzcd}
	A & B \\
	{A'} & {B'}
	\arrow[""{name=0, anchor=center, inner sep=0}, "\sigma",
        from=1-1, to=1-2, "\shortmid"{marking}]
	\arrow["h"', from=1-1, to=2-1]
	\arrow["k", from=1-2, to=2-2]
	\arrow[""{name=1, anchor=center, inner sep=0}, "{\sigma'}"',
        from=2-1, to=2-2, "\shortmid"{marking}]
	\arrow["f", shorten <=12pt, shorten >=12pt, Rightarrow, from=0, to=1]
      \end{tikzcd}
      \qquad
      \begin{tikzcd}
	B & C \\
	{B'} & {C'}
	\arrow[""{name=0, anchor=center, inner sep=0}, "\tau",
        from=1-1, to=1-2, "\shortmid"{marking}]
	\arrow["k"', from=1-1, to=2-1]
	\arrow["l", from=1-2, to=2-2]
	\arrow[""{name=1, anchor=center, inner sep=0}, "{{\tau'}}"',
        from=2-1, to=2-2, "\shortmid"{marking}]
	\arrow["g", shorten <=12pt, shorten >=12pt, Rightarrow, from=0, to=1]
      \end{tikzcd}
 \]
the following equation holds:
\[\begin{tikzcd}[column sep=1em]
	{\tilde{A}} && {\tilde{C}} \\
	{\tilde{A}} & {\tilde{B}} & {\tilde{C}} \\
	{\tilde{A'}} & {\tilde{B'}} & {\tilde{C'}}
	\arrow[""{name=0, anchor=center, inner sep=0}, "{\coll{\tau \odot \sigma}}", "\shortmid"{marking}, from=1-1, to=1-3]
	\arrow[Rightarrow, no head, from=1-1, to=2-1]
	\arrow[Rightarrow, no head, from=1-3, to=2-3]
	\arrow[""{name=1, anchor=center, inner sep=0}, "{\coll{\sigma}}", "\shortmid"{marking}, from=2-1, to=2-2]
	\arrow["{\coll{h}}"', from=2-1, to=3-1]
	\arrow[""{name=2, anchor=center, inner sep=0}, "{\coll{\tau}}", "\shortmid"{marking}, from=2-2, to=2-3]
	\arrow["{\coll{k}}", from=2-2, to=3-2]
	\arrow["{\coll{l}}", from=2-3, to=3-3]
	\arrow[""{name=3, anchor=center, inner sep=0}, "{\coll{\sigma'}}"', "\shortmid"{marking}, from=3-1, to=3-2]
	\arrow[""{name=4, anchor=center, inner sep=0}, "{\coll{\tau'}}"', "\shortmid"{marking}, from=3-2, to=3-3]
	\arrow["{\ \pcomp^{\sigma,\tau}}", shorten <=9pt, shorten >=9pt, Rightarrow, from=0, to=2-2]
	\arrow["{\coll{f}}", shorten <=14pt, shorten >=14pt, Rightarrow, from=1, to=3]
	\arrow["{\coll{g}}", shorten <=14pt, shorten >=14pt, Rightarrow, from=2, to=4]
        &\end{tikzcd} \quad
      =\quad
      \begin{tikzcd}[column sep=1em]
	{\tilde{A}} && {\tilde{C}} \\
	{\tilde{A'}} && {\tilde{C'}} \\
	{\tilde{A'}} & {\tilde{B'}} & {\tilde{C'}.}
	\arrow[""{name=0, anchor=center, inner sep=0}, "{\coll{\tau \odot \sigma}}", "\shortmid"{marking}, from=1-1, to=1-3]
	\arrow["{\coll{h}}"', from=1-1, to=2-1]
	\arrow["{\coll{l}}", from=1-3, to=2-3]
	\arrow[""{name=1, anchor=center, inner sep=0}, "{\coll{\tau' \odot \sigma'}}"{description}, from=2-1, to=2-3]
	\arrow[Rightarrow, no head, from=2-1, to=3-1]
	\arrow[Rightarrow, no head, from=2-3, to=3-3]
	\arrow["{\coll{\sigma'}}"', "\shortmid"{marking}, from=3-1, to=3-2]
	\arrow["{\coll{\tau'}}"', "\shortmid"{marking}, from=3-2, to=3-3]
	\arrow["{\coll{g \odot f}}", shorten <=9pt, shorten >=9pt, Rightarrow, from=0, to=1]
	\arrow["{\ \pcomp^{\sigma', \tau'}}", shorten <=6pt, shorten >=6pt, Rightarrow, from=1, to=3-2]
\end{tikzcd}\]
\end{lem}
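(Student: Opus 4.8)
The plan is to prove this equation of 2-cells in $\dblDist$ componentwise. Both pasting composites have the same boundaries — top $\coll{\tau \odot \sigma}$, bottom $\coll{\tau'} \bullet \coll{\sigma'}$, and vertical boundaries $\coll{h}, \coll{l}$ — so each is a natural transformation whose components are functions $\coll{\tau \odot \sigma}(x_A, x_C) \to (\coll{\tau'} \bullet \coll{\sigma'})(h(x_A), l(x_C))$. It therefore suffices to fix a positive witness $\w = (\theta_A^-, x^\tau \odot x^\sigma, \theta_C^+) \in \coll{\tau \odot \sigma}(x_A, x_C)$ and show that the two paths send it to the same class in the coend.

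First I would compute the left-hand side. By Proposition~\ref{prop:compositor}, $\pcomp^{\sigma,\tau}$ splits $\w$ into the matching pair $\w^\sigma = (\theta_A^-, x^\sigma, \id_{x_B})$ and $\w^\tau = (\id_{x_B}, x^\tau, \theta_C^+)$ over the midpoint $x_B = x^\sigma_B = x^\tau_B$. Applying $\coll{f}$ and $\coll{g}$ to the two halves then inserts the witnessing positive symmetries $f[x^\sigma]$ and $g[x^\tau]$ into the boundary data, following the formula for the components of $\coll{(-)}$ on a positive morphism. The resulting pair lives over the midpoint $k(x_B) \in \tilde{B'}$, and its two middle boundaries are $f[x^\sigma]_{B'}^{-1} : f(x^\sigma)_{B'} \sym_{B'} k(x_B)$ and $g[x^\tau]_{B'} : k(x_B) \sym_{B'} g(x^\tau)_{B'}$.

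Next I would compute the right-hand side. Here $\coll{g \odot f}$ sends $\w$ to a witness built from $(g \odot f)(x^\tau \odot x^\sigma) = g(x^\tau) \odot_{\theta^{f,g}} f(x^\sigma)$, where by definition $\theta^{f,g}_{x^\sigma, x^\tau} = g[x^\tau]_{B'} \circ f[x^\sigma]_{B'}^{-1}$ is exactly the composite of the two middle boundaries found on the left. Proposition~\ref{prop:sync_sym}, applied to this synchronization up to symmetry, produces $z^{\tau'} \odot z^{\sigma'} \in \tildep{\tau' \odot \sigma'}$ together with symmetries $\varphi^{\sigma'} : f(x^\sigma) \sym_{\sigma'} z^{\sigma'}$ and $\varphi^{\tau'} : g(x^\tau) \sym_{\tau'} z^{\tau'}$ satisfying $\varphi^{\sigma'}_{A'} \in \ntilde{A'}$, $\varphi^{\tau'}_{C'} \in \ptilde{C'}$, and $\varphi^{\tau'}_{B'} \circ \theta^{f,g} = \varphi^{\sigma'}_{B'}$. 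Splitting the result with $\pcomp^{\sigma',\tau'}$ then yields a pair over the midpoint $z_{B'} := z^{\sigma'}_{B'} = z^{\tau'}_{B'}$.

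To conclude I would exhibit the connecting symmetry $\lambda := \varphi^{\sigma'}_{B'} \circ f[x^\sigma]_{B'} : k(x_B) \sym_{B'} z_{B'}$ and show that the left-hand pair is carried to the right-hand pair by the coend relation, which allows sliding a morphism of $\tilde{B'}$ across the two halves. Using the uniqueness of the groupoid action (Proposition~\ref{prop:act_strat}), one checks that $\lambda$ acting on the $\sigma'$-half reproduces the right-hand $\sigma'$-witness — the key being that $\varphi^{\sigma'}$ is exactly the symmetry selected by the action, so the middle output boundary collapses to $\id_{z_{B'}}$ — and symmetrically for the $\tau'$-half. I expect the main obstacle to be the bookkeeping of the $A'$- and $C'$-boundary symmetries: one must show that the components of the witnessing symmetry $(g \odot f)[x^\tau \odot x^\sigma]$ decompose as $\varphi^{\sigma'}_{A'} \circ f[x^\sigma]_{A'}$ and $\varphi^{\tau'}_{C'} \circ g[x^\tau]_{C'}$. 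Rather than computing these directly, I would pin down both $(g \odot f)[x^\tau \odot x^\sigma]$ and the action symmetry $\lambda$ via the uniqueness clauses built into thinness (Definition~\ref{def:thin}(2)) and into Proposition~\ref{prop:sync_sym}, so that establishing the two identities reduces to verifying that both sides satisfy the same defining equations.
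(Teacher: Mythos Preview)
Your proposal is correct and follows essentially the same route as the paper: split the witness via $\pcomp$, push through $\coll{f}$ and $\coll{g}$, compare with the result of first applying $\coll{g\odot f}$ and then $\pcomp$, and bridge the two pairs in the coend using the connecting symmetry $\lambda = \varphi^{\sigma'}_{B'}\circ f[x^\sigma]_{B'}$ (the paper calls it $\Theta_{B'}$). The one place where the paper is more direct than your sketch is the ``bookkeeping'' you flag: rather than separately computing $(g\odot f)[x^\tau\odot x^\sigma]$ and invoking uniqueness, the paper packages the application of Proposition~\ref{prop:sync_sym} into a single commuting diagram whose outer triangles already identify the $A'$- and $C'$-boundaries as the $\vartheta_{A'}^-,\vartheta_{C'}^+$ produced by the synchronization, so the equalities $(\vartheta_{A'}^-,y^{\sigma'},\id)=\Theta_{B'}\cdot\coll{f}(\w^\sigma)$ and $(\id,y^{\tau'},\vartheta_{C'}^+)\cdot\Theta_{B'}=\coll{g}(\w^\tau)$ can be read off without a further appeal to thinness.
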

\begin{proof}
Consider $\w^{\tau \odot \sigma} = (\theta_A^-,
x^\tau \odot x^\sigma, \theta_C^+) \in \coll{\tau \odot \sigma}(x_A,
x_C)$. We know that
\[
\pr_{\sigma'} \circ f \sim^+ (h\vdash k) \pr_\sigma\,,
\]
which unfolds to mean that for each $x^\sigma \in \conf{\sigma}$, the
composite bijection 
\[
  (h \vdash k)(  \pr_\sigma\,x^\sigma) \quad
  \stackrel{h \vdash k}{\bij}
  \quad
  \pr_\sigma\,x^\sigma
\quad \stackrel{\pr_\sigma}{\bij} \quad
x^\sigma
\quad \stackrel{f}{\bij} \quad
f(x^\sigma)
\quad \stackrel{\pr_{\sigma'}}{\bij} \quad
\pr_{\sigma'}(f(x^\sigma))
\]
is a positive symmetry in $A' \vdash B'$. In other words, there is a
negative symmetry $f[x^\sigma]_{A'}$ and a positive symmetry
$f[x^\sigma]_{B'}$ such that the following diagram commutes:
\[
\xymatrix@C=40pt{
x^\sigma
        \ar[r]^{\pr_\sigma}
        \ar[d]_{f}&  
x^\sigma_A \vdash x^\sigma_B         \ar[r]^{h\vdash k}
& h(x^\sigma_A) \vdash k(x^\sigma_B)
        \ar[d]^{f[x^\sigma]_{A'} \vdash f[x^\sigma]_{B'}}\\
f(x^\sigma)
        \ar[rr]_{\pr_{\sigma'}}& &
f(x^\sigma)_{A'} \vdash f(x^\sigma)_{B'}
}
\]
and we have an analogous symmetry induced by the 2-cell $g$. By Proposition
\ref{prop:sync_sym}, there are unique $\varphi^{\sigma'},
\varphi^{\tau'}$ and $\vartheta_{A'}^-, \vartheta_{C'}^+$ s.t.
\[
\xymatrix@R=10pt@C=20pt{
h(x^\sigma_A)
	\ar[r]^{f[x^\sigma]_{A'}}&
f(x^\sigma)_{A'}
        \ar[dd]^{\varphi^{\sigma'}_{A'}}
&f(x^\sigma)_{B'}
        \ar[dd]^{\varphi^{\sigma'}_{B'}}
        \ar[r]^{f[x^\sigma]_{B'}^{-1}}&
k(x_B)     \ar[r]^{g[x^\tau]_{B'}}&
g(x^\tau)_{B'}
        \ar[dd]^{\varphi^{\tau'}_{B'}}&
g(x^\tau)_{C'}
        \ar[dd]^{\varphi^{\tau'}_{C'}}
        \ar[r]^{g[x^\tau]_{C'}^{-1}}&
l(x^\tau_C)\\\\
h(x_A)	\ar[r]_{\vartheta_{A'}^-}
	\ar[uu]^{h(\theta_A^-)}&
y^{\sigma'}_{A'}&
y^{\sigma'}_{B'}
        \ar@{}[r]|{=}&
y_{B'}     \ar@{}[r]|{=}&
y^{\tau'}_{B'}&
y^{\tau'}_{C'}
        \ar[r]_{\vartheta_{C'}^+}&
l(x_C)	\ar[uu]_{l(\theta_C^+)}
}
\]
commutes (the line on the top is secured since $f, g$ are rigid); and
by definition $g \odot f : \tau \odot \sigma \to \tau' \odot \sigma'$
is the unique map such that $(g\odot f)(x^\tau \odot x^\sigma) =
y^{\tau'} \odot y^{\sigma'}$. Thus
\[
\pcomp^{\sigma', \tau'} \circ \coll{g \odot f}(\w^{\tau \odot \sigma})
=
((\vartheta_A^-, y^{\sigma'}, \id), (\id, y^{\tau'}, \vartheta_C^+))\,.
\]

Now, likewise, we have
\begin{eqnarray*}
\coll{f}(\theta_A^-, x^\sigma, \id_{x^\sigma_B}) &=&
(f[x^\sigma]_{A'}\circ h(\theta_A^-), f(x^\sigma), f[x^\sigma]_{B'}^{-1})\\
\coll{g}(\id_{x^\tau_B}, x^\tau, \theta_C^+) &=&
(g[x^\tau]_{B'}, g(x^\tau), l(\theta_C^+) \circ g[x^\tau]_{C'}^{-1})
\end{eqnarray*}
but by the diagram above, writing $\Theta_{B'} = \varphi^{\sigma'}_{B'} \circ
f[x^\sigma]_{B'} = \varphi^{\tau'}_{B'} \circ g[x^\tau]_{B'}$, we have the
equalities
\begin{eqnarray*}
  (\vartheta_{A'}^-, y^{\sigma'}, \id_{y_{B'}}) &=& \Theta_{B'} \cdot
(f[x^\sigma]_{A'}\circ h(\theta_{A'}^-), f(x^\sigma), f[x^\sigma]_{B'}^{-1})\\
(\id_{y_{B'}}, y^{\tau'}, \vartheta_{C'}^+)\cdot \Theta_{B'} &=&
(g[x^\tau]_{B'}, g(x^\tau),  l(\theta_C^+) \circ g[x^\tau]_{C'}^{-1})
\end{eqnarray*}
so that we may now compute
\begin{eqnarray*}
&&((\vartheta_{A'}^-, y^{\sigma'}, \id), (\id, y^{\tau'},
\vartheta_{C'}^+))\\
&=&(\Theta_{B'} \cdot (f[x^\sigma]_{A'}\circ h(\theta_A^-), f(x^\sigma),
f[x^\sigma]_{B'}^{-1}), (\id, y^{\tau'}, 
\vartheta_{C'}^+))\\
&\sim& ((f[x^\sigma]_{A'}\circ h(\theta_A^-), f(x^\sigma),
f[x^\sigma]_{B'}^{-1}), (\id, y^{\tau'},
\vartheta_{C'}^+) \cdot \Theta_{B'})\\
&=& ((f[x^\sigma]_{A'}\circ h(\theta_A^-), f(x^\sigma),
f[x^\sigma]_{B'}^{-1}), (g[x^\tau]_{B'}, g(x^\tau), l(\theta_C^+) \circ g[x^\tau]_{C'}^{-1}))
\end{eqnarray*}
as required to establish the desired commutation.
\end{proof}

Equipped with the above data, the operation $\coll{-}$ is oplax-functorial:

\begin{thm}\label{th:oplax_tcg}
The operation $\coll{-} : \dblTCG \to
  \dblDist$ defines a symmetric monoidal, oplax double functor. 
\end{thm}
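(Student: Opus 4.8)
The plan is to observe that every piece of data required for the double functor has already been assembled — the action on objects $A \mapsto \tilde{A}$, on vertical morphisms $l \mapsto \coll{l}$, on horizontal morphisms $\sigma \mapsto \coll{\sigma}$ (Proposition~\ref{prop:strategy-to-distributor}), on 2-cells $f \mapsto \coll{f}$, together with the invertible unitor $\pid^A$ and the oplax compositor $\pcomp^{\sigma,\tau}$ (Proposition~\ref{prop:compositor}) — so that what remains is to verify the axioms of a symmetric monoidal oplax double functor. I would organise this around four families of obligations: strict functoriality of the vertical part, functoriality on 2-cells with respect to both vertical and horizontal composition, the oplax coherence axioms for the unitor and compositor, and compatibility with the symmetric monoidal structure.

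First, the functoriality obligations. Since vertical morphisms are maps of tcgs, composing strictly as symmetry-preserving functions, and $\coll{-}$ sends such a map to the induced functor $\coll{l} : \tilde{A} \to \tilde{B}$ on groupoids of configurations, strict preservation of vertical identities and composites is immediate from the functoriality of $A \mapsto \tilde{A}$. For 2-cells, preservation of vertical composition reduces to a cocycle property of the positive symmetries $f[x^\sigma]$ witnessing $\pr_\tau \circ f \sim^+ (h \vdash k) \circ \pr_\sigma$: stacking two positive morphisms vertically composes their witnessing symmetries, which is exactly what forces $\coll{g \circ f} = \coll{g} \circ \coll{f}$. The compatibility of $\coll{-}$ on 2-cells with horizontal composition is precisely the naturality of $\pcomp^{\sigma,\tau}$ in $\sigma$ and $\tau$ established in the Lemma just above, so no further work is needed there.

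Second, the oplax coherence. The unit axioms are the easier case: by construction $\pcomp$ splits a composite positive witness at the interface using identity symmetries, and $\pid^A$ is built from the positive--negative factorization of Lemma~\ref{lem:sym_factor}; combining the two recovers the left and right unitors of $\dblDist$, which are themselves instances of the coend density formula (Lemma~\ref{lem:action-with-companions}). I expect the main obstacle to be the associativity coherence. For three composable strategies $\sigma : A \vdash B$, $\tau : B \vdash C$, $\upsilon : C \vdash D$ one must show that the two routes from $\coll{\upsilon \odot (\tau \odot \sigma)}$ to the iterated coend $\coll{\upsilon} \bullet \coll{\tau} \bullet \coll{\sigma}$ — one through the associator of $\dblTCG$ and a single split, the other through two applications of $\pcomp$ and the associator of $\dblDist$ — coincide. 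Conceptually this is transparent, as iterating the split yields the triple $((\theta_A^-, x^\sigma, \id), (\id, x^\tau, \id), (\id, x^\upsilon, \theta_D^+))$ independently of the bracketing, modulo the coend identifications; but making it precise requires unfolding the concrete description of $\odot$ on triple composites (Proposition~\ref{prop:char_comp}) and tracking how the associativity isomorphism of $\dblTCG$, which re-brackets the synchronization $x^\upsilon \odot x^\tau \odot x^\sigma$, is transported by $\coll{-}$ onto the coend re-association in $\dblDist$. This calculation carries the bulk of the technical weight.

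Finally, the symmetric monoidal structure. Both $\dblTCG$ and $\dblDist$ are symmetric monoidal double categories whose monoidal data is handled componentwise on tensors — on the groupoid side $\widetilde{A \otimes B} \cong \tilde{A} \times \tilde{B}$, and on distributors the product is pointwise. Following the pattern of Theorem~\ref{th:collPR}, I would exhibit natural isomorphisms $\coll{\sigma \otimes \tau} \cong \coll{\sigma} \times \coll{\tau}$ — a positive witness of a tensor strategy splits uniquely into a pair of positive witnesses on the two components, since the positive--negative factorization and the lifting of Lemma~\ref{lem:act_strat} act separately on each side of $\otimes$ — and check that the structural vertical morphisms of $\dblTCG$ for symmetry, associativity, and unitality are sent by $\coll{-}$ to the corresponding structural functors of $\dblDist$. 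The monoidal-functor coherence hexagons then reduce to componentwise identities, precisely because symmetry is always carried separately on each component of a tensor product of games, yielding the required symmetric monoidal, oplax double functor.
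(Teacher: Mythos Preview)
Your plan is correct and follows essentially the same decomposition as the paper's proof: verify the three oplax-functor coherence diagrams (associator and the two unitors) and then check the symmetric monoidal data componentwise. One point of divergence is your allocation of difficulty. You anticipate that the associativity coherence ``carries the bulk of the technical weight'' and that the unit axioms are ``the easier case''; the paper finds the opposite. Associativity is dismissed as straightforward --- iterating $\pcomp$ on a triple composite indeed yields the same triple of witnesses with identity symmetries at the interfaces regardless of bracketing, and the coend re-association absorbs this trivially. By contrast the unitor coherence requires an explicit manipulation of the coend quotient: after applying $\pcomp$ and then $\pid$ to a witness $(\theta_A^-, \cc_{x^\sigma_B} \odot x^\sigma, \theta_B^+)$, one obtains a pair $((\theta_A^-, x^\sigma, \id), \theta_B^+)$ which must be shown $\sim$-equivalent to $((\theta_A^-, x^\sigma, \theta_B^+), \id_{x_B})$ before the $\dblDist$ unitor can be applied; this step uses the coend relation nontrivially and is where the paper spends its detail. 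Your invocation of the density formula via Lemma~\ref{lem:action-with-companions} is not quite the right tool here --- the calculation is really about the definition of the coend quotient in composition of distributors, not about companions.
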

\begin{proof}
There are three coherence diagrams to check for functoriality, involving only globular 2-cells. The preservation of
the associator is straightforward. For the preservation of the
unitor, we establish that the diagram of globular 2-cells
\[
\xymatrix{
\coll{\cc_B \odot \sigma}
        \ar[r]^{\rho_\sigma}
        \ar[d]_{\pcomp}&
\coll{\sigma}\\
\coll{\cc_B} \bullet \coll{\sigma}
        \ar[r]_{\pid\bullet \coll{\sigma}}&
\id_B \bullet \coll{\sigma}
        \ar[u]_{\rho_{\coll{\sigma}}}
}
\]
commutes for any $\sigma : A \vdash B$. For this, consider
$
\w = (\theta_A^-, \cc_{x^\sigma_B} \odot x^\sigma, \theta_B^+) \in
\coll{\cc_B \odot \sigma}(x_A, x_B)
$.
We have $\pcomp(\w) = ((\theta_A^-, x^\sigma, \id), (\id,
\cc_{x^\sigma_B}, \theta_B^+))$, sent by $\pid\bullet
\coll{\sigma}$ to $((\theta_A^-, x^\sigma, \id), \theta_B^+)$. Now
\begin{eqnarray*}
((\theta_A^-, x^\sigma, \id), \theta_B^+) &=&
((\theta_A^-, x^\sigma, \id), \id_{x_B} \cdot \theta_B^+)\\
&\sim& (\theta_B^+ \cdot (\theta_A^-, x^\sigma, \id_{x^\sigma_B}),
\id_{x_B})\\
&=& ((\theta_A^-, x^\sigma, \theta_B^+), \id_{x_B})
\end{eqnarray*}
satisfying $\rho((\theta_A^-, x^\sigma, \theta_B^+), \id_{x_B}) =
(\theta_A^-, x^\sigma, \theta_B^+)$ as required. The other coherence
diagram for the unitor is symmetric.

We must additionally show that the double functor has a symmetric
monoidal structure. All the necessary data is defined in the obvious
way: we have a vertical transformation
consisting of vertical isomorphisms 
\[
\coll{A} \otimes \coll{B} \overset{\cong}\longrightarrow \coll{A \otimes B}  
\]
and invertible 2-cells
\[\begin{tikzcd}
	{\coll{A} \otimes \coll{B}} & {\coll{A'} \otimes \coll{B'}} \\
	{\coll{A \otimes B}} & {\coll{A' \otimes B'}}
	\arrow[""{name=0, anchor=center, inner sep=0}, "{\coll{\sigma} \otimes \coll{\tau}}", "\shortmid"{marking}, from=1-1, to=1-2]
	\arrow["\cong"', from=1-1, to=2-1]
	\arrow["\cong", from=1-2, to=2-2]
	\arrow[""{name=1, anchor=center, inner sep=0}, "{\coll{\sigma \otimes \tau}}", "\shortmid"{marking}, from=2-1, to=2-2]
	\arrow[shorten <=14pt, shorten >=14pt, Rightarrow, from=0, to=1]
      \end{tikzcd}\]
    as well as a vertical isomorphism $I \cong \coll{I}$ (we use
    the same $I$ for monoidal units in different categories) and an invertible 2-cell
    \[\begin{tikzcd}
	I & I \\
	{\coll{I}} & {\coll{I}}
	\arrow[""{name=0, anchor=center, inner sep=0}, "{\id_I}", "\shortmid"{marking}, from=1-1, to=1-2]
	\arrow["\cong"', from=1-1, to=2-1]
	\arrow["\cong", from=1-2, to=2-2]
	\arrow[""{name=1, anchor=center, inner sep=0}, "{\coll{\cc_I}}"', "\shortmid"{marking}, from=2-1, to=2-2]
	\arrow[shorten <=14pt, shorten >=14pt, Rightarrow, from=0, to=1]
\end{tikzcd}\]
These must satisfy a small number of coherence axioms (see
\cite[Def.~5.1]{ggv24}), which are all immediately verified.
\end{proof}

At this point, we have defined an oplax double functor preserving the
monoidal structure appropriately. This is not a pseudo double functor:
in general, the compositor 2-cell is not invertible, as witnessed in
the example of Figure~\ref{fig:deadlock}.  We will address this below in
§\ref{sec:collapse_linear}, by specializing the games and strategies
we consider to ensure that no deadlock arises. Before that, we
discuss another orthogonal issue: the collapse above does not preserve
the exponential modality $\oc$. 

\subsection{Difficulties with the exponential modality} 
\label{subsec:problems}
First, we recall how to construct an exponential modality on $\dblTCG$.

\subsubsection{An exponential modality in polarized $\dblTCG$}
For an ess $E$, the ess $\oc E$ is an infinitary symmetric tensor
product, where the elements of the indexing set are called \textbf{copy indices}:
\begin{defi}\label{def:bang_ess}
Consider $E$ an ess. 
Then $\oc E$ has: \emph{events}, $\ev{\oc E} = \mathbb{N}
\times \ev{E}$; \emph{causality and conflict} inherited transparently.
The \emph{isomorphism family} $\tilde{\oc E}$ comprises all bijections
\[
\theta :~\parallel_{i\in \mathbb{N}} x_i 
\bij~
\parallel_{i\in \mathbb{N}} y_i
\]
between configurations
such that there is a bijection $\pi : \mathbb{N} \bij \mathbb{N}$ and
for every $i \in \mathbb{N}$, a symmetry $\theta_i : x_i \sym_A
y_{\pi(i)}$, such that for every $(j, a) \in \ev{\oc E}$, we have
$\theta(j, a) = (\pi(j), \theta_j(a))$.
\end{defi}

To extend this to tcgs, we must treat separately the positive and negative
symmetries. We explained earlier that intuitively, symmetries that only
change the copy indices of negative moves should be negative, and
likewise for positive moves -- but this naive definition does not yield
a tcg in general \cite{DBLP:journals/lmcs/CastellanCW19}. To obtain a
sensible extension of $\oc$ to tcgs, we must
restrict to a polarized setting in which tcgs are  \textbf{negative},
meaning that all minimal events are negative. For a negative tcg $A$, a
symmetry $\theta \in \tilde{\oc A}$ is in the sub-family $\ntilde{\oc
A}$ if each $\theta_i$ in Definition~\ref{def:bang_ess} is negative in
$\tilde{A}$, whereas $\theta$ is in $\ptilde{\oc A}$ if each $\theta_i$
is in $\ptilde{A}$ and \emph{additionally} $\pi$ is the identity
bijection. This extends to a horizontal double comonad on the sub-double-category $\dblTCG^-$
of \emph{negative} tcgs -- this is detailed for instance in
\cite{paquet2020probabilistic}.

\subsubsection{Our functor does not preserve the modality}
\label{subsubsec:problems}
However, in this paper we shall not adopt that exact construction,
because that exponential modality $\oc$ on $\dblTCG^-$ is \emph{not}
preserved by the oplax functor of Theorem \ref{th:oplax_tcg}. We
illustrate with an example:

\begin{exa}
Consider $I$ the empty tcg.

Then, the two groupoids $\tilde{\oc I}$ and $\Sym(\tilde{I})$ are not
equivalent in general. Indeed, $\oc I$ is still empty so that
$\tilde{\oc I}$ is a singleton groupoid. In contrast, $\Sym(\tilde{I})$
includes
\[
\emptyset,~\emptyset\emptyset,~\emptyset\emptyset\emptyset, \dots\,,
\]
\emph{i.e.} countably many non-isomorphic objects.
\end{exa}

Intuitively, the relational model and its relatives such as $\dblDist$ record how many
times we ``do nothing'', whereas $\dblTCG$ only records when we do
something. 
Thus, although one can construct a cartesian closed Kleisli bicategory
from the restriction of $\dblTCG$ to negative games
\cite{paquet2020probabilistic}, the functor $\coll{-}$ will not
preserve cartesian closed structure.

We shall resolve this in §\ref{sec:cc_pseudofunctor} by adopting a
notion of games where not all configurations are considered ``valid''
and correspond to a point of the web in the relational model. Before we
do that, let us address the oplax aspect of our collapse.

\section{Visible strategies and a  pseudofunctor}
\label{sec:collapse_linear}

A fundamental difference between dynamic and static models is the
ability for dynamic models to detect deadlocks, via the causal
nature of strategies. This situation is encapsulated in the ``oplax-ness'' of
the double functor $\coll{-} : \dblTCG \to \dblDist$. In this section,
we show how to restrict
the games and strategies so that deadlocks never occur, which resolves
the mismatch and gives a \emph{pseudo} double functor, that preserves
composition up to iso. 
To perform this restriction, we import from \cite{hdr} the mechanism
of \emph{visibility}. This gives a new double category $\dblVis$. In
this new setting, the symmetric monoidal oplax double functor of
Theorem \ref{th:oplax_tcg} becomes a symmetric monoidal pseudo double
functor $\dblVis \to \dblDist$. The restriction to $\dblVis$ is significant, e.g.~the
game model of mutable state
\cite{DBLP:journals/corr/abs-2103-15453,hdr} is no longer included, but
we retain a model of the $\lambda$-calculus. 

\subsection{The double category $\dblVis$}
\label{subsec:bicategory-vis}

We first introduce the restricted double category $\dblVis$.

Games and strategies in $\dblTCG$ are very general, and mostly
independent of the specific computational paradigms they represent. In
contrast, in $\dblVis$, all games are close to those obtained by the
interpretation of simple types, and strategies are somewhat close to
those needed to model $\lambda$-terms. (We do have a bit more: for
instance, $\dblVis$ supports pure parallel higher-order computation
\cite{lics15}.)

\subsubsection{Arenas}
The objects of our refined model are called \emph{arenas}. Arenas
narrow down the causal structure to an \emph{alternating forest},
required for the definition of visible strategies.

\begin{defi}
\label{def:arena}
  An \textbf{arena} is a tcg $A$
such that
\[
\begin{array}{rl}
\text{\emph{(1)}} &
\text{if $a_1, a_2 \leq_A a_3$ then $a_1 \leq_A a_2$ or $a_2 \leq_A
a_1$,}\\
\text{\emph{(2)}} &
\text{if $a_1 \imc_A a_2$, then $\pol_A(a_1) \neq \pol_A(a_2)$}.
\end{array}
\]

Moreover, $A$ is called a \textbf{$-$-arena} if $A$ is negative as a
tcg.
\end{defi}
\noindent 
The dual, tensor and hom operations on tcgs preserve arenas.

A key consequence of this definition is that any non-minimal $a \in A$
has a unique causal predecessor, called its \textbf{justifier}, and
denoted $\just(a) \in A$ with $\just(a) \imc_A a$.

\subsubsection{Visible strategies}
 Visibility captures a property of purely-functional parallel programs,
in which threads may fork and join but each should be a well-formed
stand-alone sequential execution. 
In an event structure $E$, a thread is 
formalized as a \textbf{grounded causal chain (gcc)}, \emph{i.e.} 
a finite set $\rho \subseteq_f \ev{E}$ on which $\leq_E$ is a total
order, forming a sequence 
\[
\rho_1 \imc_E \dots \imc_E \rho_n
\]
where $\rho_1$ is minimal in $E$. We write $\gcc(E)$ for the set of gccs. A gcc
need not be a configuration (although it will always be if the strategy
interprets a sequential program). A strategy is
\emph{visible} if gccs only reach valid states of the arena:

\begin{defiC}[\cite{lics15,hdr}]
Consider $A$ an arena. Then $\sigma : A$ is \textbf{visible} if it is: 
\[
\begin{array}{rl}
\text{\emph{pointed:}}&
\text{for any $s \in \sigma$, there is a unique $\init(s)
\leq_\sigma s$ minimal in $\sigma$, which is negative,}\\
\text{\emph{valid-gccs:}}&
\text{for all $\rho \in \gcc(\sigma)$, $\pr_{\sigma}(\rho) \in
\conf{A}$.}
\end{array}
\]
\end{defiC}
\noindent 
A key consequence of this definition is that it equips all non-initial
moves of $\sigma$ with a \emph{justifier}, analogously to Hyland-Ong
games: if $s \in \sigma$ is such that $\pr_\sigma\,s$ is non-initial,
then $\just(s)$ is the (unique) $s' \in \sigma$ such that
$\pr_\sigma\,s' \imc_A \pr_\sigma\,s$; if $\pr_\sigma\,s$ is initial
but positive, we set $\just(s) = \init(s)$. Then, \emph{visibility}
entails the following fact: for any gcc in $\sigma$ of the form
\[
\rho = \rho_1 \imc_E \dots \imc_E \rho_n^+\,,
\]
the justifier $\just(\rho_n^+)$ of $\rho_n^+$ must appear in $\rho$ --
this is clearly analogous to the notion of \emph{visibility} from
Hyland-Ong games \cite{DBLP:journals/iandc/HylandO00}, explaining the terminology.
The key property of visible strategies for us here is that their
composition is always deadlock-free \cite[Lemma 10.4.8]{hdr}:

\begin{lem}\label{lem:deadlock_free}
Consider $A, B, C$ $-$-arenas, $\sigma : A \vdash B$ and $\tau : B
\vdash C$ visible strategies, $x^\sigma \in \conf{\sigma}$ and
$x^\tau \in \conf{\tau}$ with a symmetry $\theta : x^\sigma_B \sym_B
x^\tau_B$.\\
\indent
Then, $x^\sigma, \theta, x^\tau$ are causally compatible in the sense
of Proposition \ref{prop:sync_sym}.
\end{lem}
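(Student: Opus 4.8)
The plan is to argue by contradiction, reducing to the symmetry-free case and then exploiting visibility to rule out causal cycles. First I would observe that the symmetry $\theta : x^\sigma_B \sym_B x^\tau_B$ is essentially harmless: being a morphism in an isomorphism family, it is a polarity-preserving order-isomorphism between its domain and codomain configurations, so transporting the causal constraints of $\tau$ across $\theta$ changes neither the forest order nor the polarities seen on $B$. It therefore suffices to establish the plain deadlock-freedom statement — that matching $+$-covered configurations of visible strategies are causally compatible — and then transport the conclusion along $\theta$; this isolates the genuine content from the bookkeeping introduced by symmetries. So assume, for a contradiction, that the composite bijection of Proposition~\ref{prop:sync_sym} is not secured, i.e.~the relation $\cleq = \cleq_\sigma \cup \cleq_\tau$ contains a cycle.

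Next I would extract a minimal cycle and put it in normal form. By transitivity of $<_\sigma$ and $<_\tau$, a minimal cycle alternates strictly between $\cleq_\sigma$-steps and $\cleq_\tau$-steps, and each step may be taken to be an immediate causal link $\imc$. The \emph{turning points}, where the cycle switches from a $\sigma$-constraint to a $\tau$-constraint (or back), are synchronized events and hence lie on the shared game $B$; between two consecutive turning points the cycle follows a grounded causal chain of $\sigma$ (resp.~of $\tau$). Here the arena hypotheses enter: by Definition~\ref{def:arena} the causal order of $B$ is an alternating forest, so every non-minimal move of $B$ has a unique justifier, and by courtesy (condition (2) of Definition~\ref{def:plain_strategy}) the only immediate links of $\sigma$ or $\tau$ not already reflected in the game order are those from a negative move to a positive move. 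Consequently a cycle cannot be assembled from game-reflected links alone, since the game order on $B$ is a genuine partial order; it must traverse strategic negative-to-positive links on each side.

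The heart of the argument, and the main obstacle, is to use visibility to constrain these strategic links and force a contradiction. Each $\sigma$-segment between consecutive turning points is a gcc of $\sigma$, so by the definition of visibility its projection is a configuration of $A \vdash B$ and the justifier of its final positive move already appears within the segment; symmetrically for $\tau$. Negativity of the arenas guarantees that each such chain is rooted at a negative (Opponent) move, so that descent along justifiers terminates. I would then show that composing the justification pointers of the successive turning moves on $B$ around the cycle yields a sequence that is strictly decreasing for the forest order of $B$: each turning move is justified, inside its segment, by a move strictly below it, and visibility forces this justifier to coincide with, or lie below, the preceding turning move. Since $B$ is a finite forest this strict descent cannot close up into a cycle, the desired contradiction. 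The delicate part is the precise polarity bookkeeping at the turning points — tracking that a move of $B$ which is positive for $\sigma$ is negative for $\tau$, and conversely — together with verifying that the visibility-induced pointers genuinely descend rather than merely remaining inside a configuration; this combinatorial core is precisely what is carried out in \cite[Lemma~10.4.8]{hdr}.
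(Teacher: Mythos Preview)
The paper does not give a proof of this lemma at all: it simply records the statement and defers entirely to \cite[Lemma~10.4.8]{hdr}. Your proposal is therefore not in conflict with the paper's approach; you are supplying an informal outline of the argument that the cited reference presumably contains, and you finish by pointing to the same lemma for the details. In that sense there is nothing to compare.

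One small inaccuracy in your sketch is worth flagging. You write that ``each $\sigma$-segment between consecutive turning points is a gcc of $\sigma$'', but a grounded causal chain must start at a \emph{minimal} event of $\sigma$, and a segment of a minimal alternating cycle has no reason to be rooted in this way. The standard fix is to extend each segment downward to a minimal event (using that $\sigma$ is pointed) so as to obtain a genuine gcc, and only then invoke the \emph{valid-gccs} clause of visibility; the justifier of the final positive move of the segment is then guaranteed to appear somewhere in that extended chain, and one argues that it must in fact lie in the segment proper (or below the preceding turning point). This is exactly the ``delicate part'' you allude to, and it is where the real work of \cite[Lemma~10.4.8]{hdr} lies, so your deferral to that reference is appropriate.
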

\noindent 
Copycat strategies on $-$-arenas are visible, and visible strategies
are closed under composition \cite{hdr}, so we may consider the
sub-double-category  of $\dblTCG$ over the $-$-arenas and the visible
strategies, containing all vertical morphisms and 2-cells between
them. We call this double category $\dblVis$, and as usual we set
$\Vis = \hori(\dblVis)$, the bicategory of $-$-arenas, visible
strategies, and (globular) positive morphisms.

\subsubsection{A pseudo double functor} \label{subsubsec:a_pseudofunctor}
By restriction of the components from Theorem
\ref{th:oplax_tcg}, we have 
an oplax double functor
\[
\coll{-} : \dblVis \to \dblDist.
\]

This is actually a pseudo double functor, in that the compositor is invertible:

\begin{thm}\label{th:pseudo_vis_dist}
  We have a pseudo double functor $\coll{-} : \dblVis \to \dblDist$. 
\end{thm}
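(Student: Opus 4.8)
The plan is to leverage Theorem~\ref{th:oplax_tcg} directly, rather than redo any construction. Since copycat on a $-$-arena is visible and visible strategies are closed under composition, all the data of the symmetric monoidal oplax double functor $\coll{-} : \dblTCG \to \dblDist$ restricts to a symmetric monoidal oplax double functor $\dblVis \to \dblDist$ (arenas being closed under $\otimes$, $\vdash$ and duality). All coherence axioms and the monoidal structure are thereby inherited. So the only thing left to prove is that this restricted functor is \emph{pseudo}, \emph{i.e.} that the compositor $\pcomp^{\sigma, \tau} : \coll{\tau \odot \sigma} \Rightarrow \coll{\tau} \bullet \coll{\sigma}$ of Proposition~\ref{prop:compositor} is a natural isomorphism; recall the functor is already oplax normal, so the unitor $\pid^A$ is invertible.

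To prove invertibility of $\pcomp^{\sigma, \tau}$ I would exhibit a componentwise inverse. By the coend formula~\eqref{eq:coend}, an element of $(\coll{\tau} \bullet \coll{\sigma})(x_A, x_C)$ is the equivalence class of a pair of positive witnesses
\[
((\theta_A^-, x^\sigma, \alpha_B),\ (\beta_B, x^\tau, \theta_C^+))
\]
sharing a middle object $b \in \tilde{B}$, where $\alpha_B : x^\sigma_B \sym_B^+ b$ and $\beta_B : b \sym_B^- x^\tau_B$, taken modulo the coend relation. From such a representative I form the single symmetry $\theta = \beta_B \circ \alpha_B : x^\sigma_B \sym_B x^\tau_B$. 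The key input is now Lemma~\ref{lem:deadlock_free}: because $A, B, C$ are $-$-arenas and $\sigma, \tau$ are visible, the triple $x^\sigma, \theta, x^\tau$ is \emph{automatically} causally compatible, so Proposition~\ref{prop:sync_sym} applies and yields the composite $x^\tau \odot_\theta x^\sigma \in \confp{\tau \odot \sigma}$ together with the realigning symmetries $\varphi^\sigma, \varphi^\tau$ (with $\varphi^\sigma_A$ negative and $\varphi^\tau_C$ positive). I would then send the class above to the positive witness
\[
(\varphi^\sigma_A \circ \theta_A^-,\ x^\tau \odot_\theta x^\sigma,\ \theta_C^+ \circ (\varphi^\tau_C)^{-1}) \in \coll{\tau \odot \sigma}(x_A, x_C).
\]

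Surjectivity of $\pcomp^{\sigma, \tau}$ follows because every coend class is realized this way; this is exactly the step where deadlock-freeness is indispensable, since in the general $\dblTCG$ setting some matching pairs fail to be causally compatible and so lie outside the image of $\pcomp^{\sigma, \tau}$, forcing non-invertibility, as in Figure~\ref{fig:deadlock}. Injectivity, and the fact that the assignment is a genuine two-sided inverse, both follow from the uniqueness clause of Proposition~\ref{prop:sync_sym} (on $\theta = \id$ the round trip is immediate, since $\varphi^\sigma, \varphi^\tau$ may be taken to be identities). Naturality in $x_A, x_C$ is inherited from the naturality of $\pcomp^{\sigma, \tau}$ already established in Proposition~\ref{prop:compositor}.

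The main obstacle I anticipate is the bookkeeping around the coend quotient and the positive/negative factorizations of symmetries. Specifically, one must check that the extracted symmetry $\theta$ and the resulting witness are independent of the chosen representative of the coend class: a change of middle object along $g : b \sym_B b'$ replaces the representative according to the relation $(g \cdot s, t) \sim (s, t \cdot g)$, and one uses the functorial actions of Proposition~\ref{prop:act_strat} together with the uniqueness in Proposition~\ref{prop:sync_sym} to absorb $g$ and conclude that the output is unchanged. This is routine once the interface with Proposition~\ref{prop:sync_sym} is arranged correctly, but it requires tracking several symmetries simultaneously, in the same style as the naturality computation in the proof of Proposition~\ref{prop:compositor}.
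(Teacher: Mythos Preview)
Your proposal is correct and follows essentially the same approach as the paper: restrict the oplax functor of Theorem~\ref{th:oplax_tcg}, then show $\pcomp^{\sigma,\tau}$ is invertible by combining Lemma~\ref{lem:deadlock_free} (causal compatibility is automatic) with Proposition~\ref{prop:sync_sym} (synchronization up to symmetry). The only organizational difference is that the paper proves surjectivity and injectivity of $\pcomp^{\sigma,\tau}$ separately rather than constructing an explicit inverse, and for injectivity it invokes thinness (condition~\emph{(2)} of Definition~\ref{def:thin}) directly on the composite symmetry $\varphi^\tau \odot \varphi^\sigma$ rather than appealing to the uniqueness clause of Proposition~\ref{prop:sync_sym}; these amount to the same thing, since that uniqueness is itself a consequence of thinness.
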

\begin{proof}
We must show that for all $x_A \in \conf{A}$ and $x_C \in \conf{C}$,
$\pcomp^{\sigma, \tau}(x_A, x_C)$ is a bijection. 

For surjectivity, consider
\[
\begin{array}{rcrcl}
\w^\sigma &=& (\theta_A^-, x^\sigma, \theta_B^+) 
        &\in& \coll{\sigma}(x_A, x_B)\\
\w^\tau &=& (\theta_B^-, x^\tau, \theta_C^+) 
        &\in& \coll{\tau}(x_B, x_C)\\
\end{array}
\]
composable witnesses. By Lemma \ref{lem:deadlock_free}, 
$(x^\sigma, \theta_B^- \circ \theta_B^+, x^\tau)$
is causally compatible. By Proposition \ref{prop:sync_sym},
there
are unique $y^\tau \odot y^\sigma \in \confp{\tau \odot \sigma}$ along
with $\varphi^\sigma, \varphi^\tau, \vartheta_A^-, \vartheta_C^+$ such
that:
\[
\begin{tikzcd}[row sep=0.8em, column sep=1.5em]
  &  x^\sigma_A \ar[dd, "\varphi^\sigma_A"] & x^\sigma_B \ar[dd, "\varphi^\sigma_B"] \ar[r, "\theta_B^+"] & x_B \ar[r, "\theta_B^-"] & x^\tau_B \ar[dd, "\varphi^\tau_B"] & x^\tau_C \ar[dd, "\varphi^\tau_C"'] \ar[dr, "\theta_C^+"] \\
  x_A \ar[ur, "\theta_A^-"] \ar[dr, "\vartheta_A^-"'] & & & & & & x_C \\
  & y^\sigma_A & y^\sigma_B \ar[r, Rightarrow, no head]& y_B \ar[r,
  Rightarrow, no head] & y^\tau_B & y^\tau_C \ar[ur,
  "\vartheta_C^+"'] &
\end{tikzcd}
\]
which, writing $\Theta_B = \varphi^\sigma_B \circ {\theta_B^+}^{-1} =
\varphi^\tau_B \circ \theta_B^-$, entails
\[
\begin{array}{rcrcl}
\v^\sigma &=& (\vartheta_A^-, y^\sigma, \id_{y_B}) &=& \Theta_B \cdot (\theta_A^-, x^\sigma, \theta_B^+)\\
\v^\tau &=& (\id_{y_B}, y^\tau, \vartheta_C^+) &=& (\theta_B^-, x^\tau, \theta_C^+) \cdot \Theta_B
\end{array}
\]
so $(\v^\sigma, \v^\tau) = (\Theta_B \cdot \w^\sigma, \v^\tau)
\sim (\w^\sigma, \v^\tau \cdot \Theta_B) = (\w^\sigma, \w^\tau)$.
Now $(\v^\sigma, \v^\tau) = \pcomp^{\sigma, \tau}(\vartheta_A^-,
y^\tau\odot y^\sigma, \vartheta_C^+)$, showing surjectivity.

Now, for injectivity, consider two witnesses 
\[
(\theta_A^-, x^\tau \odot x^\sigma, \theta_C^+) \in \coll{\tau \odot
\sigma}(x_A, x_C)\,,
\qquad
(\vartheta_A^-, y^\tau \odot y^\sigma, \vartheta_C^+) \in
\coll{\tau \odot \sigma}(x_A, x_C)\,,
\]
s.t.
$\pcomp(\theta_A^-, x^\tau \odot x^\sigma, \theta_C^+) \sim 
\pcomp(\vartheta_A^-, y^\tau \odot y^\sigma, \vartheta_C^+)$,
\emph{i.e.}, there are components such that
\begin{eqnarray*}
((\theta_A^-, x^\sigma, \id), (\id, x^\tau, \theta_C^+))
&=& ((\theta_A^-, x^\sigma, \id), (\id, y^\tau, \vartheta_C^+)
\cdot \Theta_B)\\
((\vartheta_A^-, y^\sigma, \id), (\id, y^\tau, \vartheta_C^+))
&=& (\Theta_B \cdot (\theta_A^-, x^\sigma, \id), (\id, y^\tau,
\vartheta_C^+))
\end{eqnarray*}
which by definition of the functorial action, means that
\[
\xymatrix@R=10pt@C=10pt{
&x^\sigma_A
        \ar[dd]^{\varphi^\sigma_A}&
x^\sigma_B
        \ar[dd]^{\varphi^\sigma_B}
        \ar@{}[r]|=&
x_B     \ar@{}[r]|=
        \ar[dd]|{\Theta_B}&
x^\tau_B
        \ar[dd]_{\varphi^\tau_B}&
x^\tau_C
        \ar[dd]_{\varphi^\tau_C}
        \ar[dr]^{\theta_C^+}\\
x_A     \ar[ur]^{\theta_A^-}
        \ar[dr]_{\vartheta_A^-}&&&&&&
x_C\\
&y^\sigma_A&
y^\sigma_B
        \ar@{}[r]|=&
y_B     \ar@{}[r]|=&
y^\tau_B&
y^\tau_C
        \ar[ur]_{\vartheta_C^+}
}
\]
commutes for some $\varphi^\sigma : x^\sigma \sym_\sigma y^\sigma$
and $\psi^\tau : x^\tau \sym_\tau y^\tau$. So
\[
\varphi^\tau \odot \varphi^\sigma : x^\tau \odot x^\sigma \sym_{\tau
\odot \sigma} y^\tau \odot y^\sigma
\]
has a positive display, hence is an identity symmetry by condition
\emph{(2)} of Definition \ref{def:thin} -- thus from the diagram,
$(\theta_A^-, x^\tau \odot x^\sigma, \theta_C^+) = 
(\vartheta_A^-, y^\tau \odot y^\sigma, \vartheta_C^+)$
as needed to conclude.
\end{proof}

\subsection{Symmetric monoidal structure.}
\label{sec:symm-mono-struct}
Visible strategies are closed under the tensor product $\otimes$, and
so the symmetric monoidal structure in $\dblTCG$ restricts to
$\dblVis$. Since all components restrict, we deduce that $\coll{-} :
\dblVis \to \dblDist$ is a symmetric monoidal (pseudo) double
functor.

Applying Theorem~\ref{thm:shulman}, we obtain a symmetric
monoidal pseudofunctor between the induced horizontal bicategories. 

\begin{thm}\label{thm:main_sec6}
The horizontal restriction of the collapse double functor $\coll{-}$
is a symmetric monoidal pseudofunctor of bicategories  $\coll{-} : \Vis \to \TCG$. 
\end{thm}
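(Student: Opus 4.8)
The plan is to obtain the statement as a direct application of the transfer result Proposition~\ref{prop:shulman}, whose hypotheses we have essentially already assembled. Concretely, Theorem~\ref{th:pseudo_vis_dist} gives that $\coll{-} : \dblVis \to \dblDist$ is a \emph{pseudo} double functor, and the discussion in \S\ref{sec:symm-mono-struct} shows that its monoidal coherence data restricts from the symmetric monoidal structure of Theorem~\ref{th:oplax_tcg}, so that $\coll{-}$ is in fact a symmetric monoidal pseudo double functor. Proposition~\ref{prop:shulman} then applies to $\coll{-}$ and yields a symmetric monoidal pseudofunctor $\hori(\coll{-}) : \hori(\dblVis) \to \hori(\dblDist)$, that is $\coll{-} : \Vis \to \Dist$, \emph{provided} that in both $\dblVis$ and $\dblDist$ every vertical isomorphism has a companion.

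It therefore remains to check these two companion hypotheses. For $\dblDist$ this is immediate: the earlier lemma asserts that \emph{every} vertical morphism (functor) of $\dblDist$ has a companion, a fortiori every vertical isomorphism. For $\dblVis$ I would reuse the construction of companions of vertical isomorphisms already carried out for $\dblCG$ in \S\ref{sec:collapse_prel}. Given an isomorphism $h : A \to B$ of $-$-arenas, I set $\companion{h}$ to be the re-indexed copycat
\[
\cc_A \xrightarrow{\display_{\cc_A}} A \vdash A \xrightarrow{A \vdash h} A \vdash B\,,
\]
now taken in $\ESS$; since $h$ is an isomorphism it merely relabels events, so re-indexing along it preserves receptivity, courtesy, thinness and, crucially, visibility (the gccs of the re-indexed strategy display to configurations of the arena exactly because those of $\cc_A$ do), and it transports the isomorphism family of $\cc_A$. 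Hence $\companion{h}$ is a genuine visible thin strategy. The two accompanying companion 2-cells are built exactly as in $\dblCG$, from the functorial map $\cc_h$, which here is a map of ess; the two companion axioms are verified by the same elementary computation, with the symmetries handled componentwise.

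With both double categories now satisfying the hypotheses and $\coll{-}$ a symmetric monoidal pseudo double functor between them, Proposition~\ref{prop:shulman} produces the desired symmetric monoidal pseudofunctor on horizontal bicategories. The only step with genuine content is the companion construction for $\dblVis$, and there the sole obstacle is closure under the structure defining $\dblVis$: one must confirm that the re-indexed copycat remains \emph{visible} and that its companion 2-cells are compatible with the $\sim^+$-relaxed commutation governing 2-cells in $\dblTCG$. Both reduce to facts already recorded for plain copycat, combined with the observation that symmetry in all these constructions is handled componentwise.
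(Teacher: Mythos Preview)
Your proposal is correct and follows the same route as the paper: both obtain the result by applying Proposition~\ref{prop:shulman} to the symmetric monoidal pseudo double functor $\coll{-} : \dblVis \to \dblDist$, once one knows that vertical isomorphisms in each double category have companions. You are in fact more explicit than the paper in checking the companion hypothesis for $\dblVis$, spelling out why the re-indexed copycat remains visible and thin; the paper leaves this implicit.
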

\noindent 
We emphasize that this theorem makes essential use of the double
categorical aspects, which makes the 2-dimensional symmetric monoidal
structure manageable. Now that this is established, we can focus
directly on the bicategorical structure, as we move
towards applications in the semantics of the $\lambda$-calculus.

\section{A Cartesian closed pseudofunctor}
\label{sec:cc_pseudofunctor}

The paper until this point has focused on the preservation of
symmetric monoidal structure by our collapse functor. In this section
we add further structure to $\Vis$ to ensure the preservation of the
exponential modality, so that our collapse lifts to the Kleisli
bicategory; we then show that the corresponding pseudofunctor is
cartesian closed. This development is mostly 
independent from the preservation of the linear monoidal structure
detailed above; mainly by lack of a mature bicategorical theory of
models of linear logic.

To ensure preservation of the exponential modality, we
must first address the mismatch identified in Section~\ref{subsec:problems}.
We shall do that now, by introducing a mechanism that identifies those
positions in games that correspond to points of the relational model.

\subsection{Payoff and winning} The additional mechanism we import here
is inspired from Melliès \cite{DBLP:conf/lics/Mellies05}; see also the
detailed construction in \cite{hdr}. As mentioned above, it helps in
ignoring intermediate configurations arising in games which are not
\emph{complete}, \emph{i.e.} they do not correspond to a valid state in
relational models. In addition, this mechanism makes the ambient games
model \emph{linear} rather than \emph{affine}; it forces strategies
to explore all available resources, solving another mismatch between
games and relational models.

\subsubsection{The bicategory $\WVis$.} Here we construct a new
bicategory of games and strategies, integrating both mechanisms of
\emph{visibility} and \emph{winning}.

\paragraph{Games with payoff.}
Our new notion of game is that of a \emph{board}:

\begin{defi}\label{def:board}
A \textbf{board} is an arena $A$ along with
$\kappa_A : \conf{A} \to \{-1, 0, +1\}$
a \textbf{payoff function}, such that this data satisfies the following
conditions:
\[
\begin{array}{rl}
\text{\emph{invariant:}} & \text{for all $\theta : x \sym_A y$, we have
$\kappa_A(x) = \kappa_A(y)$,}\\
\text{\emph{race-free:}} & \text{for all $a \mconflict_A\,a'$, we have
$\pol_A(a) = \pol_A(a')$,}
\end{array}
\]
where $a \mconflict_A\,a'$ means an \textbf{immediate
conflict}, \emph{i.e.} $a \conflict_A a'$ and it is not inherited.

A \textbf{$-$-board} is additionally negative as an arena, and must also satisfy:
\[  
\begin{array}{rl}
\text{\emph{initialized:}} & \kappa_A(\emptyset) \geq 0\,.
\end{array}
\]
\noindent 
Finally, a $-$-board $A$ is \textbf{strict} if
$\kappa_A(\emptyset) = 1$ and all its initial moves are in pairwise
conflict. It is \textbf{well-opened} if it is strict
with exactly  one initial move.
\end{defi}

The function $\kappa_A$ assigns a value to each configuration.
Configurations with payoff $0$ are called \textbf{complete}: they
correspond to \emph{terminated} executions, which have reached an
adequate stopping point where all calls have adequately
returned -- we write $\nconf{A}$ for the set of complete configurations
on $A$.
Otherwise, $\kappa_A$ assigns a responsibility for non-completeness. If
$\kappa_A(x) = -1$ then Player is responsible, otherwise it is
Opponent. 

The earlier constructions on arenas specialize into constructions on
boards. If $A$ is a board, then the \textbf{dual} $A^\perp$ has payoff
$\kappa_{A^\perp} = -\kappa_A$. If $A$ and $B$ are boards, then we set
the \textbf{tensor} (\emph{resp.} the \textbf{par}) as having as
underlying arena the tensor, and payoff $\kappa_{A\tensor B}(x_A
\tensor x_B) = \kappa_A(x_A) \tensor \kappa_B(x_B)$ (\emph{resp.}
$\kappa_{A\parr B}(x_A \tensor x_B) = \kappa_A(x_A) \parr
\kappa_B(x_B)$), using the operations on payoff defined in
\begin{figure}
\[
\begin{array}{c|S[table-format=1.0]S[table-format=1.0]S[table-format=1.0]}
\toprule
\tensor & \cellcolor{red!25}{-1} & \cellcolor{gray!25}{0} &
\cellcolor{blue!25}{1}\\
\hline
\cellcolor{red!25}{-1} & \cellcolor{red!25}{-1} &
\cellcolor{red!25}{-1} & \cellcolor{red!25}{-1}\\
\cellcolor{gray!25}{0} & \cellcolor{red!25}{-1} &
\cellcolor{gray!25}{0} & \cellcolor{blue!25}{1}\\
\cellcolor{blue!25}{1} & \cellcolor{red!25}{-1} &
\cellcolor{blue!25}{1} & \cellcolor{blue!25}{1}\\
\bottomrule
\end{array}\qquad
\begin{array}{c|S[table-format=1.0]S[table-format=1.0]S[table-format=1.0]}
\toprule
{\parr} & \cellcolor{red!25}{-1} & \cellcolor{gray!25}{0} &
\cellcolor{blue!25}{1}\\
\hline
\cellcolor{red!25}{-1} & \cellcolor{red!25}{-1} &
\cellcolor{red!25}{-1} & \cellcolor{blue!25}{1}\\
\cellcolor{gray!25}{0} & \cellcolor{red!25}{-1} &
\cellcolor{gray!25}{0} & \cellcolor{blue!25}{1}\\
\cellcolor{blue!25}{1} & \cellcolor{blue!25}{1} &
\cellcolor{blue!25}{1} & \cellcolor{blue!25}{1}\\
\bottomrule
\end{array}
\]
\caption{Payoff tables for operations on arenas, with $A\parr B =
(A^\perp
  \tensor B^\perp)^\perp$.}
\label{fig:def_payoff}
\end{figure}
Figure \ref{fig:def_payoff}. If $A, B$ are $-$-boards, then the
\textbf{hom-board} is defined as $A\vdash B = A^\perp \parr B$,
\emph{i.e.} with $\kappa_{A\vdash B} = \kappa_{A^\perp}(x_A) \parr
\kappa_B(x_B)$.

Note that by definition of payoff, the order-isomorphism of
\eqref{eq:isotensor} refines to bijections:
\begin{eqnarray}
- \tensor - \quad:\quad \nconf{A} \times \nconf{B} &\iso&
  \nconf{A\tensor B}\,,\\ \label{eq:comptensor}
- \parr - \quad:\quad \nconf{A} \times \nconf{B} &\iso& 
  \nconf{A \parr B}\,.
\end{eqnarray}

\paragraph{Winning strategies.} We turn to strategies. The payoff is taken into account in the definition, as follows:

\begin{defi}\label{def:strat_winning}
Consider $A$ a board, and $\sigma : A$ a visible strategy. 

We say that $\sigma$ is \textbf{winning} if for all $x^\sigma \in
\confp{\sigma}$, we have $\kappa_A(\pr_\sigma\,x^\sigma) \geq 0$.
\end{defi}

\noindent 
It is rather easy to show that copycat strategies are winning, and that
winning strategies are stable under composition \cite[Proposition
8.2.15]{hdr}. Thus we obtain a bicategory $\WVis$ with objects the
$-$-boards, morphisms from $A$ to $B$ the winning strategies on $A
\vdash B$, and $2$-cells the positive morphisms.

\subsubsection{Adjusting the pseudofunctor} We now show how to adjust
the pseudofunctor of Section~\ref{subsubsec:a_pseudofunctor} to account
for the new structure we have just introduced. 

Firstly, as before, to each $-$-board $A$ we shall associate a
groupoid. However, it shall not be the groupoid $\tilde{A}$ of
symmetries between \emph{all} configurations as before, instead we set
$\coll{A}$ as the groupoid $\comp{A}$ having as objects the set $\nconf{A}$ of
configurations of \emph{null payoff}, with morphisms from $x_A$ to
$y_A$ still comprising all the symmetries in $A$. 

The same definition as in Section~\ref{subsubsec:strat_2_dist} now
yields a distributor
\[
\coll{\sigma} : \coll{A}^{\op} \times \coll{B} \to \Set
\]
with those restricted groupoids -- we keep the same definition for
$\coll{\sigma}$, which should cause no confusion as this notation shall
remain fixed until the end of the paper. As before, we have:

\begin{thm}\label{th:pseudo_vws_dist}
  We have $\coll{-} : \WVis \to \Dist$ a pseudofunctor.
\end{thm}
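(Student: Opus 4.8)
The plan is to reuse the whole construction of §\ref{subsec:collapse} and §\ref{subsubsec:a_pseudofunctor} essentially unchanged, the only difference being that the groupoid attached to a board $A$ is now $\comp{A}$ (objects $\nconf{A}$, all symmetries as morphisms) rather than $\tilde{A}$. Since $\comp{A}$ is exactly the full subgroupoid of $\tilde{A}$ spanned by the complete configurations, and symmetries preserve payoff by the \emph{invariant} clause of Definition~\ref{def:board}, the distributor of Proposition~\ref{prop:strategy-to-distributor} restricts to a functor $\coll{\sigma} : \comp{A}^{\op} \times \comp{B} \to \Set$, the unitor $\pid^A$ restricts to an isomorphism $\coll{\cc_A} \iso \comp{A}[-,-]$ onto the identity distributor, and the coherence computations behind Theorem~\ref{th:oplax_tcg} go through verbatim on the $\comp{}$-groupoids. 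So the real content is that the compositor $\pcomp^{\sigma, \tau}$ of Proposition~\ref{prop:compositor} is now a natural \emph{isomorphism}; this is precisely where winning, rather than mere visibility, is used.

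First I would check that the compositor even lands in the restricted coend. A witness $(\theta_A^-, x^\tau \odot x^\sigma, \theta_C^+) \in \coll{\tau \odot \sigma}(x_A, x_C)$ with $x_A \in \nconf{A}$ and $x_C \in \nconf{C}$ is sent to a representative whose middle index is $x_B = x^\sigma_B = x^\tau_B$, so I must show $x_B \in \nconf{B}$. The boundary symmetries give $x^\sigma_A \sym_A x_A$ and $x^\tau_C \sym_C x_C$, whence $\kappa_A(x^\sigma_A) = 0$ and $\kappa_C(x^\tau_C) = 0$ by payoff invariance. Reading off the hom-board payoff on $A \vdash B = A^\perp \parr B$, winning of $\sigma$ says $0 \parr \kappa_B(x_B) \geq 0$, forcing $\kappa_B(x_B) \geq 0$ by the par table of Figure~\ref{fig:def_payoff}; dually, winning of $\tau$ on $B \vdash C$ says $(-\kappa_B(x_B)) \parr 0 \geq 0$, forcing $\kappa_B(x_B) \leq 0$. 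Hence $\kappa_B(x_B) = 0$, so $x_B$ is complete and the representative genuinely lives in the coend $\int^{b \in \comp{B}}$.

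The remaining bijectivity then follows exactly as in Theorem~\ref{th:pseudo_vis_dist}. For surjectivity, two composable witnesses over an object $x_B$ of $\comp{B}$ are causally compatible by deadlock-freeness (Lemma~\ref{lem:deadlock_free}, which applies since winning strategies are visible), and Proposition~\ref{prop:sync_sym} yields a unique preimage $y^\tau \odot y^\sigma$; its display on $A$ and $C$ is symmetric to $x_A$ and $x_C$, hence complete, so it is a legitimate witness of $\coll{\tau \odot \sigma}(x_A, x_C)$. For injectivity, any composite symmetry $\varphi^\tau \odot \varphi^\sigma$ arising from two coinciding preimages has positive display and is therefore an identity by thinness (condition (2) of Definition~\ref{def:thin}), which identifies the two preimages.

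The main obstacle is precisely the completeness computation of the second paragraph: it is the one place where the payoff mechanism is indispensable, being what reconciles the intermediate state of a strategy composition with the object-set $\nconf{B}$ of the coend groupoid $\comp{B}$. Everything else is a transparent restriction of the oplax structure already established in §\ref{subsec:collapse}; in particular the associativity and unit coherences need not be rechecked, since on the $\comp{}$-groupoids their components are literally those verified for Theorem~\ref{th:oplax_tcg}.
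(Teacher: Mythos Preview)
Your proposal is correct and follows essentially the same approach as the paper: the only new obligation beyond the constructions of §\ref{subsec:collapse} and §\ref{subsubsec:a_pseudofunctor} is that the compositor is well-defined on the restricted groupoids, which reduces to showing $\kappa_B(x_B)=0$ via winning of $\sigma$ and $\tau$. The paper argues this by contradiction (assuming $\kappa_B(x_B)=\pm 1$ and deriving a violation of winning) rather than your direct inequalities, and leaves the bijectivity entirely implicit by pointing back to Theorem~\ref{th:pseudo_vis_dist}, but the substance is identical.
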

\begin{proof}
The new definition of $\comp{A}$ imposes only one new proof obligation,
namely that for $\sigma : A \vdash B$, $\tau : B \vdash C$, $x_A \in
\comp{A}$ and $x_C \in \comp{B}$, then
\[
\pcomp^{\sigma, \tau}_{x_A, x_C} : \coll{\tau \odot \sigma}(x_A, x_C)
\to (\coll{\tau} \bullet \coll{\sigma})(x_A, x_C)
\]
as defined in the proof of Proposition \ref{prop:compositor}, is still
well-defined. Indeed, it sends a witness $(\theta_A^-, x^\tau \odot
x^\sigma, \theta_C^+) \in
\coll{\tau \odot \sigma}(x_A, x_C)$ to (the equivalence
class of) the pair
\[
((\theta_A^-, x^\sigma, \id_{x_B}), (\id_{x_B}, x^\tau, \theta_C^+))
\in (\coll{\tau} \bullet \coll{\sigma})(x_A, x_C)
\]
for $x^\sigma_B = x^\tau_B = x_B$, exploiting that $x^\sigma$ and
$x^\tau$ are matching; but this assumes that $(\theta_A^-, x^\sigma,
\id_{x_B}) \in \coll{\sigma}(x_A, x_B)$ and $(\id_{x_B}, x^\tau,
\theta_C^+) \in \coll{\tau}(x_B, x_C)$, which only makes sense provided
$x_B$ has null payoff (and thus lies in $\coll{B}$).
Thus seeking a contradiction, assume that $\kappa_B(x_B) = 1$. But then
$\kappa_{B^\perp \parr C}(\pr_\tau(x^\tau)) = -1 \parr 0 = -1$, which is
impossible since
$x^\tau \in \confp{\tau}$ and $\tau$ is winning. Symmetrically if
$\kappa_B(x_B) = -1$ then this contradicts that $\sigma$ is winning
since $x^\sigma \in \confp{\sigma}$. Hence, $\kappa_B(x_B) = 0$ as
required.

The other components of the pseudofunctor remain unchanged.
\end{proof}

But we are not just interested in $\WVis$ and $\Dist$: we need
a pseudofunctor relating the Kleisli bicategories for the
corresponding exponential modalities in the two models.

\subsection{The exponential modality on $\WVis$.}
\label{subsec:exp_wvis}

First, we detail the construction of the exponential modality on
$\WVis$ (written $\oc$), along with its algebraic structure.

\subsubsection{The exponential modality for boards} The construction
of $\oc A$ as a countable symmetric tensor of copies of $A$
(Section~\ref{subsec:problems}) can be extended to $-$-boards. Note that any
configuration of $\oc A$ has a representation as $\parallel_{i\in I} x_i$
for $I \subseteq_f \mathbb{N}$, and this representation is unique if we
insist that every $x_i$ is non-empty. Using that, we set (all $x_i$
below are non-empty): 
\[
\begin{array}{rcrclcl}
\kappa_{\oc A} &:& \conf{\oc A} &\to& \{-1, 0, +1\}\\
&& \parallel_{i\in I} x_i &\mapsto& \bigotimes_{i\in I}
\kappa_A(x_i)
\end{array}
\]
that is well-defined because $\tensor$ is associative on $\{-1, 0,
+1\}$. If $A$ is a $-$-board, then this results in a $-$-board; but
$\oc A$ is never strict, even if $A$ is strict.

\subsubsection{Strict boards} To precisely capture the
relationship between $\oc$ and $\Sym$ we use \emph{strict} boards
(Def.~\ref{def:arena}), where $\emptyset$ has payoff 1 and is
not considered complete. Indeed, the situation with the empty
configuration was at the heart of the issue in
Section~\ref{subsubsec:problems}. 
We now state the following key property: for strict boards, the two
constructions $\Sym$ and $\oc$ are equivalent.

\begin{prop}\label{prop:eq_oc_sym}
Consider a strict board $A$. 
There is an adjoint equivalence of categories:
\[L^{\oc}_A : \Sym(\comp{A}) \simeq \comp{\oc A}
: R^{\oc}_A.\]
\end{prop}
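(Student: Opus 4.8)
The plan is to build the equivalence by hand, giving both functors explicitly and arranging that one composite is the identity on the nose, so that only the other composite needs a coherent isomorphism. I would define $L^{\oc}_A$ on an object $(x_1,\dots,x_n)$ of $\Sym(\comp A)$ — a finite list of complete configurations of $A$ — by placing $x_k$ in copy $k$, i.e.\ $L^{\oc}_A(x_1,\dots,x_n) = \parallel_{1\le k\le n} x_k$, supported on the initial segment $\{1,\dots,n\}\subseteq\N$. Conversely $R^{\oc}_A$ sends a complete configuration $z=\parallel_{i\in I}x_i$ of $\oc A$ (written with all $x_i$ nonempty, so that $I$ is its support) to the list $(x_{i_1},\dots,x_{i_m})$ obtained by enumerating $I=\{i_1<\dots<i_m\}$ in increasing order.

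The one arithmetic input is the structure of $\tensor$ on $\{-1,0,+1\}$ read off from Figure~\ref{fig:def_payoff}: it has unit $0$ and satisfies $\bigotimes_i c_i = 0$ iff every $c_i=0$ (since $-1$ is absorbing, and within $\{0,1\}$ a single $1$ is absorbing). With the definition of $\kappa_{\oc A}$ this yields the key fact that $z=\parallel_{i\in I}x_i$ is complete iff each piece $x_i$ is complete in $A$. Since $A$ is strict we have $\kappa_A(\emptyset)=1\ne 0$, so complete pieces are automatically nonempty and the support $I$ is unambiguous; the empty product handles the empty configuration $\emptyset$, which is complete and corresponds to the empty list — precisely the point at which strictness repairs the mismatch of \S\ref{subsubsec:problems}. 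Hence $L^{\oc}_A(\vec x)$ is a complete configuration and $R^{\oc}_A(z)$ a genuine object of $\Sym(\comp A)$.

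For morphisms I use that, by Definition~\ref{def:bang_ess}, a symmetry $\theta:\parallel_i x_i \sym_{\oc A} \parallel_i y_i$ is exactly a copy-permutation $\pi$ together with symmetries $\theta_i:x_i\sym_A y_{\pi(i)}$ on each piece; restricting $\pi$ to the nonempty supports, which it preserves by the completeness fact above, recovers a unique morphism $(\pi,(\theta_i))$ of $\Sym(\comp A)$ once supports are ordered increasingly. I define $R^{\oc}_A$ on morphisms by this extraction and $L^{\oc}_A$ on morphisms by the inverse assembly (extending $\pi$ by the identity off $\{1,\dots,n\}$); functoriality of both is then immediate. By construction $R^{\oc}_A\circ L^{\oc}_A=\Id$ strictly, so I take the unit $\eta=\id$, while the counit $\epsilon_z:L^{\oc}_A R^{\oc}_A(z)\sym_{\oc A} z$ is the reindexing symmetry whose copy-permutation is the order-isomorphism $\{1,\dots,m\}\iso I$ and whose symmetries on pieces are identities. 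Both triangle identities then collapse: with $\eta=\id$, one reduces to $\epsilon_{L\vec x}=\id$ (true since $L\vec x$ already sits on an initial segment), and the other to $R\epsilon_z=\id$ (true since the order-isomorphism $\{1,\dots,m\}\iso I$ reads as the identity through the increasing enumeration of $I$).

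The only step that needs genuine care, rather than bookkeeping, is checking that $R^{\oc}_A$ is functorial and that $\epsilon$ is natural — equivalently, that the passage from an abstract $\oc A$-symmetry to a $\Sym(\comp A)$-morphism respects composition and relabelling of supports. This is where I expect the main (though still routine) effort to lie, and it follows directly from the uniqueness of the decomposition in Definition~\ref{def:bang_ess} together with the componentwise preservation of completeness established above; no new ingredient is required.
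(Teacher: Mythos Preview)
Your proposal is correct and takes essentially the same approach as the paper: both define $L^{\oc}_A$ by placing list entries on an initial segment of copy indices and $R^{\oc}_A$ by enumerating the support in increasing order, and both isolate strictness of $A$ as the reason complete pieces are nonempty so that the support is recovered faithfully. Your write-up is in fact more detailed than the paper's, which only spells out the object-level correspondence and leaves the functorial action, unit/counit, and triangle identities implicit.
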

\begin{proof}
We first show that we can identify the objects of $\comp{\oc A}$ with
families $(x_i)_{i \in I}$ of objects of $\comp{A}$, where $I$ is a
finite subset of natural numbers. We observed above that any
configuration $x \in \conf{\oc A}$ can be uniquely written as $x =
\parallel_{i\in I} x_i$ where $I \subseteq_f \mathbb{N}$, and $x_i \in
\conf{A}$ is non-empty for all $i \in I$; yielding a family
$(x_i)_{i\in I}$. In addition, as $\kappa_{\oc A}(x) = 0$, by
definition of the tensor on payoff values we must have  $\kappa_A(x_i)
= 0$ for every $i\in I$, \emph{i.e.} $x_i \in \comp{A}$.
This representation of $x \in \comp{\oc A}$ as a family $(x_i)_{i\in
I}$ is clearly injective. Surjectivity boils down to the fact that each
$x_i$ is non-empty, which follows from $\kappa_A(x_i) = 0$ since $A$ is
strict.

Thus, from right to left, $R^{\oc}_A$ sends $(x_i)_{i\in I}$ to the
sequence $x_{i_1} \dots x_{i_n}$ for $I = \{i_1, \dots, i_n\}$ sorted
in increasing order. From left to right, $L^{\oc}_A$ sends $x_1 \dots x_n$ to
$(x_i)_{i \in \{1\dots n\}}$.
\end{proof}
\noindent 
Although this equivalence only holds for \emph{strict} arenas, it shall
suffice for our purposes.

From now on, we use implicitly and without further mention the
representation of complete configurations of $\oc A$ as families of
complete configurations of $A$.

\subsubsection{Relative pseudocomonads} The pseudofunctor $\WVis \to
\Dist$ does not preserve the 
exponential modality as a pseudocomonad on $\WVis$, but as a 
pseudocomonad \emph{relative} to the sub-bicategory of \emph{strict}
arenas. We recall the categorical notions. 

Recall that a monad on category $\C$ relative to a
functor $J: \D \to \C$ is a functor $T : \D \to \C$ with a
restricted monadic structure, which we can use to form a Kleisli
category $\C_T$ with objects those of $\D$. (Often, $\D$ is a sub-bicategory of $\C$ and $J$ is the inclusion functor.)
This  generalizes to
relative pseudomonads \cite{fiore2018relative} and pseudocomonads:
\begin{defi}
\label{def:relative-pseudocomonad}
Consider $J : \C \to D$ a pseudofunctor between bicategories.

\noindent 
A \textbf{relative pseudocomonad} $T$ over $J$ consists of:

\noindent \emph{(1)} an object $T X \in \D$, for every $X \in \C$,

\noindent \emph{(2)} a family of functors $(-)^*_{X, Y} : \D[TX, JY]
\to \D[TX, TY]$,

\noindent \emph{(3)} a family of morphisms $i_X \in \D[TX, JX]$,

\noindent \emph{(4)} a natural family of invertible $2$-cells, 
for $f \in \D[TX, JY]$ and $g \in \D[TY, JZ]$:
\[
\mu_{f, g} : (g \circ f^*)^* \stackrel{\iso}{\Rightarrow} g^* \circ f^*
\]

\noindent \emph{(5)} a natural family of invertible $2$-cells, for $f
\in \D[TX, JY]$:
\[
\eta_f : f \stackrel{\iso}{\Rightarrow} i_Y \circ f^*
\]

\noindent \emph{(6)} a family of invertible $2$-cells
$\theta_X : i_X^* \stackrel{\iso}{\Rightarrow} \id_{TX}$,
where $X, Y, Z$ range over objects of $\C$, subject to the
coherence conditions in Figure \ref{fig:coherence_relpscom}.
\end{defi}

\begin{figure}
\[
\xymatrix{
&(h \circ (g \circ f^{*})^{*})^{*}
        \ar[dr]^{\mu_{g \circ f^{*},h}}
        \ar[dl]_{(h \circ \mu_{f, g})^{*}}\\
(h \circ (g^{*} \circ f^{*}))^{*}
        \ar[d]_{\iso}&&
(h^{*} \circ (g \circ f^{*})^{*})  
        \ar[d]^{h^{*} \circ \mu_{f, g}}\\
((h \circ g^{*}) \circ f^{*})^{*}
        \ar[d]_{\mu_{f, h\circ g^{*}}}&&
h^{*} \circ (g^{*} \circ f^{*})
        \ar[d]^{\iso}\\
(h \circ g^{*})^{*} \circ f^{*}
        \ar[rr]_{\mu_{h, g} \circ f^{*}}&&
(h^{*} \circ g^{*}) \circ f^{*}
}
\]
\[
\xymatrix@C=40pt{
f^{*}
        \ar[r]^{\eta_f^*}
        \ar@/_/[drr]_{\iso}&
(i_Y \circ f^{*})^{*}
        \ar[r]^{\mu_{f, i_Y}}&
i_Y^{*} \circ f^{*}
        \ar[d]^{\theta_Y \circ f^{*}}\\
&&\id_{TY} \circ f^{*}
}
\]
\caption{Coherence conditions for relative pseudocomonads}
\label{fig:coherence_relpscom}
\end{figure}
\noindent 
Those conditions are just what is needed to form a Kleisli
bicategory written $\D_T$, with \emph{objects} those of $\C$,
\emph{morphisms} and \emph{$2$-cells} from $X$ to $Y$ the category
$\D[TX, JY]$.  We can compose $f \in \D[TX, JY]$ and $g \in
\D[TY, JZ]$ as $g \circ_T f = g \circ f^*$, and
the \emph{identity} on $X$ is $i_X$.

\subsubsection{The exponential relative pseudocomonad} We now form a
concrete relative pseudocomonad on $\WVis$. Here, $\C$ is
the sub-bicategory $\WVis_s$ of \emph{strict} arenas, and $J : \WVis_s
\hookrightarrow \WVis$ the embedding. Note that even if $A$ is strict, 
$\oc A$ is not strict, and so $\oc$ is not an endo(pseudo)functor;
instead we have $\oc : \WVis_s \to \WVis$.

We now outline the components in Definition~\ref{def:relative-pseudocomonad}. For the component \emph{(2)}, we must introduce additional
notions. Fix an
injection $\tuple{-, -} : \mathbb{N}^2 \to \mathbb{N}$. If $I \subseteq_f \mathbb{N}$ and $J_i \subseteq_f
\mathbb{N}$ for all $i \in I$, write $\bigsqcup_{i\in I} J_i \subseteq_f
\mathbb{N}$ for the set of all $\tuple{i,j}$ for $i \in I$ and $j \in
J_i$. Then, we may define:

\begin{defi}
Consider $\sigma \in \WVis[\oc A, B]$. 
\noindent 
Its \textbf{promotion} $\sigma^{\oc}$ has ess $\oc \sigma$, and display
map the unique map of ess such that 
\begin{eqnarray}
\pr_{\sigma^{\oc}}((x^{\sigma,i})_{i\in I})
= 
(x^{\sigma,i}_{A,j})_{\tuple{i,j}\,\in\,\bigsqcup_{i\in I} J_i} \vdash
(x^{\sigma,i}_B)_{i\in I}
\label{eq:dec_display_prom}
\end{eqnarray}
where $\pr_\sigma(x^{\sigma,i}) = (x^{\sigma,i}_{A, j})_{j\in J_i}
\vdash x^{\sigma,i}_B$ for all $i \in I$.
\end{defi}

For \emph{(3)}, the \textbf{dereliction} $\der_A \in \WVis[\oc A, A]$ on
strict $A$ has ess $\cc_A$, and display map $\pr_{\der_A}(\cc_x)
= (x)_{\{0\}} \vdash x$. For \emph{(4)}, we shall use a positive
isomorphism
\[
\join_{\sigma, \tau} : (\tau \odot \sigma^{\oc})^{\oc}
\stackrel{\iso}{\Rightarrow}
\tau^{\oc} \odot \sigma^{\oc}
\]
sending $(x^\tau_i \odot (x^\sigma_{i,j})_{j\in J_i})_{i\in I}$ to
$(x^\tau_i)_{i \in I} \odot (x^\sigma_{i,j})_{\tuple{i,j} \in
\bigsqcup_{i\in I} J_i}$. For \emph{(5)}, given $B$ strict and $\sigma \in
\WVis[\oc A, B]$ we have a positive iso  $\runit_\sigma : \sigma \iso
\der_B \odot \sigma^{\oc}$ sending $x^\sigma \in \confp{\sigma}$ to
$\cc_{x^\sigma_B} \odot (x^\sigma)_{\{0\}} \in \confp{\der_B \odot
\sigma^{\oc}}$. Finally, for \emph{(6)} we have
$\lunit_A : \der_A^{\oc} \iso \cc_{\oc A}$ sending $(\cc_{x_i})_{i\in
I}$ to $\cc_{(x_i)_{i\in I}}$. For all those, we use the fact that
positive isos are entirely determined by their action on $+$-covered
configurations, see \emph{e.g.} \cite[Lemma 7.2.11]{hdr}.

Altogether, this gives us: 

\begin{thm}
The components described above define a pseudocomonad $\oc$ relative to
the embedding of $\WVis_s$ into $\WVis$.
\end{thm}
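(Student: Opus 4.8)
The plan is to reduce every verification to the level of $+$-covered configurations, exploiting the fact—used repeatedly above—that a winning visible strategy, and more importantly a positive isomorphism between such strategies, is entirely determined by its action on $+$-covered configurations (\cite[Lemma 7.2.11]{hdr}). Under this reduction, the structural $2$-cells $\join$, $\runit$, $\lunit$ become explicit bijections between copy-indexed families of $+$-covered configurations, and the whole content of the theorem amounts to bookkeeping the reindexing of copies induced by the chosen pairing $\tuple{-, -} : \N^2 \to \N$. Since the construction runs parallel to the exponential pseudocomonad on negative thin concurrent games \cite{paquet2020probabilistic, hdr}, I would organize the argument so as to isolate precisely where the \emph{relative} aspect intervenes: promotion and dereliction are only defined with strict codomain, which is exactly what accommodates $\oc A$ being non-strict, and the decomposition of configurations of $\oc A$ into families of non-empty complete configurations (used implicitly after Proposition~\ref{prop:eq_oc_sym}) is what makes all the prescribed actions meaningful.

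First I would check that each component is well-defined. Promotion $(-)^{\oc}$ is functorial on strategies by the display-map formula \eqref{eq:dec_display_prom}; on a positive morphism $f$ it acts copywise, and one verifies that $f^{\oc}$ is again a positive morphism and that the assignment preserves identities and composition, so that $(-)^*_{A, B} = (-)^{\oc} : \WVis[\oc A, B] \to \WVis[\oc A, \oc B]$ is a functor as required by item (2) of Definition~\ref{def:relative-pseudocomonad}. The dereliction $\der_A$ and the three families $\join$, $\runit$, $\lunit$ are each specified by their prescribed action on $+$-covered configurations; I would confirm in every case that this action is an order-isomorphism commuting with the display maps, so that by the characterization lemma it extends uniquely to a positive morphism, automatically invertible since its inverse action is the evident inverse bijection of copy-indexed families. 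This simultaneously discharges the invertibility demanded in items (4), (5), (6).

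Next I would establish naturality. For $\mu = \join$ this means that for any compatible pair of $2$-cells the square relating $\join_{\sigma, \tau}$ and $\join_{\sigma', \tau'}$ commutes, and for $\eta = \runit$ the analogous square relating $\runit_{\sigma}$ and $\runit_{\sigma'}$; in both cases I would unfold the two composites on an arbitrary $+$-covered configuration and observe agreement, using that promotion acts copywise. There then remain the two coherence diagrams of Figure~\ref{fig:coherence_relpscom}. The triangle relating $\runit$, $\join$, and $\lunit$ is a direct computation: tracing a $+$-covered configuration through $\eta_f^*$, then $\mu_{f, \der_B}$, then $\theta_B \circ f^*$, one recovers the canonical coherence isomorphism, since dereliction contributes a single copy indexed by $0$ which the pairing $\tuple{-, -}$ together with $\lunit$ absorbs.

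The main obstacle is the large associativity pentagon for $\mu = \join$. Here one must compare, for a triply-nested family of $+$-covered configurations, the two ways of flattening $\oc\oc\oc$ into $\oc$: regrouping the inner two levels first and then the outer, versus the outer two first. Because the pairing $\tuple{-, -}$ is a fixed injection and \emph{not} strictly associative, these two flattenings do not coincide on the nose but only up to a bijection of copy indices; the crux is to show that this reindexing bijection is exactly the symmetry witnessing $\mu_{f, h \circ g^*}$ (respectively its counterpart along the other side of the pentagon), and hence that both sides induce the same positive symmetry on the promoted interaction. Once this copy-index bookkeeping is made precise—again invoking that positive isomorphisms are detected on $+$-covered configurations—the pentagon commutes, and this is precisely the point at which the structure is a \emph{pseudo}comonad rather than a strict one.
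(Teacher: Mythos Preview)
Your proposal is correct and matches the paper's approach: the paper's own proof says only that ``the naturality and coherence laws follow from lengthy but direct calculations, which we omit,'' and your outline is precisely the structure of those omitted calculations---reduce to $+$-covered configurations via \cite[Lemma~7.2.11]{hdr}, verify each structural $2$-cell is a well-defined positive iso, check naturality of $\join$ and $\runit$, and then discharge the triangle and the associativity pentagon by tracking the copy-index reindexings induced by the fixed injection $\tuple{-,-}$. Your identification of the pentagon as the only non-trivial point, arising from the non-associativity of $\tuple{-,-}$, is exactly right.
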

\noindent 
The naturality and coherence laws follow from lengthy but direct
calculations, which we omit. 
In particular, there is a Kleisli bicategory $\WVis_{\oc}$ whose
objects are strict boards -- in the next section we shall see that this
is a cartesian closed bicategory.  

\subsection{The exponential modality on $\Dist$.} For the sake of
relating $\WVis$ and $\Dist$, we also need to properly introduce the
exponential modality structure of $\Sym(-)$ on $\Dist$, which we shall
also present as a relative pseudocomonad to ease the correspondence. 

\subsubsection{The groupoid $\Sym(A)$.}
For $A$ a groupoid, the objects of $\Sym(A)$ are sequences of objects of
$A$, written $\seq{a_1, \dots, a_n}$, or just $\seq{a_i}_{i\in n}$ --
implicitly treating $n$ as the set $\{1, \dots, n\}$.
There can be a morphism from $\seq{a_i}_{i\in n}$ to $\seq{b_j}_{j\in
p}$ only when $n = p$, in which case it is a permutation
$\pi$ on $n$, along with a family comprising $f_i \in A(a_{\pi(i)},
b_i)$ for each $i\in n$, written
\[
\seq{f_i}^\pi_{i\in n} \in \Sym(A)(\seq{a_i}_{i\in n},
\seq{b_j}_{j \in n})\,,
\]
which we write simply as $\seq{f_i}_{i\in n}$ when $\pi$ is the
identity. We insist that we have $f_i \in A(a_{\pi(i)}, b_i)$ rather
than $f_i \in A(a_i, b_{\pi(i)})$: the family $(f_i)_{i\in n}$ is
thought of as indexed by the codomain, not the domain. This distinction
will make calculations easier later on. 

We shall also write $\seq{a_{i,j}}_{i\in n, j \in p_i} \in \Sym(A)$
for the concatenated sequence $\vec{a_{1, j}} \dots \vec{a_{n,j}}$.

\subsubsection{Promotion and dereliction in $\Dist$.} Next, we present the
relative pseudocomonad structure on $\Dist$. We follow Definition
\ref{def:relative-pseudocomonad}, even though this is not really
relative; it shall be a pseudocomonad ``relative'' to the inclusion of
$\Dist$ in $\Dist$. For this structure, we shall keep the same
notations as in Definition \ref{def:relative-pseudocomonad}, to avoid
notational collisions with the relative pseudocomonad in games.

We start with promotion. Recall that if $\alpha \in \Dist(\Sym(A), B)$,
the coend formula
\[
\alpha^{\dagger}(\vec{a}, \seq{b_1,\dots,b_n}) = \int^{\vec{a_1},\dots,
\vec{a_n}} \left(\prod_{i=1}^n \alpha(\vec{a_i}, b_i) \right)\times
\Sym(A)[\vec{a_1} \dots \vec{a_n}, \vec{a}]
\]
yields a distributor $\alpha^{\dagger} \in \Dist[\Sym(A), \Sym(B)]$,
called the \textbf{promotion} of $\alpha$. Concretely, this means that
witnesses in $\alpha^{\dagger}(\vec{a}, \seq{b_1,\dots,b_n})$ consist
in the choice of three components
\begin{eqnarray}
(\vec{a_1}, \dots, \vec{a_n} \in \Sym(A),
\quad
\seq{s_i}_{i\in n} \in \prod_{i=1}^n \alpha(\vec{a_i}, b_i)\,,
\quad
f \in \Sym(A)[\vec{a_1} \dots \vec{a_n}, \vec{a}])\,,
\label{eq:witdef}
\end{eqnarray}
subject to the following equivalence relation, for each $(f_i \in
\Sym(A)[\vec{a_i}, \vec{a'_i}])_{1\leq i \leq n}$:
\begin{eqnarray}
(\vec{a_1}, \dots, \vec{a_n}\,,~
\seq{s_i}_{i\in n}\,,~
f \circ (f_1\dots f_n))
&\sim&
(\vec{a'_1}, \dots, \vec{a'_n}\,,~
\seq{\alpha(f_i,b_i)(s_i)}_{i\in n}\,,~
f)
\label{eq:prom_coend}
\end{eqnarray}
for $s_i \in \alpha(\vec{a_i}, b_i)$, $f \in \Sym(A)[\vec{a'_1} \dots
\vec{a'_n}, \vec{a}]$. 

Here, we make some notational simplifications. 
Firstly, the data of the $\vec{a_i}$s is redundant, provided other
components are typed. Secondly, writing $\vec{a_i} = \seq{a_{i, 1},
\dots, a_{i, p_i}}$ and $\vec{a} = \seq{a'_1, \dots, a'_p}$, then a
morphism $f \in \Sym(A)[\vec{a_1} \dots \vec{a_n}, \vec{a}]$ is some
$\seq{f_i}^{\pi}_{i\in p}$ for $\pi : p \bij \sum_{i=1}^n p_i$; but up
to \eqref{eq:prom_coend} we can -- and we will -- always assume that
the $f_i$s are identities, and only $\pi$ remains. So altogether, a
witness in $\alpha^{\dagger}(\vec{a}, \seq{b_1, \dots, b_n})$ as in
\eqref{eq:witdef} is specified by 
\[
\prom{s_i}_{i\in n}^{\pi} \in \alpha^{\dagger}(\vec{a}, \seq{b_1, \dots,
b_n})
\]
an expression
where $s_i \in \alpha(\vec{a_i}, b_i)$ and $\pi : p \bij \sum_{i=1}^n
p_i$; from now on we fix this notation\footnote{Note the double
brackets, which we adopt to ease the distinction with morphisms of
$\Sym(A)$.} -- note in passing that we may
omit the permutation for the identity, \emph{e.g.} $\prom{s_i}_{i\in n}
= \prom{s_i}_{i\in n}^{\id}$. 

For dereliction, we simply have $\dder_A(\seq{a}, a') = A[a, a']$.

\subsubsection{Additional components.} We carry on with the additional
component of the relative pseudocomonad structure. For every $\alpha
\in \Dist[\Sym A, B]$ and $\beta \in \Dist[\Sym B, C]$, we need 
\[
\mu_{\alpha, \beta} : (\beta \bullet \alpha^{\dagger})^{\dagger} \iso
\beta^\dagger \bullet \alpha^{\dagger}
\]
a natural iso which, given $s_{i,j} \in \alpha(\vec{a_{i,j}}, b_{i,j})$ and
$t_i \in \beta(\vec{b_i}, c_i)$ for $1\leq i \leq n$, $1 \leq j \leq
p_i$, $\vec{a_{i,j}} = \seq{a_{i, j, 1}, \dots, a_{i, j, k_{i,j}}}$,
$\pi_i : \sum_{j \in p_i} k_{ij} \bij \sum_{j \in p_i} k_{ij}$ and $\pi
: \sum_{i,j} k_{ij} \bij \sum_{i,j} k_{ij}$, is set to
\[
(\mu_{\alpha,\beta})_{\vec{a}, \vec{c}}(\prom{t_i \bullet \prom{s_{i,j}}_{j\in p_i}^{\pi_i}}_{i\in
n}^{\pi}) = 
\prom{t_i}_{i\in n} \bullet \prom{s_{i,j}}_{i\in n, j \in p_i}^{\pi \circ
\sum \pi_i}\,.
\]

For cancellation of dereliction we need, for any $\alpha
\in \Dist[\Sym A, B]$, natural isos
\[
\eta_\alpha : \alpha \iso \dder_B \bullet \alpha^{\dagger}\,,
\qquad
\qquad
\theta_A : \dder_A^\dagger \iso \id_{\Sym A}
\]
set by $(\eta_\alpha)_{\vec{a}, b}(s) = \id_b
\bullet \prom{s}$ for $s \in \alpha(\vec{a}, b)$; and
$\theta_A(\prom{f_i}_{i\in n}^{\pi}) =
\seq{f_i}_{i\in n}^{\pi}$ for $(f_i \in A[a_i, a'_i])_{i\in n}$.

\begin{thm}
This specifies a (relative) pseudocomonad $\Sym$ on $\Dist$.
\end{thm}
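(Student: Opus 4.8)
The plan is to verify the data and axioms of Definition~\ref{def:relative-pseudocomonad} for the tuple $(\Sym, (-)^{\dagger}, \dder, \mu, \eta, \theta)$, with $J$ taken to be the identity inclusion $\Dist \hookrightarrow \Dist$. The objects $\Sym(A)$, the identities $\dder_A$, and the three families of invertible $2$-cells have already been exhibited, so what remains is: (a) that promotion $(-)^{\dagger}$ is genuinely a functor on each hom-category; (b) that $\mu$, $\eta$, $\theta$ are well-defined on witnesses, invertible, and natural; and (c) the two coherence diagrams of Figure~\ref{fig:coherence_relpscom}. Throughout I would work with the canonical representatives $\prom{s_i}^{\pi}_{i\in n}$ of \eqref{eq:witdef}, in which the family morphisms $f_i$ of \eqref{eq:prom_coend} have been absorbed so that only the permutation survives; this reduces most verifications to bookkeeping of permutations.

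First I would establish functoriality of $(-)^{\dagger} : \Dist[\Sym A, B] \to \Dist[\Sym A, \Sym B]$. A $2$-cell $\phi : \alpha \tto \alpha'$ induces $\phi^{\dagger}$ acting as $\prom{s_i}^{\pi}_{i\in n} \mapsto \prom{\phi(s_i)}^{\pi}_{i\in n}$; well-definedness with respect to \eqref{eq:prom_coend}, together with preservation of identity $2$-cells and of vertical composition, follows immediately from functoriality of finite products and the universal property of the coend defining $\alpha^{\dagger}$. This simultaneously records that promotion respects the coend equivalence, which is the only delicate point.

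Next, the three structural $2$-cells. Each is specified on canonical representatives, so for each I would check invariance under the coend relation, then exhibit an inverse and verify naturality. For $\mu_{\alpha, \beta}$ the inverse simply re-brackets the nested promotion, and invertibility reduces to the fact that $\pi \circ \sum \pi_i$ is a composite of bijections; naturality in $\alpha$ and $\beta$ is inherited from the coend presentation and from the functoriality just established. For $\eta_\alpha$ and $\theta_A$ the checks are lighter: $\eta_\alpha$ is a reindexing along the length-one blocks contributed by $\dder_B$, and $\theta_A$ forgets the (trivial) singleton nesting, both visibly bijective and natural.

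The heart of the argument, and the main obstacle, is the large associativity pentagon of Figure~\ref{fig:coherence_relpscom}, relating the two ways of contracting a triply-nested promotion $(h \circ (g \circ f^{\dagger})^{\dagger})^{\dagger}$. On a representative, both paths produce the same nested family of witnesses but reorganize the flattened index set by composing permutations in two different orders; the required equality is therefore the combinatorial identity that the two bracketings of the iterated sum $\sum_i \sum_j k_{ij}$ -- mediated by the chosen pairing $\tuple{-,-}$ and the operation $\bigsqcup$ -- induce the same bijection modulo \eqref{eq:prom_coend}, i.e.\ that $\pi \circ \sum_i\!\bigl(\pi_i \circ \sum_j \pi_{ij}\bigr)$ and $\bigl(\pi \circ \sum_i \pi_i\bigr) \circ \sum_{i,j} \pi_{ij}$ agree after the identification. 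I would discharge this by passing to canonical representatives and invoking associativity of sequence concatenation, so the diagram collapses to an equality of permutations on the flattened set. The unit triangle is then comparatively direct: expanding $\eta_f^{\dagger}$, $\mu_{f, \dder}$ and $\theta \circ f^{\dagger}$ on a representative shows both composites reduce to the canonical isomorphism, using that $\dder$ contributes only length-one blocks. (Alternatively, one may import these coherences from the abstract theory of relative pseudomonads \cite{fiore2018relative, fiore2008cartesian}.) With naturality and coherence established, $\Sym$ is a relative pseudocomonad on $\Dist$, and the associated Kleisli bicategory $\Dist_{\Sym} = \Esp$ is thereby well-defined.
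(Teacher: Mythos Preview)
Your proposal is essentially correct and, in fact, far more detailed than what the paper does: the paper states this theorem without proof, treating it as an established result from \cite{fiore2008cartesian,fiore2018relative} (as you yourself note parenthetically at the end). So there is nothing to compare on the paper's side beyond the implicit appeal to the literature.

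One small confusion worth flagging: in discussing the associativity pentagon you say the two bracketings are ``mediated by the chosen pairing $\tuple{-,-}$ and the operation $\bigsqcup$''. Those belong to the \emph{games} exponential $\oc$ of \S\ref{subsec:exp_wvis}, not to $\Sym$ on $\Dist$. On the distributor side the flattening is plain list concatenation, and the permutation identity you actually write down, $\pi \circ \sum_i(\pi_i \circ \sum_j \pi_{ij}) = (\pi \circ \sum_i \pi_i) \circ \sum_{i,j} \pi_{ij}$, is exactly the right one and needs no encoding via $\tuple{-,-}$. Drop that clause and the sketch is clean.
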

\noindent 
As usual, we refer to the Kleisli bicategory $\Dist_{\Sym}$ as $\Esp$.

\subsection{Lifting $\coll{-}$ to the Kleisli bicategories}
We show how to lift $\coll{-}$ to a pseudofunctor $\collexp{-} :
\WVis_{\oc} \to \Esp$. 
(We give a direct proof, although 
one could write down a notion of pseudofunctor between
relative pseudocomonads that lifts to the
Kleisli bicategories \cite{street1972formal}.) 

Before we delve into the proof, we introduce some additional notation.

\subsubsection{Additional conventions and notations} If $X$ is a set,
write $\Fam(X)$ for the set of families $(x_i)_{i\in I}$ indexed by $I
\subseteq_f \mathbb{N}$. This also applies to categories: if $A$ is a
category, then $\Fam(A)$ has morphisms from $(x_i)_{\in I}$ to
$(y_j)_{j\in J}$ given by a permutation $\pi : I \bij J$, and a family
$(f_i : x_i \to y_{\pi(i)})_{i\in I}$ of morphisms in $A$ -- so that
$\comp{\oc A} \iso \Fam(\comp{A})$.
If $I$ is a finite subset of natural numbers, let $\card{I}$ be its
cardinal and
$\kappa_I : I \bij \card{I}$ be the unique monotone bijection. 
If $(x_i)_{i\in I}$ is a family, for simplicity we write
$(x_i)_{\card{i} \in \card{I}}$ for the reindexing
$(x_{\kappa_I^{-1}(j)})_{j\in \card{I}}$. Similarly, if $I
\subseteq_f \mathbb{N}$ and $(J_i)_{i\in I}$ is a family with $J_i
\subseteq_f \mathbb{N}$ for all $i\in I$, we write
\[
\kappa_{I, (J_i)} : \bigsqcup_{i\in I} J_i \bij \sum_{i\in I}
\card{J_i}
\]
for the bijection corresponding to arranging (encodings of) pairs
$\seq{i, j}$ in lexicographic order. 

We also introduce a notation for morphisms in $\Fam(\C)$,
in line with the earlier notation introduced for $\Sym(A)$:
we shall sometimes write
$(f_j)_{j\in J}^\pi$
for the element of $\Fam(\C)[(x_i)_{i\in I}, (y_j)_{j\in J}]$ normally
consisting of
\[
(\pi^{-1} : I \simeq J, (f_{\pi^{-1}(i)} \in \C[x_i, y_{\pi^{-1}(i)}])_{i\in
I})\,.
\]

Notice that the family is indexed by its target index set instead of
the source.

\subsubsection{Preservators for exponentials} To lift $\coll{-}$ to
the Kleisli bicategories, we must introduce explicit natural
isomorphisms acting on witnesses for dereliction and promotion. We
make repeated use of the actions of functors on distributors introduced
in Definition~\ref{def:pierres-construction}. 

\begin{lem}\label{lem:pder}
For any strict board $A$, there is a natural isomorphism
\[
\pder_A : \coll{\der_A}[L^{\oc}_A] \iso \dder_{\comp{A}} \in
\Dist[\Sym(\comp{A}), \comp{A}]
\]
\end{lem}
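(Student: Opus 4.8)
The plan is to prove the isomorphism by computing the witness sets on both sides explicitly and matching them through the positive–negative factorization of Lemma~\ref{lem:sym_factor}, exactly as was done for the unitor $\pid^A$. First I would unfold the left-hand side. Since $\der_A$ has underlying ess $\cc_A$, Proposition~\ref{prop:confp_cc} identifies $\confp{\der_A}$ with $\{\cc_z \mid z \in \conf{A}\}$, the display map sending $\cc_z$ to $(z)_{\{0\}} \vdash z$. Unwinding the definition of $\coll{-}$ from \S\ref{subsubsec:strat_2_dist} together with the action $[\,\cdot\,]$ of Definition~\ref{def:pierres-construction}, a witness in $\coll{\der_A}[L^{\oc}_A](\vec{x}, y) = \coll{\der_A}(L^{\oc}_A\vec{x}, y)$ is a triple $(\theta^-, \cc_z, \theta^+)$ with $\theta^- : L^{\oc}_A\vec{x} \sym_{\oc A}^- (z)_{\{0\}}$ and $\theta^+ : z \sym_A^+ y$.

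The key combinatorial step is to observe that this set is empty unless $\vec{x}$ is a singleton sequence. Writing $\vec{x} = x_1 \dots x_n$, the left adjoint sends it to the family $(x_i)_{i \in \{1, \dots, n\}} \in \comp{\oc A}$, all of whose components are nonempty because $A$ is strict (as in the proof of Proposition~\ref{prop:eq_oc_sym}). By the description of $\tilde{\oc A}$ in Definition~\ref{def:bang_ess}, a symmetry into the single-copy configuration $(z)_{\{0\}}$ requires a bijection between the sets of nonempty copies on each side; since the right-hand side has exactly one nonempty copy, we must have $n = 1$. For $n \neq 1$ the left-hand side is thus empty, and so is $\dder_{\comp{A}}(\vec{x}, y) = \Sym(\comp{A})[\vec{x}, \seq{y}]$, so both distributors agree trivially there.

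For $\vec{x} = \seq{x}$, a witness reduces, through the single-copy identification, to a negative symmetry $x \sym_A^- z$ and a positive symmetry $z \sym_A^+ y$ in $A$. I would then define the component $(\pder_A)_{\seq{x}, y}$ by sending $(\theta^-, \cc_z, \theta^+)$ to the composite $\theta^+ \circ \theta^- \in \comp{A}[x, y] = \dder_{\comp{A}}(\seq{x}, y)$. Invertibility is immediate from Lemma~\ref{lem:sym_factor}: any symmetry $\psi : x \sym_A y$ has a unique factorization $\psi = \theta^+ \circ \theta^-$ through an intermediate $z$, recovering the unique preimage. Naturality in both variables is then checked against the functorial actions, the action in the $\Sym(\comp{A})$-variable being governed by Proposition~\ref{prop:act_strat} in exactly the way already verified for $\pid^A$.

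The main obstacle is the single-copy analysis of the second paragraph: one must argue carefully from the definition of negative symmetries in $\oc A$ that only singleton sequences contribute, and that the forced copy-index bijection $\{1\} \cong \{0\}$ is compatible with negativity. Once this restriction is established, the remaining bijection and its naturality are a direct transcription of the unitor argument, so I expect no further difficulty.
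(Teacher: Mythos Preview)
Your proposal is correct and follows essentially the same approach as the paper's proof: reduce to the singleton case $\vec{x} = \seq{x}$, send $(\theta^-, \cc_z, \theta^+)$ to $\theta^+ \circ \theta^-$, and invoke Lemma~\ref{lem:sym_factor} for invertibility. You supply more detail than the paper on why only singleton sequences contribute (the paper simply asserts this ``from the definition''), but the argument and the defined isomorphism are identical.
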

\begin{proof}
From the definition, $\coll{\der_A}[L^{\oc}_A](\seq{x_i}_{i\in n}, y)$ is
non-empty iff $n = 1$, in which case
\begin{eqnarray*}
\coll{\der_A}[L^{\oc}_A](\seq{x}, y) &=& 
\coll{\der_A}((x)_{i\in \{1\}}, y)\\
&=& \{[(\theta^-)^{\{0\} \bij \{1\}}_{\{0\}}, \cc_z \in \confp{\der_A}, \theta^+]
\mid \theta^- : x \sym_A^- z,~\theta^+ : z \sym_A^+ y\}
\end{eqnarray*}
where $\pr_{\der_A}(\cc_z) = (z)_{\{0\}} \vdash z$. This is sent by $\pder_A$
to
\[
\theta^+ \circ \theta^- \in \der_{\comp{A}}[\seq{x}, y]
\]
which is a bijection by Lemma \ref{lem:sym_factor}. Additional
verifications are routine.
\end{proof}

Likewise, there is another natural isomorphism for preservation of promotion:

\begin{lem}\label{lem:pprom}
For any $\sigma \in \WVis[\oc A, B]$, there is a natural isomorphism
\[
\pprom_\sigma : \coll{\sigma^{\oc}} [L^{\oc}_A] \iso [R^{\oc}_B]
(\coll{\sigma} [L^{\oc}_A])^{\dagger}\,,
\]
between distributors in $\Dist[\Sym(\comp{A}), \comp{\oc B}]$.
Furthermore, it is natural in $\sigma$.
\end{lem}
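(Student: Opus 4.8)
The plan is to construct $\pprom_\sigma$ explicitly on witnesses, by unfolding both distributors and exhibiting a dictionary between their elements; the conceptual content is that the symmetries of the bang construction on the left already realize the coend quotient built into the promotion $(-)^\dagger$ on the right. First I would unfold the two sides. On the left, a witness in $\coll{\sigma^{\oc}}[L^{\oc}_A](\vec{x}, y) = \coll{\sigma^{\oc}}(L^{\oc}_A\vec{x}, y)$ is a positive witness $(\theta^-_{\oc A}, w, \theta^+_{\oc B})$ of $\sigma^{\oc}$; since $\sigma^{\oc}$ has ess $\oc\sigma$, the $+$-covered configuration $w \in \confp{\sigma^{\oc}}$ is, under the non-empty-component convention of Proposition~\ref{prop:eq_oc_sym}, a family $(x^{\sigma,i})_{i\in I}$ of $+$-covered configurations of $\sigma$, whose display is described by \eqref{eq:dec_display_prom}. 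On the right, the concrete description \eqref{eq:witdef} of promotion shows that a witness in $[R^{\oc}_B](\coll{\sigma}[L^{\oc}_A])^{\dagger}(\vec{x}, y) = (\coll{\sigma}[L^{\oc}_A])^{\dagger}(\vec{x}, R^{\oc}_B y)$ is a $\sim$-class (under \eqref{eq:prom_coend}) of a family $(s_k)_k$ of positive witnesses $s_k = (\mu^-_{A,k}, x^{\sigma,k}, \mu^+_{B,k})$ of $\sigma$, together with a permutation $\pi$ realizing $\vec{x}$ as a shuffle of the domain sequences $\vec{x_k}$. The two families of configurations match along the monotone reindexing $\kappa_I : I \bij \card{I}$, since $R^{\oc}_B$ sorts the support of $y$ in increasing order.

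Second, I would define $\pprom_\sigma$ along this dictionary, exploiting the asymmetry of the two sides of the arena. On $\oc B$, positive symmetries carry the identity permutation (Section~\ref{subsec:problems}), so $\theta^+_{\oc B}$ factors uniquely and component-wise into positive symmetries $\mu^+_{B,i} : x^{\sigma,i}_B \sym_B^+ y_i$, matching the codomain order fixed by $R^{\oc}_B$. On $\oc A$, by contrast, negative symmetries carry an arbitrary permutation (Definition~\ref{def:bang_ess}), so the single symmetry $\theta^-_{\oc A}$ is exactly what packages together the per-copy negative symmetries $\mu^-_{A,k}$, the internal permutations they carry, the global shuffle $\pi$, and the bookkeeping bijection $\kappa_{I,(J_i)} : \bigsqcup_{i\in I} J_i \bij \sum_{i\in I}\card{J_i}$ into one $\oc A$-symmetry. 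Thus from a right-hand witness I read off $w$ as $(x^{\sigma,k})_k$ reindexed along $\kappa_I$, assemble $\theta^+_{\oc B}$ from the $\mu^+_{B,k}$, and assemble $\theta^-_{\oc A}$ from the $\mu^-_{A,k}$ and $\pi$.

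Third come the verifications. The key point is \emph{well-definedness with respect to} \eqref{eq:prom_coend}: absorbing a morphism $f_k \in \Sym(\comp{A})[\vec{x_k}, \vec{x'_k}]$ into $\pi$, and acting correspondingly on $s_k$ by $\coll{\sigma}$ (via the action of Proposition~\ref{prop:act_strat}), merely re-splits the composite permutation feeding into $\theta^-_{\oc A}$; since the left-hand witness records only the assembled symmetry and not any splitting, the map is constant on $\sim$-classes. This is precisely where the coend quotient on the right is reconciled with the plain set of positive witnesses on the left. For bijectivity I would produce the inverse: the component-wise factorization of $\theta^+_{\oc B}$ is forced and unique, while factoring $\theta^-_{\oc A}$ into per-copy negative symmetries plus a shuffle is always possible and any two such factorizations are $\sim$-equivalent, so the inverse lands well-defined in the quotient; uniqueness of the positive/negative split of each component uses Lemma~\ref{lem:sym_factor}. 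Naturality in $(\vec{x}, y)$ follows because both actions are governed by Proposition~\ref{prop:act_strat}, and naturality in $\sigma$ follows from the naturality of $\coll{-}$ on $2$-cells together with the functoriality of $(-)^\dagger$.

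The main obstacle I anticipate is purely combinatorial: threading the permutations correctly through the reindexing bijections $\kappa_I$ and $\kappa_{I,(J_i)}$, and checking that the $\oc A$-symmetries realize exactly the identifications of \eqref{eq:prom_coend}, neither more nor fewer. Conceptually this is the slogan that ``splitting an $\oc A$-symmetry into per-copy pieces and then quotienting by re-splitting recovers the original symmetry'', but making the indexing bookkeeping line up precisely is where the care is required.
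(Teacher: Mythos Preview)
Your proposal is correct and follows essentially the same approach as the paper: both unfold a left-hand witness as a family $(x^{\sigma,i})_{i\in I}$ with a negative $\oc A$-symmetry carrying an arbitrary permutation and a positive $\oc B$-symmetry forced to carry the identity permutation, and then match this against the coend description of promotion via the reindexing bijections $\kappa_I$ and $\kappa_{I,(J_i)}$. The paper writes the map in the forward direction with an explicit formula $\prom{[(\theta^-_{i,j})^{\kappa_{J_i}}_{j\in J_i}, x^{\sigma, i}, \theta^+_i]}^{\overline{\pi}}_{\card{i} \in \card{I}}$ and dismisses the remaining checks as ``routine, if lengthy'', whereas you sketch the inverse direction and spell out more of those routine checks (in particular the well-definedness under \eqref{eq:prom_coend}); but the construction and the key observations are the same.
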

\begin{proof}
Recall that $+$-covered configurations of $\sigma^{\oc}$ correspond to
families $(x^{\sigma,i})_{i\in I}$; and
\[
\pr((x^{\sigma,i})_{i\in I}) 
\quad = \quad
(x^{\sigma,i}_{A, j})_{\pair{i, j}\,\in\,\bigsqcup_{i\in I} J_i} \vdash
(x^{\sigma,i}_{B})_{i\in I}
\]
provided $\pr(x^{\sigma,i}) = (x^{\sigma,i}_{A,j})_{j\in
J_i} \vdash x^{\sigma,i}_B$ for all $i \in I$. 

Now, consider $\seq{x_{A, k}}_{k\in p} \in \Sym(\comp{A})$, and
$(x_{B,l})_{l \in L} \in \comp{\oc B}$. We have:
\begin{eqnarray*}
\coll{\sigma^{\oc}} [L^{\oc}_A](\seq{x_{A, k}}_{k\in p}, (x_{B,l})_{l
\in L}) &=& 
\coll{\sigma^{\oc}}((x_{A, k})_{k\in p},(x_{B,l})_{l \in L})
\end{eqnarray*}
which, by definition, is composed of triples
\[
[(\theta^-_{i,j})^\pi_{\pair{i,j}\,\in\,\bigsqcup_{i\in I}
J_i},
\qquad
(x^{\sigma, i})_{i\in I} \in \confp{\sigma^{\oc}},
\qquad
(\theta^+_l)_{l\in L}^{\varpi}
] 
\quad
\in
\quad 
\coll{\sigma^{\oc}}((x_{A, k})_{k\in p},(x_{B,l})_{l \in L})
\]
where $\pi : \bigsqcup_{i\in I} J_i \simeq p$ is a bijection,
$\theta^-_{i,j} : x_{A, \pi\tuple{i,j}} \sym_A^- x^{\sigma, i}_{A, j}$,
and $(\theta^+_l)_L^\varpi : (x^{\sigma,i}_B)_{i\in I} \sym_{\oc B}^+
(x_{B, l})_{l\in L}$. This can be simplified: the positivity of 
$(\theta^+_l)_L^\varpi$ entails $L = I$ and $\varpi = \id_I$. We are
left with
\[
[(\theta^-_{i,j})^\pi_{\pair{i,j}\,\in\,\bigsqcup_{i\in I}
J_i},
\qquad
(x^{\sigma, i})_{i\in I} \in \confp{\sigma^{\oc}},
\qquad
(\theta^+_i)_{i\in I}
]
\quad
\in
\quad
\coll{\sigma^{\oc}}((x_{A, k})_{k\in p},(x_{B,i})_{i \in I})\,.
\]

We may finally define $\pprom_\sigma$ to send the above witness to:
\[
\prom{[(\theta^-_{i,j})^{\kappa_{J_i}}_{j\in J_i}, x^{\sigma, i},
\theta^+_i]}^{\overline{\pi}}_{\card{i} \in \card{I}}
\in (\coll{\sigma} [L^{\oc}_A])^{\dagger}(\seq{A,k}_{k\in p},
\seq{x_{B,i}}_{\card{i} \in \card{I}})
\]
where $\overline{\pi}$ is the unique bijection such that the
composition $\bigsqcup_{i\in I} J_i \stackrel{\kappa_{I, (J_i)}}{\bij} \Sigma_{i \in
I} \card{J_i} \stackrel{\overline{\pi}}{\bij} p$ is
$\pi$.

It is routine, if lengthy, to verify that this is a bijection, along
with naturality in $\sigma$.
\end{proof}

By analogy with the \emph{unitors} and \emph{associators} forming a
pseudofunctor, we refer to the natural isomorphisms above as the
\textbf{preservators}.

\subsubsection{Coherence of the preservators.}
We show a series of lemmas expressing coherence between the
preservators and the components of the relative pseudocomonads.

First, we have compatibility with cancellation of dereliction,
expressed in two lemmas:

\begin{lem}\label{lem:pres_id_l}
For any strict board $A$, the following diagram commutes
\[
\xymatrix{
\coll{\cc_{\oc A}}[L^{\oc}_A]
        \ar[r]
        \ar[d]&
\id_{\comp{\oc A}}[L^{\oc}_A]
        \ar[d]\\
\coll{\der_A^{\oc}}[L^{\oc}_A]
        \ar[d]&
[R^{\oc}_A]\,\id_{\Sym(\comp{A})}
        \ar[d]\\
[R^{\oc}_A] (\coll{\der_A}[L^{\oc}_A])^{\dagger}
        \ar[r]&
[R^{\oc}_A]\,\dder_{\comp{A}}^{\dagger}
}
\]
with all arrows the obvious structural maps or obtained
by Lemmas \ref{lem:comp_ren_comp},
\ref{lem:pder} and \ref{lem:pprom}.
\end{lem}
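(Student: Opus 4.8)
The plan is to observe that each of the six arrows in the square is a natural isomorphism of distributors from $\Sym(\comp{A})$ to $\comp{\oc A}$, so the square commutes as soon as the two composite bijections agree on an arbitrary witness of the source distributor $\coll{\cc_{\oc A}}[L^{\oc}_A]$. By the definition of $\coll{\cc_{\oc A}}$ and of the renaming $[L^{\oc}_A]$, such a witness over $(\seq{x_k}_{k\in p}, (y_l)_{l\in L})$ is a positive witness of copycat on $\oc A$, namely a triple $(\theta^-, \cc_z, \theta^+)$ with $z \in \confp{\cc_{\oc A}}$, $\theta^- : L^{\oc}_A\seq{x_k}_{k\in p} \sym_{\oc A}^- z$ and $\theta^+ : z \sym_{\oc A}^+ (y_l)_{l\in L}$. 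Writing $z = \parallel_{i\in I} z_i$ via the canonical decomposition of configurations of $\oc A$, I would trace this witness along both paths.

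Along the right-hand path I would first apply the unitor $\pid^{\oc A}$, which by Lemma~\ref{lem:sym_factor} collapses $(\theta^-, \cc_z, \theta^+)$ to the single symmetry $\Theta = \theta^+ \circ \theta^- : L^{\oc}_A\seq{x_k}_{k\in p} \sym_{\oc A} (y_l)_{l\in L}$; then $\chi$ (Lemma~\ref{lem:comp_ren_id}, instantiated at the adjoint equivalence $L^{\oc}_A \dashv R^{\oc}_A$ of Proposition~\ref{prop:eq_oc_sym}) transposes $\Theta$ across the adjunction to a morphism $\overline{\Theta} \in \Sym(\comp{A})[\seq{x_k}_{k\in p}, R^{\oc}_A(y_l)_{l\in L}]$; and finally $[R^{\oc}_A]\theta_{\comp{A}}^{-1}$ rewrites this morphism, which has the shape $\seq{g_k}^{\varpi}_{k\in p}$, as the corresponding promotion witness $\prom{g_k}^{\varpi}_{k\in p}$ of $\dder_{\comp{A}}^{\dagger}$.

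Along the left-hand path I would first apply $\coll{\lunit_A^{-1}}$, which by the defining action $\cc_{(x_i)_{i\in I}} \mapsto (\cc_{x_i})_{i\in I}$ of $\lunit_A$ turns the copycat witness into the dereliction-promotion witness $(\theta^-, (\cc_{z_i})_{i\in I}, \theta^+)$ of $\coll{\der_A^{\oc}}$, with the same boundary symmetries. Here every $+$-covered configuration $\cc_{z_i}$ of $\der_A = \cc_A$ has singleton index set $J_i = \{0\}$, so the formula of Lemma~\ref{lem:pprom} (with $\sigma = \der_A$) splits $\theta^-$ into its negative components $(\theta^-_i)_{i\in I}$ together with the underlying copy-index permutation $\pi$, and $\theta^+$ into its positive components $(\theta^+_i)_{i\in I}$; positivity of $\theta^+$ forces its copy-index permutation to be the identity and $L = I$, so all the permutation data is carried by $\pi$. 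I would then apply $[R^{\oc}_A]\pder_A^{\dagger}$, which by Lemma~\ref{lem:pder} replaces each component witness $[(\theta^-_i), \cc_{z_i}, \theta^+_i]$ by $\theta^+_i \circ \theta^-_i$, producing $\prom{\theta^+_i \circ \theta^-_i}^{\overline{\pi}}_{\card{i}\in\card{I}}$, where $\overline{\pi}$ is the reindexed permutation of Lemma~\ref{lem:pprom}.

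It then remains to identify the two outputs, and this comparison is the heart of the argument: I must show that the adjoint transpose $\overline{\Theta}$ of the composite symmetry $\Theta = \theta^+ \circ \theta^-$, written in normal form $\seq{g_k}^{\varpi}_{k\in p}$, has component morphisms $g_k$ and permutation $\varpi$ equal to the componentwise composites $\theta^+_i \circ \theta^-_i$ and the permutation $\overline{\pi}$ produced by the left path. This reduces to a purely combinatorial statement about symmetries of $\oc A$: unfolding Definition~\ref{def:bang_ess}, a symmetry of $\oc A$ is exactly a copy-index bijection together with a family of symmetries of $A$, and the equivalence $R^{\oc}_A$ of Proposition~\ref{prop:eq_oc_sym} reorders copy indices in increasing order in precisely the way encoded by the bijections $\kappa_{I,(J_i)}$ and $\kappa_{J_i}$ appearing in $\pprom$. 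I expect the main obstacle to be exactly this index bookkeeping, namely checking that transposing across $L^{\oc}_A \dashv R^{\oc}_A$ and reindexing via the $\kappa$'s yield the same permutation $\overline{\pi} = \varpi$; once the conventions are aligned, the equality of components is immediate, since both constructions multiply the negative part $\theta^-_i$ by the positive part $\theta^+_i$ of one and the same symmetry.
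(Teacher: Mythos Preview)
Your approach is correct and essentially identical to the paper's: both check the equality of natural isomorphisms pointwise by tracing a generic copycat witness along the two paths and comparing the results. The only stylistic difference is that the paper decomposes the boundary symmetries of $\oc A$ into their component data $(\theta^-_i)^\pi_{i\in I}$ and $(\theta^+_i)_{i\in I}$ from the very start (using that positivity forces $L=I$ and the identity permutation on the right), which makes the two paths visibly produce the same expression $\prom{\theta^+_i \circ \theta^-_i}^{\overline\pi}_{\card{i}\in\card{I}}$ and spares you the separate ``adjoint transpose'' bookkeeping you anticipate as the main obstacle.
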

\begin{proof}
This diagram is an equation of natural isomorphisms between
distributors
\[
\Sym(\comp{A})^{\op} \times \comp{\oc A} \to \Set\,,
\]
so taking $\seq{z_i}_{\card{i}\in \card{I}} \in \Sym(\comp{A})$
and $(y_{i})_{i\in I} \in \comp{\oc A}$, it boils down to an equality
between functions, checked pointwise. An element of $\coll{\cc_{\oc
A}}[L^{\oc}_A](\seq{z_i}_{\card{i}\in \card{I}}, (y_i)_{i\in
I})$ comprises 
\[ 
\,[
(\theta_i^-)^\pi_{i\in I}\,,\qquad
\cc_{(x_i)_{i\in I}} \in \confp{\cc_{\oc A}}\,,\qquad
(\theta^+_i)_{i\in I}]
\quad
\in
\quad
\coll{\cc_{\oc
A}}[L^{\oc}_A](\seq{z_i}_{\card{i}\in \card{I}}, (y_i)_{i\in I})\,.
\]
where $\pi : I \simeq \card{I}$ is any bijection, $\theta_i^- : z_i
\sym_A^- x_i$ and $\theta_i^+ : x_i \sym_A^+ y_i$.

Using the description of the action of $\pder$ and $\pprom$ on
witnesses outlined earlier, we calculate its image alongside the two
paths around the diagram. In both cases, we get
\[
\seq{\theta^+_i \circ \theta^-_i}^{\overline{\pi}}_{\card{i} \in
\card{I}} 
\in \dder_{\comp{A}}^{\dagger}(\seq{z_i}_{\card{i} \in
\card{I}}, \seq{y_i}_{\card{i}\in \card{I}})
=
([R^{\oc}_A]\,\dder_{\comp{A}}^{\dagger})(\seq{z_i}_{\card{i}\in \card{I}}, (y_i)_{i\in
I})\,,
\]
where $\overline{\pi} : \card{I} \bij \card{I}$ is the only
bijection such that $\overline{\pi} \circ \kappa_I = \pi$. 
\end{proof}

This will show that the two models deal with Kleisli composition with
dereliction on the left in the same way. Similarly, though not quite
symmetrically, the next lemma deals with the Kleisli composition with
dereliction on the right hand side:

\begin{lem}\label{lem:pres_id_r}
Consider $\sigma \in \WVis[\oc A, B]$. Then the diagram commutes
\[
\xymatrix@C=5pt@R=8pt{
\coll{\der_B \odot \sigma^{\oc}}[L^{\oc}_A]
        \ar[r]
        \ar[d]&
\coll{\sigma}[L^{\oc}_A]
        \ar[r]&
\dder_{\comp{B}} \bullet (\coll{\sigma}[L^{\oc}_A])^{\dagger}\\
(\coll{\der_B} \bullet \coll{\sigma^{\oc}})[L^{\oc}_A]
        \ar[d]&&
\coll{\der_B}[L^{\oc}_B] \bullet (\coll{\sigma} [L^{\oc}_A])^{\dagger}
        \ar[u]\\
\coll{\der_B} \bullet (\coll{\sigma^{\oc}}[L^{\oc}_A])
        \ar[rr]
&&
\coll{\der_B} \bullet [R^{\oc}_B] (\coll{\sigma} [L^{\oc}_A])^{\dagger}
        \ar[u]
}
\]
with all arrows the obvious structural maps or obtained
by Lemmas \ref{lem:comp_ren_comp}, \ref{lem:comp_ren_id},
\ref{lem:pder} and \ref{lem:pprom}.
\end{lem}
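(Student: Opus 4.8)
The plan is to reproduce, in this slightly asymmetric situation, the pointwise verification strategy already used for Lemma~\ref{lem:pres_id_l}. The displayed square is an equation of natural isomorphisms between distributors $\Sym(\comp{A})^{\op} \times \comp{B} \to \Set$, so it suffices to fix objects $\seq{z_k}_{k\in p} \in \Sym(\comp{A})$ and $y \in \comp{B}$, take an arbitrary witness of the top-left distributor, and check that its two images in $(\dder_{\comp{B}} \bullet (\coll{\sigma}[L^{\oc}_A])^{\dagger})(\seq{z_k}_{k\in p}, y)$ coincide, up to the coend equivalence $\sim$ defining that composite. First I would put a generic witness into normal form. An element of $\coll{\der_B \odot \sigma^{\oc}}[L^{\oc}_A](\seq{z_k}_{k\in p}, y)$ is a witness of $\coll{\der_B \odot \sigma^{\oc}}((z_k)_{k\in\{1,\dots,p\}}, y)$, hence a triple $(\theta_A^-, w, \theta_B^+)$ with $w \in \confp{\der_B \odot \sigma^{\oc}}$. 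Using the structural iso $\runit_\sigma : \sigma \iso \der_B \odot \sigma^{\oc}$, which sends $x^\sigma$ to $\cc_{x^\sigma_B} \odot (x^\sigma)_{\{0\}}$, I would write $w = \runit_\sigma(x^\sigma)$ for a unique $x^\sigma \in \confp{\sigma}$ and unfold the display map via \eqref{eq:dec_display_prom} and the definition of $\der_B$. Because $\der_B$ is copycat and the promotion here uses the single copy index $0$, the $\oc B$-components collapse to $B$-components, so the witness is recorded entirely by $x^\sigma$ together with $\theta_A^-$ and $\theta_B^+$ on that single copy, matching the shape of witnesses of $\coll{\sigma}[L^{\oc}_A]$.

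Next I would chase both paths using the explicit action on witnesses of each constituent map, all of which are available. Along the top path, the functorial action of $\coll{-}$ on the inverse of $\runit_\sigma$ returns the witness $(\theta_A^-, x^\sigma, \theta_B^+)$ of $\coll{\sigma}[L^{\oc}_A]$, and the $\Dist$-unitor $\eta_{\coll{\sigma}[L^{\oc}_A]}$ (sending $s \mapsto \id \bullet \prom{s}$) produces $\id \bullet \prom{(\theta_A^-, x^\sigma, \theta_B^+)}$, suitably reindexed. Along the other path I would apply in turn the compositor $\pcomp^{\sigma^{\oc}, \der_B}$ of Proposition~\ref{prop:compositor} (invertible by Theorem~\ref{th:pseudo_vws_dist}), the renaming–composition isomorphism $(\beta \bullet \alpha)[S] \iso \beta \bullet \alpha[S]$ following Lemma~\ref{lem:action-with-companions}, the preservator $\pprom_\sigma$ of Lemma~\ref{lem:pprom}, the adjunction isomorphism $\xi$ (i.e.\ $\chi_\e$) of Lemma~\ref{lem:comp_ren_id} for the equivalence $L^{\oc}_B \dashv R^{\oc}_B$ of Proposition~\ref{prop:eq_oc_sym}, and finally the preservator $\pder_B$ of Lemma~\ref{lem:pder}. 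Here the compositor splits the witness at the middle board $\oc B$ into a $\sigma^{\oc}$-part and a $\der_B$-part sharing a configuration $x^\sigma_B$; $\pprom_\sigma$ converts the $\sigma^{\oc}$-part into a promoted witness carrying the $\kappa_{J_i}$- and $\kappa_{I,(J_i)}$-reindexings; $\xi$ transfers the $[R^{\oc}_B]$-renaming across the composition using the adjunction unit on $\Sym(\comp{B})$; and $\pder_B$ collapses the $\der_B$ copycat witness via the positive–negative factorization of Lemma~\ref{lem:sym_factor}. I would then absorb the residual reindexing bijections and use the coend relation $\sim$ to slide the residual middle symmetry across the tensor, confirming the two expressions agree.

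The main obstacle I expect is the bookkeeping of the reindexing bijections together with the reconciliation of how the positive symmetry $\theta_B^+$ is treated in the two routes. In the top path $\theta_B^+$ is carried directly into the single promoted witness by $\eta_{\coll{\sigma}[L^{\oc}_A]}$, whereas in the other path it is first merged into the $\der_B$ copycat datum by the compositor, shuttled across the composite by $\xi$, and only recovered through the factorization used in $\pder_B$; checking that both routes produce the same positive symmetry modulo $\sim$ is the crux. Everything else reduces to routine naturality and reindexing identities, exactly as in the proof of Lemma~\ref{lem:pres_id_l}, and I would present only this comparison of the two images, omitting the purely mechanical verifications.
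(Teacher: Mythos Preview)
Your proposal is correct and follows essentially the same approach as the paper: a pointwise verification where you normalize a generic witness via $\runit_\sigma$, chase the two paths using the explicit descriptions of $\pcomp$, $\pprom_\sigma$, $\xi$, and $\pder_B$, and then reconcile the two resulting expressions in $\dder_{\comp{B}} \bullet (\coll{\sigma}[L^{\oc}_A])^{\dagger}$ using the coend relation $\sim$ to slide $\theta_B^+$ across. The paper's proof does exactly this, and the final step is indeed the reindexing identity $\overline{\pi} \circ \kappa_I = \pi'$ (in their notation) needed to absorb the residual permutations, which is precisely the bookkeeping obstacle you anticipate.
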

\begin{proof}
This diagram states an equality between natural isomorphisms
between distributors
\[
\Sym(\comp{A})^{\op} \times \comp{B} \to \Set\,,
\]
which again can be checked pointwise. Taking $\seq{x_{A,i}}_{\card{i}
\in \card{I}} \in \Sym(\comp{A})$ and $x_B \in \comp{B}$, an
element of $(\coll{\der_B \odot
\sigma^{\oc}}[L^{\oc}_A])(\seq{x_{A,i}}_{\card{i}
\in \card{I}}, x_B)$ is formed of three components
\[
\,[
(\theta^-_i)^\pi_{\pair{0, i}\,\in\,\bigsqcup_{\{0\}} I}\,,
\quad
\cc_{x^\sigma_B} \odot (x^\sigma)_{\{0\}}
\in \confp{\der_B \odot \sigma^{\oc}}\,,\quad
\theta_+
]
\]
where $\pi : \bigsqcup_{\{0\}} I \bij \card{I}$, $\theta^-_i :
x_{A, i} \sym_A^- x^\sigma_{A, i}$, and $\theta^+ : x^\sigma_B \sym_B^+
x_B$.

Following the upper path in the diagram, this is sent to:
\begin{eqnarray}
\id_{x_B} \bullet \prom{[(\theta^-_i)^{\pi'}_{i\in I},~x^\sigma,~\theta^+]}
&\in& 
\dder_{\comp{B}} \bullet (\coll{\sigma}[L^{\oc}_A])^{\dagger}(\seq{x_{A,
i}}_{\card{i} \in \card{I}}, x_B)
\label{eq:first_expr}
\end{eqnarray}
where $\pi' : I \bij \card{I}$ is the unique bijection such that
$\bigsqcup_{\{0\}} I \stackrel{\mathsf{snd}}{\bij} I
\stackrel{\pi'}{\bij} \card{I}$ is $\pi$.

On the other hand, following the lower path, we get:
\begin{eqnarray}
\theta^+ \bullet
\prom{[(\theta^-_i)^{\kappa_I}_{i\in I},~x^\sigma,~\id_{x^\sigma_B}]}^{\overline{\pi}} \in 
\der_{\comp{B}} \bullet (\coll{\sigma}[L^{\oc}_A])^{\dagger}(\seq{x_{A,
i}}_{\card{i} \in \card{I}}, x_B)\,,
\label{eq:second_expr}
\end{eqnarray}
for $\overline{\pi} : \card{I} \bij \card{I}$ the only
bijection such that $\bigsqcup_{\{0\}} I \stackrel{\kappa_{\{0\},
(I)}}{\bij} \card{I} \stackrel{\overline{\pi}}{\bij} \card{I}$
is $\pi$. But recall that $\kappa_{\{0\}, (I)} : \sqcup_{\{0\}} I \bij
\card{I}$ is the bijection arranging pairs in lexicographic order,
and $\kappa_I : I \bij \card{I}$ is the unique monotone bijection.
From that and the definition of $\overline{\pi}$ and $\pi'$, it follows
that $\overline{\pi} \circ \kappa_I = \pi'$, which entails that
\eqref{eq:second_expr} is equivalent to \eqref{eq:first_expr}.
\end{proof}

Finally, we have compatibility with the multiplication $\join/\mu$, expressed via:

\begin{lem}\label{lem:pres_join}
If $\sigma \in \WVis[\oc A, B]$ and $\tau \in \WVis[\oc B, C]$, then the
following diagram commutes
\[
\xymatrix@C=-25pt@R=10pt{
&\coll{\tau^{\oc} \odot \sigma^{\oc}}[L^{\oc}_A]
        \ar[dl]
        \ar[dr]\\
\coll{(\tau \odot \sigma^{\oc})^{\oc}}[L^{\oc}_A]
        \ar[d]&&
(\coll{\tau^{\oc}} \bullet \coll{\sigma^{\oc}})[L^{\oc}_A]
        \ar[d]\\
[R^{\oc}_C](\coll{\tau\odot \sigma^{\oc}}[L^{\oc}_A])^{\dagger}
        \ar[d]&&
\coll{\tau^{\oc}}\bullet(\coll{\sigma^{\oc}}[L^{\oc}_A])
        \ar[d]\\
[R^{\oc}_C]((\coll{\tau}\bullet
\coll{\sigma^{\oc}})[L^{\oc}_A])^{\dagger}
        \ar[d]&&
\coll{\tau^{\oc}}\bullet [R^{\oc}_B](\coll{\sigma}[L^{\oc}_A])^{\dagger}
        \ar[d]\\
[R^{\oc}_C](\coll{\tau}\bullet \coll{\sigma^{\oc}}[L^{\oc}_A])^{\dagger}
        \ar[d]
&&
\coll{\tau^{\oc}}[L^{\oc}_B] \bullet
(\coll{\sigma}[L^{\oc}_A])^{\dagger}
        \ar[d]\\
[R^{\oc}_C](\coll{\tau}\bullet
[R^{\oc}_B](\coll{\sigma}[L^{\oc}_A])^{\dagger})^{\dagger}
        \ar[d]&&
[R^{\oc}_C](\coll{\tau}[L^{\oc}_B])^{\dagger} \bullet
(\coll{\sigma}[L^{\oc}_A])^{\dagger}
        \ar[d]\\
[R^{\oc}_C](\coll{\tau}[L^{\oc}_B] \bullet
(\coll{\sigma}[L^{\oc}_A])^{\dagger})^{\dagger}
        \ar[rr]&&
[R^{\oc}_C]((\coll{\tau}[L^{\oc}_B])^{\dagger} \bullet
(\coll{\sigma}[L^{\oc}_A])^{\dagger})
}
\]
with all arrows the obvious structural maps or obtained
by Lemmas \ref{lem:comp_ren_comp}, \ref{lem:comp_ren_id},
\ref{lem:pder} and \ref{lem:pprom}.
\end{lem}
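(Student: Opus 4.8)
Following Lemmas~\ref{lem:pres_id_l} and~\ref{lem:pres_id_r}, the plan is to observe that the diagram is an equation between natural isomorphisms of distributors $\Sym(\comp{A})^{\op} \times \comp{\oc C} \to \Set$, so it suffices to fix objects $\seq{x_{A,k}}_{k\in p} \in \Sym(\comp{A})$ and $(x_{C,l})_{l\in L} \in \comp{\oc C}$ and check that the two composites agree on a generic witness of $\coll{\tau^{\oc} \odot \sigma^{\oc}}[L^{\oc}_A](\seq{x_{A,k}}_{k\in p}, (x_{C,l})_{l\in L})$. First I would unfold this witness. Its middle component is a $+$-covered configuration of $\tau^{\oc}\odot\sigma^{\oc}$, \emph{i.e.}\ a causally compatible pair of a family $(x^{\sigma,i})_{i\in I}\in\confp{\sigma^{\oc}}$ and a family $(y^{\tau,m})_{m\in M}\in\confp{\tau^{\oc}}$ matching on $\oc B$; as in the proof of Lemma~\ref{lem:pprom}, positivity on the $\oc C$-side forces $M=L$ with trivial permutation. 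The witness thus records the nested $A$-side symmetries $\theta^-_{i,j}$ together with a reindexing bijection relating $\bigsqcup_{i\in I} J_i$ to $p$, the two families of configurations, and the $C$-side positive symmetries $\theta^+_l$, while the inner matching on $\oc B$ supplies the bookkeeping on which $\join_{\sigma,\tau}$ and $\mu$ will act.

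Then I would trace the two paths using the explicit descriptions already available. Down the right, the compositor $\pcomp^{\sigma^{\oc},\tau^{\oc}}$ collapses the single composition $\tau^{\oc}\odot\sigma^{\oc}$, the structural isomorphisms of Lemmas~\ref{lem:comp_ren_comp} and~\ref{lem:comp_ren_id} slide the renamings $[L^{\oc}_A]$, $[R^{\oc}_B]$, $[L^{\oc}_B]$ across $\bullet$ using the adjunction $L^{\oc}_B \dashv R^{\oc}_B$ of Proposition~\ref{prop:eq_oc_sym}, and the preservators $\pprom_\sigma$, $\pprom_\tau$ of Lemma~\ref{lem:pprom} transport each promotion separately. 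Down the left, the image of $\join_{\sigma,\tau}$ first re-associates $(\tau\odot\sigma^{\oc})^{\oc}$ into $\tau^{\oc}\odot\sigma^{\oc}$ at the level of games, then $\pprom_{\tau\odot\sigma^{\oc}}$, the compositor $\pcomp^{\sigma^{\oc},\tau}$ and the same structural maps transport the witness into $\Dist$, and finally $\mu$ re-associates on the distributor side according to its defining formula $\prom{t_i\bullet\prom{s_{i,j}}^{\pi_i}_{j}}^{\pi} \mapsto \prom{t_i} \bullet \prom{s_{i,j}}^{\pi\circ\sum\pi_i}$. Reading off the action of $\pprom$ from Lemma~\ref{lem:pprom}, both paths deliver a witness of $[R^{\oc}_C]((\coll{\tau}[L^{\oc}_B])^{\dagger}\bullet(\coll{\sigma}[L^{\oc}_A])^{\dagger})$ of the shape $\prom{\dots} \bullet \prom{\dots}^{\rho}$ whose underlying configurations $x^{\sigma,i}$, $y^{\tau,m}$ and whose component symmetries $\theta^-_{i,j}$, $\theta^+_l$ visibly coincide.

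The hard part will be the last remaining point: that the two resulting reindexing permutations $\rho$ are equal. Both $\join_{\sigma,\tau}$ and $\mu$ implement the \emph{same} flattening of a family of families -- concatenation via the fixed pairing $\tuple{-,-}$ together with the lexicographic orderings $\kappa_{I,(J_i)}$ -- but they apply it in opposite orders (the games side re-associates before collapsing, the distributor side collapses before re-associating). The verification therefore reduces to an identity between composites of the $\kappa_{I,(J_i)}$, the auxiliary bijections $\overline{\pi}$ introduced in Lemma~\ref{lem:pprom}, and the $\sum\pi_i$ appearing in $\mu$, expressing exactly the associativity and naturality of the $\bigsqcup$-flattening. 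As in Lemmas~\ref{lem:pres_id_l} and~\ref{lem:pres_id_r}, I expect no conceptual difficulty here beyond keeping this index-reshuffling consistent throughout; the calculation is routine but lengthy, and I would omit its details.
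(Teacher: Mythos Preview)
Your proposal is correct and follows essentially the same approach as the paper: fix objects, unfold a generic witness of $\coll{\tau^{\oc}\odot\sigma^{\oc}}[L^{\oc}_A]$, chase it along both sides, and observe that the results differ only in their reindexing permutations, whose equality is a routine bookkeeping identity involving the $\kappa$'s and $\overline{\pi}$'s. The paper is somewhat more explicit in writing out the two resulting witnesses (introducing the auxiliary bijections $\kappa_{I,(J),(K)}$, $\omega$, $\omega'$ and stating exactly which index-ordering each encodes), but like you it ultimately declares the final permutation comparison ``routine'' and omits it.
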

\begin{proof}
This diagram states an equality between natural isomorphisms
between distributors
\[
\Sym(\comp{A})^{\op} \times \comp{\oc C} \to \Set
\]
which again can be checked pointwise. Taking $\seq{x_{A,l}}_{l \in
n} \in \Sym(\comp{A})$ and $(x_{C, i})_{i\in I} \in
\comp{\oc C}$, an element of $\coll{\tau^{\oc} \odot
\sigma^{\oc}}[L^{\oc}_A](\seq{x_{A,l}}_{l \in n}, (x_{C, i})_{i\in I})$
comprises
\[
\,[(\theta^-_{i,j,k})_{\seq{\seq{i,j},k}}^\pi\,,
\quad
(x^{\tau,i})_{i\in I} \odot (x^{\sigma, i, j})_{\seq{i,j}\,\in
\,\bigsqcup_{i\in I} J_i}\,,
\quad
(\theta^+_i)_{i\in I}
]
\]
where $\pr_\tau\,x^{\tau,i} = (x^{\tau,i}_{B,j})_{j\in J_i} \vdash
x^{\tau,i}_C$, $\pr_\sigma\,x^{\sigma,i,j} = (x^{\sigma, i, j}_{A,
k})_{k\in K_{i,j}}$, so that 
\[
\pr_{\sigma^{\oc}}((x^{\sigma, i, j})_{\seq{i,j}}) =
(x^{\sigma,i,j}_{A,k})_{\seq{\seq{i,j}, k}} \vdash
(x^{\sigma,i,j}_B)_{\seq{i,j}}\,,
\qquad
\pr_{\tau^{\oc}}((x^{\tau,i})_{i\in I}) =
(x^{\tau,i}_{B,j})_{\seq{i,j}} \vdash (x^{\tau, i}_C)_{i\in I}
\]
with $x^{\sigma,i,j}_B = x^{\tau, i}_{B, j}$, $\pi :
\bigsqcup_{\seq{i,j}} K_{i,j} \bij n$, $\theta^-_{i,j,k} : x_{A, \pi
\seq{\seq{i,j},k}} \sym_A^- x^{\sigma, i, j}_{A, k}$ and $\theta^+_i :
x^{\tau, i}_C \sym_C^+ x_{C, i}$.

For the left hand side path, a careful calculation shows that this is sent to
\[
\prom{[(x^{\tau, i}_{B,j})_{j\in J_i}^{\kappa_{J_i}}, x^{\tau, i},
\theta^+_i]}_{\card{i} \in \card{I}} \bullet
\prom{[(\theta^-_{ijk})_{k\in K_{ij}}^{\kappa_{K_{ij}}}, x^{\sigma,i,j},
x^{\sigma,i,j}_B]}_{\card{i} \in \card{I}, \card{j} \in
\card{J_i}}^{\pi'}
\]
where $\pi' : n \bij n$ is the only bijection such that $\pi' \circ
\kappa_{I,(J),(K)} = \pi$, where $\kappa_{I,(J), (K)} :
\bigsqcup_{\seq{i,j}} K_{i,j} \bij \sum_{i\in I} \sum_{j\in J_i}
\card{K_{ij}} = n$ arranges the triples $(i,j,k)$ in lexicographic
order.

Likewise, for the right hand side path, it is sent to
\[
\prom{[(x^{\tau, i}_{B,j})_{j\in J_i}^{\kappa_{J_i}}, x^{\tau,i},
\theta^+_i]}_{\card{i} \in \card{I}}^{\omega} \bullet
\prom{[(\theta^-_{ijk})_{k\in K_{ij}}^{\kappa_{K_{ij}}}, x^{\sigma,i,j},
x^{\sigma,i,j}_B]}^{\omega'}_{\card{\seq{i,j}} \in
\card{\sqcup_{i\in I} J_i}}
\]
where $\omega : \sum_{i\in I} \card{J_i} \bij \sum_{i\in I}
\card{J_i}$ is obtained as $\kappa_{\sqcup_i J_i} \circ \kappa_{I,
(J)}^{-1}$ -- that is, it links the lexicographic ordering of pairs
$(i,j)$, and their ordering following their representations
$\seq{i,j} \in \mathbb{N}$. Likewise, $\omega' : n \bij n$ is the
unique bijection such that $\omega' \circ \kappa_{\sqcup_i J_i, (K)} =
\pi$ -- that is, it is $\pi$ where $i,j,k$ are ordered in the
lexicographic order of the pairs $\seq{i,j}, k$. 

It is then routine to
show that these two witnesses are equivalent, as required.
\end{proof}

\subsubsection{Preservation of Kleisli composition} Finally, we are
equipped to show how $\coll{-}$ lifts to the Kleisli bicategories.
Recall that, for $\sigma \in \WVis[\oc A, B]$, we have
\[
\collexp{\sigma} = \coll{\sigma}[R^\oc_A] \in \Dist[\Sym(\comp{A}),
\comp{B}]\,.
\]

Thus, we may set:
\[
\pid^{\oc}_A = \pder_A : \collexp{\der_A} \iso \dder_{\comp{A}}
\]
for the natural isomorphism witnessing preservation of identity.
Likewise, for $\sigma \in \WVis[\oc A, B]$ and $\tau \in \WVis[\oc B,
C]$,  we set the natural iso witnessing preservation of composition as
\[
\begin{array}{rclcl}
\pcomp^{\oc}_{\sigma, \tau} &:& \collexp{\tau \odot \sigma^{\oc}} &=&
\coll{\tau \odot \sigma^{\oc}}[L^{\oc}_A]\\
&&&\stackrel{\pcomp_{\sigma, \tau}}{\iso}&
(\coll{\tau} \bullet \coll{\sigma^{\oc}})[L^{\oc}_A]\\
&&&\iso& \coll{\tau} \bullet \coll{\sigma^{\oc}}[L^{\oc}_A]\\
&&&\stackrel{\pprom_\sigma}{\iso} &
\coll{\tau} \bullet [R^{\oc}_B](\coll{\sigma}[L^{\oc}_A])^{\dagger}\\
&&&\iso& \coll{\tau}[L^{\oc}_B] \bullet
(\coll{\sigma}[L^{\oc}_A])^{\dagger}\\
&&&=& \collexp{\tau} \bullet \collexp{\sigma}^{\dagger}
\end{array}
\]

These definitions provide the necessary components for:

\begin{thm}
This provides the data for a pseudofunctor
\[
\collexp{-} : \WVis_{\oc} \to \Esp\,.
\]
\end{thm}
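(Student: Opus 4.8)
The plan is to verify, against the definition of a pseudofunctor of bicategories, that the data assembled above satisfies the required axioms. Recall that this data consists of the object assignment $A \mapsto \comp{A}$, for each pair $A, B$ a functor $\WVis_\oc[A, B] \to \Esp[\comp{A}, \comp{B}]$ sending a winning strategy $\sigma$ to $\collexp{\sigma}$ and a positive morphism $f$ to the restriction of $\coll{f}$ along $L^\oc_A$, the unitors $\pid^\oc_A$, and the compositors $\pcomp^\oc_{\sigma, \tau}$. The guiding observation is that the Kleisli structure on both sides is governed by a relative pseudocomonad --- $(\oc, \der, (-)^\oc, \join, \runit, \lunit)$ on $\WVis$ and $(\Sym, \dder, (-)^\dagger, \mu, \eta, \theta)$ on $\Dist$ --- so that the associator and unitors of $\WVis_\oc$ and $\Esp$ are exactly the Kleisli structural $2$-cells built from these components. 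The three coherence axioms of a pseudofunctor then reduce to three compatibility statements between the preservators and the pseudocomonad data, which are precisely Lemmas \ref{lem:pres_id_l}, \ref{lem:pres_id_r}, and \ref{lem:pres_join}.

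First I would record that the assignment on hom-categories is functorial: on $2$-cells it factors as the functorial action of $\coll{-}$ (Theorem \ref{th:pseudo_vws_dist}) followed by the functor $-[L^\oc_A] : \Dist[\comp{\oc A}, \comp{B}] \to \Dist[\Sym(\comp{A}), \comp{B}]$, and hence preserves identities and vertical composition. Next I would check that $\pid^\oc$ and $\pcomp^\oc$ are natural isomorphisms. The unitor $\pid^\oc_A = \pder_A$ is invertible by Lemma \ref{lem:pder}. The compositor $\pcomp^\oc_{\sigma, \tau}$ is by construction a vertical composite of the compositor $\pcomp_{\sigma, \tau}$ of $\coll{-}$ (invertible and natural in $\sigma, \tau$ since $\coll{-}$ is a pseudofunctor by Theorem \ref{th:pseudo_vws_dist}, using the naturality of $\pcomp^{\sigma,\tau}$ of Proposition \ref{prop:compositor}), the structural renaming isomorphisms of Lemmas \ref{lem:comp_ren_comp} and \ref{lem:comp_ren_id}, and the promotion preservator $\pprom_\sigma$ (invertible and natural in $\sigma$ by Lemma \ref{lem:pprom}); each factor is a natural isomorphism, so $\pcomp^\oc$ is too.

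It then remains to discharge the three coherence axioms of a pseudofunctor: one associativity axiom, relating the compositor to the associators of source and target, and two unit axioms, relating the compositor and the unitor to the unitors of source and target. Unfolding the associativity axiom in terms of the Kleisli structure, the games multiplication $\join$ and the distributor multiplication $\mu$ appear, and the axiom becomes exactly the commuting diagram of Lemma \ref{lem:pres_join}. The two unit axioms involve, respectively, the pair $\runit/\eta$ and the pair $\lunit/\theta$; unfolded, these become exactly the diagrams of Lemmas \ref{lem:pres_id_r} and \ref{lem:pres_id_l}. Since all three diagrams have already been shown to commute, the coherence axioms hold and $\collexp{-}$ is a pseudofunctor.

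The main obstacle is organisational rather than conceptual. For each abstract coherence diagram one must identify the precise composite of renamings $-[L^\oc_A]$, $[R^\oc_B]-$ and pseudocomonad cells appearing on each path, and confirm it is literally the statement of the corresponding lemma; this requires care in threading the adjoint equivalence $L^\oc_A \simeq R^\oc_A$ of Proposition \ref{prop:eq_oc_sym} through the structural maps. The genuinely heavy combinatorics --- the matching of reindexing bijections $\kappa_{I, (J_i)}$ and the induced permutations --- has already been absorbed into Lemma \ref{lem:pres_join}, so at the level of the present theorem only this translation between the abstract axioms and the concrete lemmas remains.
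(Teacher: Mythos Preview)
Your proposal is correct and follows essentially the same approach as the paper: the paper's proof simply states that naturality of $\pcomp^\oc_{\sigma,\tau}$ is direct from composing natural isomorphisms, and that the coherence diagrams follow by diagram chasing using Lemmas \ref{lem:pres_id_l}, \ref{lem:pres_id_r}, and \ref{lem:pres_join}. Your version is a careful expansion of exactly this outline, with the minor caveat that the naturality of $\pcomp^{\sigma,\tau}$ in $\sigma,\tau$ is the lemma immediately following Proposition \ref{prop:compositor} rather than the proposition itself.
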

\begin{proof}
Naturality of $\pcomp^{\oc}_{\sigma, \tau}$ in $\sigma$ and $\tau$ is
direct by composition of natural isos. The coherence
diagrams follow by diagram chasing, relying on Lemmas
\ref{lem:pres_id_l}, \ref{lem:pres_id_r} and \ref{lem:pres_join}.
\end{proof}

\subsection{Cartesian closed structure}
\label{subsec:cc_vis}

We show the bicategory $\WVis_\oc$ is cartesian closed, using typical
constructions in concurrent games, in sufficient detail to keep the
paper self-contained.  
For a precise definition of the structure of cartesian closed bicategories we refer to \cite{philip:thesis}.

\subsubsection{Cartesian products in $\WVis_{\oc}$} 
The empty board $\top$, with $\kappa_\top(\emptyset) = 1$, is strict,
and it is direct that $\top$ is a terminal object in $\WVis_{\oc}$,
since any negative strategy $A \vdash \top$ must be empty.  
Now if $A, B$ are strict, the \textbf{product board}
$A \with B$ is defined as $A \tensor B$, except that all events of $A$
are in conflict with events of $B$. This means that
configurations of $A \with B$ are either empty, or of the form $\{1\} \times x$
for $x \in \conf{A}$ (written $\inj_1(x)$) or $\{2\} \times x$ for $x
\in \conf{B}$ (written $\inj_2(x)$). For the payoff, we set
$\kappa_{A\with B}(\emptyset) = 1$ and $\kappa_{A_1 \with
A_2}(\inj_i(x)) = \kappa_{A_i}(x)$, making $A \with B$ a strict arena.
By strictness, we have an isomorphism
\[
L^{\with}_{A, B} : \comp{A\with B} \iso \comp{A} + \comp{B} : R^{\with}_{A, B}
\]
which reflects the definition of binary products in $\Esp$.
The \textbf{first projection} 
$\pi_A \in \WVis[\oc (A \with B), A]$
has ess $\cc_A$, with display map the map of ess characterized by
\[
\pr(\cc_x) = (\inj_1(x))_{\{0\}} \vdash x\,,
\]
with the \textbf{second projection} defined symmetrically. 
If $\sigma \in \WVis[\oc \Gamma, A]$ and $\tau \in \WVis[\oc \Gamma, B]$,
their \textbf{pairing} has ess $\sigma \with \tau$ and display map the
unique such that
\[
\pr(\inj_1(x^\sigma)) = x^\sigma_{\oc \Gamma} \vdash \inj_1(x^\sigma_A)
\in \conf{\oc \Gamma \vdash A \with B}\,,
\]
and likewise for $\pr(\inj_2(x^\tau))$, yielding $\pair{\sigma, \tau}
\in \WVis[\oc \Gamma, A\with B]$. This extends to a functor
$\pair{-, -} : \WVis_{\oc}[\Gamma, A] \times \WVis_{\oc}[\Gamma, B] \to
\WVis_{\oc}[\Gamma, A \with B]$ in a straightforward way.

\begin{prop}
For any strict arenas $\Gamma, A$ and $B$, there is an adjoint
equivalence
\[
\begin{tikzcd}[row sep=0pt, column sep=10pt]
  \WVis_{\oc}[\Gamma,A \with B]
  \ar[rr,bend left=15,"{(\pi_A \odot (-)^\oc, \pi_B \odot (-)^\oc )}"]
  &
\quad  \simeq
  &
  \WVis_{\oc}[\Gamma,A]
  \times
  \WVis_{\oc}[\Gamma,B]
  \ar[ll,bend left=15,"\tuple{-, -}"]
\end{tikzcd}
\]
providing the data to turn $A \with B$ into a cartesian product in the
bicategorical sense.  
\end{prop}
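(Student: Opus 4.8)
The plan is to establish the equivalence by exhibiting the pairing $\tuple{-,-}$ as a pseudo-inverse of the functor $(\pi_A \odot (-)^{\oc}, \pi_B \odot (-)^{\oc})$, and then to upgrade this equivalence to an adjoint one. Since both functors are already in hand, the content of the proof lies in two families of natural isomorphisms, which I would build at the level of $+$-covered configurations and then transport to strategies using the fact that positive isomorphisms are uniquely determined by their action on $\confp{-}$ (as in \cite[Lemma~7.2.11]{hdr}). The key structural input is strictness: by $L^{\with}_{A,B} : \comp{A \with B} \iso \comp{A} + \comp{B}$, together with the conflict between all $A$-events and all $B$-events and the pointedness of winning strategies, every \emph{nonempty} $+$-covered configuration of a strategy into $A \with B$ must contain an initial move of $A \with B$, hence commits its output entirely to $\inj_1(\conf{A})$ or entirely to $\inj_2(\conf{B})$; the only configuration with empty output is $\emptyset$ itself.

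First I would treat the counit, showing $\pi_A \odot \tuple{\sigma,\tau}^{\oc} \iso \sigma$ and symmetrically for $B$. Unfolding Kleisli composition via Proposition~\ref{prop:sync_sym}, a $+$-covered configuration of $\pi_A \odot \tuple{\sigma,\tau}^{\oc}$ is a causally compatible pair of a $+$-covered configuration $\cc_x$ of $\pi_A$ and one of $\tuple{\sigma,\tau}^{\oc}$, matching on $\oc(A \with B)$ up to symmetry. Since the display map of $\pi_A$ sends $\cc_x$ to $(\inj_1(x))_{\{0\}} \vdash x$, the matching forces the output family of $\tuple{\sigma,\tau}^{\oc}$ to be a single copy of the form $\inj_1(x)$ (or the empty family when $x = \emptyset$); by the decomposition above this copy is exactly a $+$-covered configuration of $\sigma$. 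Copycat on $A$ then transports it to the $A$-component, and the collapse of the single active copy on the $\oc\Gamma$ side is precisely the reindexing carried by the dereliction unitor $\runit$. This yields an order-isomorphism $\confp{\pi_A \odot \tuple{\sigma,\tau}^{\oc}} \iso \confp{\sigma}$ commuting with display maps and compatible with symmetry, hence the required isomorphism of strategies.

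Next I would treat the unit, showing $\tuple{\pi_A \odot \rho^{\oc}, \pi_B \odot \rho^{\oc}} \iso \rho$ for $\rho \in \WVis[\oc\Gamma, A \with B]$. By the structural input above, $\confp{\rho}$ partitions as $\{\emptyset\}$ together with the configurations whose $A \with B$-image is a nonempty $\inj_1(\conf{A})$, and those whose image is a nonempty $\inj_2(\conf{B})$. The counit analysis identifies the first nonempty family with $\confp{\pi_A \odot \rho^{\oc}}$ and the second with $\confp{\pi_B \odot \rho^{\oc}}$. The pairing has strategy $(\pi_A \odot \rho^{\oc}) \with (\pi_B \odot \rho^{\oc})$, whose $+$-covered configurations are the shared $\emptyset$ together with the nonempty configurations of each side placed in conflict; this reassembles exactly the partition of $\confp{\rho}$, giving the desired bijection on $\confp{-}$ and hence the isomorphism of strategies.

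Finally I would assemble these two isomorphisms into an adjoint equivalence, pseudonatural in $\Gamma$. Since both isomorphisms are uniquely determined by their action on $\confp{-}$, the triangle identities reduce to equalities of maps on $+$-covered configurations, which hold by construction; this promotes the equivalence to an adjoint equivalence and exhibits $A \with B$ as a bicategorical product. Pseudonaturality in $\Gamma$ amounts to compatibility of both isomorphisms with Kleisli precomposition, i.e.\ with the relative pseudocomonad structure of $\oc$, and follows from the explicit description of promotion on $\confp{-}$. I expect the main obstacle to be the copy-index and symmetry bookkeeping: one must check that the single surviving copy is reindexed exactly as in the unitors $\runit$ and $\lunit$ rather than up to an uncontrolled symmetry, and -- most delicately -- that pointedness and strictness really do prevent a nonempty $+$-covered configuration from having empty output, so that the pairing glues the $A$- and $B$-parts along the single shared configuration $\emptyset$ instead of duplicating a neutral region.
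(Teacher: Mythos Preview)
Your proposal is correct and follows essentially the same approach as the paper: construct the unit and counit as positive isomorphisms specified by their action on $+$-covered configurations, using that nonempty $+$-covered configurations of a strategy into $A\with B$ commit to one side (by negativity and strictness), and then verify the triangle identities directly on configurations. One small correction: when you unfold $\pi_A \odot \tuple{\sigma,\tau}^{\oc}$, you should invoke Proposition~\ref{prop:char_comp} (strictly matching causally compatible pairs), not Proposition~\ref{prop:sync_sym}, which concerns synchronization up to symmetry and is used for horizontal composition of $2$-cells rather than for composing strategies; the strict matching already forces the single copy at index $\{0\}$, and the residual reindexing on $\oc\Gamma$ is then absorbed by the positive symmetry witnessing $\sim^+$, exactly as you anticipate.
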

\begin{proof}
To witness the equivalence, we need three natural isomorphisms 
\[
\eta_\sigma : \sigma \iso \tuple{\pi_1 \odot \sigma^{\oc}, \pi_2 \odot
\sigma^{\oc}}\,,
\qquad
\epsilon^i_{\sigma_1, \sigma_2} :
\pi_i \odot \pair{\sigma_1, \sigma_2}^{\oc} \iso \sigma_i
\]
for $\sigma \in \WVis_{\oc}[\Gamma, A_1\with A_2]$, $\sigma_i \in
\WVis_{\oc}[\Gamma, A_i]$ and $i \in \{1, 2\}$.

We focus first on the former. Notice that for all $x^{\sigma} \in
\confp{\sigma}$, if $x^\sigma$ is non-empty then as $\sigma$ is
negative and the $A_i$ are strict, the minimal event of $x^\sigma$ must
occur in $A_1$ or $A_2$. We set
\[
\begin{array}{rcrclcl}
\eta_\sigma &:& \sigma &\iso& \tuple{\pi_1 \odot \sigma^{\oc}, \pi_2
\odot \sigma^{\oc}}\\
&&\emptyset &\mapsto& \emptyset\\
&& x^\sigma &\mapsto& \inj_i(\cc_{x^\sigma_{A_i}} \odot
(x^\sigma)_{\{0\}})&\quad&
\text{(if $x^\sigma \neq \emptyset$, $\pr(x^\sigma) = x^\sigma_\Gamma
\vdash \inj_i(x^\sigma_{A_i})$)}\,,
\end{array}
\]
yielding a positive isomorphism, additionally natural in $\sigma$. For
the co-unit, we set
\[
\begin{array}{rcrclcl}
\epsilon^i_{\sigma_1, \sigma_2} &:& \pi_i \odot \tuple{\sigma_1,
\sigma_2}^{\oc} &\iso& \sigma_i\\
&& \cc_{x^{\sigma_i}_{A_i}} \odot (\inj_i(x^{\sigma_i}))_{\{0\}}
&\mapsto&
x^{\sigma_i}
\end{array}
\]
again a positive isomorphism natural in $\sigma_1$ and $\sigma_2$. For
the adjunction, we check that
\[
\xymatrix{
\pi_i \odot \sigma^{\oc}
	\ar[r]^{\hspace{-40pt}\pi_i \odot \eta_\sigma^{\oc}}
	\ar[dr]_{\id}&
\pi_i \odot \tuple{\pi_1 \odot \sigma^{\oc}, \pi_2 \odot \sigma^{\oc}}
	\ar[d]^{\epsilon^i_{\pi_1 \odot \sigma^{\oc}, \pi_2 \odot
\sigma^{\oc}}}\\
&\pi_i \odot \sigma^{\oc}
}
\qquad
\xymatrix{
\tuple{\sigma_1, \sigma_2}
	\ar[r]^{\hspace{-50pt}\eta_{\tuple{\sigma_1, \sigma_2}}}
	\ar[dr]_{\id}&
\tuple{\pi_1 \odot \tuple{\sigma_1, \sigma_2}^{\oc}, \pi_2 \odot
\tuple{\sigma_1, \sigma_2}^{\oc}}
	\ar[d]^{\tuple{\epsilon^1_{\sigma_1, \sigma_2},
\epsilon^2_{\sigma_1, \sigma_2}}}\\
&\tuple{\sigma_1, \sigma_2}
}
\]
commute for all $i=1, 2$, $\sigma \in \WVis_{\oc}[\Gamma, A_1\with
A_2]$, $\sigma_i \in \WVis_{\oc}[\Gamma, A_i]$, which is direct.
\end{proof}

\subsubsection{Closed structure in $\WVis_{\oc}$} 
Recall that the objects of $\WVis_\oc$ are strict boards, and observe
that any strict board $B$ is isomorphic to $\with_{i\in I} B_i$ where
the $B_i$ are \textbf{pointed}, meaning that they have exactly one
minimal event. We first define a \textbf{linear arrow} for a $-$-board
$A$ and a pointed strict board $B$, by setting $A \lin B$ to be $A
\vdash B$ with a stricter dependency order, so that all events in $A$
causally depend on the unique minimal move in $B$. 

This is generalized for any strict $B$ as 
\[
A \lin \with_{i\in I} B_i = \with_{i\in I} (A \lin B_i)\,
\]
whose configurations have a convenient description:

\begin{lem}\label{lem:conf_iso_lin}
For any $-$-board $A$, and strict board $B$, we have
\[
(- \lin -) : \nconf{A} \times \nconf{B} \iso \nconf{A \lin B}\,.
\]
\end{lem}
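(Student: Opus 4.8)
The plan is to reduce to the case where $B$ is pointed, and there to establish the bijection by directly comparing configurations of $A \lin B$ with those of $A \vdash B$. First I would record the elementary observation underlying everything: by definition of the linear arrow, $A \lin B$ has exactly the same events, conflict relation, and payoff function as $A \vdash B = A^\perp \parr B$, and differs only by a \emph{stricter} causal order in which every event of $A$ depends on the unique minimal move of $B$. Consequently $\conf{A \lin B} \subseteq \conf{A \vdash B}$ (down-closure for the stricter order implies down-closure for the weaker one), and every configuration of $A \lin B$ carries the same payoff as the corresponding configuration of $A \vdash B$.

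For the pointed case I would define the forward map by sending $(x_A, x_B) \in \nconf{A} \times \nconf{B}$ to the set $x_A \vdash x_B$, viewed inside $A \lin B$. The conceptual heart of the argument is that strictness of $B$ forces $\kappa_B(\emptyset) = 1 \neq 0$, so every complete $x_B$ is non-empty and therefore, as $B$ is pointed, contains the unique minimal move of $B$. Hence the additional causal dependencies imposed by $\lin$ are automatically satisfied, $x_A \vdash x_B$ is down-closed, and it is a genuine configuration of $A \lin B$; its payoff is $\kappa_{A^\perp}(x_A) \parr \kappa_B(x_B) = 0 \parr 0 = 0$ by the par table of Figure~\ref{fig:def_payoff}, so it is complete. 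Injectivity is immediate since the decomposition $\conf{A} \times \conf{B} \iso \conf{A \vdash B}$ of \eqref{eq:isotensor} is already a bijection.

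For surjectivity in the pointed case I would take $z \in \nconf{A \lin B}$; by the previous paragraph $z$ is a configuration of $A \vdash B$, so \eqref{eq:isotensor} writes it uniquely as $z_A \vdash z_B$. Reading off the par table, $\kappa_{A^\perp}(z_A) \parr \kappa_B(z_B) = 0$ forces $\kappa_{A^\perp}(z_A) = \kappa_B(z_B) = 0$, i.e.\ $z_A \in \nconf{A}$ and $z_B \in \nconf{B}$, so $z = z_A \lin z_B$ lies in the image. This settles the pointed case.

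Finally I would lift to arbitrary strict $B$ using the decomposition $B \iso \with_{i\in I} B_i$ into pointed boards together with the definitional equality $A \lin B = \with_{i\in I}(A \lin B_i)$. Since in a product board all components are in pairwise conflict and the empty configuration has payoff $1$, the complete configurations of any $\with_{i\in I} C_i$ are exactly the $\inj_i(w)$ with $w \in \nconf{C_i}$, giving $\nconf{\with_{i\in I} C_i} \iso \biguplus_{i\in I} \nconf{C_i}$. Applying this together with the pointed case yields $\nconf{A \lin B} \iso \biguplus_{i\in I}(\nconf{A} \times \nconf{B_i})$, which matches $\nconf{A} \times \nconf{B} \iso \biguplus_{i\in I}(\nconf{A} \times \nconf{B_i})$ by distributivity of cartesian product over disjoint union; unwinding the identifications shows the composite is precisely $(-\lin-)$. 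The main obstacle is the pointed case, and within it the single load-bearing point is that completeness plus strictness guarantees the minimal $B$-move is always present, so the stricter causal order of $\lin$ neither discards any complete configuration nor creates a new one.
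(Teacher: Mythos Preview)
Your proof is correct and follows essentially the same approach as the paper. The paper's own proof is a one-line sketch (``Straightforward, using crucially the fact that since $B$ is strict, a configuration of null payoff in $B$ must be non-empty, so that the $A$ component is always reachable''), and your argument simply spells this out in full: you correctly identify that strictness forces any complete $x_B$ to contain the minimal move of $B$, so the extra causal dependencies of $\lin$ are automatically met, and you handle the reduction to the pointed case and the payoff bookkeeping via the $\parr$-table explicitly.
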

\begin{proof}
Straightforward, using crucially the fact that since $B$ is strict, a
configuration of null payoff in $B$ must be non-empty, so that the $A$
component is always reachable.
\end{proof}

Now for $A, B$ strict, we define the \textbf{arrow} $A \tto B$ as 
$\oc A \lin B$. This is equipped with 
\[
\evm_{A, B} \in \WVis[\oc ((\oc A \lin B) \with A), B]
\]
an \textbf{evaluation} strategy
consisting of the ess $\cc_{\oc A \lin B}$, and where the display map
is
\[
\pr(\cc_{(x^i_A)_{i\in I} \lin x_B}) = (\inj_1((x^i_A)_{i\in I} \lin
x_B))_{\{\tuple{0,0}\}} \uplus
(\inj_2(x^i_A))_{\tuple{1, i} \in \Sigma_{\{1\}} I} \vdash x_B
\]
if $x_B \neq \emptyset$, and empty otherwise,
with $\uplus$ the union of families with disjoint index sets.
Likewise, the \textbf{currying} of $\sigma \in \WVis[\oc (\Gamma \with
A), B]$ is a strategy $\Lambda(\sigma)$ with ess $\sigma$ and display
\[
\pr_{\Lambda(\sigma)}(x^\sigma) = (x_\Gamma^i)_{i\in I} \vdash (x_A^j)_{j\in J} \lin x_B
\]
for $x^\sigma \neq \emptyset$, where
$\pr_\sigma(x^\sigma) = (\inj_1(x_\Gamma^i))_{i\in I} \uplus
(\inj_2(x_A^j))_{j\in J} \vdash x_B$. 

Altogether, this gives a functor $\Lambda : \WVis_{\oc}[\Gamma \with A,
B] \to \WVis_{\oc}[\Gamma, A \tto B]$.

\begin{prop}\label{prop:arrow_wis}
For any strict $\Gamma, A$ and $B$, there is an adjoint equivalence
\[
\begin{tikzcd}[row sep=0pt, column sep=10pt]
  \WVis_{\oc}[\Gamma, A\tto B]
  \ar[rr,bend left=20,"{\evm_{A, B} \odot \pair{- \odot
\pi_\Gamma^{\oc}, \pi_A}^{\oc}}"]
  &
  \simeq
  &
  \WVis_{\oc}[\Gamma \with A, B]
  \ar[ll,bend left=20,"\Lambda(-)"]
\end{tikzcd}
\]
providing the data to turn $A \Rightarrow B$ into an exponential object in the bicategorical sense.
\end{prop}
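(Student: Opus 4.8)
The plan is to realize the adjoint equivalence by exhibiting its unit and counit directly as positive isomorphisms of strategies, verifying all the required equations at the level of $+$-covered configurations. The guiding principle is Lemma~\ref{lem:iso_confp}: a positive isomorphism between winning strategies is uniquely determined by its action on $+$-covered configurations, so it suffices to define the data and check the laws pointwise on $\confp{-}$.

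First I would unfold the two round-trip functors on $+$-covered configurations. For $\sigma \in \WVis_{\oc}[\Gamma \with A, B]$, the strategy $\Lambda(\sigma)$ has the same underlying ess as $\sigma$, so its $+$-covered configurations are exactly those of $\sigma$, only re-displayed. The auxiliary strategies $\pi_\Gamma$, $\pi_A$ and $\evm_{A, B}$ are all built on copycat; hence their composites are deadlock-free (Lemma~\ref{lem:deadlock_free}) and, by the characterization of composition in Proposition~\ref{prop:char_comp}, their $+$-covered configurations are causally compatible pairs in which copycat merely relabels and rearranges components rather than adding genuine interaction. I therefore expect a canonical bijection between $\confp{\evm_{A, B} \odot \pair{\Lambda(\sigma) \odot \pi_\Gamma^\oc, \pi_A}^\oc}$ and $\confp{\sigma}$ that commutes with the display maps up to the reindexings $L^\oc, R^\oc$ and the product isomorphism $L^\with, R^\with$; and dually for the composite $\Lambda(\evm_{A, B} \odot \pair{\tau \odot \pi_\Gamma^\oc, \pi_A}^\oc)$ against $\tau$.

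Second, I would define the unit $\eta_\sigma : \sigma \iso \evm_{A, B} \odot \pair{\Lambda(\sigma) \odot \pi_\Gamma^\oc, \pi_A}^\oc$ and the counit $\epsilon_\tau : \Lambda(\evm_{A, B} \odot \pair{\tau \odot \pi_\Gamma^\oc, \pi_A}^\oc) \iso \tau$ as the unique positive isomorphisms realizing these bijections (Lemma~\ref{lem:iso_confp}). That each is a genuine positive isomorphism amounts to checking that the bijection preserves the causal order and commutes with display up to positive symmetry, which is immediate once the configurations are matched; naturality in $\sigma$ and $\tau$ follows from functoriality of $\Lambda$, $\odot$, $\pair{-, -}$ and $(-)^\oc$, again verified pointwise. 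Pseudonaturality in $\Gamma$ is inherited from the uniformity of the construction.

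The real work, and the step I expect to be the main obstacle, is verifying the two triangle identities. This reduces once more to a pointwise computation on $+$-covered configurations, but now one must track the copy-index bookkeeping carefully: promotion $(-)^\oc$ re-indexes copies through the fixed pairing $\tuple{-, -} : \mathbb{N}^2 \to \mathbb{N}$, the pairing $\pair{-, -}$ together with $\inj_1, \inj_2$ partitions the context $\oc(\Gamma \with A)$, and the monotone reindexings such as $\kappa_I$ and $\kappa_{I, (J_i)}$ enter through $L^\oc$ and $R^\oc$. The crux is to show that the permutations accumulated along the two legs of each triangle coincide---the same flavour of calculation as in Lemmas~\ref{lem:pres_id_l}, \ref{lem:pres_id_r} and \ref{lem:pres_join}, but internal to $\WVis_{\oc}$. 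Once these reindexings are shown to cancel, both triangles collapse to identities on $+$-covered configurations, and the adjoint equivalence follows.
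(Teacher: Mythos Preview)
Your proposal is correct and follows essentially the same approach as the paper: define the unit and counit as positive isomorphisms specified on $+$-covered configurations (the paper gives the explicit formulas you would arrive at), then verify the triangle identities by direct calculation tracking the copy-index reindexings. The only cosmetic difference is that you swap which composite is called the unit and which the counit relative to the paper's naming, which is immaterial for an adjoint equivalence.
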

\begin{proof}
To witness the equivalence, we need two natural isomorphisms
\[
\eta_\sigma : \sigma \iso \Lambda(\evm_{A, B} \odot \tuple{\sigma \odot
\pi_\Gamma^{\oc}, \pi_A}^{\oc})\,,
\qquad
\beta_\tau : \evm_{A, B} \odot \tuple{\Lambda(\tau) \odot
\pi_\Gamma^{\oc}, \pi_A}^{\oc} \iso \tau
\]
for $\sigma \in \WVis_{\oc}[\Gamma, A\tto B]$ and $\tau \in
\WVis_{\oc}[\Gamma \with A, B]$. For the former, we set
\[
\eta_\sigma(x^\sigma) = \cc_{(x_{A, i})_i \lin x_B} \odot
(
(\inj_1(x^\sigma \odot (\cc_{x_{\Gamma, j}})_{j \in J}))_{\pair{0,0}}
\uplus (\inj_2(\cc_{x_{A, i}}))_{\pair{1,i}}
)
\]
where $\pr_\sigma(x^\sigma) = (x_{\Gamma, j})_{j\in J} \vdash (x_{A,
i})_{i\in I} \lin x_B$. A careful check shows this is
well-defined and
\[
\pr_{\Lambda(\evm_{A, B} \odot \tuple{\sigma \odot
\pi_\Gamma^{\oc}, \pi_A}^{\oc})}(\eta_\sigma(x^\sigma)) = 
(x_{\Gamma, j})_{\pair{\pair{0,0},\pair{j,0}}} 
\vdash (x_{A,i})_{\pair{\pair{1,i},0}} \lin x_B
\]
which is clearly positively symmetric to $\pr_\sigma(x^\sigma)$ as
required. Likewise, the expression
\[
y = \cc_{(x_{A, i})_i \lin x_B} \odot (
(\inj_1(x^\tau \odot (\cc_{x_{\Gamma, j}})_{j \in J}))_{\pair{0,0}}
\uplus (\inj_2(\cc_{x_{A, i}}))_{\pair{1,i}}
\]
captures exactly all $+$-covered configurations of $\evm_{A, B} \odot
\tuple{\Lambda(\tau) \odot \pi_\Gamma^{\oc}, \pi_A}^{\oc}$, displaying
to 
\[
(\inj_1(x_{\Gamma, j}))_{\pair{\pair{0,0},\pair{j,0}}}
\uplus (\inj_2(x_{A,i}))_{\pair{\pair{1,i},0}} \vdash x_B
\]
for $\pr_\tau(x^\tau) = (\inj_1(x_{\Gamma,j}))_{j\in J} \uplus
(\inj_2(x_{A, i}))_{i\in I} \vdash x_B$; we set $\beta_\tau(y) =
x^\tau$. Unfolding these definitions, a careful calculation confirms
that the triangle identities hold.
\end{proof}

\subsubsection{Cartesian closed structure in $\Esp$.} Here, we briefly
review the cartesian closed structure of $\Esp$. Firstly, we have
$\top$ the empty groupoid, with an adjoint equivalence
\[
\Esp[\Gamma, \top] \simeq 1
\]
for $1$ the category with one object (and one morphism). For binary
products, given $A$ and $B$
groupoids, we simply set their \textbf{with} as the disjoint sum $A
\with B = A + B$; writing as in games $\inj_1(a) \in A + B$ for $a \in
A$ and $\inj_2(b) \in A + B$ for $b \in B$. 
For projections, we set distributors
\[
\begin{array}{rcrcl}
\pi_1 &:& \Sym(A + B)^{\op} \times A &\to& \Set\\
&& (\tuple{\inj_1(a)}, a') &\mapsto& A[a, a']
\end{array}
\qquad
\begin{array}{rcrcl}
\pi_2 &:& \Sym(A + B)^{\op} \times B &\to& \Set\\
&& (\tuple{\inj_2(b)}, b') &\mapsto& B[b, b']
\end{array}
\]
and for the pairing of $\alpha \in \Dist[\Sym(\Gamma), A]$ and $\beta
\in \Dist[\Sym(\Gamma), B]$, we set
\[
\begin{array}{rcrcl}
\pair{\alpha, \beta} &:& \Sym(\Gamma)^{\op} \times (A + B) &\to&
\Set\\
&& (\vec{\gamma}, \inj_1(a)) &\mapsto& \alpha(\vec{\gamma}, a)\\
&& (\vec{\gamma}, \inj_2(b)) &\mapsto& \beta(\vec{\gamma}, b)
\end{array} 
\]
with obvious functorial action. We skip the details of the adjoint
equivalence 
\[
\Esp[\Gamma, A \with B] \simeq \Esp[\Gamma, A] \times \Esp[\Gamma,
B]\,,
\]
which are straightforward and not required for the remainder of this
paper.

Given groupoids $A$ and $B$, the \textbf{arrow} is $A \tto B =
\Sym(A)^{\op} \times B$. For \textbf{evaluation}, 
\[
\begin{array}{rcrcl}
\evm_{A, B}' &:& (\Sym(A \tto B) \times \Sym(A))^{\op}\times B &\to&
\Set\\
&& ((\seq{(\vec{a}, b)}, \vec{a'}), b')&\mapsto&
\Sym(A)[\vec{a'}, \vec{a}] \times B[b, b']
\end{array}
\]
and empty otherwise, provides the core of the evaluation mechanism; but
we need a distributor in $\Esp[\Sym(A)^{\op} \times B + \Sym(A), B]$.
For that, we set $\evm_{A, B} = \evm'_{A, B}[L_{A\tto B, A}^{\seely}]$,
\emph{i.e.}
\[
\evm_{A, B}(\vec{s}, a) = \evm'_{A, B}(L_{A\tto B,
A}^{\seely}(\vec{s}), b)
\]
where $L_{A, B}^{\seely} : \Sym(A+B) \simeq \Sym(A) \times \Sym(B) :
R_{A, B}^{\seely}$ is the Seely equivalence.

Finally, the \textbf{currying} of $\alpha \in \Esp[\Gamma + A, B]$ is
defined by
\[
\begin{array}{rcrcl}
\Lambda(\alpha) &:& \Sym(\Gamma)^{\op} \times (\Sym(A)^{\op} \times B)
&\to& \Set\\
&& (\vec{\gamma}, (\vec{a}, b)) & \mapsto& \alpha(R^{\seely}_{\Gamma,
A}(\vec{\gamma}, \vec{a}), b)
\end{array}
\]
informing an adjoint equivalence $\Esp[\Gamma, A\tto B] \simeq
\Esp[\Gamma + A, B]$ as in Proposition \ref{prop:arrow_wis}.

\subsection{A cartesian closed pseudofunctor}
\label{subsec:cc_pseudofuncto}
 We first show that the pseudofunctor $\collexp{-} : \WVis_\oc \to \Esp$
preserves the cartesian structure. The key observation is that:

\begin{lem}\label{lem:pair_iso_L}
Consider $A, B$ two mixed boards. Then, the distributor
\[
\pair{\collexp{\pi_A}, \collexp{\pi_B}} \in \Esp[\comp{A \with B},
\comp{A} + \comp{B}]
\]
is naturally isomorphic to $\widehat{L^{\with}_{A, B}}$.
\end{lem}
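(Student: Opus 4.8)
The plan is to establish the natural isomorphism by a pointwise computation, following closely the template of the dereliction preservator (Lemma~\ref{lem:pder}). Both $\pair{\collexp{\pi_A}, \collexp{\pi_B}}$ and $\companion{L^{\with}_{A,B}}$ are distributors $\Sym(\comp{A\with B})^{\op} \times (\comp A + \comp B) \to \Set$, so it suffices to produce a bijection, natural in both variables, between $\pair{\collexp{\pi_A}, \collexp{\pi_B}}(\vec z, w)$ and $\companion{L^{\with}_{A,B}}(\vec z, w)$ for each sequence $\vec z = \seq{z_1, \dots, z_n} \in \Sym(\comp{A\with B})$ and each $w \in \comp A + \comp B$.

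First I would unfold the right-hand side. By the definition of the companion species, $\companion{L^{\with}_{A,B}}(\vec z, w) = \Sym(\comp A + \comp B)[(L^{\with}_{A,B} z_1, \dots, L^{\with}_{A,B} z_n), (w)]$; since morphisms of $\Sym(-)$ preserve length, this is empty unless $n = 1$, and for $n = 1$ it equals $(\comp A + \comp B)[L^{\with}_{A,B} z_1, w]$. As $L^{\with}_{A,B}$ respects the injection tags of the disjoint union, this hom-set is non-empty precisely when $z_1$ and $w$ carry the same tag, reducing to $\comp A[x', x_A]$ when $z_1 = \inj_1(x')$ and $w = \inj_1(x_A)$, and symmetrically to a hom-set of $\comp B$ for tag $2$.

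Next I would unfold the left-hand side. By the definition of binary pairing in $\Esp$, $\pair{\collexp{\pi_A}, \collexp{\pi_B}}(\vec z, \inj_1(x_A)) = \collexp{\pi_A}(\vec z, x_A) = \coll{\pi_A}(L^{\oc}_{A\with B}(\vec z), x_A)$, and symmetrically on the $\inj_2$ branch. I then compute $\coll{\pi_A}$ directly from the definition of positive witnesses (\S\ref{subsubsec:strat_2_dist}): using that $\pi_A$ has underlying ess $\cc_A$ with display map $\pr(\cc_{x''}) = (\inj_1(x''))_{\{0\}} \vdash x''$, a witness is a triple $(\theta^-, \cc_{x''}, \theta^+)$ with $\theta^- : L^{\oc}_{A\with B}(\vec z) \sym^-_{\oc(A\with B)} (\inj_1(x''))_{\{0\}}$ and $\theta^+ : x'' \sym^+_A x_A$. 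This is exactly the dereliction situation of Lemma~\ref{lem:pder}, decorated by the injection $\inj_1$: the existence of the negative symmetry $\theta^-$ into a single-copy $\inj_1$-family forces $n = 1$ and $z_1 = \inj_1(x')$, with $\theta^-$ induced by a negative symmetry $x' \sym^-_A x''$. The assignment $(\theta^-, \cc_{x''}, \theta^+) \mapsto \theta^+ \circ \theta^-$, read as an $A$-symmetry $x' \sym_A x_A$, is then a bijection onto $\comp A[x', x_A]$, with inverse given by the unique positive--negative factorization (Lemma~\ref{lem:sym_factor}). Comparing with the previous paragraph, the two sides agree branch by branch.

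Finally I would check naturality in $\vec z$ and $w$, which follows from the description of the $\tilde A$- and $\tilde B$-actions on positive witnesses (Proposition~\ref{prop:act_strat}), exactly as in the routine verifications closing Lemma~\ref{lem:pder}; the tag-matching constraint is preserved since the groupoid actions operate within a single injection branch. The main obstacle is the bookkeeping in the left-hand computation: one must verify carefully that a negative symmetry from the length-$n$ family $L^{\oc}_{A\with B}(\vec z)$ to the single-copy family $(\inj_1(x''))_{\{0\}}$ can exist only for $n = 1$ and an $\inj_1$-tagged $z_1$, and reconcile the index set $\{1\}$ produced by $L^{\oc}_{A\with B}$ with the index $\{0\}$ appearing in the display map of $\pi_A$ (absorbed by the copy-index bijection in the isomorphism family of $\oc(A\with B)$). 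Once this reduction is in place, the remaining identifications are the same computation as for the dereliction preservator.
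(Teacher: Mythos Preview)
Your proposal is correct and follows exactly the approach the paper indicates: the paper does not give a proof, only the remark that ``this is a straightforward variation on the proof that $\coll{-}$ preserves the identity,'' and your argument is precisely that variation, reducing each branch to the computation of Lemma~\ref{lem:pder} decorated by an injection tag. Your attention to the index-set bookkeeping ($\{1\}$ versus $\{0\}$, absorbed by the negative symmetry on $\oc(A\with B)$) is appropriate and resolves the only non-obvious point.
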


This is a straightforward variation on the proof that $\coll{-}$
preserves the identity. From this, we obtain that the pseudofunctor
$\collexp{-}$ preserves products: 

\begin{prop}
The pseudofunctor $\collexp{-}$ is a \emph{fp-pseudofunctor} in the
sense of \cite{philip-marcelo:tt}.
\end{prop}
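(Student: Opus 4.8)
The plan is to verify the two conditions characterizing an \emph{fp-pseudofunctor} in the sense of \cite{philip-marcelo:tt}: that $\collexp{-}$ preserves the terminal object and binary products, where in each case preservation means that the canonical comparison $1$-cell in $\Esp$ is an equivalence, and that these comparisons are coherent with the pseudofunctorial structure. Lemma~\ref{lem:pair_iso_L} already supplies the essential computation, so the work is mostly in assembling it into the shape demanded by the definition.

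First I would dispatch the terminal object. The terminal object of $\WVis_\oc$ is the empty board $\top$, and since $\kappa_\top(\emptyset) = 1$ the empty configuration is \emph{not} complete; hence $\comp{\top}$ has no objects. Thus $\collexp{\top} = \comp{\top}$ is the empty groupoid, which is precisely the terminal object of $\Esp$ (recall $\Esp[\Gamma, \top] \simeq 1$). The comparison $1$-cell between these two empty groupoids is an identity, hence trivially an equivalence.

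Next, for binary products, the comparison $1$-cell $\collexp{A \with B} \to \collexp{A} \with \collexp{B} = \comp{A} + \comp{B}$ is, by construction of products in $\Esp$, the pairing $\pair{\collexp{\pi_A}, \collexp{\pi_B}}$. Lemma~\ref{lem:pair_iso_L} identifies this, up to natural isomorphism, with the species $\widehat{L^{\with}_{A, B}}$ induced by the groupoid isomorphism $L^{\with}_{A, B} : \comp{A \with B} \iso \comp{A} + \comp{B}$. It then remains to show that the companion species of such an isomorphism is an equivalence in $\Esp$. I would argue this by recognising $\widehat{L^{\with}_{A,B}}$ as the image, under the Kleisli inclusion $J : \Dist \to \Esp$ of the relative pseudocomonad $\Sym$, of the $\dblDist$-companion of $L^{\with}_{A,B}$: indeed $J(\widehat{L^{\with}_{A,B}}) = \widehat{L^{\with}_{A,B}} \bullet \dder$ evaluates to $(\comp{A} + \comp{B})[L^{\with}_{A,B}(x), y]$ on a singleton sequence $\seq{x}$ and vanishes otherwise, matching the species $\widehat{L^{\with}_{A,B}}$ exactly. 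Now in $\dblDist$ every vertical isomorphism has a companion which, together with its conjoint, forms an adjoint equivalence in $\hori(\dblDist) = \Dist$; and since $J$ is a pseudofunctor it preserves equivalences. Hence $\widehat{L^{\with}_{A,B}}$ is an equivalence in $\Esp$, so the comparison $1$-cell is an equivalence and binary products are preserved.

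Finally, I would record that the coherence required of an fp-pseudofunctor — compatibility of these comparison equivalences with projections, pairing, and the compositors of $\collexp{-}$ — follows routinely from the naturality of the isomorphism in Lemma~\ref{lem:pair_iso_L} together with the universal properties of the products in $\WVis_\oc$ and $\Esp$, and requires no genuinely new calculation. The main obstacle is the middle step: making precise the sense in which the companion of an equivalence of groupoids is an equivalence in the Kleisli bicategory $\Esp$, and confirming that the comparison $1$-cell produced by the universal property of products in $\Esp$ really does coincide (through the coherent isomorphism of Lemma~\ref{lem:pair_iso_L}) with $\widehat{L^{\with}_{A,B}}$, rather than merely agreeing pointwise.
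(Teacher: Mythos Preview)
Your proposal is correct and follows essentially the same approach as the paper: dispatch the terminal object by noting $\comp{\top}$ is empty, then use Lemma~\ref{lem:pair_iso_L} to identify the comparison $1$-cell with $\widehat{L^{\with}_{A,B}}$ and conclude that it is an equivalence because $L^{\with}_{A,B}$ is an isomorphism of groupoids. The paper is simply terser on the last step, directly asserting that the isomorphism $L^{\with}_{A,B} : \comp{A\with B} \iso \comp{A} + \comp{B} : R^{\with}_{A,B}$ lifts to an adjoint equivalence $\widehat{L^{\with}_{A,B}} \dashv \widehat{R^{\with}_{A,B}}$ in $\Esp$ (this uses the companion/conjoint species already introduced in \S\ref{sec:anoth-appl-comp}), and then takes $\q^{\with}_{A,B} = \widehat{R^{\with}_{A,B}}$ as the required pseudo-inverse; your detour through the Kleisli inclusion $J : \Dist \to \Esp$ is a valid but unnecessary elaboration of that same fact.
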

\begin{proof}
The terminal object is preserved in a strict sense, since $\comp{\top}$
is empty. For preservation of binary products, we must provide the
missing components for an adjoint equivalence 
\[
\xymatrix{
\comp{A\with B}
	\ar@/^1pc/[rr]^{\pair{\collexp{\pi_A},
\collexp{\pi_B}}}&\simeq&
\comp{A} + \comp{B}
	\ar@/^1pc/[ll]^{\q^{\with}_{A, B}}
}
\]
in $\Esp$. But the iso $L^{\with}_{A, B} : \comp{A \with B} \iso
\comp{A} + \comp{B} : R^{\with}_{A, B}$ lifts to an adjoint equivalence
\[
\widehat{L^{\with}_{A, B}} : \comp{A \with B} \simeq \comp{A} +
\comp{B} : \widehat{R^{\with}_{A, B}}
\]
in $\Esp$; so providing $\q^{\with}_{A, B} = \widehat{R^{\with}_{A,
B}}$, we conclude via Lemma \ref{lem:pair_iso_L}.
\end{proof}

Finally, it remains to prove that the cartesian closed structure is
preserved as well. Observe that if $A$ and $B$ are $-$-boards with $B$
strict, then we have an adjoint equivalence
\[
L^{\tto}_{A, B} : \comp{A\tto B} \simeq \Sym(\comp{A})^{\op} \times \comp{B}
: R^{\tto}_{A, B}
\]
using first Lemma \ref{lem:conf_iso_lin} as since $B$ is strict, its
complete configurations are non-empty; and observing that this
decomposition also holds for symmetries; followed by Proposition
\ref{prop:eq_oc_sym}.

Now, as for binary products, our key observation is the following:

\begin{lem}\label{lem:keytto}
Consider $A$ and $B$ two $-$-boards, with $B$ strict. Then, the
distributor
\[
\Lambda(\collexp{\evm_{A, B}} \bullet_{\oc} \q^\with_{A\tto B, A})
\in
\Esp[\comp{A \tto B}, \Sym(\comp{A})^{\op} \times \comp{B}] 
\]
is naturally isomorphic to $\widehat{L^{\tto}_{A,B}}$.
\end{lem}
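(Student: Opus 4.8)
The plan is to compute the left-hand species pointwise and match it, term by term, against the companion formula
\[
\widehat{L^{\tto}_{A,B}}((x_1,\dots,x_n),(\vec a,b)) = \Sym(\Sym(\comp A)^{\op}\times \comp B)[(L^{\tto}_{A,B}x_1,\dots,L^{\tto}_{A,B}x_n),((\vec a,b))],
\]
which is empty unless $n=1$, in which case it reduces to the hom-set $(\Sym(\comp A)^{\op}\times\comp B)[L^{\tto}_{A,B}x_1,(\vec a,b)]$. So it suffices to fix a single source object $x\in\comp{A\tto B}$ and a target $(\vec a,b)$, and exhibit a bijection onto this hom-set, natural in all arguments. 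This is the exact analogue of Lemma~\ref{lem:pair_iso_L}, and of the proof that $\coll{-}$ preserves identities (Lemma~\ref{lem:pder}).

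First I would unfold the currying using the explicit formula for $\Lambda$ in $\Esp$, reducing the computation to evaluating $\collexp{\evm_{A,B}}\bullet_{\oc}\q^{\with}_{A\tto B,A}$ at the pair $(R^{\seely}_{\comp{A\tto B},\comp A}((x),\vec a),b)$. Unfolding the Kleisli composition via the coend formula, a witness is an equivalence class of a witness of the promotion $(\q^{\with}_{A\tto B,A})^{\dagger}$ together with a witness of $\collexp{\evm_{A,B}}$. Since $\q^{\with}_{A\tto B,A}=\widehat{R^{\with}_{A\tto B,A}}$ is the companion of the product-decomposition isomorphism, each of its witnesses is, up to symmetry, a morphism of $\comp{(A\tto B)\with A}$, so this factor merely reindexes the source list through the Seely isomorphism $L^{\with}_{A\tto B,A}$, as in Lemma~\ref{lem:action-with-companions}.

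The crucial step is the collapse of $\evm_{A,B}$. Its underlying event structure is the copycat $\cc_{\oc A\lin B}$, so its positive witnesses are governed by exactly the factorisation used in the identity case: by Lemma~\ref{lem:sym_factor}, a witness $(\theta^-,\cc_z,\theta^+)$ of $\coll{\evm_{A,B}}$ collapses to a single symmetry, just as $\coll{\cc_D}\iso\comp D[-,-]$. Feeding in the display map of $\evm_{A,B}$, which packs the $A\tto B$ component at copy index $\tuple{0,0}$ and the argument configurations at indices $\tuple{1,i}$, this identifies the witness data with a symmetry of $\oc A\lin B$ matching the configuration $x$ against the configuration reconstructed from $(\vec a,b)$. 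Chaining the isomorphism $L^{\oc}$ of Proposition~\ref{prop:eq_oc_sym}, the arrow decomposition of Lemma~\ref{lem:conf_iso_lin}, and the Seely equivalence $R^{\seely}$, this symmetry rearranges precisely into a single morphism $L^{\tto}_{A,B}x\to(\vec a,b)$ in $\Sym(\comp A)^{\op}\times\comp B$, i.e.\ an element of the companion.

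The main obstacle will be the bookkeeping of copy indices and the permutation data carried by promoted witnesses. Each witness of a promotion is an expression $\prom{s_i}^{\pi}_{i}$ carrying a bijection $\pi$, and each of the companion, Seely, $\oc$, and currying steps rearranges indices by a specific monotone reindexing ($\kappa_I$, $\kappa_{I,(J_i)}$, and their composites). The real content of the proof is to verify that all these reindexings compose to the single rearrangement expected by the companion of $L^{\tto}_{A,B}$, so that the coend quotient identifies exactly the right witnesses; this is entirely analogous to, though somewhat heavier than, the index tracking already performed in Lemmas~\ref{lem:pres_id_r} and \ref{lem:pres_join}. Once the pointwise bijection is in place, naturality in $x$ and in $(\vec a,b)$ follows from naturality of each constituent (the currying, the coend, the collapse of copycat, and the equivalences), exactly as in the identity-preservation argument.
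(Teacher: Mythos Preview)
Your proposal is correct and follows essentially the same route as the paper: both unfold the definitions until the claim reduces to a natural isomorphism between $\coll{\evm_{A,B}}$ evaluated at a suitably shaped configuration and a product of hom-sets $\Sym(\comp A)[-,-]\times\comp B[-,-]$, and both observe that since $\evm_{A,B}$ has $\cc_{\oc A\lin B}$ as underlying ess, this is a variation on the proof that $\coll{-}$ preserves identity. The paper is terser about the intermediate unfolding steps you spell out (currying, Kleisli composition, the role of $\q^{\with}$ as a companion), but the argument is the same.
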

\begin{proof}
Unfolding the definitions on both sides, this boils down to a natural
isomorphism
\[
\begin{array}{l}
\coll{\evm_{A, B}}((\inj_1((x_{A, i})_{i\in I} \lin x_B))_{0} \uplus 
(\inj_2(y_{A, j}))_{j\in n}, y_B)\\
\qquad\qquad \iso
\Sym(\comp{A})[\seq{y_{A, j}}_{j \in n}, \seq{x_{A, i}}_{\card{i} \in
\card{I}}] \times \comp{B}[x_B, y_B]
\end{array}
\]
which again is a variation on the preservation of the identity by
$\coll{-}$.
\end{proof}

We may now use this to conclude:

\begin{prop}
The pseudofunctor $\collexp{-}$ is a \emph{cc-pseudofunctor} in the
sense of \cite{philip-marcelo:tt}.
\end{prop}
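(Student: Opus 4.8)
The plan is to invoke the characterization of cartesian closed pseudofunctors from \cite{philip-marcelo:tt}: having already established in the preceding proposition that $\collexp{-}$ is an \emph{fp-pseudofunctor}, it remains only to exhibit, for each pair of strict boards $A$ and $B$, the canonical comparison $1$-cell at the exponential as an equivalence in $\Esp$. Recall that for an fp-pseudofunctor $F$ between cartesian closed bicategories, this comparison
\[
F(A \tto B) \longrightarrow (FA \tto FB)
\]
is obtained by currying the composite $F(A\tto B) \with FA \xrightarrow{\q^\with} F((A\tto B)\with A) \xrightarrow{F(\evm_{A,B})} FB$, where $\q^\with$ is the product comparison supplied by the fp-structure.

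First I would spell out this recipe in our concrete setting, where $FA = \comp{A}$, $F(A\tto B) = \comp{A\tto B}$, and $FA \tto FB = \Sym(\comp{A})^{\op}\times\comp{B}$. Unwinding the definitions of evaluation and currying in $\Esp$, the comparison $1$-cell is exactly the species
\[
\Lambda(\collexp{\evm_{A,B}} \bullet_{\oc} \q^\with_{A\tto B, A}) \in \Esp[\comp{A\tto B}, \Sym(\comp{A})^{\op}\times\comp{B}]
\]
considered in Lemma \ref{lem:keytto}. That lemma then directly identifies this comparison, up to natural isomorphism, with the companion species $\companion{L^{\tto}_{A,B}}$ of the functor $L^{\tto}_{A,B} : \comp{A\tto B} \simeq \Sym(\comp{A})^{\op}\times\comp{B}$.

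Next I would argue that $\companion{L^{\tto}_{A,B}}$ is an equivalence in $\Esp$. Indeed, $L^{\tto}_{A,B}$ is an adjoint equivalence of groupoids, obtained by combining Lemma \ref{lem:conf_iso_lin} with the equivalence $\Sym(\comp{A}) \simeq \comp{\oc A}$ of Proposition \ref{prop:eq_oc_sym}, exactly as recorded before Lemma \ref{lem:keytto}. The companion construction $F \mapsto \companion{F}$ sends adjoint equivalences of groupoids to adjoint equivalences in $\Esp$: concretely, $\companion{R^{\tto}_{A,B}}$ serves as a quasi-inverse, with unit and counit induced by those of the original adjunction. Hence the comparison $1$-cell is an equivalence, which is precisely the condition upgrading the fp-pseudofunctor $\collexp{-}$ to a \emph{cc-pseudofunctor} in the sense of \cite{philip-marcelo:tt}.

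The main obstacle is the identification in the first step: matching the abstractly-assembled comparison of \cite{philip-marcelo:tt}, built from the universal data of the cartesian closed bicategories $\WVis_\oc$ (Proposition \ref{prop:arrow_wis}) and $\Esp$, with the explicit currying $\Lambda(\collexp{\evm_{A,B}} \bullet_{\oc} \q^\with_{A\tto B, A})$ whose value is computed in Lemma \ref{lem:keytto}. This requires carefully unwinding the evaluation and currying on both sides and checking that the product comparison $\q^\with$ used here is the one furnished by the fp-structure of the previous proposition; once this is done, Lemma \ref{lem:keytto} does all the remaining work and the equivalence follows formally. A secondary, purely bureaucratic point is to confirm that these exponential comparison equivalences are coherent with the fp-comparisons, as demanded by the precise definition in \cite{philip-marcelo:tt}; this follows from naturality of the isomorphisms involved and requires no new ideas.
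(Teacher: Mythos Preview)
Your proposal is correct and follows essentially the same approach as the paper: identify the comparison $1$-cell with $\companion{L^{\tto}_{A,B}}$ via Lemma~\ref{lem:keytto}, then supply $\q^{\tto}_{A,B} = \companion{R^{\tto}_{A,B}}$ as its quasi-inverse using that companions take the adjoint equivalence $L^{\tto}_{A,B} \dashv R^{\tto}_{A,B}$ of groupoids to an adjoint equivalence in $\Esp$. The paper's proof is more terse but structurally identical.
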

\begin{proof}
We must provide the missing components for an adjoint equivalence 
\[
\xymatrix{
\comp{A\tto B}
	\ar@/^1pc/[rr]^{\Lambda(\collexp{\evm_{A, B}} \bullet_{\oc}
\q^\with_{A\tto B, A})}&\simeq&
\comp{A} + \comp{B}
	\ar@/^1pc/[ll]^{\q^{\tto}_{A, B}}
}
\]
in $\Esp$. But the equivalence $L^{\tto}_{A, B} : \comp{A \tto B}
\simeq \Sym(\comp{A})^{\op} \times \comp{B} : R^{\tto}_{A, B}$ lifts to 
\[
\widehat{L^{\tto}_{A, B}} : \comp{A \tto B} \simeq 
\Sym(\comp{A})^{\op} \times \comp{B} : \widehat{R^{\tto}_{A, B}}
\]
in $\Esp$; so providing $\q^{\tto}_{A, B} = \widehat{R^{\tto}_{A, B}}$,
we conclude via Lemma \ref{lem:keytto}. 
\end{proof}

Altogether this completes the proof of our main theorem:

\begin{thm}\label{th:main}
We have a cartesian closed pseudofunctor
$\collexp{-} : \WVis_{\oc} \to \Esp$.
\end{thm}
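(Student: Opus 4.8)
The plan is to recognize that Theorem~\ref{th:main} is the assembly of the results established in this subsection, read through the definition of a cartesian closed pseudofunctor from \cite{philip-marcelo:tt}. There, a cartesian closed (cc-)pseudofunctor is a pseudofunctor between cartesian closed bicategories whose canonical comparison morphisms into the image of finite products and of exponentials are adjoint equivalences; in particular a cc-pseudofunctor is automatically an fp-pseudofunctor. We already have all the ingredients: that $\collexp{-} : \WVis_\oc \to \Esp$ is a pseudofunctor (from the preservators, whose coherence is Lemmas~\ref{lem:pres_id_l}, \ref{lem:pres_id_r} and~\ref{lem:pres_join}), that it is an fp-pseudofunctor, and that it is a cc-pseudofunctor.

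First I would spell out the comparison data that the definition requires, so as to match it against the two preceding propositions. For products, the canonical comparison $\pair{\collexp{\pi_A}, \collexp{\pi_B}}$ is exhibited, by Lemma~\ref{lem:pair_iso_L}, as the companion $\widehat{L^{\with}_{A,B}}$ of the isomorphism $L^{\with}_{A,B} : \comp{A \with B} \iso \comp{A} + \comp{B}$; since companions of isomorphisms lift to adjoint equivalences in $\Esp$, this gives precisely the product-preservation equivalence. For exponentials, the comparison $\Lambda(\collexp{\evm_{A,B}} \bullet_{\oc} \q^{\with}_{A\tto B, A})$ is identified, by Lemma~\ref{lem:keytto}, with the companion $\widehat{L^{\tto}_{A,B}}$ of the adjoint equivalence $L^{\tto}_{A,B} : \comp{A\tto B} \simeq \Sym(\comp{A})^{\op} \times \comp{B}$, again lifting to an adjoint equivalence. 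These are exactly the missing components supplied in the two propositions above.

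With these in place the theorem follows immediately by combining the two propositions: the last one already certifies that $\collexp{-}$ is a cc-pseudofunctor, and a cc-pseudofunctor is a cartesian closed pseudofunctor by definition. The genuine work is therefore not in this final step but upstream. The main obstacle lies in the witness-level calculations of Lemmas~\ref{lem:pair_iso_L} and~\ref{lem:keytto}, which must carefully track copy indices and the reindexing bijections $\kappa$, and in the coherence verifications for the preservators that make $\collexp{-}$ a pseudofunctor in the first place; once those are discharged, as here, the cartesian closed structure is preserved by transport along the equivalences $L^{\with}$ and $L^{\tto}$, and the theorem is a one-line consequence.
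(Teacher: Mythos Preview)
Your proposal is correct and mirrors the paper's own treatment: the theorem is stated as the culmination of the two preceding propositions (fp- and cc-pseudofunctor), with all the substantive work already done upstream in Lemmas~\ref{lem:pair_iso_L} and~\ref{lem:keytto} and the preservator coherence lemmas. Your account is in fact more explicit than the paper's one-line ``Altogether this completes the proof,'' but the structure is identical.
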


\subsubsection*{Note.} We have constructed a cartesian closed
pseudofunctor directly, without leveraging the preservation of
symmetric monoidal structure established in earlier sections.
Although the cartesian closed structure is related to the monoidal
structure via Seely isomorphisms, the theory of bicategorical models
of linear logic, and a fortiori functors between models, is still
under-developed, and a direct approach seemed more achievable. This
is an important aspect of this work which deserves further investigation. 

\section{Some consequences for the $\lambda$-calculus}\label{sec:lambda}

Finally, for the last technical section of this paper, we illustrate
this pseudofunctor by relating a dynamic and a static model of the pure
(untyped) $\lambda$-calculus. 
\subsection{Two models of the pure $\lambda$-calculus}
\subsubsection{A reflexive object in $\WVis_{\oc}$}
Our bicategory of games contains a \textbf{universal arena} $\U$ with
an isomorphism of arenas
\[
\unf_\U :\ \  \U\ \  \iso \ \  \U \tto \U \ \ :  \fld_\U
\]
making $\U$ an extensional reflexive object
\cite{DBLP:books/daglib/0067558}. Concretely, $\U$ is constructed
corecursively as the solution of the following `domain-like equation':
$\U = \tensor_{\mathbb{N}} \oc \U \lin o$, where $o$ is the arena with
one negative move $*$, $\kappa_o(\emptyset) = 1$ and $\kappa_o(\{*\}) = 0$. 

Following the interpretation of the $\lambda$-calculus in a
reflexive object, we have, for every closed $\lambda$-term $M$, a
strategy $\intr{M} : \U$. This strategy has a clear interpretation:
it is a representation of the \emph{Nakajima tree} of $M$ (see
e.g.~\cite{DBLP:journals/tcs/KerNO02,DBLP:conf/csl/ClairambaultP18}).

\subsubsection{The Pure $\lambda$-Calculus in $\Esp$} Likewise,
one can construct models of the $\lambda$-calculus in $\Esp$ 
\cite{fiore2008cartesian, ol:intdist, ol:proof}.  A \emph{categorified graph model} \cite{ol:proof} consists of a small category $\D $ together with an embedding
\[          \iota :  \Sym(\D)^{\op} \times \D \hookrightarrow \D\,.                   \]

We adopt a type-theoretic understanding of these models, setting $
\tyl \multimap \ty :=  \iota (\tyl, \ty) .$ The main intuition is that
elements of $ \D$ are \emph{types}, where the arrow type is built exploiting \emph{sequences of
types}. These sequences correspond to a notion of $k$-ary
\emph{intersection types} \cite{ol:intdist}, by setting
$ ( \ty_1 \cap \cdots \cap \ty_k) := \seqdots{\ty}{1}{k}$.
The \emph{free} categorified graph model can be presented syntactically
as the category generated by the graph in Figure \ref{fig:multigraph}.
Composition and identities are defined by induction in the natural way,
exploiting the definition of composition  of morphisms of sequences
induced by $\Sym{(-)} $. We shall now \emph{complete} $\D$ in an
appropriate way, producing an extensional model that we will then
relate to the universal arena $\U$. 

\begin{figure}
\begin{gather*}
\text{Types:} \\[1em]
       \D \ni \ty ::= \ast \mid \seqdots{\ty}{1}{k} \multimap \ty                                   
\end{gather*}
\begin{gather*}
\text{Proof-relevant subtyping:}\\[1em]
{  \begin{prooftree}  
\infer0{  id_\ast : \ast \to \ast     }
\end{prooftree} }
\qquad
{\begin{prooftree}
\hypo{ \vec{f}^\sigma : \vec{b} \to \tyl                           } \hypo{     f : \ty \to b     }
\infer2{     \vec{f}^\sigma \multimap f : (\tyl \multimap \ty) \to (\vec{b} \multimap b )          }
\end{prooftree}}  \\[1em]
{\begin{prooftree}
\hypo{    \sigma \in S_k                            } \hypo{  f_1 : \ty_{\sigma 1} \to b_1     } \hypo{\cdots}  \hypo{   f_k : \ty_{\sigma k} \to b_k    }
\infer4{  \seqdots{f}{1}{k}^\sigma : \seqdots{\ty}{1}{k} \to \seqdots{b}{1}{k}         }
\end{prooftree} }
\end{gather*} \hrulefill
\caption{Graph of intersection types $ \mathsf{G} .  $} \label{fig:multigraph}
\end{figure}

\subsection{Constructing an extensional model $ \D^\ast $}

 The extensional model $ \D^\ast $ is defined as an appropriate
completion of $D$, where we add chosen isomorphisms between types in
a coherent way. This construction can be defined in abstract
categorical terms as an appropriate pseudocolimit (a
\emph{coisoinserter}), but in the present paper we will just give its
explicit syntactic presentation. The following construction is a
particular case of \emph{completion of partial algebras} for
categorified graph models, in the sense of \cite{ol:proof}. 

We start by adding the following two arrows to the graph of intersection types $ \mathsf{G} :$ 
\[       \begin{prooftree}
\infer0{  \mathsf{e} : \ast \to \seq{} \multimap \ast  }
\end{prooftree}
\qquad
 \begin{prooftree}
\infer0{  \mathsf{e}^{-1} : \seq{} \multimap \ast \to \ast }
\end{prooftree}
           \]
resulting in a graph $\mathsf{G}^\ast$. We now want to obtain a category
from $\mathsf{G}^\ast$. Identities are the same as $\D$. Composition is inherited from the
composition of $\D$, by adding the following cases:
\[    \mathsf{e} \circ \mathsf{id}_{\ast} = \mathsf{e} \qquad \mathsf{id}_{  \seq{} \multimap \ast } \circ \mathsf{e}^{-1} = \mathsf{e}^{-1}                             \qquad \mathsf{e} \circ \mathsf{e}^{-1} = \mathsf{id}_{\seq{}\multimap \ast } \qquad \mathsf{e}^{-1} \circ \mathsf{e} = \mathsf{id}_\ast             . \]

The category $\D^\ast$ is a categorified graph model:
\[  
\iota^\ast : \Sym(\D^\ast)^{op}\times \D^\ast \hookrightarrow \D^\ast \qquad (\tyl, \ty) \mapsto \tyl \multimap \ty                                            \]

\begin{thmC}[\cite{ol:proof}] \label{th:dastff}
The functor $ \iota^\ast$ is fully faithful and essentially surjective on objects.
\end{thmC}

\paragraph{Morphism actions and congruence on type derivations} We now explain how the interpretation of $\lambda $-terms in $\Esp $ can be presented exploiting intersection types. 

The intuition is that the semantics of a term will compute all its
terminating executions in face of  a given environment. These
executions can be encoded by the means of an intersection type system.
Given $a \in \D$,
$\intr{M}_\Esp^{\D}(a)$ will appear (up to a canonical
isomorphism) as the set of derivations
of the judgement $\vdash M : a$ in the type system. The construction of
this system reflects the corresponding operations on
distributors: in particular derivations carry explicit symmetries,
$\intr{M}_\Esp^{\D}(a)$ is really the set of derivations \emph{quotiented} by an
equivalence relation letting symmetries flow though the concrete
derivation. 

Let $ \D$ be a categorified graph model. A \emph{type declaration} over $ \D$ consists of a pair $ x : \tyl $ where $x $ is a variable of the $ \lambda$-calculus and $ \tyl \in \Sym(\D) . $
A  \emph{type context} consists of a sequence of type declarations $
x_1 : \tyl_1 , \dots, x_n : \tyl_n . $ Type contexts are denoted with
greek letters $ \gamma, \delta \dots . $ Given two type contexts $
\gamma = x_1 : \tyl_1 , \dots, x_n : \tyl_n  $ and $ \delta = x_1 :
\vec{b}_1, \dots, x_n : \vec{b}_n $ we define their \emph{pointwise
concatenation} as $ \gamma \otimes \delta = x_1 : \tyl_1 ::
\vec{b}_1,\dots, x_n : \tyl_n :: \vec{b}_n $ where $  \tyl_i ::
\vec{b}_i $ denotes list concatenation. A \emph{morphism} of type
contexts $ \eta : \gamma \to \delta $ consist of a family of morphisms
$ \vec{f}_i^{\sigma_i} : \tyl_i \to \vec{b}_i$ for $  i \in [n] .$
Morphisms of type contexts are denoted with greek letters $ \eta,
\theta \dots$ The definition of the type system is given in Figure
\ref{fig:IntD}.

Given a type derivation $ \pi $ of conclusion $\gamma \vdash M : \ty $
and  type morphisms $ \theta : \delta\to \gamma $ and $f : \ty \to b $
we define contravariant and covariant actions on derivations:
\[     \begin{prooftree}
\hypo{  \pi \{ \theta \}       }\ellipsis{}
{    \delta \vdash M : \ty  }
\end{prooftree}            \qquad  \qquad       \begin{prooftree}
\hypo{  [f]\pi        }\ellipsis{}
{    \gamma \vdash M : b  }
\end{prooftree}                             \]

 The explicit definitions are given, respectively, in Figures
\ref{fig:la} and \ref{fig:ra}. By a straightforward induction, one can
prove that both actions are compositional and unital in the appropriate
way, that is $  (\pi \{ \theta \}) \{ \theta' \} = \pi \{\theta \circ
\theta' \}, \pi \{ id\} = \pi $ and $ [g]([f]\pi) = [g \circ f] \pi,
[id] \pi = \pi$.
 
 By the means of these morphism actions, we define an \emph{equivalence
relation} on type derivations, as the smallest equivalence relation
generated by the rules of Figure \ref{fig:congr} and that is compatible
with the structure of type derivations.
This equivalence corresponds to the one induced by an appropriate \emph{coend formula}:
namely, the coend formula that derives from the categorified graph
model interpretations of application and $ \lambda$-abstraction in $\Esp$.

\begin{figure*}[t] 
\begin{gather*}
{\begin{prooftree} 
\hypo{f : \ty' \to \ty }
	\infer1{ x_1 :  \seq{}, \dots,  x_i  : \seq{\ty'}, \dots , x_n :  \seq{} \vdash \tyf{x_i} : \ty }
\end{prooftree}} \\[1em]
{\begin{prooftree}
\hypo{    \gamma, x : \tyl \vdash \tyf{M} : \ty             }\hypo{  f : (\tyl \multimap \ty) \to b       }
\infer2{   \gamma \vdash \tyf{ \lambda x . M } : b                  }
\end{prooftree} }
\\[1em]
{\begin{prooftree}
\hypo{\gamma_0 \vdash \tyf{M} : \seqdots{\ty}{1}{k}  \multimap \ty  } \hypo{(\gamma_i \vdash \tyf{N} : \ty_i)_{i =1}^k} \hypo{ \eta : \delta \to \bigotimes_{i=1}^k \gamma_i}
\infer3{        \delta \vdash \tyf{MN} : \ty                     } 
\end{prooftree} }
\end{gather*} \hrulefill\caption{Intersection type system over a graph model $ \D .  $}
\label{fig:IntD}
\end{figure*}

\begin{figure*}[t]\scalebox{0.85}{\parbox{1.05\linewidth}{
\begin{align*} &
\left(\raisebox{-8.5pt}{{\begin{prooftree}
\hypo{ f : \ty' \to \ty }
	\infer1[]{ \seq{}, \dots, \seq{\ty'}, \dots, \seq{} \vdash \ty }
\end{prooftree}}} \right)\{ g : b \to \ty' \} \quad &=& \quad \raisebox{-8.5pt}{{\begin{prooftree}
\hypo{ f \circ g }
	\infer1[]{ \seq{}, \dots, \seq{b}, \dots, \seq{} \vdash \ty }
\end{prooftree} }}
\\[0.5cm]
 &\left(\raisebox{-22pt}{{\begin{prooftree}
\hypo{\pi}\ellipsis{}{ \delta, \tyl \vdash \ty } \hypo{ f : (\tyl \multimap \ty) \to b }
	\infer2[]{ \delta \vdash  b }
\end{prooftree}}}\right)\{ \eta \} \quad &=& \quad{ \raisebox{-12pt}{\begin{prooftree} 
\hypo{\pi\{ \eta, id_{\tyl} \}}\ellipsis{}{ \delta', \tyl \vdash \ty 
 } \hypo{ f : (\tyl \multimap \ty) \to b }
	\infer2[]{ \delta' \vdash \tyl \multimap \ty }
\end{prooftree}}}
\\[0.5cm]
& \left(\raisebox{-22pt}{{\begin{prooftree}
\hypo{\pi_1}\ellipsis{}{\gamma_{0} \vdash \tyl \multimap \ty }\hypo{\pi_i }
\ellipsis{}{\gamma_{i} \vdash \ty_i}
\delims{\left(}{\right)_{i =1}^{k}}
\hypo{ \theta : \delta \to \bigotimes_{j=0}^{k} \gamma_j}
	\infer3[]{ \delta \vdash \ty}\end{prooftree}}}\right)\{ \eta \}
		 &=& \quad
\raisebox{-22pt}{{	\begin{prooftree}
\hypo{\pi_1}\ellipsis{}{\gamma_{0} \vdash \tyl \multimap \ty }\hypo{\pi_i }
\ellipsis{}{\gamma_{i} \vdash \ty_i}
\delims{\left(}{\right)_{i =1}^{k}}\hypo{ \theta \circ \eta }
	\infer3[]{ \delta' \vdash \ty}
\end{prooftree} }}
\end{align*}
where $ \tyl = \seqdots{\ty}{1}{k} $ and $ \eta : \delta' \to \delta . $}}
\caption{Right action on derivations.} \hrulefill
\label{fig:la}
\end{figure*}

\begin{figure*}[t] 	\scalebox{0.85}{\parbox{1.05\linewidth}{
\begin{align*}
&[g : a \to b]\left(
\raisebox{-8.5pt}{$\begin{prooftree}
\hypo{ f : \ty' \to \ty }
	\infer1[]{ \seq{}, \dots, \seq{\ty'}, \dots, \seq{} \vdash \ty }
\end{prooftree}$} \right) 
&=& \quad
\raisebox{-8.5pt}{$
\begin{prooftree}
\hypo{g \circ f : \ty' \to b}
	\infer1[]{ \seq{}, \dots, \seq{\ty'}, \dots, \seq{} \vdash b}
\end{prooftree} $}
\\[3ex]
&{}[g : a' \to b]\left(\raisebox{-22pt}{$\begin{prooftree}
\hypo{\pi}\ellipsis{}{ \delta, \tyl \vdash \ty } \hypo{ f : (\tyl \multimap \ty) \to \ty' }
	\infer2[]{ \delta \vdash  \ty }
\end{prooftree}$}\right) \quad 
&=& \quad \raisebox{-12pt}{\begin{prooftree} 
\hypo{\pi}\ellipsis{}{ \delta, \tyl \vdash \ty 
 } \hypo{ g \circ f : (\tyl \multimap \ty) \to b }
	\infer2[]{ \delta \vdash b }
\end{prooftree}} 
\\[6ex]
&{}[g : a \to b]\left(\raisebox{-22pt}{\begin{prooftree}
\hypo{\pi_0}\ellipsis{}{\gamma_{0} \vdash \tyl \multimap \ty }\hypo{\pi_i }
\ellipsis{}{\gamma_{i} \vdash \ty_i}
\delims{\left(}{\right)_{i =1}^{k}}
\hypo{ \morpCone : \delta \to \bigotimes_{j=0}^{k} \Gamma_j }
	\infer3[]{ \delta \vdash \ty}\end{prooftree}}\right)
 &=&  \
	\raisebox{-22pt}{\begin{prooftree}
\hypo{[ id_{\tyl} \multimap g] \pi_0}\ellipsis{}{ \gamma_{0} \vdash \tyl \multimap b }\hypo{\pi_i }
\ellipsis{}{\gamma_{i} \vdash \ty_i}
\delims{\left(}{\right)_{i =1}^{k}}
\hypo{ \morpCone }
	\infer3[]{ \Delta \vdash b }
\end{prooftree}}
\end{align*}
where $ \tyl = \seqdots{\ty}{1}{k} . $}}
\caption{Left action on derivations.}  \hrulefill
\label{fig:ra}
\end{figure*}

\begin{figure*}
\scalebox{0.84}{\parbox{1.05\linewidth}{
\[
	\begin{array}{rcl}
	{ \begin{prooftree}
\hypo{\pi_0}\ellipsis{}{ \gamma_{0} \vdash \vec{b} \multimap \ty }\hypo{[f_{i}]\pi_{\sigma(i)}}
\ellipsis{}{\gamma_{\sigma(i)} \vdash b_i}
\delims{\left(}{\right)_{i =1}^{k}}
\hypo{ ( id_{\gamma_0}\otimes \sigma^{\star}) \circ \morpCone }
	\infer3[]{ \delta \vdash \ty }
\end{prooftree}}
	&\raisebox{8pt}{$\sim$}&
{	\begin{prooftree}
\hypo{[ \vec{f}\thinspace^{\sigma} \multimap id_{\ty}] \pi_0 }\ellipsis{}{ \gamma_{0} \vdash \tyl \multimap \ty} \hypo{\pi_i }\ellipsis{}{\gamma_{i} \vdash \ty_i}\delims{\left(}{\right)_{i =1}^{k}}\hypo{ \morpCone }
	\infer3[]{ \delta \vdash \ty }
	\end{prooftree}}\\
	\\
{	\begin{prooftree}
	\hypo{\pi_0 \{ \theta_0 \} } \ellipsis{}{ \gamma_{0} \vdash \tyl \multimap \ty}\hypo{\pi_i \{\theta_i \} }\ellipsis{}{\gamma_{i} \vdash \ty_i}\delims{\left(}{\right)_{i =1}^{k}}\hypo{ \morpCone : \delta \to \bigotimes_{j =0}^{k} \gamma_j }\infer3[]{ \delta \vdash \ty }
	\end{prooftree} }
	&\raisebox{8pt}{$\sim$} &
{	\begin{prooftree}
	\hypo{\pi_0}\ellipsis{}{ \gamma'_{0} \vdash \tyl \multimap \ty}\hypo{\pi_i }\ellipsis{}{\gamma'_{i} \vdash \ty_i}\delims{\left(}{\right)_{i =1}^{k}}
	\hypo{ (\bigotimes_{j=0}^{k} \theta_j) \circ \morpCone}\infer3[]{ \delta \vdash \ty}
	\end{prooftree}}\\
	\\
{	\begin{prooftree}
\hypo{ [g] \pi \{ id_{\delta}, \vec{g}\thinspace^{\sigma} \} }\ellipsis{}{ \delta, \tyl' \vdash \ty' } \hypo{ f : (\tyl \multimap \ty) \to b }
	\infer2[]{ \delta \vdash  b }
\end{prooftree}}
	&\raisebox{8pt}{$\sim$} &
{\begin{prooftree}
\hypo{\pi}\ellipsis{}{ \delta, \tyl \vdash \ty } \hypo{  f \circ (\vec{g}\thinspace^{\sigma} \multimap g  ): (\tyl' \multimap \ty') \to b }
	\infer2[]{ \delta \vdash  b }
\end{prooftree}}
	\end{array}
\]

where $ \seq{f_1, \dots, f_{k}}^{\sigma} : \tyl = \seqdots{\ty}{1}{k} \to \vec{b} = \seqdots{b}{1}{k}$, $\vec{g}\thinspace^{\sigma} : \tyl' \to \tyl$, $g : \ty \to \ty'$ and $ \theta_i : \gamma_i \to \gamma'_{i}.$  $\sigma^\star  $ denotes the canonical morphism $  \sigma^\star  : \bigotimes_{i}^{k} \Gamma_i \to \bigotimes_{i = 1}^k \Gamma_{\sigma (i)} $ for a permutation $ \sigma$. }}
\caption{Congruence on derivations.} \hrulefill
\label{fig:congr}
\end{figure*}

\paragraph{Intersection type distributors} We are now ready to introduce the syntactic presentation of the categorified graph model semantics, namely \emph{intersection type distributors}.

Given a $ \lambda$-term $ M$ and a categorified graph model $ \D $, with $ \vec{x} \supseteq \mathsf{fv}(M) $, we define 
\[     \mathsf{ITD}^{\D}(M)_{\vec{x}} : (\Sym(\D) \overbrace{\otimes
\dots \otimes}^{ \mid\vec{x}\mid \text{ times} } \Sym(\D))^{op} \times \D \to \Set \]
the \emph{intersection type distributor} of $ M $,
by the following family of sets: 
\[        
\mathsf{ITD}^{\D}(M)_{x_1,\dots,x_n}(\tyl_1,\dots,\tyl_n; \ty) = 
\left\{ 
\raisebox{-10pt}{
\begin{prooftree}
\hypo{\pi}\ellipsis{}{  x_1 : \tyl_1 , \dots, x_n : \tyl_n \vdash M : \ty  } 
\end{prooftree}
}
\right\} / \sim\,.
\]

Stating the connection between the intersection type distributor of a
term and its species of structure semantics requires the
\emph{Seely equivalence} for contexts:
\[
L_{\vec{x}}^{\seely} : \Sym(\D \overbrace{+ \ldots +}^{\text{$|\vec{x}|$
times}} \D) \simeq \Sym(\D) \overbrace{\times \ldots
\times}^{\text{$|\vec{x}|$ times}} \Sym(\D) : R_{\vec{x}}^{\seely}\,.
\]

\begin{thmC}[\cite{ol:proof}]\label{thm:ol}
Let $M$ be a $\lambda$-term, $\vec{x} \supseteq \mathsf{fv}(M)$  and $\D$ a categorified graph model. 

Then, we have an isomorphism, natural in $\delta$ and $\ty$:
\[         \mathsf{ITD}(M)_{\vec{x}}(\delta, \ty) 
\quad\cong\quad                
\intr{M}_\Esp^{\D} (  R_{\vec{x}}^{\seely}(\delta)     , \ty)\,. \]
\end{thmC}

\subsection{Relating the Interpretations} We first relate the reflexive
objects.

\begin{figure}
\begin{gather*}
\text{Objects:} \\[1em]
\comp{ \U }  \ni v ::=  (\bar{v}_1, \dots, \bar{v}_k)  \multimap \ast \qquad ( \bar{v}_1 \ \text{non-empty.} ) 
\\ \bar{v} ::= (v_i)_{i \in I \subseteq_f \mathbb{N}}
\\[2em]
\text{  Morphisms: } \\[1em]
{\begin{prooftree}
\hypo{   \theta_i : \bar{v}_i \cong \bar{v}'_i             }\infer1{ (\theta_1, \dots, \theta_n)  \multimap \ast : (\bar{v}_1,\dots, \bar{v}_n) \multimap \ast \to  (\bar{v}'_1,\dots, \bar{v}'_n) \multimap \ast     }
\end{prooftree}} \\[1em]
{\begin{prooftree}
\hypo{      \pi : I \cong I'         }\hypo{\theta_i : v_i \to v_{\pi i}}
\infer2{  (\theta_i)^{\pi}_{i \in I} : (v_i)_{i \in I} \to (v'_j)_{j \in I'}                       }
\end{prooftree}}
\end{gather*}
\caption{Groupoid $ \comp{\U} $ associated to the reflexive object $ \U$.} \label{fig:grouppres} \hrulefill
\end{figure}

\begin{thm}\label{th:equiv}
We have an adjoint equivalence of groupoids:
$L^\U : \comp{\U} \simeq \D^\ast : R^\U$.
\end{thm}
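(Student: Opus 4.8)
The plan is to exhibit the equivalence by exploiting the fact that \emph{both} groupoids are extensional reflexive objects with the \emph{same} generating data, and to transport one structure onto the other. On the syntactic side, $\D^\ast$ is an extensional reflexive object by Theorem~\ref{th:dastff}: full faithfulness and essential surjectivity of $\iota^\ast$ give an adjoint equivalence $\Sym(\D^\ast)^\op \times \D^\ast \simeq \D^\ast$, with distinguished base object $\ast$ (together with the extensional isomorphism $\mathsf{e} : \ast \iso \seq{}\multimap\ast$). On the game side, since $\collexp{-} : \WVis_\oc \to \Esp$ is a cartesian closed pseudofunctor (Theorem~\ref{th:main}) and $\unf_\U : \U \iso \U \tto \U$, applying $\collexp{-}$ and composing with the equivalence $L^\tto_{\U,\U} : \comp{\U \tto \U} \simeq \Sym(\comp{\U})^\op \times \comp{\U}$ yields a reflexive structure $\comp{\U} \simeq \Sym(\comp{\U})^\op \times \comp{\U}$, with base object the minimal move $\ast$ of $o$ (the strictness isomorphism $\kappa_o(\emptyset)=1$ playing the role of $\mathsf{e}$). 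I would build $L^\U$ and $R^\U$ so as to intertwine these two reflexive structures and match the base objects.

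\textbf{Construction of the functors.} Using the explicit presentation of $\comp{\U}$ in Figure~\ref{fig:grouppres}, and crucially the fact that complete configurations are \emph{finite} (so the nested expressions $(\bar v_1, \dots, \bar v_k)\multimap\ast$ are well-founded), I would define $L^\U$ by structural recursion: each argument family $\bar v_i = (v_j)_{j\in I}$ is turned into a sequence of $\Sym(\D^\ast)$ via the monotone reindexing $\kappa_I$, the head $\ast$ goes to the base type, and the $k$ argument positions are curried along $\iota^\ast$ into the right-nested type $L^\U(\bar v_1)\multimap(\dots\multimap(L^\U(\bar v_k)\multimap\ast))$. The two classes of morphisms in Figure~\ref{fig:grouppres} are sent respectively to the functorial action of $\iota^\ast$ (the componentwise arrow morphisms) and to morphisms of $\Sym(\D^\ast)$ (the permutation morphisms on families). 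Conversely, $R^\U$ is defined by recursion using that $\iota^\ast$ is essentially surjective, so every type is canonically isomorphic to some $\tyl\multimap\ty$, which $R^\U$ decomposes by peeling off the outermost argument as a new family; the base case is handled by $\mathsf{e}^{-1} : \seq{}\multimap\ast \iso \ast$, matching the strictness of $o$. Functoriality of both assignments, and compatibility with the equivalence relation quotienting the witnesses of $\comp{\U}$ and the derivations of $\D^\ast$, is checked by induction.

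\textbf{Equivalence data.} I would then produce the unit and counit. The composite $R^\U L^\U$ is the identity on the normal-form presentation of $\comp{\U}$ modulo the reindexing bijections $\kappa_I$, so the unit $\Id_{\comp{\U}} \iso R^\U L^\U$ is assembled from these reindexings and the symmetries supplied by Proposition~\ref{prop:eq_oc_sym} and Lemma~\ref{lem:conf_iso_lin}. The counit $L^\U R^\U \iso \Id_{\D^\ast}$ is where the extensional completion does the work: $L^\U R^\U$ returns the right-nested normal form of a type ending in $\ast$, and the counit is the canonical isomorphism back to the original type, built from the structural isomorphisms of $\iota^\ast$ together with $\mathsf{e}$, $\mathsf{e}^{-1}$ and their four coherence equations. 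Since we are working with groupoids, once these natural isomorphisms are in place it suffices to adjust one of them by the standard improvement of an equivalence into an \emph{adjoint} equivalence to obtain the triangle identities.

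\textbf{Main obstacle.} The hard part will be the coherent treatment of symmetries alongside the extensional isomorphisms. The two presentations index their multisets of arguments differently --- $\comp{\U}$ uses families over finite $I\subseteq_f\mathbb{N}$ while $\D^\ast$ uses sequences in $\Sym(\D^\ast)$ --- so every recursive step must thread the reindexing bijections $\kappa_I$ and verify that $L^\U$ and $R^\U$ respect the permutation data and the quotients. Equally delicate is the matching of the two strictness mechanisms: on the game side, null-payoff configurations of strict boards are forced to be non-empty (Proposition~\ref{prop:eq_oc_sym}), which is the exact counterpart of the extensional isomorphism $\mathsf{e}$ on the syntactic side, and the recursion must keep these aligned at the base case. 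Establishing naturality of the counit and the triangle identities requires that the isomorphisms produced from $\mathsf{e}$, $\mathsf{e}^{-1}$ and the functoriality of $\iota^\ast$ cohere across the recursion; here full faithfulness of $\iota^\ast$ (Theorem~\ref{th:dastff}) is indispensable, since it guarantees that the decomposition of a type is unique, so the recursively defined isomorphisms are canonical and hence automatically natural.
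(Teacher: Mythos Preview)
Your core construction---defining $L^\U$ and $R^\U$ by structural recursion on the syntactic presentation of $\comp{\U}$ in Figure~\ref{fig:grouppres}, sending families $(v_j)_{j\in I}$ to sequences via the monotone reindexing, and currying the argument list into a right-nested arrow type---is exactly what the paper does. The paper first establishes the presentation in Figure~\ref{fig:grouppres} from the decomposition $\comp{\U} \iso \comp{\tensor^\N \oc \U} \times \comp{o}$, then writes down $L^\U$ and $R^\U$ directly by induction and declares the remaining checks routine. So the heart of your proposal matches.

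Where you diverge is in the surrounding scaffolding, and some of it is misplaced. You frame the proof as transporting reflexive-object structure via the cc-pseudofunctor of Theorem~\ref{th:main}, but that theorem is downstream: it is used \emph{with} the present equivalence to deduce Theorem~\ref{thu:main}, not to prove it. The paper needs nothing from Theorem~\ref{th:main} here; the equivalence is a direct combinatorial fact about configurations of $\U$ versus types of $\D^\ast$. Similarly, you invoke essential surjectivity of $\iota^\ast$ to define $R^\U$, but objects of $\D^\ast$ are already syntactically of the form $\ast$ or $\vec{a}\multimap a$, so $R^\U$ is defined by plain recursion on that grammar---essential surjectivity is only relevant when you build the counit, not the functor itself.

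Two smaller points of confusion. First, your repeated references to ``quotients'' and ``the equivalence relation quotienting the witnesses of $\comp{\U}$ and the derivations of $\D^\ast$'' do not belong here: $\comp{\U}$ and $\D^\ast$ are groupoids, not quotient sets; the quotients by coend relations appear in the interpretation of terms (Theorem~\ref{thm:ol}), which is a different statement. Second, ``$\kappa_o(\emptyset)=1$ playing the role of $\mathsf{e}$'' conflates a payoff value with an isomorphism. The actual counterpart of $\mathsf{e}$ on the game side is the normalisation built into Figure~\ref{fig:grouppres}: the constraint that $\bar v_1$ be non-empty already identifies $()\multimap\ast$ with the base configuration, so there is no separate extensional iso to track on that side---the collapse is by presentation, and $\mathsf{e}$ only appears when constructing the counit on the $\D^\ast$ side (e.g.\ at $\seq{}\multimap\ast$).
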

\begin{proof}
First, $\comp{\U}$ is \emph{isomorphic} to that
presented via the grammar in Figure \ref{fig:grouppres}. Indeed, 
\[
\comp{\U} = \comp{\otimes^\N \oc \U \lin o}
\iso \comp{\otimes^\N \oc \U} \times \comp{o}
\]
where $\comp{o} = \{\emptyset\}$. Likewise, (the obvious infinitary
generalization of) \eqref{eq:comptensor} informs a bijection between
$\comp{\otimes^\N \oc \U}$ and the sequences $\prod_{n \in \omega}
\comp{\oc \U}$ which are empty almost everywhere -- those may be
represented uniquely as sequence $(x_1, \dots, x_n)$ where $x_i \in
\comp{\oc \U}$ and $x_1$ is non-empty, removing the
heading infinite stream of empty configurations. Finally, $\comp{\oc
\U}$ is in bijection with families $(x_i)_{i\in I}$ for $I \subseteq_f
\N$, as outlined in the proof of Proposition \ref{prop:eq_oc_sym}. From
these observations, the claimed bijection is direct and extends to symmetries.

Hence, we treat Figure \ref{fig:grouppres} as a syntactic presentation
of $\comp{\U}$. Armed with this, we may now define the components of
the desired equivalence simply by induction, with:
\[
\begin{array}{rclcl}
L^{\U}((\bar{v}_1, \dots, \bar{v}_k)  \multimap \ast) &=& 
L^\U(\bar{v}_1) \lin \dots \lin L^\U(\bar{v}_k) \lin \ast\\
L^\U((v_i)_{i\in I}) &=& \seq{L^\U(v_{i_1}), \ldots, L^\U(v_{i_n})}
&\quad&\text{\hspace{-16pt}(where $I = \{i_1 < \ldots < i_n\}$)}\\\\
R^{\U}(\seq{} \lin a) &=& R^\U(a)\\
R^{\U}(\vec{a}_1 \lin \ldots \lin \vec{a}_n \lin \ast) &=&
(R^\U(\vec{a}_1), \ldots, R^\U(\vec{a}_n)) \lin \ast
&&\text{\hspace{-16pt}($\vec{a}_1$ non-empty)}\\
R^\U(\seq{a_1, \ldots, a_n}) &=& (R^\U(a_i))_{1\leq i \leq n}
\end{array}
\]
which easily extends to an equivalence as claimed.
\end{proof}

Additionally, this equivalence is compatible with unfoldings in the
sense that  
\begin{figure}
  \[
\begin{tikzcd}[column sep=0pt, row sep=15pt]
  \comp{\U}
    \arrow[rr, "L^\U"]
    \arrow[d, "\comp{\unf_\U}"]
  & & \D^\ast
    \arrow[d, "\unf_{\D^\ast}"] \\
  \comp{\U \tto \U}
    \arrow[dr, "L^\tto_{\U, \U}"', pos=0.4]
  & & \Sym(\D^\ast)^{\op} \times \D^\ast \\
  & \Sym(\comp{\U})^{\op} \times \comp{\U}
    \arrow[ur, "\Sym(L^\U)^{\op}\times L^\U"', pos=0.6]
\end{tikzcd}
\]
\caption{Compatibility with unfoldings}
\label{fig:comp_unf}
\end{figure}
the diagram of Figure \ref{fig:comp_unf} commutes, and compatible with
folding in the same way. 

Using our Theorem \ref{th:main}, it follows that:

\begin{thm}\label{thu:main}
For any closed term $M$, we have a natural iso
\[
\intr{M}^{\D^\ast}_{\Esp} \iso \collexp{\intr{M}^\U_{\WVis_{\oc}}} \circ R^U
\]
\end{thm}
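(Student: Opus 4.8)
The plan is to proceed by structural induction on $M$, using that the interpretation of a $\lambda$-term in a reflexive object is completely determined by the cartesian closed structure together with the folding and unfolding isomorphisms. The pseudofunctor $\collexp{-} : \WVis_\oc \to \Esp$ is cartesian closed by Theorem~\ref{th:main}, so it transports the reflexive object $(\U, \unf_\U, \fld_\U)$ to a reflexive structure on $\collexp{\U} = \comp{\U}$; and Theorem~\ref{th:equiv}, refined by the commuting diagram of Figure~\ref{fig:comp_unf}, identifies this transported structure with $(\D^\ast, \unf_{\D^\ast}, \fld_{\D^\ast})$ along the adjoint equivalence $L^\U \dashv R^\U$. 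The claimed natural isomorphism will then express the general principle that the denotation of a $\lambda$-term is invariant, up to coherent natural isomorphism, under a cartesian closed pseudofunctor and under equivalence of the ambient reflexive object.

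Concretely, I would first isolate the preservation data of $\collexp{-}$ as an fp- and cc-pseudofunctor (\S\ref{subsec:cc_pseudofuncto}): the invertible structural $2$-cells comparing its action on the terminal object, the products $\with$, the projections, the pairing $\pair{-,-}$, the arrow $\tto$, evaluation $\evm$ and currying $\Lambda$ with the corresponding operations in $\Esp$ (Lemmas~\ref{lem:pair_iso_L} and~\ref{lem:keytto}). I would then verify that $\collexp{\unf_\U}$ corresponds to $\unf_{\D^\ast}$, and $\collexp{\fld_\U}$ to $\fld_{\D^\ast}$, under $L^\U/R^\U$; this is exactly the content of Figure~\ref{fig:comp_unf} (and its folding analogue), and upgrades the equivalence of groupoids of Theorem~\ref{th:equiv} to an equivalence of reflexive objects.

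With these in place the induction is routine, once generalised to open terms $x_1,\dots,x_n \vdash M$, interpreted as Kleisli morphisms out of $\with_i \U$ in $\WVis_\oc$ and out of $\D^\ast \with \dots \with \D^\ast$ in $\Esp$, and proved naturally in the context. The variable case is the image of a projection; the application case, where $\intr{MN}$ is built from $\evm$, a pairing, and the unfolding $\unf_\U$, is handled by preservation of evaluation, pairing and unfolding; the abstraction case, where $\intr{\lambda x.M}$ is built from $\Lambda$ and the folding $\fld_\U$, by preservation of currying and folding. At each node one pastes the comparison $2$-cells furnished by $\collexp{-}$ to the transport isomorphisms along $L^\U/R^\U$; naturality of all these data ensures the glueing is coherent, and instantiating at a closed $M$ and precomposing with $R^\U$ gives the statement.

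The hard part will be the coherence bookkeeping rather than any isolated computation. Because $\collexp{-}$ preserves structure only up to invertible $2$-cells, and Theorem~\ref{th:equiv} supplies an adjoint \emph{equivalence} rather than a strict isomorphism, every inductive step drags along a web of mediating isomorphisms whose coherence must be checked. The genuinely delicate layer is the exponential one: ensuring that the preservators $\pder_A$ and $\pprom_\sigma$ of Lemmas~\ref{lem:pder} and~\ref{lem:pprom} interact correctly with promotion in the application and abstraction cases, so that Kleisli composition on the two sides is matched. The cartesian-closed and reflexive-object preservation, by contrast, follow directly from the results already established.
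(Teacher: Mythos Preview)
Your proposal is correct and is precisely the natural unfolding of what the paper leaves implicit: the paper gives no proof beyond ``Using our Theorem~\ref{th:main}, it follows that'', together with the compatibility of the equivalence $L^\U \dashv R^\U$ with unfolding/folding (Figure~\ref{fig:comp_unf}). Your structural induction on terms, via preservation of the cartesian closed structure and of the reflexive-object data, is exactly how one makes that sentence precise, so the approaches coincide.
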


This shows that, for $a \in \D^\ast$, the set
$\intr{M}^{\D^\ast}_{\Esp}(a)$ described in Theorem \ref{thm:ol} as a set of derivations up
to congruence may be equivalently described, up to canonical
isomorphism, as 
\[
\{
(x, \theta^+) \quad \mid \quad
x \in \confp{\intr{M}^\U_{\WVis_{\oc}}}, \quad \theta^+ : \pr(x) \sym_\U^+
R^\U(a)\}\,.
\]

In other words, the interpretation of
pure $\lambda$-terms as species computes the set of $+$-covered
configurations \emph{equipped with} a positive
symmetry. Interestingly, this set is 
constructed \emph{without quotient}, and thus
provides canonical representatives for the equivalence classes of
derivations. We now include a detailed example
illustrating this by examining the interpretation of a simple
$\lambda$-term in both models.

\begin{exa}
Take the $ \lambda$-term $M =   xx.  $ We shall compute its species
interpretation on appropriate points and then compare it to its
concurrent game semantics. 

The species interpretation of $xx $ consists of a distributor 
\[         \intr{xx}^{\D^\ast}_{\Esp}(-;-) : \Sym(\D^\ast)^{op} \times \D^\ast  \to \Set                   \]

We evaluate the functor on $   \intr{xx}^{\D^\ast}_{\Esp}( \seq{
\seq{\ast, \ast} \multimap \ast, \ast , \ast } ; \ast )$. To compute an
element of this set, we shall exploit the type theoretic presentation
of the semantics. Following Figure \ref{fig:IntD}, an element of $
\intr{xx}^{\D^\ast}_{\Esp}( \seq{  \seq{\ast, \ast} \multimap \ast,
\ast , \ast } ; \ast ) $ can be defined by the following derivation
scheme:  

\[\begin{prooftree}
\hypo{  f : c_0 \to    (\seq{b_1,b_2} \multimap \ast)         }
\infer1{   x : \seq{c_0} \vdash x : \seq{b_1,b_2} \multimap \ast        } \hypo{g_1 : c_1 \to b_1}\infer1{ x : \seq{c_1} \vdash x : b_1  } \hypo{g_2 : c_2 \to b_2} \infer1{  x : \seq{c_2} \vdash x : b_2}\hypo{  \eta}
\infer4{  x : \seq{\seq{\ast,\ast} \multimap \ast,  \ast, \ast}   \vdash xx : \ast           } 
\end{prooftree} \]
written $\pi^{f,g_1,g_2}_\eta$,
where $ \eta =  (f_0, f_1, f_2 ): (\seq{\ast,\ast} \multimap \ast,
\ast, \ast) \to (c_0, c_1, c_2)  $. as a consequence of Theorem
\ref{th:dastff}, we have $ c_0 = \seq{\ty_1,\ty_2} \multimap \ty $
and $ f = \seq{f_1,f_2}^\tau \multimap g $ with $ f_1 : b_{\tau i} \to
\ty_i  , g : \ty \to \ast .$ 

Then, by the congruence on derivations
induced by the coend formula, we have that 
\[ \pi^{f,g_1,g_2}_\eta  \sim \]
\[               \begin{prooftree} 
\hypo{      id_{\seq{b_1,b_2} \multimap \ast}                   }
\infer1{  x : \seq{ \seq{b_1,b_2} \multimap \ast } \vdash x : \seq{b_1,b_2} \multimap \ast  } \hypo{ id_{b_1}  }\infer1{  x : \seq{b_1} \vdash x : b_1 } \hypo{ id_{b_2}}\infer1{   x : \seq{b_2} \vdash x : b_2    } \hypo{   (f,g_1,g_2) \circ \eta  }
\infer4{    x : \seq{\seq{\ast,\ast} \multimap \ast,  \ast, \ast}   \vdash xx : \ast       }
\end{prooftree}                        \]

Hence, we can restrict to the derivation scheme:
\begin{equation}
       \begin{prooftree} 
\hypo{      id_{\seq{b_1,b_2} \multimap \ast}                   }
\infer1{  x : \seq{ \seq{b_1,b_2} \multimap \ast } \vdash x : \seq{b_1,b_2} \multimap \ast  } \hypo{ id_{b_1}  }\infer1{  x : \seq{b_1} \vdash x : b_1 } \hypo{ id_{b_2}}\infer1{   x : \seq{b_2} \vdash x : b_2    } \hypo{   \theta  }
\infer4{    x : \seq{\seq{\ast,\ast} \multimap \ast,  \ast, \ast}   \vdash xx : \ast       }
\end{prooftree}     \label{eq:sc1}        
\end{equation}
where $\theta = \seq{h_0, h_1, h_2}^\sigma $, where the permutation $
\sigma$ is either the identity or the swap between the two occurrences
of $\ast $. Actually, one can prove that there are just \emph{two}
different equivalence classes of derivations in  $
\intr{xx}^{\D^\ast}_{\Esp}( \seq{  \seq{\ast, \ast} \multimap \ast,
\ast , \ast } ; \ast ) $. These correspond precisely to the choice of
either the identity or the swap permutation in the following restricted
schema: 
\begin{equation}
      \begin{prooftree} 
\hypo{      id_{\seq{\ast,\ast} \multimap \ast}                   }
\infer1{  x : \seq{ \seq{\ast,\ast} \multimap \ast } \vdash x : \seq{\ast,\ast} \multimap \ast  } \hypo{ id_{\ast}  }\infer1{  x : \seq{\ast} \vdash x : \ast } \hypo{ id_{\ast}}\infer1{   x : \seq{\ast} \vdash x : \ast    } \hypo{   \theta  }
\infer4{    x : \seq{\seq{\ast,\ast} \multimap \ast,  \ast, \ast}   \vdash xx : \ast       }
\end{prooftree}        \label{eq:sc2}    
\end{equation} 
where now $ \theta = \seq{h_0, h_1, h_2}^\sigma$ with $ h_i $ being the
appropriate identity morphism. This happens because for any isomorphism
$ b_1 \cong \ast, b_2 \cong \ast$, we can prove that $ \ref{eq:sc1} $ and $ \ref{eq:sc2}$ are equivalent by the rules of congruence
(Figure \ref{fig:congr}).

Now, let us consider the game semantics side of things. As $x\,x$ has
just one free variable, $\intr{x\,x}_{\WVis_\oc}$ is a strategy on $\oc
\U \vdash \U$. For our example, the useful unfolding of that is $\oc
(\tensor^\N \oc \U \lin o) \vdash \U$ -- and following the equivalence
of Theorem \ref{th:equiv}, the point $(\seq{\seq{\ast, \ast} \multimap
\ast, \ast , \ast }; \ast)$ corresponds to the configuration in the
diagram below:
\[
  \begin{tikzpicture}
  \node (t1) at (0, 0.8) {$\oc (\tensor^\N \oc \U$};
  \node (t2) at (2, 0.8) {$\lin$};
  \node (t3) at (4, 0.8) {$o)$};
  \node (t4) at (6, 0.8) {$\vdash$};
  \node (t5) at (8, 0.8) {$\U$};

  \node[negnode] (qm) at (8,0) {$\q$};
  \node[posnode] (q0) at (3.3,0) {$\q_0$};
  \node[posnode] (q1) at (4,0) {$\q_1$};
  \node[posnode] (q2) at (4.7,0) {$\q_2$};

  \node[negnode] (a1) at (-0.3,-1) {$\q_0$};
  \node[negnode] (a2) at (0.4,-1) {$\q_1$};

  \draw[game-causality] (q0) to[bend right=5] (a1.north east);
  \draw[game-causality] (q0) -- (a2);

  \end{tikzpicture}
\]
where the numbers in index are the copy indices, corresponding to the
distinct copies generated by the exponential modality -- $x$ is called
three times, and one of these calls its argument twice; the copy
indices simply follow the rank in the sequence representation in
$\D^\ast$.
Now, the configurations of $\intr{x\,x}_{\WVis_\oc}$ whose display is
symmetric to the above turn out to be those of the form
\[
\begin{tikzpicture}
  \node[negnode] (qm) at (2,4) {$\q$};
  \node[posnode] (q00) at (2,3) {$\q_{\pair{0,0}}$};
  \node[negnode] (qi) at (1,2) {$\q_{i}$};
  \node[negnode] (qj) at (3,2) {$\q_{j}$};
  \node[posnode] (q1i) at (1,1) {$\q_{\pair{1, \pair{i,0}}}$};
  \node[posnode] (q1j) at (3,1) {$\q_{\pair{1, \pair{j,0}}}$};

  \draw[strat-causality] (qm) -- (q00);
  \draw[strat-causality] (q00) -- (qi);
  \draw[strat-causality] (q00) -- (qj);
  \draw[strat-causality] (qi) -- (q1i);
  \draw[strat-causality] (qj) -- (q1j);

  \draw[game-causality, bend right=15] (q00) to (qi);
  \draw[game-causality, bend left=15] (q00) to (qj);
\end{tikzpicture}
\]
for arbitrary $i, j \in \N$. If we insist that the display is
\emph{positively} symmetric to the configuration above, then this fixes
$i=0$ and $j=1$, so that there is only \emph{one} such configuration of
$\intr{x\,x}_{\WVis_\oc}$. Any finally, there are exactly two positive
symmetries
\[
  \raisebox{-20pt}{
    \begin{tikzpicture}
  \node[posnode] (q00) at (4,1) {$\q_{\pair{0,0}}$};
  \node[posnode] (q100) at (6,1) {$\q_{\pair{1, \pair{0, 0}}}$};
  \node[posnode] (q110) at (8,1) {$\q_{\pair{1, \pair{1, 0}}}$};
  \node[negnode] (q) at (10,1) {$\q$};
  \node[negnode] (q0) at (3.3,0.2) {$\q_0$};
  \node[negnode] (q1) at (4.7,0.2) {$\q_1$};

  \draw[game-causality] (q00) -- (q0);
  \draw[game-causality] (q00) -- (q1);
\end{tikzpicture}}
\quad
\sym^+_{\oc \U \vdash \U}
\quad
\raisebox{-20pt}{
\begin{tikzpicture}
  \node[posnode] (q0) at (4,1) {$\q_{0}$};
  \node[posnode] (q1) at (5.5,1) {$\q_{1}$};
  \node[posnode] (q2) at (7,1) {$\q_{2}$};
  \node[negnode] (qm) at (8.5,1) {$\q$};
  \node[negnode] (qn0) at (3.3,0.2) {$\q_0$};
  \node[negnode] (qn1) at (4.7,0.2) {$\q_1$};

  \draw[game-causality] (q0) -- (qn0);
  \draw[game-causality] (q0) -- (qn1);
\end{tikzpicture}}
\]
as the only degree of freedom is whether or not to swap the two
standalone positive $\q$s. Hence, we recover the same number of
witnesses as in generalized species of structure; though as the reader
can observe, the reasoning and the objects of study are very different.
\end{exa}

Finally, in what follows, we show how our cartesian closed
pseudofunctor allows us to transfer results from game semantics to
generalized species. It is known that the game semantics of the
$\lambda$-calculus captures the maximal sensible $\lambda$-theory
$\mathcal{H}^*$, because strategies coincide with the corresponding
normal forms, Nakajima trees -- this was first shown in traditional
game semantics by Ker \emph{et al.} \cite{DBLP:journals/tcs/KerNO02},
and adapted to concurrent games (additionally with probabilistic
choice)  -- in \cite{DBLP:conf/csl/ClairambaultP18}.  Using Theorem
\ref{thu:main}, we can deduce a result on the $\lambda $-theory induced
by $\D^\ast .$ Given a model $\D $ of $ \lambda $-calculus in an
arbitrary bicategory, the \emph{theory} induced by $ \D $ is the $
\lambda $-theory induced by the relation on $\lambda $-terms:
\[     \{ (M,N) \mid M,N \in \Lambda \text{ s.t. } \intr{M}^D \cong \intr{N}^D . \} .        \]

  The correspondence established in this paper
allows us to derive the following new result:
\begin{cor}
The theory of $ \D^\ast $ is $\mathcal{H}^*$.
\end{cor}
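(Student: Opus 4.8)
The plan is to transfer the known description of the induced $\lambda$-theory from the game-semantic side through the correspondence of Theorem~\ref{thu:main}, and then pin the result down using the maximality of $\mathcal{H}^*$. First I would reduce to closed terms: since $\U$ and $\D^\ast$ are reflexive objects in cartesian closed bicategories, both interpretations are compatible with abstraction, so two terms $M,N$ with $\mathrm{fv}(M)\cup\mathrm{fv}(N)\subseteq\vec{x}$ are related by the induced theory exactly when their closures $\lambda\vec{x}.M$ and $\lambda\vec{x}.N$ are; it therefore suffices to compare isomorphism classes $\intr{M}^{\D^\ast}_{\Esp}$ for closed $M$. I would then record the starting point: the theory induced by the reflexive object $\U$ in $\WVis_{\oc}$ is exactly $\mathcal{H}^*$, since $\intr{M}^{\U}_{\WVis_{\oc}}$ represents the Nakajima tree of $M$ and two terms yield isomorphic such strategies iff they have the same Nakajima tree, which is the content of \cite{DBLP:journals/tcs/KerNO02} as adapted to concurrent games in \cite{DBLP:conf/csl/ClairambaultP18}.

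For the inclusion $\mathcal{H}^*\subseteq\mathrm{Th}(\D^\ast)$, suppose $M=_{\mathcal{H}^*}N$ with $M,N$ closed. Then $\intr{M}^{\U}_{\WVis_{\oc}}\cong\intr{N}^{\U}_{\WVis_{\oc}}$ by the previous paragraph. Since $\collexp{-}:\WVis_{\oc}\to\Esp$ is a pseudofunctor (Theorem~\ref{th:main}), it sends this invertible $2$-cell to an isomorphism $\collexp{\intr{M}^{\U}}\cong\collexp{\intr{N}^{\U}}$; precomposing with the equivalence $R^{\U}$ of Theorem~\ref{th:equiv} preserves isomorphism, and by the natural iso of Theorem~\ref{thu:main} we conclude $\intr{M}^{\D^\ast}_{\Esp}\cong\intr{N}^{\D^\ast}_{\Esp}$, i.e. $M=_{\mathrm{Th}(\D^\ast)}N$.

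For the reverse inclusion I would invoke maximality rather than attempt to show that $\collexp{-}$ reflects isomorphisms. Recall (Barendregt) that $\mathcal{H}^*$ is the maximal consistent sensible $\lambda$-theory, that is, it contains every consistent sensible $\lambda$-theory. We have just shown $\mathrm{Th}(\D^\ast)\supseteq\mathcal{H}^*\supseteq\mathcal{H}$, so $\mathrm{Th}(\D^\ast)$ is sensible; it remains only to verify that it is consistent. For this I would exhibit two closed terms that stay distinguished, e.g.\ $\lambda xy.x$ and $\lambda xy.y$: a direct computation of their intersection-type distributors (Theorem~\ref{thm:ol}) on a type declaration that uses the first argument shows $\intr{\lambda xy.x}^{\D^\ast}_{\Esp}\not\cong\intr{\lambda xy.y}^{\D^\ast}_{\Esp}$, so $\mathrm{Th}(\D^\ast)$ does not equate all terms. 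Maximality then gives $\mathrm{Th}(\D^\ast)\subseteq\mathcal{H}^*$, and combined with the first inclusion we obtain $\mathrm{Th}(\D^\ast)=\mathcal{H}^*$.

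The main obstacle is not a single hard lemma but the careful assembly of the cited ingredients: one must ensure that (i) the interpretation in the reflexive object $\U$ genuinely realizes Nakajima trees in the \emph{thin, concurrent} setting, so that the result of \cite{DBLP:conf/csl/ClairambaultP18} applies verbatim to $\WVis_{\oc}$, and (ii) the correspondence of Theorem~\ref{thu:main} is natural in $M$ in the way needed to pass between isomorphism classes of interpretations. The consistency computation and the open-versus-closed reduction are routine. The maximality argument deliberately sidesteps the delicate question of whether $\collexp{-}$ reflects isomorphisms of term interpretations; an alternative, more hands-on proof would establish this reflection directly from the explicit description of $\intr{M}^{\D^\ast}_{\Esp}$ as $+$-covered configurations equipped with a positive symmetry (the consequence of Theorem~\ref{thu:main} displayed above), but the route through maximality is shorter and more robust.
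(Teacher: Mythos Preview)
Your proposal is correct and structurally matches the paper's proof: both obtain $\mathcal{H}^*\subseteq\mathrm{Th}(\D^\ast)$ by pushing the game-semantic characterisation of $\U$ through Theorem~\ref{thu:main} and functoriality of $\collexp{-}$, and both obtain the reverse inclusion via the maximality of $\mathcal{H}^*$ among sensible $\lambda$-theories rather than by attempting to reflect isomorphisms. The only divergence is in how sensibility and consistency of $\mathrm{Th}(\D^\ast)$ are obtained: the paper simply cites \cite{ol:proof} (Corollary~6.14) for the fact that the theory induced by $\D^\ast$ in $\Esp$ is a sensible $\lambda$-theory, whereas you extract sensibility for free from the forward inclusion $\mathrm{Th}(\D^\ast)\supseteq\mathcal{H}^*\supseteq\mathcal{H}$ and then verify consistency separately by distinguishing $\lambda xy.x$ from $\lambda xy.y$. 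Your route is thus slightly more self-contained, at the cost of the small explicit computation; the paper's is shorter by delegating both ingredients to the cited reference.
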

\begin{proof}
Consider two $\lambda$-terms $M$ and $N$. 
If $M \equiv_{\mathcal{H}^*} N$, then $\intr{M}^\U_{\WVis_{\oc}} \iso
\intr{N}^\U_{\WVis_{\oc}}$ and by Theorem \ref{thu:main}, 
$\intr{M}^{\D^\ast}_{\Esp} \iso \intr{N}^{\D^\ast}_{\Esp}$.

Reciprocally, isomorphisms in $\Esp$ form a sensible $\lambda$-theory
(see \cite{ol:proof}, Corollary 6.14). As $H^*$ is the greatest
sensible $\lambda$-theory, it must include isomorphisms in $\Esp$.
\end{proof}

This is a simple application, but recent work suggests that there is much to explore in
the bicategorical semantics of the $\lambda$-calculus \cite{ol:proof}.

\section{Conclusion}
\label{sec:conclusion}

In this paper, we have mapped out the links between thin concurrent games
and generalized species of structures, two bicategorical models of
linear logic and programming languages. By giving a proof-relevant
and bicategorical extension of  the
relationship between dynamic and static models, we have established the new state of
the art in this line of work.

This bridges previously disconnected semantic realms. In the past,
such bridges have proved fruitful for transporting results between
dynamic and static semantics
(\cite{DBLP:conf/lics/CastellanCPW18,DBLP:journals/pacmpl/ClairambaultV20,DBLP:conf/csl/ClairambaultP18}).
This opens up many perspectives: bicategorical models are a
very active field, and several recent developments may be re-examined in 
light of this connection
(\cite{ol:intdist,DBLP:conf/lics/TsukadaAO18,DBLP:journals/pacmpl/ClairambaultV20}).

Moreover, this work exposes fundamental phenomena
regarding symmetries in quantitative semantics. Symmetries lie at the heart of both thin
concurrent games and generalized species, but they are treated
completely differently: in $\Esp$, witnesses referring to multiple
copies of a resource are closed under the action of
all symmetries (``saturated''), whereas $\TCG$ relies on a mechanism
for choreographing a choice of copy indices, providing an address for
individual resources (``thin''). This distinction is also at the
forefront of a recent line of work on \emph{thin spans of groupoids}
\cite{DBLP:conf/lics/ClairambaultF23,DBLP:conf/lics/ClairambaultF24},
which shares with game semantics this handling of symmetry, without the
the combinatorial details of event structures.

\section*{Acknowledgment}
This work was supported by the ANR project DyVerSe
(ANR-19-CE48-0010-01); by the Labex MiLyon (ANR-10-LABX-0070) of
Universit\'e de Lyon, within the program ``Investissements d'Avenir''
(ANR-11-IDEX-0007), operated by the French National Research Agency
(ANR); by the US Air Force Office for Scientific Research under award
number FA9550-21-1-0007; by a Royal Society University Research
Fellowship; by a Paris Region Fellowship co-funded by the
European Union (Marie Skłodowska-Curie grant agreement 945298); and by
the PEPR integrated project EPiQ
ANR-22-PETQ-0007 part of Plan France 2030

\bibliographystyle{alphaurl}
\bibliography{main}

\end{document}